\newcommand{\1}{\mathbbm{1}}
\theoremstyle{plain}
\newtheorem{theorem}{Theorem}[section]
\newtheorem{lemma}[theorem]{Lemma}
\newtheorem{corollary}[theorem]{Corollary}
\newtheorem{claim}[theorem]{Claim}
\theoremstyle{definition}
\newtheorem{definition}[theorem]{Definition}
\newtheorem{example}[theorem]{Example}
\newtheorem{remark}[theorem]{Remark}
\DeclareMathOperator*{\EE}{E}
\DeclareMathOperator*{\supp}{supp}
\DeclareMathOperator*{\Var}{Var}
\DeclareMathOperator*{\Cov}{Cov}
\DeclareMathOperator{\Id}{Id}
\DeclareMathOperator{\Inf}{Inf}
\newcommand{\cE}{\mathcal{E}}
\newcommand{\cM}{\mathcal{M}}
\newcommand{\cP}{\mathcal{P}}
\newcommand{\cR}{\mathcal{R}}
\newcommand{\bbN}{\mathbb{N}}
\newcommand{\bbR}{\mathbb{R}}
\definecolor{DSgray}{cmyk}{0,0,0,0.7}
\definecolor{DSred}{cmyk}{0,0.7,0,0.7}
\begin{document}

\begin{frontmatter}[classification=text]


  \author[jh]{Jan H\k{a}z\l{}a\thanks{
      During his studies at ETH Zurich
      J.\,H.~was supported by the Swiss National Science Foundation (SNF),
      project no.~200021-132508.
    }}
\author[th]{Thomas Holenstein}
\author[em]{Elchanan Mossel\thanks{
    E.\,M. was supported by NSF grant DMS-1106999, NSF grant CCF 1320105 and DOD ONR grant N000141110140 and grant 328025 from the Simons foundation.
    Part of this work was done while T.\,H.~and E.\,M.~were at the 
    Simons Institute.
  }}

\begin{abstract}
  There is a common theme to some research questions in
  additive combinatorics and noise stability.
  Both study the following basic question: 
  Let $\mathcal{P}$ be a probability distribution
  over a space $\Omega^\ell$ with all $\ell$ marginals equal. Let
  $\underline{X}^{(1)}, \ldots, \underline{X}^{(\ell)}, \allowbreak
  \underline{X}^{(j)} = (X_1^{(j)}, \ldots, X_n^{(j)})$ be random vectors such
  that for every coordinate $i \in [n]$ the tuples $(X_i^{(1)}, \ldots, X_i^{(\ell)})$ 
  are i.i.d.~according to $\cP$.
  
  A central question that is addressed in both areas is: 
  \begin{itemize}
  \item 
    Does there exist a function $c_\cP()$ independent
    of $n$ such that
    for every $f: \Omega^n \to [0, 1]$ with 
$\EE[f(\underline{X}^{(1)})] = \mu > 0$:
\begin{align*}
  \EE \left[ \prod_{j=1}^\ell f(\underline{X}^{(j)}) \right] 
  \ge c_\cP(\mu) > 0 \; ?
\end{align*}
Instances of this question include the finite field model versions of Roth's and 
 Szemerédi's theorems as well as Borell's result about the optimality of noise stability 
of half-spaces.
\end{itemize}

Our goal in this paper is to interpolate between the noise stability theory and
the finite field additive combinatorics theory and address the question above
in greater generality than considered before. 
In particular, we settle the question for $\ell = 2$ and when $\ell > 2$ and $\cP$
has bounded correlation $\rho(\cP) < 1$. Under the same conditions we also
characterize the {\em obstructions} for similar lower bounds in the case of
$\ell$ different functions. 
Part of the novelty in our proof is the combination of analytic arguments from
the theories of influences and hyper-contraction with arguments from
additive combinatorics.
\end{abstract}
\end{frontmatter}

\section{Introduction}

\subsection{Setup and same-set hitting} 
In this paper we analyze a general framework which includes many fundamental
questions in both
the theory of noise stability and in finite field models of additive combinatorics. 
We begin with formally defining this general setting. 
Let $\Omega$ be a finite set  
and assume we are given a probability distribution $\mathcal{P}$
over $\Omega^\ell$
for some $\ell \ge 2$ -- we will call it an
\emph{$\ell$-step probability distribution over $\Omega$}.

Furthermore, assume we are given $n \in \bbN$.
We consider $\ell$ vectors $\underline{X}^{(1)}, \allowbreak \ldots, 
\allowbreak \underline{X}^{(\ell)}$,
$\underline{X}^{(j)} = (X_1^{(j)},\allowbreak \ldots, X_n^{(j)})$ such that 
for every $i \in [n]$, the $\ell$-tuple $(X_i^{(1)}, \ldots, \allowbreak  X_i^{(\ell)})$ 
is sampled according to $\mathcal{P}$, independently of the other
coordinates $i' \neq i$
(see Figure~\ref{fig:naming} for an overview of the notation). 

\begin{definition}
\label{def:same-hitting}
Let $\mu, \delta \in (0, 1]$. We say that a distribution
\emph{$\mathcal{P}$ is $(\mu, \delta)$-same-set hitting},
if, for all $n \geq 1$, whenever a function $f: \Omega^n \to [0, 1]$ satisfies
$\EE[f(\underline{X}^{(j)})] \ge \mu$
for every $j \in [\ell] := \{1, \ldots, \ell\}$,
we have
\begin{align*}
\EE \left[ \prod_{j=1}^\ell f(\underline{X}^{(j)})\right]
  \ge \delta \; .
\end{align*}

We call $\mathcal{P}$ \emph{same-set hitting} if for every 
$\mu \in (0, 1]$ there exists $\delta \in (0, 1]$ such that
$\mathcal{P}$ is $(\mu, \delta)$-same-set hitting.
\end{definition}

It is not difficult to see that the definition
of same-set hitting is equivalent to the one where functions $f$
are restricted to be set indicators $f: \Omega^n \to \{0,1\}$.
The value $\EE\left[\prod_{j=1}^\ell f\left(\underline{X}^{(j)}\right)\right]$
then can be interpreted as 
$\Pr\left[\bigwedge_{j=1}^\ell \underline{X}^{(j)} \in S\right]$
for the respective set $S := \{\underline{x}: f(\underline{x})=1\}$ of density
at least $\mu$. This special case motivated the name ``same-set hitting'',
and all our theorems and proofs can be read with that case
in mind.

In this paper we address the question: which distributions $\mathcal{P}$
are same-set hitting? We achieve full characterization for $\ell = 2$
and answer the question affirmatively for a large class of distributions
with $\ell > 2$.

The question of set hitting was studied extensively in additive combinatorics and in the theory of influences and noise stability. Perhaps the most well-studied case is that of random arithmetic progressions. Let $Z$ be a finite additive group
and $\ell \in \mathbb{N}$.
Then, we can define a distribution $\mathcal{P}_{Z,\ell}$ 
of random $\ell$-step arithmetic progressions in $Z$. 
Specifically, for every $x, r \in Z$ we set:
\begin{align*}
  \cP_{Z,\ell}(x, x+r, x+2r, \ldots, x+(\ell-1)r) := 1/|Z|^2 \; .
\end{align*}

Some of the distributions $\cP_{Z,\ell}$ can be shown to be
same-set hitting using, e.g., the hypergraph regularity lemma:
\begin{theorem}[\cite{RS04}, \cite{RS06}, \cite{Gow07}, cf.~Theorem 11.27,
Proposition 11.28 and Exercise 11.6.3 in \cite{TV06}]
\label{thm:progressions}
If $|Z|$ is coprime to $(\ell-1)!$, then $\cP_{Z,\ell}$ is same-set hitting.
\end{theorem}

Taking $\ell = p$ and $Z = \mathbb{F}_p$ we obtain the classical
formulation of Szemerédi's theorem for progressions of length $p$
in the finite field model. The special case $\ell = 3$ is also known
as the \emph{capset problem}.
As is well known, the case $\ell=3$ follows from the arguments of
Roth~\cite{Rot53} applied to the finite field setup \cite{Mes95},
while the general case
follows a long line of work, starting by Szemerédi's regularity lemma
\cite{Sze75}, its proof by Furstenberg using the ergodic theorem \cite{Fur77}
as well as the finite group and multi-dimensional versions, see, e.g.,
\cite{Rot53, FK91, Gow01, Green05}.

It is natural to consider a generalization of the question where different
functions are applied to different $X^{(j)}$. This question was studied in the
theories of Gaussian noise stability and hyper-contraction as we explain next.

\subsection{Set hitting} 
The generalization to multiple sets is defined as follows. 
\begin{definition}
\label{def:hitting}
Let $\mu, \delta \in (0, 1]$. We say that a distribution
\emph{$\mathcal{P}$ is $(\mu, \delta)$-set hitting},
if, whenever functions $f^{(1)}, \ldots, f^{(\ell)}: \Omega^n \to [0, 1]$ satisfy
$\EE[f^{(j)}(\underline{X}^{(j)})] \allowbreak \ge \mu$
for every $j \in [\ell]$,
we have
\begin{align}
\label{eq:85a}
\EE \left[ \prod_{j=1}^\ell f^{(j)}(\underline{X}^{(j)})\right]
  \ge \delta \; .
\end{align}

We call $\mathcal{P}$ \emph{set hitting} if for every 
$\mu \in (0, 1]$ there exists $\delta \in (0, 1]$ such that
$\mathcal{P}$ is $(\mu, \delta)$-set hitting.
\end{definition}

Borell~\cite{Bor85} established the set hitting property in the Gaussian case where 
$(X_i,Y_i) \sim \mathcal{N}(0,\left( \begin{matrix} 1 & \rho \\ \rho & 1 \end{matrix} \right))$ are i.i.d and 
$\rho \in (0,1)$. 
In fact~\cite{Bor85} does much more: it finds the optimal $\delta$ in terms of $\mu$ and $\rho$ in this case. 
(Note that in this case $\Omega$ is infinite). 

In earlier work,~\cite{Bor82} Borell also proved some of the first
reverse hypercontractive inequalities. 
 These give a different proof that the Gaussian example above is set hitting but also imply 
  the same for the binary analog where 
$(X_i,Y_i) \in \{-1,1\}^2$ satisfy $E[X_i] = E[Y_i] = 0$ and $E[X_i Y_i] = \rho$.
See~\cite{MOR06} for a discussion of this result and some of its implications.   

The full classification of set hitting distributions can be deduced from
a paper on reverse hypercontractivity\footnote{
That $\mathcal{P}$ is set hitting if (\ref{eq:84a}) holds is a consequence
of Lemma 8.3 in \cite{MOS13}. If (\ref{eq:84a}) does not hold,
an appropriate combination of dictators establishes a counterexample.  
}
by Mossel, Oleszkiewicz and Sen
\cite{MOS13}:
\begin{theorem}[\cite{MOS13}]
\label{thm:different-sets-classification}
A finite probability space $\mathcal{P}$ is set hitting if and only if:
\begin{align}
\label{eq:84a}
  \beta(\cP) :=
  \min_{\substack{x^{(1)} \in \supp(X_i^{(1)}),\\
  \ldots,\\
  x^{(\ell)} \in \supp(X_i^{(\ell)})}}
  \mathcal{P}(x^{(1)}, \ldots, x^{(\ell)}) > 0 \; .
\end{align}
\end{theorem}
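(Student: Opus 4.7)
The plan is to prove the two directions of the equivalence separately. The nontrivial direction is essentially a black-box application of the reverse hypercontractive inequality of~\cite{MOS13}; the other direction is a one-line dictator construction.

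\emph{Necessity ($\beta(\cP) = 0 \Rightarrow \cP$ is not set hitting).} Suppose $\beta(\cP) = 0$, witnessed by values $x^{(j)} \in \supp(X_i^{(j)})$ with $\cP(x^{(1)}, \ldots, x^{(\ell)}) = 0$. Take $n = 1$ and let $f^{(j)}(y) := \1[y = x^{(j)}]$ be the dictator on the single coordinate. Each has expectation $\mu_j := \Pr[X_1^{(j)} = x^{(j)}] > 0$ by the support hypothesis; set $\mu := \min_j \mu_j > 0$. Then $\EE[f^{(j)}(\underline{X}^{(j)})] \ge \mu$ for all $j$, but
\begin{align*}
  \EE\Big[\prod_{j=1}^\ell f^{(j)}(\underline{X}^{(j)})\Big] \;=\; \cP(x^{(1)}, \ldots, x^{(\ell)}) \;=\; 0,
\end{align*}
so~\eqref{eq:85a} fails for every $\delta > 0$. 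Hence $\cP$ is not set hitting.

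\emph{Sufficiency ($\beta(\cP) > 0 \Rightarrow \cP$ is set hitting).} I would invoke~\cite[Lemma~8.3]{MOS13}, which asserts that, whenever $\beta(\cP) > 0$, there exist positive exponents $q_1, \ldots, q_\ell$ depending only on $\cP$ such that for every $n \in \bbN$ and all $f^{(j)} : \Omega^n \to [0, 1]$,
\begin{align*}
  \EE\Big[\prod_{j=1}^\ell f^{(j)}(\underline{X}^{(j)})\Big] \;\ge\; \prod_{j=1}^\ell \EE\big[f^{(j)}(\underline{X}^{(j)})\big]^{\,q_j}.
\end{align*}
The argument in~\cite{MOS13} establishes this by first proving an $\ell$-function reverse H\"older inequality on a single copy of $\Omega^\ell$ (where $\beta(\cP) > 0$ is exactly what is needed to control negative-moment behaviour of the associated correlation/Markov operator) and then tensorizing across the $n$ independent coordinates; the tensorization is precisely what makes the exponents $q_j$ independent of $n$. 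Granting this inequality, the hypothesis $\EE[f^{(j)}(\underline{X}^{(j)})] \ge \mu$ for all $j$ immediately gives
\begin{align*}
  \EE\Big[\prod_{j=1}^\ell f^{(j)}(\underline{X}^{(j)})\Big] \;\ge\; \mu^{\,q_1 + \cdots + q_\ell} \;=:\; \delta > 0,
\end{align*}
with $\delta$ independent of $n$. Hence $\cP$ is $(\mu, \delta)$-set hitting.

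\emph{Where the hard work sits.} The counterexample direction is entirely elementary — a single coordinate and an indicator suffice. All of the substance is in the reverse hypercontractive inequality of~\cite{MOS13}: converting the scalar positivity $\beta(\cP) > 0$ into a quantitative lower bound on $\ell$-wise correlations that tensorizes cleanly over products. This mixes Borell-style reverse hypercontractivity for an appropriate Markov operator on $\Omega$ with an inductive H\"older-type passage from two functions to $\ell$ functions, and the outcome is Lemma~8.3 of~\cite{MOS13}, which we use here as a black box.
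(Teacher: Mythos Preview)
Your proposal is correct and matches the paper's own treatment essentially line for line: the paper's footnote attributes sufficiency to Lemma~8.3 of \cite{MOS13} and says that ``an appropriate combination of dictators establishes a counterexample'' for necessity, which is exactly your $n=1$ indicator construction. There is nothing to add.
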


In many interesting settings, including the finite field models in
additive combinatorics, the distribution $\mathcal{P}$ does not have full
support. In these settings, as we discuss next,
the goal is to understand sufficient conditions
on the functions which imply that (\ref{eq:85a}) does hold. 

\subsection{Obstructions in additive combinatorics} 
In general much of the interest in additive combinatorics is in understanding
what conditions on functions $f$ imply (\ref{eq:85a}).
For example, the starting point of the proof of Roth's theorem~\cite{Rot53}
on arithmetic progressions of length three is that
if the functions $f^{(1)},f^{(2)},f^{(3)}$ all satisfy that 
$\| \widehat{f^{(j)} - E[f]}  \|_{\infty}$ is small
then (\ref{eq:85a}) holds. That is, the distribution of arithmetic progressions
of length three is set hitting for all functions $f^{(j)}$ with
all (positive degree) Fourier coefficients small in absolute value.
As a matter of fact,
in that case $f^{(1)}, f^{(2)}, f^{(3)}$ are known to be pseudorandom in the sense
that $\delta(\mu) \approx \mu^3$.

The proof of Roth's theorem then proceeds roughly as follows: If a function $f$
is pseudorandom, we are done. Otherwise, we are guaranteed a large Fourier
coefficient. This is then exploited in a density increment argument:
It turns out that a large Fourier coefficient implies that $f$ must
have increased relative density on an affine subspace of $\mathbb{F}_p^n$
of codimension one.
One iterates the density increment until $f$ becomes pseudorandom.

A similar situation arises in a more recent proof
for longer arithmetic progression by Gowers:
If the functions $f^{(j)}$ have low Gowers uniformity norm, then 
 (\ref{eq:85a}) holds, see e.g.~\cite{Green05}. 
 
 In one of our main results (see Section~\ref{sec:intro-fourier} below)
 we show that in a pretty general setup (which does not include the additive
 combinatorics setup), the only obstruction for $(\ref{eq:85a})$ to hold is for
 $f^{(j)}$ to have a large low-degree Fourier coefficient. 

\subsection{Basic example}
\label{sec:basic}

At this point we would like to introduce the simplest example that is not covered by either the theory of influences or techniques from additive combinatorics. 
Let $S \subseteq \{0,1,2\}^n$ be a non-empty set of density
$\mu = \frac{|S|}{3^n}$.  We pick a random vector
$\underline{X} = (X_1, \ldots, X_n)$ uniformly from $\{0,1,2\}^n$, and then
sample another vector $\underline{Y} = (Y_1,\ldots,Y_n)$ such that for each
$i$ independently, coordinate $Y_i$ is picked uniformly in
$\{X_i, X_i+1 \bmod{3}\}$.  Our goal is to show that:
\begin{align*}
\Pr[\underline{X} \in S \land \underline{Y} \in S] \ge c(\mu) > 0 \; .
\end{align*}
In other words, we want to bound away the probability from $0$
by an expression which only depends on $\mu$ and not on $n$.
Similarly, given sets $S$ and $T$ of density at least $\mu$,
we want to find under what conditions does it hold that the
probability $\Pr[\underline{X} \in S \land \underline{Y} \in S]$ can be lower
bounded effectively. We note that the support of the distribution on
$\{0,1,2\}^2$ is not full
(hence, Theorem~\ref{thm:different-sets-classification} does not apply)
and that the distribution is not of arithmetic nature. 
\subsection{Our results}

\subsubsection{Same-set hitting for two steps}

In case of $\ell = 2$ we establish the following theorem:
\begin{theorem}[cf.~Theorem~\ref{thm:main-two-variables}]
\label{thm:two-steps-classification}
A two-step probability distribution with equal marginals $\mathcal{P}$
is same-set hitting if and only if
$\alpha(\mathcal{P}) := \min_{x \in \Omega} \cP(x, x)
> 0$.
\end{theorem}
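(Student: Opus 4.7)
The ``only if'' direction is immediate: if $\alpha(\mathcal{P})=0$, pick $x^{*}\in\supp(p)$ with $\mathcal{P}(x^{*},x^{*})=0$; then the one-coordinate dictator $f(\underline{z}):=\1[z_{1}=x^{*}]$ has $\EE[f]=p_{x^{*}}>0$ yet $\EE[f(\underline X^{(1)})f(\underline X^{(2)})]=\mathcal{P}(x^{*},x^{*})=0$, so $\mathcal{P}$ is not $(p_{x^{*}},\delta)$-same-set hitting for any $\delta>0$.

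For the ``if'' direction, suppose $\alpha(\mathcal{P})>0$ and let $T(x,y):=\mathcal{P}(x,y)/p_{x}$ be the associated Markov kernel on $\supp(p)$. The first step is to reduce to the case where $T$ is irreducible. Decomposing $\supp(p)$ into the connected components $\Omega_{1},\ldots,\Omega_{m}$ of $T$ viewed as an undirected graph, each pair $(X_{i},Y_{i})$ stays within a single component (because the two marginals agree); the component indices $j_{i}\in[m]$ are i.i.d.\ with probability $\pi_{j}:=p(\Omega_{j})$, and conditioning on $\vec{j}$ factorises the joint as $\bigotimes_{i}\mathcal{P}_{j_{i}}$, where each $\mathcal{P}_{j}$ is irreducible and still has $\alpha(\mathcal{P}_{j})>0$. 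A Markov-inequality argument on the $\vec{j}$-conditional means of $f$ handles this reduction with a constant-factor loss in $\mu$.

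Assume now that $\mathcal{P}$ is irreducible. Positive self-loops $T(x,x)\geq\alpha/p_{x}>0$ make the chain aperiodic, so there is some $k$ with $T^{k}(x,y)>0$ for all $x,y$. The $k$-step joint $\mathcal{P}^{(k)}(x,y):=p_{x}T^{k}(x,y)$ therefore has equal marginals and full support, i.e.\ $\beta(\mathcal{P}^{(k)})>0$, and Theorem~\ref{thm:different-sets-classification} applied to $\mathcal{P}^{(k)}$ yields
\[
\EE\bigl[f(\underline{X}^{(0)})f(\underline{X}^{(k)})\bigr]\;\geq\;\delta_{k}(\mu)\;>\;0
\]
for every $f\colon\Omega^{n}\to[0,1]$ with $\EE[f]\geq\mu$, where $(\underline{X}^{(0)},\underline{X}^{(k)})$ is the coordinatewise $n$-fold product of $\mathcal{P}^{(k)}$.

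The main obstacle is to transfer this $k$-step bound back to the $1$-step bound $\EE_{\mathcal{P}^{\otimes n}}[f(\underline X^{(1)})f(\underline X^{(2)})]=\langle f,T^{\otimes n}f\rangle$. My plan here is to replace $T$ by a lazy self-adjoint surrogate $\widehat T:=(1-\eta)I+\eta(T+T^{*})/2$ for some small $\eta\in(0,1/2]$: by construction $\widehat T$ is positive semidefinite on $L^{2}(p)$ with eigenvalues in $[0,1]$, remains irreducible aperiodic, and hence has $\widehat T^{k'}$ of full support for some $k'$. Applying the previous step to $\widehat{\mathcal{P}}(x,y):=p_{x}\widehat T(x,y)$ gives $\langle f,(\widehat T^{\otimes n})^{k'}f\rangle\geq\tilde\delta_{k'}(\mu,\eta)>0$, and the PSD-monotonicity $\widehat T\succeq\widehat T^{k'}$ then upgrades this to $\langle f,\widehat T^{\otimes n}f\rangle\geq\tilde\delta_{k'}(\mu,\eta)$. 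Expanding $\widehat T^{\otimes n}$ as a mixture over subsets $S\subseteq[n]$ of ``identity'' coordinates together with $\{T,T^{*}\}$ reorientations on $[n]\setminus S$ produces a convex combination of ``crossed'' correlated distributions on $\Omega^{n}\times\Omega^{n}$, one of which is the target $\mathcal{P}^{\otimes n}$; controlling the crossed terms by Cauchy--Schwarz and by bounding $\EE[f^{2}]\leq\mu$, and then tuning $\eta$ so that the reverse-hypercontractivity bound $\tilde\delta_{k'}(\mu,\eta)$ inherited from MOS13 strictly beats the remainder, is where I expect the bulk of the technical effort to lie.
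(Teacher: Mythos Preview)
Your ``only if'' direction and the reduction to an irreducible block are fine (equal marginals make the weighted digraph Eulerian, so undirected components are strongly connected). The PSD step is also correct: with $\hat T=(1-\eta)I+\eta(T+T^{*})/2$ you do get $\langle f,\hat T^{\otimes n}f\rangle\ge\langle f,(\hat T^{k'})^{\otimes n}f\rangle\ge\tilde\delta_{k'}(\mu,\eta)$ independent of $n$.

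The genuine gap is the last step, going from $\hat T^{\otimes n}$ back to $T^{\otimes n}$. In the expansion
\[
\hat T^{\otimes n}\;=\;\sum_{\sigma\in\{I,T,T^{*}\}^{n}}\Bigl(\prod_i c(\sigma_i)\Bigr)\,\bigotimes_i\sigma_i,
\qquad c(I)=1-\eta,\ c(T)=c(T^{*})=\eta/2,
\]
the target $T^{\otimes n}$ carries weight $(\eta/2)^{n}$, exponentially small in $n$. Every term $\langle f,(\bigotimes_i\sigma_i)f\rangle$ is nonnegative (each factor is positivity-preserving and $f\ge 0$), so you cannot subtract them; and the Cauchy--Schwarz/$\EE[f^{2}]\le\mu$ upper bound on each term is $\mu$, while $\tilde\delta_{k'}(\mu,\eta)\le\mu$ as well. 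Hence $\tilde\delta_{k'}-(\text{remainder})$ is never positive in a way that survives division by $(\eta/2)^{n}$. Letting $\eta$ depend on $n$ does not help either: the minimum entry of $\hat T^{k'}$, and with it the constant in the MOS13 bound, deteriorates with $\eta$, so $\tilde\delta_{k'}(\mu,\eta)$ would then depend on $n$. In short, there is no $n$-free way to isolate the single term $T^{\otimes n}$ from this mixture.

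The paper avoids this by decomposing at the \emph{single-coordinate} level rather than at the operator level: it writes $\mathcal P$ as a convex combination of ``cycle'' distributions $\mathcal P_k$, each with $\rho(\mathcal P_k)<1$ (Lemma~\ref{lem:cycle-decomposition}), then conditions on which $\mathcal P_k$ each coordinate draws from (Lemma~\ref{lem:convex-decomposition}). Conditioning turns the $n$-fold product of a mixture into a product of the pieces, so there is no exponential loss, and the $\rho<1$ case is handled by Theorem~\ref{thm:main-multiple} (invariance principle plus the low-influence reduction), not by MOS13. If you want to salvage your route, the natural fix is exactly this: decompose $\mathcal P$ (not $T^{\otimes n}$) as a mixture of distributions you can handle, and condition coordinatewise.
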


Of course, if $\beta(\cP) > 0$, then Theorem \ref{thm:two-steps-classification}
follows from Theorem \ref{thm:different-sets-classification}. 
Our work is novel in case $\beta(\mathcal{P}) = 0$,
i.e., when the distribution is same-set hitting but not set hitting.
In particular we establish same-set hitting
for the probability space from Section \ref{sec:basic}.

\subsubsection{Same-set hitting for more than two steps}

In a general case of an $\ell$-step distribution with equal marginals, it
is still clear that, letting
$\alpha(\mathcal{P}) := \min_{x\in\Omega} \cP(x,x,\ldots,x)$, the condition
$\alpha(\mathcal{P}) > 0$ is necessary. However, it remains
open if it is sufficient.

We provide the following partial results. Firstly, by a simple inductive argument
based on Theorem \ref{thm:main-two-variables}, we show that multi-step 
probability  spaces induced by Markov chains are same-set hitting
(cf.~Section \ref{sec:markov}).

Secondly, we show that $\mathcal{P}$ is same-set hitting 
if $\alpha(\mathcal{P}) > 0$ and its 
\emph{correlation} $\rho(\mathcal{P})$ is smaller than $1$. 
The opposite condition $\rho(\mathcal{P}) = 1$ is equivalent
to the following:
There exist $j \in [\ell]$, $S \subseteq \Omega$, 
$T \subseteq \Omega^{\ell-1}$ such that
$0 < |S| < |\Omega|$ and:
\begin{align*}
X_i^{(j)} \in S \iff
\left(X_i^{(1)}, \ldots, X_i^{(j-1)}, X_i^{(j+1)}, \ldots, X_i^{(\ell)}\right)
\in T \; .
\end{align*}
For the full definition of $\rho(\mathcal{P})$, see~Definition \ref{def:correlation}.

\begin{theorem}[cf.~Theorem \ref{thm:main-multiple}]
\label{thm:rho-hitting}
Let $\mathcal{P}$ be a probability distribution with equal marginals.
If $\alpha(\mathcal{P}) > 0$ and $\rho(\mathcal{P}) < 1$, then $\mathcal{P}$
is same-set hitting.
\end{theorem}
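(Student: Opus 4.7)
The plan is to combine a regularity/influence decomposition with an invariance principle for correlated product spaces. The assumption $\alpha(\cP)>0$ is used to handle a bounded number of ``heavy'' coordinates by forcing all $\ell$ copies to agree there, while $\rho(\cP)<1$ powers an invariance-plus-Borell argument on the remaining low-influence coordinates. Throughout, the goal is to lower bound $\EE[\prod_{j=1}^{\ell} f(\underline{X}^{(j)})]$ by a positive quantity depending only on $\mu$, $\alpha(\cP)$, $\rho(\cP)$, and $|\Omega|$.

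Step one is a regularity step. Fix a noise parameter $\eta=\eta(\mu,\rho)$ and an influence threshold $\tau=\tau(\mu,\rho,\eta)$ to be chosen small. Using a standard low-degree influence truncation (as in \cite{MOO10} and Lemma 8.3 of \cite{MOS13}), identify a set $H\subseteq [n]$ of size $k=k(\mu,\eta,\tau,\rho)$ independent of $n$, such that the restricted functions $f_a(\underline{y}) := f(a,\underline{y})$ on $\Omega^{[n]\setminus H}$ have every low-degree influence bounded by $\tau$, and $\EE_{a}[\EE[f_a]]=\mu$, where $a$ ranges over values of $\underline{X}^{(1)}_H$.

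Step two aligns the heavy coordinates. Consider the event
\[
\cE := \{\underline{X}^{(1)}_H = \underline{X}^{(2)}_H = \cdots = \underline{X}^{(\ell)}_H\}.
\]
Because the $|H|$ heavy coordinates are independent and each satisfies $\Pr[X^{(1)}_i=\cdots=X^{(\ell)}_i=x]\ge \alpha(\cP)$ for the relevant $x\in\Omega$, we obtain $\Pr[\cE]\ge \alpha(\cP)^{k}$. Conditioning on $\cE$ and on the common value $\underline{X}_H=a$, the outside-$H$ coordinates are fresh samples from $\cP^{\otimes ([n]\setminus H)}$, and the quantity to bound becomes $\EE_a\bigl[\EE[\prod_j f_a(\underline{X}^{(j)}_{[n]\setminus H})]\bigr]$. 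Now step three invokes the correlated invariance principle for low-influence functions \cite{MOO10}: since $\rho(\cP)<1$ and the $f_a$'s are low-influence, this expectation is within $o_\tau(1)$ of the Gaussian analog obtained by replacing the $\ell$-wise correlated discrete samples by jointly Gaussian vectors with matching pairwise correlations, each of absolute value at most $\rho(\cP)<1$. A Borell-type inequality (equivalently, reverse hypercontractivity as in \cite{Bor85,MOR06,MOS13}) then gives, for this Gaussian analog, a lower bound $\Phi(\EE[f_a];\rho(\cP))>0$ whenever $\EE[f_a]>0$, with $\Phi$ a fixed strictly positive function of its arguments.

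Step four aggregates over $a$. By Markov, at least a $(\mu/2)$-fraction of weight lies on restrictions with $\EE[f_a]\ge \mu/2$; on those $a$ the Gaussian lower bound $\Phi(\mu/2;\rho(\cP))$ applies, yielding
\[
\EE\Bigl[\prod_{j=1}^{\ell} f(\underline{X}^{(j)}) \,\Big|\, \cE\Bigr] \;\ge\; \frac{\mu}{2}\,\Phi(\mu/2;\rho(\cP)) - o_\tau(1),
\]
and multiplying by $\Pr[\cE]\ge \alpha(\cP)^{k}$ gives the desired bound $\delta=\delta(\mu,\cP)>0$ once $\tau$ is taken small enough. The main obstacle is step three: executing the invariance principle \emph{uniformly in the restriction} $a$, while keeping the truncation errors, the influence threshold, and the size of $H$ all compatible. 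In particular, smoothing $f$ before extracting $H$ changes its mean slightly, so one must either argue that the smoothed version still has mean close to $\mu$ (using $\rho(\cP)<1$ to control the noise operator) or work directly with low-degree influences as in \cite{MOS13}; this is the technical heart of the argument.
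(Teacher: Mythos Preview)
Your overall architecture---reduce to low-influence functions, then apply an invariance principle plus a Gaussian/Borell bound---matches the paper. The gap is in your Step 1. The assertion that there is a set $H$ of bounded size such that \emph{every} restriction $f_a$ has all (low-degree) influences below $\tau$ is not a standard lemma, and in fact restricting a coordinate can create new high-influence coordinates in the restricted function (e.g.\ $f=x_1x_2$: $\Inf_2$ doubles after fixing $x_1=1$). A naive ``keep adding high-influence coordinates to $H$'' process has no obvious terminating potential; neither total influence nor variance is guaranteed to drop enough at each step. Weakening to ``most $a$'' is more plausible, but even that is not the statement of the regularity lemmas you cite, and you would still need to argue that the set of $a$ with $\EE[f_a]\ge\mu/2$ overlaps the low-influence set---your Step 4 Markov bound and the low-influence bound are on different events.

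The paper circumvents exactly this difficulty, and this is its main technical contribution. It runs a two-stage reduction. First (Lemma~\ref{lem:restrictToMakeStable}) it restricts to make $f$ \emph{resilient}: no size-$k$ restriction changes $\EE[f]$ by more than a $(1\pm\epsilon)$ factor; this terminates because each non-resilient restriction raises $\EE[f]$ multiplicatively. Second (Corollary~\ref{cor:low-influence-reduction}), to kill a high-influence coordinate $i$ of some $f^{(j^*)}$, it does \emph{not} simply restrict $f^{(j^*)}$; instead it applies a max operator $\cM[i,y,z]f^{(j^*)}:=\max(f^{(j^*)}|_{x_i=y},f^{(j^*)}|_{x_i=z})$ while restricting the other $f^{(j)}$, choosing $(\overline{x}^{\setminus j^*},y,z)$ via a double sample. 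The point is that $\EE[\cM[i,Y,Z]f^{(j^*)}]$ increases by at least $\Inf_i(f^{(j^*)})(1-\rho^2)$ (Lemma~\ref{lem:fjNormIncrease}), while the expectations of the restricted $f^{(j)}$ are unchanged on average, so the potential $\sum_j\EE[f^{(j)}]$ strictly rises; this forces termination after $O(\ell/(\tau(1-\rho^2)))$ steps. The output is $\ell$ possibly different low-influence functions $g^{(j)}$, each a maximum of boundedly many restrictions of $f$, and resilience from stage one guarantees $\EE[g^{(j)}]\ge(1-\epsilon)\mu$. Your proposal is missing this max-operator idea (or an equivalent device) that makes the influence-reduction step terminate.
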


We are not aware of any general results in case $\rho(\mathcal{P}) = 1$.
In particular, let $\mathcal{P}$ be a three-step distribution over
$\Omega = \{0, 1, 2\}$ such that $X_i^{(1)},X_i^{(2)},X_i^{(3)}$ are uniform
over $\{000,111,222,012,120,201\}$. To the best of our knowledge,
it is an open question whether this distribution $\mathcal{P}$ is same-set 
hitting.
One might conjecture that $\alpha(\cP) > 0$ is the sole sufficient
condition for same-set hitting.
Unfortunately, the techniques used to prove Theorem \ref{thm:progressions}
do not seem to extend easily to spaces with less algebraic structure.

\subsubsection{Set hitting for functions  with no large Fourier coefficients}
\label{sec:intro-fourier}

The methods developed here also allow to obtain lower bounds on the 
probability of hitting multiple sets. In fact, we show that if $\rho(\cP) < 1$, 
then such lower bounds exist in terms of $\rho$, 
the measures of the sets and the largest non-empty Fourier coefficient. 

\begin{theorem}[Informal, cf.~Theorem \ref{thm:local-variance}]
\label{thm:local-variance-basic}
Let $\cP$ be a probability distribution with $\rho(\cP) < 1$. Then, $\cP$
is set-hitting for functions $f^{(1)}, \ldots, f^{(\ell)}: 
\underline{\Omega} \to [0, 1]$ that have both:
\begin{itemize}
  \item Noticeable expectations, i.e., 
$\EE[f^{(j)}(\underline{X}^{(j)})] \ge \Omega(1)$.
  \item No large Fourier coefficients, i.e.,
$\max_{\sigma} \left| \hat{f}^{(j)}(\sigma) \right| \le o(1)$.
\end{itemize}
\end{theorem}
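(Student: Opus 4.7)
The plan is to show that the hypotheses force $\EE[\prod_j f^{(j)}(\underline{X}^{(j)})]$ to lie within $o_\epsilon(1)$ of $\prod_j \EE[f^{(j)}]$, where $\epsilon := \max_{j, \sigma \neq \underline{0}} |\widehat{f^{(j)}}(\sigma)|$; combined with the noticeable expectations this gives the set-hitting bound $\delta = \delta(\mu, \rho, \ell) > 0$ independent of $n$. First I would write each $f^{(j)} = \mu^{(j)} + g^{(j)}$ with $\EE[g^{(j)}] = 0$, expand the product, and separate out the main term $\prod_j \mu^{(j)} \geq \mu^\ell$; what remains is a sum, over subsets $S \subseteq [\ell]$ with $|S| \geq 2$, of centered cross-terms $\EE[\prod_{j \in S} g^{(j)}(\underline{X}^{(j)})]$ each multiplied by some $\prod_{j \notin S} \mu^{(j)} \in [0,1]$. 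The problem reduces to bounding every such cross-term by $o_\epsilon(1)$.

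For each cross-term I would pick an orthonormal basis $(\chi^{(j)}_\sigma)_\sigma$ of $L^2(\Omega^n)$ that is a tensor product of single-coordinate bases adapted to the $j$-th marginal of $\mathcal{P}$. Expanding each $g^{(j)}$ and exploiting that the $n$ tuples $(X_i^{(1)}, \ldots, X_i^{(\ell)})$ are i.i.d.~under $\mathcal{P}$, the cross-term factorizes coordinate-wise:
\begin{align*}
\EE\Big[\prod_{j \in S} g^{(j)}\Big] = \sum_{\substack{(\sigma^{(j)})_{j \in S} \\ \sigma^{(j)} \neq \underline{0}}} \Big(\prod_{j \in S} \widehat{f^{(j)}}(\sigma^{(j)})\Big) \prod_{i=1}^n c\big((\sigma^{(j)}_i)_{j \in S}\big),
\end{align*}
where $c$ is the single-coordinate multilinear correlation coefficient determined by $\mathcal{P}$. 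The heart of the argument, and what I expect to be the main obstacle, is establishing a per-coordinate contraction $|c((a^{(j)})_{j \in S})| \leq \rho < 1$ whenever the tuple has at least one $a^{(j)} \neq 0$. For $|S| = 2$ this is essentially the definition of maximal correlation; for $|S| > 2$ the one-versus-rest formulation of $\rho(\mathcal{P}) < 1$ must be translated into a clean multilinear statement, which I would attempt by iterating the two-variable Cauchy--Schwarz inequality (conditioning on one coordinate at a time and absorbing the remaining $\ell - 1$ coordinates into the ``rest'' side) together with tensorization of maximal correlation.

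With the per-coordinate contraction in hand, the remainder is standard Fourier analysis. I would split the Fourier sum by its support size $w := |\{i : (\sigma^{(j)}_i)_{j \in S} \neq \underline{0}\}|$, so that the coordinate product is bounded by $\rho^w$. Pulling one Fourier coefficient out in $L^\infty$ via the hypothesis $|\widehat{f^{(j)}}(\sigma)| \leq \epsilon$, and bounding the rest by Cauchy--Schwarz together with Parseval $\sum_\sigma |\widehat{f^{(j)}}(\sigma)|^2 \leq \|f^{(j)}\|_2^2 \leq 1$, controls the contribution from each $w$ by roughly $\epsilon \rho^w$. Summing the geometric series yields a total error $O_{\rho, \ell}(\epsilon)$, and choosing $\epsilon$ small enough in terms of $\mu$, $\rho$, and $\ell$ gives $\EE[\prod_j f^{(j)}] \geq \mu^\ell - O_{\rho,\ell}(\epsilon) \geq \mu^\ell/2$, as desired.
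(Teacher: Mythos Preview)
Your central claim --- that small Fourier coefficients force $\EE\bigl[\prod_j f^{(j)}(\underline{X}^{(j)})\bigr]$ to lie within $o_\epsilon(1)$ of $\prod_j \mu^{(j)}$ --- is false, and with it the whole strategy collapses. Take $\ell=2$, $\Omega=\{0,1\}$, and let $\mathcal{P}$ be the standard $\rho$-correlated pair of uniform bits with $\rho\in(0,1)$. Set $f^{(1)}=\mathrm{Maj}_n$ and $f^{(2)}=1-\mathrm{Maj}_n$. Both have mean $1/2$ and every nonzero Fourier coefficient is $O(n^{-1/2})$, so your $\epsilon\to 0$. Yet
\begin{align*}
\EE\bigl[f^{(1)}(\underline{X}^{(1)})\,f^{(2)}(\underline{X}^{(2)})\bigr]\ \longrightarrow\ \tfrac14-\tfrac{1}{2\pi}\arcsin\rho,
\end{align*}
which for $\rho=0.9$ is about $0.072$, while $\mu^{(1)}\mu^{(2)}=0.25$. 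The cross-term stays a fixed negative constant, not $o_\epsilon(1)$, and your proposed bound $\mu^{\ell}/2=0.125$ is violated. Low-influence functions simply do not behave like their means under correlated products; this is precisely the phenomenon the invariance principle is built to handle.

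The technical step where the argument breaks is the ``pull one coefficient out in $L^\infty$, Cauchy--Schwarz the rest'' estimate. For $|S|=2$ the level-$w$ contribution equals $\langle (g^{(1)})^{=w},\,K^{\otimes n}(g^{(2)})^{=w}\rangle$ for the single-coordinate Markov operator $K$, and is bounded by $\rho^{w}\,\|(g^{(1)})^{=w}\|_2\,\|(g^{(2)})^{=w}\|_2$. Parseval gives only $\sum_w\|(g^{(j)})^{=w}\|_2^2\le 1$, with no factor of $\epsilon$; summing recovers at best $|\EE[g^{(1)}g^{(2)}]|\le\rho$, a constant. Attempting to extract $\epsilon$ by replacing one $L^2$ norm with an $L^\infty$-times-counting bound inflates the number of Fourier terms by $\binom{n}{w}(|\Omega|-1)^w$, destroying the geometric decay. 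For $|S|\ge 3$ the per-coordinate contraction you posit also fails: $\rho(\mathcal{P})$ bounds $\Cov[\chi_{a^{(j)}},\prod_{j'\ne j}\chi_{a^{(j')}}]$ only relative to $\sqrt{\Var[\prod_{j'\ne j}\chi_{a^{(j')}}]}$, and that variance can exceed $1$, so $|c((a^{(j)}))|\le\rho$ need not hold.

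The paper takes an entirely different route. It first shows (Claim~\ref{cl:fourier-vs-variance} and Lemma~\ref{lem:localInfluenceToStability}) that small low-degree Fourier coefficients imply \emph{resilience}: conditioning on any bounded set of coordinates barely moves $\EE[f^{(j)}]$. From resilience one reduces to genuinely low-influence functions (Corollary~\ref{cor:low-influence-reduction}), and then the invariance principle (Theorem~\ref{thm:low-influence}) transfers the problem to correlated Gaussians, where a reverse-hypercontractive lower bound $(\prod_j\mu^{(j)})^{\ell/(1-\rho^2)}$ is available. Note the exponent is strictly greater than $1$: the correct lower bound is \emph{not} $\prod_j\mu^{(j)}$, which is exactly what the Majority example above illustrates.
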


\subsection{Other related work}

In the case of symmetric two-step spaces
(which can be thought of as product graphs) 
works by Dinur, Friedgut and Regev \cite{DFR08, FR18}
establish a removal lemma: They show
that if $\Pr[\underline{X} \in S \land \underline{Y} \in S]$ is small,
then it must be possible to remove a small number of elements from $S$
to obtain $S'$ with $\Pr[\underline{X} \in S' \land \underline{Y} \in S'] = 0$.
They go on to use this result to characterize all sets with
$\Pr[\underline{X} \in S \land \underline{Y} \in S] = 0$: It turns out
that every such set must be almost contained in a junta. 
Interestingly, \cite{FR18} obtain
a tower-type dependence between $\mu$ and $\delta$ in the removal lemma,
in contrast to ours which is ``merely'' triply exponential.

The case of $\rho < 1$ has also been studied in the context of extremal 
combinatorics and hardness of approximation.
In particular, Mossel \cite{Mos10} uses the 
invariance principle to prove that if $\rho(\mathcal{P}) < 1$, then $\mathcal{P}$ is
set hitting for low-influence functions. We use this result to establish
Theorem \ref{thm:rho-hitting}. Additionally, Theorem 
\ref{thm:local-variance-basic} can be seen as a strengthening of
\cite{Mos10}.

Furthermore, Austrin and Mossel \cite{AM13} establish the result equivalent to 
Theorem \ref{thm:local-variance-basic} assuming in addition to $\rho(\cP) < 1$
also that $\cP$ is pairwise independent (they also prove results for the case
$\rho(\cP)=1$ with pairwise independence but these involve only bounded degree 
functions).

Our work is related to problems and results in inapproximability in theoretical
computer science. 
For example, our theorem is related to the proof of hardness
for rainbow colorings of hypergraphs by Guruswami and Lee
\cite{GL15}. In particular, it is connected to their Theorem 4.3
and partially answers their Questions C.4 and C.6.

There are works in additive combinatorics that treat 
specific classes of distributions with $\rho = 1$. For example, one
can take $\cP$ to be uniform over solutions to a fixed full-rank
system of $r$ linear equations with $\ell$ variables over $\mathbb{F}_p$.
There is extensive work on removal lemmas (which imply same-set hitting)
for different cases in this setting, see, e.g.,
\cite{Gre05a, KSV09, Sha10, FLS18}.

\paragraph{Follow-up work}
There are two subsequent preprints by some of the authors:
\cite{Mos17} strengthens Theorem~\ref{thm:local-variance} to obtain
precise Gaussian bounds for functions with small low-degree
Fourier coefficients in case $\rho(\cP) < 1$ (one can also use the technique
from \cite{Mos17} to deduce an
alternative proof of Theorem~\ref{thm:main-multiple} with roughly
the same dependence).
Another author \cite{SymProg} shows
same-set hitting for symmetric sets for the distribution of
arithmetic progressions with restricted differences mentioned
in Section~\ref{sec:open}.

\subsection{Proof ideas: additive combinatorics and theory of influences}
Interestingly, the proof of our results interpolates between additive
combinatorics and the theory of influences. Results of \cite{Mos10} imply that
if a collection of functions have low influences then they are same-set
hitting. In the proof of Theorem~\ref{thm:main-multiple} we apply a variant of
a density increment argument to reduce to this case. First, we apply the standard
density increment argument to assume without loss of generality that
conditioning on a small number of coordinates does not change the measure of
the set by much. Then we show, under this assumption, by applying another variant
of density increment that we can additionally assume
w.l.o.g.~that all influences are small. 

\subsection{Outline of the paper}

The rest of the paper is organised as follows: the notation is introduced
in Section~\ref{sec:notation}, Section~\ref{sec:results} contains full
statements of our theorems, and Section~\ref{sec:proof-sketch} sketches
the proof of our main theorem.

The full proof of the multi-step theorem follows in 
Section~\ref{sec:main-proof}. The proof of the two-step theorem is in 
Section~\ref{sec:two-steps} and the proof for functions with
small Fourier coefficients in Section~\ref{sec:local-variance}.
A theorem for Markov chains is introduced in Section~\ref{sec:markov}
and better bounds for symmetric spaces in Section~\ref{sec:polynomial-hitting}.
Finally, the modified proof of the low-influence theorem from
\cite{Mos10} is presented in the appendix.
We note that an extended abstract of our results appeared 
in~\cite{HaHoMo:16}. 

\section{Notation and Preliminaries}
\label{sec:notation}

\subsection{Notation}

We will now introduce our setting and notation.
We refer the reader to Figure~\ref{fig:naming} for an overview.

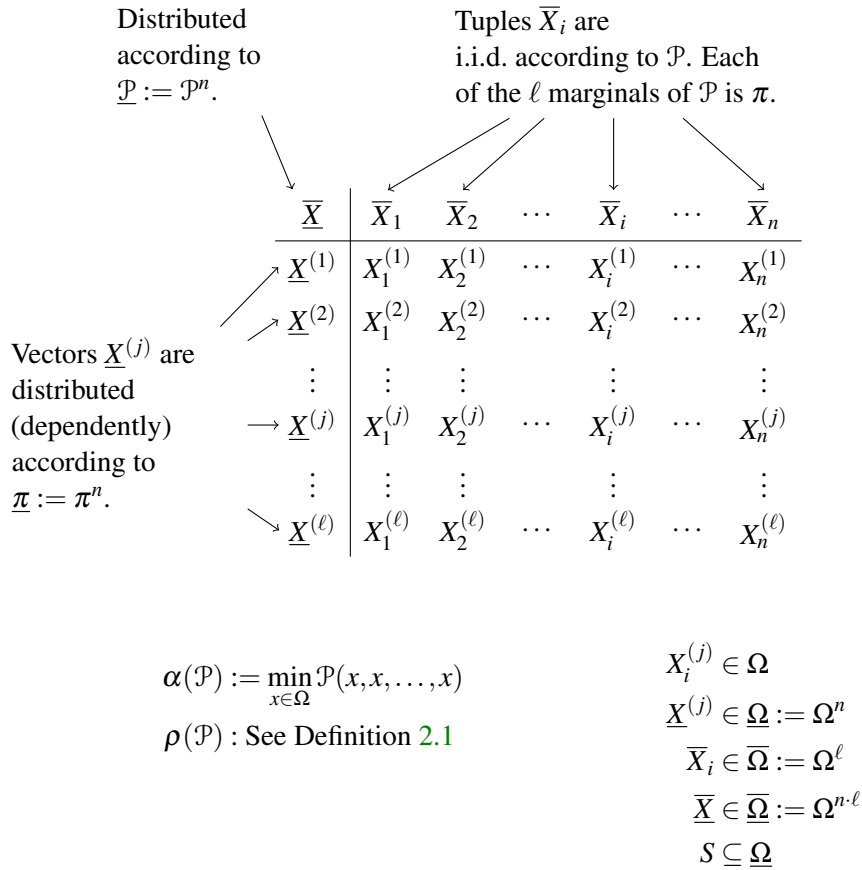
\begin{figure}
\begin{tikzpicture}[yscale=-0.7]
\node(X__) at (0,0) {$\underline{\overline{X}}$};
\node(X1_) at (1,0) {$\overline{X}_1$};
\node(X2_) at (2,0) {$\overline{X}_2$};
\node(X3_) at (3,0) {$\dots$};
\node(X4_) at (4,0) {$\overline{X}_i$};
\node(X5_) at (5,0) {$\dots$};
\node(X6_) at (6,0) {$\overline{X}_n$};

\node(X_1) at (0,1) {$\underline{X}^{(1)}$};
\node(X11) at (1,1) {$X_1^{(1)}$};
\node(X21) at (2,1) {$X_2^{(1)}$};
\node(X31) at (3,1) {$\cdots$};
\node(X41) at (4,1) {$X_i^{(1)}$};
\node(X51) at (5,1) {$\cdots$};
\node(X51) at (6,1) {$X_n^{(1)}$};

\node(X_2) at (0,2) {$\underline{X}^{(2)}$};
\node(X12) at (1,2) {$X_1^{(2)}$};
\node(X22) at (2,2) {$X_2^{(2)}$};
\node(X32) at (3,2) {$\cdots$};
\node(X42) at (4,2) {$X_i^{(2)}$};
\node(X52) at (5,2) {$\cdots$};
\node(X52) at (6,2) {$X_n^{(2)}$};

\node(X_3) at (0,3) {$\vdots$};
\node(X13) at (1,3) {$\vdots$};
\node(X23) at (2,3) {$\vdots$};
\node(X43) at (4,3) {$\vdots$};
\node(X63) at (6,3) {$\vdots$};

\node(X_4) at (0,4) {$\underline{X}^{(j)}$};
\node(X14) at (1,4) {$X_1^{(j)}$};
\node(X24) at (2,4) {$X_2^{(j)}$};
\node(X34) at (3,4) {$\cdots$};
\node(X44) at (4,4) {$X_i^{(j)}$};
\node(X54) at (5,4) {$\cdots$};
\node(X64) at (6,4) {$X_n^{(j)}$};

\node(X_5) at (0,5) {$\vdots$};
\node(X15) at (1,5) {$\vdots$};
\node(X25) at (2,5) {$\vdots$};
\node(X45) at (4,5) {$\vdots$};
\node(X65) at (6,5) {$\vdots$};

\node(X_6) at (0,6) {$\underline{X}^{(\ell)}$};
\node(X16) at (1,6) {$X_1^{(\ell)}$};
\node(X26) at (2,6) {$X_2^{(\ell)}$};
\node(X36) at (3,6) {$\cdots$};
\node(X46) at (4,6) {$X_i^{(\ell)}$};
\node(X56) at (5,6) {$\cdots$};
\node(X66) at (6,6) {$X_n^{(\ell)}$};

\draw[-] (0.5,-.5) to (0.5,6.5);
\draw[-] (-.5,0.5) to (6.5,0.5);

\node(iid) at (4,-3) {\begin{minipage}{4.2cm}\raggedright
Tuples $\overline{X}_{i}$ are 
i.i.d.~according to $\mathcal{P}$.
Each of the $\ell$ marginals of $\mathcal{P}$ is~$\pi$.\end{minipage}};

\draw[->] (iid) to (X1_.north);
\draw[->] (iid) to (X2_.north);
\draw[->] (iid) to (X4_.north);
\draw[->] (iid) to (X6_.north);

\node[rotate=0](rowdistr) at (-2.5, 4) 
{\begin{minipage}{3cm}\raggedright Vectors $\underline{X}^{(j)}$
are distributed (dependently) according to $\underline{\pi} := \pi^n$.
\end{minipage}};
\draw[->] (rowdistr) to (X_1.west);
\draw[->] (rowdistr) to (X_2.west);
\draw[->] (rowdistr) to (X_4.west);
\draw[->] (rowdistr) to (X_6.west);

\node(commonDist) at (-1, -3) 
{\begin{minipage}{3.2cm}\raggedright Distributed according to 
$\underline{\mathcal{P}}:=\mathcal{P}^n$.\end{minipage}};
\draw[->] (commonDist) to (X__.north west);

\node(propertiesofP) at (0,9) {\begin{minipage}{5cm}\begin{align*}
\alpha(\mathcal{P}) &:= \min_{x \in \Omega} \mathcal{P}(x,x,\ldots,x)\\
\rho(\mathcal{P}) &: \text{See Definition~\ref{def:correlation}}
\end{align*}\end{minipage}};

\node(domainsofxij) at (6,10) {\begin{minipage}{4cm}\begin{align*}
X_{i}^{(j)} &\in \Omega\\
\underline{X}^{(j)} &\in \underline{\Omega} := \Omega^{n}\\
\overline{X}_{i} & \in \overline{\Omega} := \Omega^{\ell}\\
\underline{\overline{X}} & \in \underline{\overline{\Omega}} 
:= \Omega^{n\cdot \ell}\\
S &\subseteq \underline{\Omega}
\end{align*}\end{minipage}};

\end{tikzpicture}
\caption{Naming of the random variables in the general case.
The columns $\overline{X}_i$ are distributed i.i.d~according to
$\mathcal{P}$. 
Each $X_{i}^{(j)}$ is distributed according to $\pi$.
The overall distribution of $\overline{\underline{X}}$ is
$\underline{\mathcal{P}}$.}
\label{fig:naming}
\end{figure}

We always assume that we have $n$ independent coordinates.
In each coordinate~$i$
we pick~$\ell$ values $X_{i}^{(j)}$ for $j \in [\ell] = \{1,\ldots,\ell\}$ 
at random using some distribution.
Each value $X_{i}^{(j)}$ is chosen from the same fixed set $\Omega$, and 
the distribution
of the tuple $\overline{X}_i = (X_{i}^{(1)},\ldots,X_{i}^{(\ell)})$ of
values from $\Omega^{\ell}$ is given by a distribution~$\mathcal{P}$.

This gives us values $X_{i}^{(j)}$ for $i \in \{1,\ldots,n\}$ and 
$j \in \{1,\ldots,\ell\}$.
Thus, we have $\ell$ vectors
$\underline{X}^{(1)},\ldots,\underline{X}^{(\ell)}$,
where $\underline{X}^{(j)} = (X_1^{(j)},\ldots,X_n^{(j)})$
represents the $j$-th step of the random process. In case $\ell = 2$,
we might call our two vectors $\underline{X}$ and $\underline{Y}$ instead.

For reasons outlined in Section \ref{sec:differentMarginals} we assume that
all of $X_i^{(1)}, \ldots, X_i^{(\ell)}$ have the same marginal distribution,
which we call $\pi$. We assume that $\Omega$ is the support of $\pi$.

Even though it is not necessary,
for clarity of the presentation
we assume that each coordinate $\overline{X}_i = 
(X_i^{(1)},\ldots,X_i^{(j)},\ldots,X_i^{(\ell)})$ has the 
same distribution $\mathcal{P}$.

\medskip

We consistently use index $i$ to index over the coordinates 
(from $[n]$)
and $j$ to index over the steps (from $[\ell]$).

As visible in Figure \ref{fig:naming},
we denote the aggregation across the coordinates by the underline
and the aggregation across the steps by the overline.
For example, we write $\underline{\Omega} = \Omega^n$,
$\overline{\Omega} = \Omega^\ell$, 
$\underline{\mathcal{P}} = \mathcal{P}^n$
and $\overline{\underline{X}} = 
(\overline{X}_1, \ldots, \overline{X}_n)
\allowbreak = (\underline{X}^{(1)}, \ldots, \underline{X}^{(\ell)})$.

We sometimes call $\underline{\mathcal{P}}$ a tensorized, 
multi-step probability distribution
as opposed to a tensorized, single-step distribution 
$\underline{\mathcal{\pi}}$ and single-coordinate, multi-step distribution
$\mathcal{P}$.

Furthermore, we extend the index notation to subsets of indices or steps.
For example, for $S \subseteq [\ell]$ we define $X^{(S)}$ to be the
collection of random variables $\left\{ X^{(j)}: j \in S  \right\}$.

We also use the set difference symbol to mark vectors with one element missing,
e.g., $\overline{X}^{\setminus j} := (X^{(1)}, \ldots, X^{(j-1)}, X^{(j+1)}, \ldots,
X^{(\ell)})$.

\medskip

One should think of $\ell$ and $|\Omega|$ as constants and of $n$ as large.
We aim to get bounds which are independent of $n$.

\subsection{Correlation}
\label{sec:correlation}
In case $\ell > 2$, the bound we obtain will depend on the
\emph{correlation} of the distribution $\mathcal{P}$.
This concept was used before in \cite{Mos10}.

\begin{definition}\label{def:correlation}
Let $\mathcal{P}$ be a single-coordinate distribution
and let $S, T \subseteq [\ell]$. We define the \emph{correlation}:
\begin{align*}
\rho(\mathcal{P}, S, T) &:=
\sup \Bigl\{ \Cov[f(X^{(S)}), g(X^{(T)})] \Bigm| 
f: \Omega^{(S)} \to \mathbb{R}, 
g: \Omega^{(T)} \to \mathbb{R}, \\
&\qquad\qquad
\Var[f(X^{(S)})] = \Var[g(X^{(T)})] = 1 
\Bigr\} \; .
\end{align*}
The correlation of $\mathcal{P}$ is
$\rho(\mathcal{P}) := \max_{j \in [\ell]} 
\rho\left(\mathcal{P}, \{j\}, [\ell]\setminus\{j\}\right)$.
\end{definition}

\subsection{Influence}

A crucial notion in the proof of Theorem~\ref{thm:rho-hitting} is
the \emph{influence} of a function. It expresses the average variance of a 
function, 
given that all but one of its $n$ inputs have been fixed to random values:
\begin{definition}
Let $\underline{X}$ be a random vector over alphabet $\underline{\Omega}$
and $f: \underline{\Omega} \to \mathbb{R}$ be a function
and $i \in [n]$. The \emph{influence of $f$ on the $i$-th coordinate} is:
\begin{align*}
  \Inf_i(f(\underline{X})) := \EE \left[ \Var\left[ f(\underline{X}) 
  \mid \underline{X}_{\setminus i} \right] \right]
  \; .
\end{align*}
The \emph{(total) influence of $f$} is 
$\Inf(f(\underline{X})) := \sum_{i=1}^n \Inf_i(f(\underline{X}))$.
\end{definition}
Note that the influence depends both on the function $f$ and
the distribution of the vector $\underline{X}$.

\section{Our Results}
\label{sec:results}

Here we give precise statements of our results presented in the introduction.

\subsection{The case of 
\texorpdfstring{$\ell = 2$}{l = 2}}

\begin{restatable}{theorem}{maintwovariables}
\label{thm:main-two-variables}
Let $\Omega$ be a finite set and
$\mathcal{P}$ a probability distribution over $\Omega^2$ 
with equal marginals $\pi$.
Let pairs $(X_i, Y_i)$ 
be i.i.d.~according to $\mathcal{P}$ for 
$i \in \{1,\ldots,n\}$.

Then, for every $f: \Omega^n \to [0,1]$ with 
$\EE[f(\underline{X})] = \mu > 0$:
\begin{align}
\EE [f(\underline{X})f(\underline{Y})] 
\geq c\left (\alpha(\mathcal{P}), \mu \right) \; ,
\end{align}
where the function $c()$ is positive whenever
$\alpha(\mathcal{P})> 0$.
\end{restatable}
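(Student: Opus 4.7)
My plan is to reduce the assertion to the case $\rho(\mathcal{P}) < 1$, in which it follows directly from Theorem~\ref{thm:rho-hitting} specialized to $\ell = 2$. The substantive content of the argument is the handling of $\rho(\mathcal{P}) = 1$.

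For $\ell = 2$, the condition $\rho(\mathcal{P}) = 1$ is equivalent to disconnectedness of the support graph of $\mathcal{P}$ on $\Omega$ (edges $\{x, y\}$ whenever $\mathcal{P}(x,y) + \mathcal{P}(y,x) > 0$). Indeed, the Cauchy--Schwarz equality case in the definition of maximal correlation forces any extremizer $f, g$ with $f(X_i) = g(Y_i)$ almost surely to be constant on each connected component of that graph. Writing $\Omega = \Omega_1 \sqcup \cdots \sqcup \Omega_k$ for the resulting partition, the renormalized restrictions $\mathcal{P}_j$ on $\Omega_j^2$ satisfy $\alpha(\mathcal{P}_j) \geq \alpha(\mathcal{P})$ and, by construction of $\Omega_j$ as a single connected component, $\rho(\mathcal{P}_j) < 1$.

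I would then introduce the coordinatewise component assignment $\underline{c} = (c_1, \ldots, c_n)$, where $c_i$ is the common component of $X_i$ and $Y_i$. The vector $\underline{c}$ is i.i.d.\ with $\Pr[c_i = j] = \pi(\Omega_j)$, and conditional on $\underline{c}$ the pair $(\underline{X}, \underline{Y})$ is distributed as $\bigotimes_i \mathcal{P}_{c_i}$. Setting $\mu_{\underline{c}} := \EE[f(\underline{X}) \mid \underline{c}]$, we have $\EE_{\underline{c}}[\mu_{\underline{c}}] = \mu$ and $\mu_{\underline{c}} \leq 1$, so reverse Markov gives $\Pr_{\underline{c}}[\mu_{\underline{c}} \geq \mu/2] \geq \mu/2$. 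For each such ``good'' $\underline{c}$, lower-bounding $\EE[f(\underline{X}) f(\underline{Y}) \mid \underline{c}]$ is a same-set hitting problem for a product of (possibly different) coordinate distributions $\mathcal{P}_{c_i}$, all satisfying uniform bounds $\alpha \geq \alpha(\mathcal{P})$ and $\rho \leq \max_j \rho(\mathcal{P}_j) < 1$.

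I would then invoke (an extended form of) Theorem~\ref{thm:rho-hitting}, whose proof combines density increase with the low-influence invariance estimate of \cite{Mos10}; both steps operate per coordinate and rely only on uniform $\alpha, \rho$ bounds, so the argument extends to such inhomogeneous products and yields $\EE[ff \mid \underline{c}] \geq c(\alpha, \mu/2) > 0$. Averaging then gives the desired $n$-independent lower bound $\EE[f(\underline{X}) f(\underline{Y})] \geq (\mu/2) \cdot c(\alpha, \mu/2) > 0$. The main obstacle is precisely this last step: verifying that the proof of Theorem~\ref{thm:rho-hitting} carries over cleanly to products of different coordinate distributions sharing uniform $\alpha, \rho$ bounds. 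Since the density-increase and invariance steps act locally on each coordinate, I expect the extension to be essentially routine, but it requires a careful inspection of each stage of that proof.
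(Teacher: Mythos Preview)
Your overall architecture (decompose $\mathcal{P}$, condition on the component indicator, then invoke Theorem~\ref{thm:main-multiple} coordinatewise) matches the paper's Lemma~\ref{lem:convex-decomposition}, and your remark about the inhomogeneous extension is exactly Section~\ref{sec:equal-distributions}. However, there is a genuine quantitative gap.

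The theorem asserts a bound $c(\alpha(\mathcal{P}),\mu)$ depending \emph{only} on $\alpha(\mathcal{P})$ and $\mu$; the paper stresses this point explicitly (``does not depend on $\rho(\mathcal{P})$ in any way''). Your argument, even in the case $\rho(\mathcal{P})<1$ where you ``apply Theorem~\ref{thm:rho-hitting} directly'', only yields $c(\alpha,\rho,2,\mu)$, and in the decomposed case only $c(\alpha,\max_j\rho(\mathcal{P}_j),2,\mu)$. This $\rho$-dependence cannot be removed with your decomposition: for $\Omega=\{0,1\}$ with $\mathcal{P}(0,0)=\mathcal{P}(1,1)=(1-\epsilon)/2$ and $\mathcal{P}(0,1)=\mathcal{P}(1,0)=\epsilon/2$ the support graph is connected, $\alpha(\mathcal{P})=(1-\epsilon)/2\to 1/2$, yet $\rho(\mathcal{P})=1-2\epsilon\to 1$. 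So connected components simply do not control $\rho$ in terms of $\alpha$, and your final line ``$c(\alpha,\mu/2)$'' silently drops a parameter that can blow up.

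The paper closes this gap with a different decomposition: it writes $\mathcal{P}-\alpha(\mathcal{P})\cdot\Id$ as a sum of weighted directed cycles (Lemma~\ref{lem:digraph-decomposition}), reallocates a controlled amount of diagonal mass to each cycle, and obtains a convex decomposition into $(s,p)$-cycles with $p\in[\alpha(\mathcal{P})^3,1/2]$ (Lemma~\ref{lem:cycle-decomposition}). For such cycles one computes explicitly $\rho\le 1-7p(1-p)/s^2$ (Lemma~\ref{lem:cycle-rho}), which together with $s\le|\Omega|\le 1/\alpha(\mathcal{P})$ gives $\rho\le 1-3\alpha(\mathcal{P})^5$, a bound depending only on $\alpha(\mathcal{P})$. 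Then Lemma~\ref{lem:convex-decomposition} --- which is precisely your conditioning-plus-Markov step --- finishes. In short, your ``horizontal'' partition of $\Omega$ must be replaced by a ``vertical'' decomposition of the measure $\mathcal{P}$ to make the $\rho$-bound uniform in $\alpha$. Your argument does establish the qualitative statement that $\mathcal{P}$ is same-set hitting, but not the claimed dependence of the constant.
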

We remark that Theorem~\ref{thm:main-two-variables} does not depend
on $\rho(\mathcal{P})$ in any way. 
This is in contrast to the case $\ell > 2$.
It is possible to obtain an inverse polynomial bound
$c(\mu) \ge \mu^C$
for symmetric two-step spaces (see Section~\ref{sec:polynomial-hitting}).

To prove Theorem \ref{thm:main-two-variables} we make
a convex decomposition argument and then apply
the multi-step Theorem \ref{thm:main-multiple}
(see Section~\ref{sec:two-steps}).
For completeness, we provide a proof of 
Theorem \ref{thm:two-steps-classification} assuming Theorem
\ref{thm:main-two-variables}.
\begin{proof}[Proof of Theorem \ref{thm:two-steps-classification}]
The ``if'' part follows from Theorem \ref{thm:main-two-variables}.
The ``only if'' can be seen by taking $f$ to be an appropriate dictator. 
\end{proof}

\subsection{The general case}

\begin{restatable}{theorem}{mainmultiple}
\label{thm:main-multiple}
Let $\Omega$ be a finite set and $\cP$ a distribution over $\Omega^{\ell}$ in
which all marginals are equal.  Let tuples
$\overline{X}_i = (X_i^{(1)}, \ldots, X_i^{(\ell)})$ be i.i.d.~according to
$\mathcal{P}$ for $i \in \{1,\ldots,n\}$.

Then, for every function $f: \Omega^n \to [0,1]$
with $\EE[f(\underline{X}^{(j)})] = \mu > 0$:
\begin{align}
  \EE \left[ \prod_{j=1}^\ell f(\underline{X}^{(j)}) \right]
  \geq c\left(\alpha(\mathcal{P}), \rho(\mathcal{P}), \ell, \mu \right) \; ,
\end{align}
where the function $c()$ is positive whenever $\alpha(\mathcal{P})>0$
and $\rho(\mathcal{P}) < 1$.

Furthermore, there exists some $D(\mathcal{P}) > 0$
(more precisely, $D$ depends on 
$\alpha$, $\rho$ and $\ell$) such that
if $\mu \in (0, 0.99]$, one can take:
\begin{align}
\label{eq:76a}
  c(\alpha, \rho, \ell, \mu) := 1 / \exp\left(\exp\left(\exp\left(
  \left(1/\mu\right)^D
  \right)\right)\right) \; .
\end{align}
\end{restatable}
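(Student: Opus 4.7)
The plan is a two-stage density-increment argument that reduces the problem to the low-influence regime, where the invariance-principle-based set-hitting theorem of \cite{Mos10} can be invoked (this is the step that actually uses the $\rho(\mathcal{P})<1$ hypothesis). The low-influence black box says: for functions $f^{(1)},\ldots,f^{(\ell)}:\underline{\Omega}\to[0,1]$ whose every coordinate influence is at most $\tau$,
\[
\left|\EE\!\left[\prod_{j=1}^{\ell}f^{(j)}(\underline{X}^{(j)})\right]-\prod_{j=1}^{\ell}\EE[f^{(j)}(\underline{X}^{(j)})]\right|\le\eta(\tau,\rho,\ell),
\]
with $\eta\to 0$ as $\tau\to 0$. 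Applied with all $f^{(j)}=f$ and $\tau$ small enough that $\eta\le\mu^{\ell}/2$, this yields $\EE[\prod_j f(\underline{X}^{(j)})]\ge\mu^{\ell}/2$, which is the desired conclusion.

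The first stage is a diagonal-restriction density increment that exploits $\alpha(\mathcal{P})>0$. The key observation is that the event that $\overline{X}_S$ is ``on the diagonal'', namely $X_i^{(1)}=\cdots=X_i^{(\ell)}$ for every $i\in S$, has probability at least a constant-power-of-$\alpha$, and conditional on it the restriction of $f$ in every one of the $\ell$ steps is the \emph{same} function $g(y):=f(a_S,y)$. Hence, if for some $S$ with $|S|\le k$ and some diagonal assignment $a_S$ we have $\EE[g]\ge\mu(1+\epsilon)$, then
\[
\EE\!\left[\prod_{j=1}^{\ell}f(\underline{X}^{(j)})\right]\ge\alpha^{|S|}\cdot\EE\!\left[\prod_{j=1}^{\ell}g\bigl(\underline{X}^{(j)}_{[n]\setminus S}\bigr)\right],
\]
and the problem reduces to the same problem on $n-|S|$ coordinates with strictly larger density. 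After $O(\log(1/\mu)/\epsilon)$ such reductions the procedure either reaches density $\ge 1-\epsilon'$ (in which case the trivial union-bound $\EE[\prod_j h(\underline{X}^{(j)})]\ge 1-\ell(1-\EE[h])$ closes the argument), or it terminates at a function that is \emph{small-restriction stable}: no diagonal restriction on $\le k$ coordinates boosts its density by more than $\epsilon$.

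The second stage is the novel variant: starting from a small-restriction-stable $f$, I aim to drive down its influences. The claim to establish is that if some coordinate $i^{*}$ has $\Inf_{i^{*}}(f)\ge\tau$ while Stage~1's regularity holds, then a further conditioning (again involving only $O(1)$ coordinates and compatible with the diagonal structure) produces a strictly smaller maximum influence, while preserving density up to a multiplicative $\Theta(1)$ factor and preserving the Stage~1 regularity on the smaller instance. The use of $\rho(\mathcal{P})<1$ shows up precisely here, to control how influence mass redistributes under the restriction --- a violation of $\rho<1$ would let a dictator-like obstruction survive. Interleaving Stages~1 and 2 and tracking a potential function that combines the density deficit and the maximum influence drives the process into the low-influence regime in a bounded number of rounds, after which the black box of \cite{Mos10} finishes the proof.

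The main obstacle is Stage~2: designing the variant increment so that it (i) genuinely lowers the maximum influence, (ii) preserves Stage~1's small-restriction stability on the resulting smaller instance, and (iii) admits a combined potential function that decreases monotonically throughout the interleaved iteration. A secondary, quantitative difficulty is that the invariance-principle error $\eta(\tau,\rho,\ell)$ scales roughly as $\tau^{c(\rho,\ell)}$ for a small exponent $c$, so to achieve $\eta\le\mu^{\ell}/2$ one must take $\tau$ super-polynomially small in $1/\mu$; composing this threshold with the nested counts of Stage~1 and Stage~2 iterations accounts for the triply-exponential form of the bound in \eqref{eq:76a}.
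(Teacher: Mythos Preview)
Your high-level architecture matches the paper's: a first-stage diagonal density increment to achieve resilience (exploiting $\alpha(\mathcal{P})>0$), a second stage that reduces to the low-influence case (exploiting $\rho(\mathcal{P})<1$), and then the \cite{Mos10} black box. Stage~1 is essentially correct as written.

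The genuine gap is Stage~2, which you yourself flag as ``the main obstacle'' without supplying a mechanism. Your description---``a further conditioning \ldots\ produces a strictly smaller maximum influence''---is not how the paper proceeds, and I do not see how to make it work directly: conditioning on a single coordinate need not lower the maximum influence of the remaining function, and there is no obvious potential of the form ``maximum influence'' that decreases monotonically. The paper's idea is different and quite specific. It passes from the single function $f$ to an $\ell$-tuple $(g^{(1)},\ldots,g^{(\ell)})$, and whenever some $\Inf_i(g^{(j^*)})\ge\tau$ it replaces $g^{(j^*)}$ by
\[
(\mathcal{M}[i,y,z]\,g^{(j^*)})(\underline{x}) := \max\bigl(g^{(j^*)}(\underline{x}_{\setminus i},y),\,g^{(j^*)}(\underline{x}_{\setminus i},z)\bigr)
\]
while simultaneously restricting each $g^{(j)}$, $j\neq j^*$, on coordinate $i$ to some value $x^{(j)}$. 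The values $(\overline{x}^{\setminus j^*},y,z)$ are chosen via a \emph{double sample} from $\mathcal{P}$ on step $j^*$: one shows that in expectation over this sample, $\EE[\mathcal{M}[i,Y,Z]g^{(j^*)}]\ge\EE[g^{(j^*)}]+\tau(1-\rho^2)$, while the other expectations are unchanged in expectation. Thus the correct potential is the \emph{sum} $\sum_{j}\EE[g^{(j)}]$, which increases by at least $\tau(1-\rho^2)/2$ per step and is bounded by $\ell$, so after at most $2\ell/(\tau(1-\rho^2))$ steps all influences are below $\tau$. A separate short argument (Lemma~\ref{lem:maximumKeepsProbability} in the paper) shows each such step costs only a fixed multiplicative factor in $\EE[\prod_j g^{(j)}]$. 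Finally---and this is where Stage~1 resilience is actually used---each resulting $g^{(j)}$ is a pointwise maximum of size-$\le k$ restrictions of the original $f$, so resilience of $f$ guarantees $\EE[g^{(j)}]\ge(1-\epsilon)\mu$. There is no interleaving of the two stages; Stage~1 runs once to completion, then Stage~2.
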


Note that this bound \emph{does} depend on $\rho(\mathcal{P})$.
We also obtain a bound that does not depend on $\rho(\mathcal{P})$
for multi-step probability spaces generated by Markov chains
(see Section~\ref{sec:markov}).

\subsection{Hitting of different sets by uniform functions}

Finally, we state the generalization of low-influence theorem from
\cite{Mos10}. We assume that the reader is familiar with
Fourier coefficients $\hat{f}(\sigma)$ and the basics of discrete function 
analysis, for details see, e.g., Chapter 8 of \cite{Dol14}.
Note that this theorem requires neither equal marginals
nor $\alpha(\cP) > 0$.
For the proof see Section~\ref{sec:local-variance}.

\begin{restatable}{theorem}{localvariance}
\label{thm:local-variance}
Let $\overline{\underline{X}}$ be a random vector distributed according
to an $\ell$-step distribution $\cP$ with $\rho(\cP) \le \rho < 1$ and
let $\mu^{(1)}, \ldots, \mu^{(\ell)} \in (0, 1]$. 

There exist
$k \in \mathbb{N}$ and $\gamma > 0$ (both depending only
on $\cP$ and $\mu^{(1)}, \ldots, \mu^{(\ell)}$) such that for all functions
$f^{(1)}, \ldots, f^{(\ell)}: \underline{\Omega} \to [0, 1]$,
if $\EE[f^{(j)}(\underline{X}^{(j)})] = \mu^{(j)}$ and
$\max_{\sigma: 0 < |\sigma| \le k} 
|\hat{f}^{(j)}(\sigma)| \le \gamma$, then
\begin{align}
\label{eq:89a}
\EE\left[\prod_{j=1}^{\ell} f^{(j)}(\underline{X}^{(j)}) \right]
\geq c(\cP, \mu^{(1)}, \ldots, \mu^{(\ell)}) > 0 \; .
\end{align}
\end{restatable}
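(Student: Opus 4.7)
The plan is to prove Theorem \ref{thm:local-variance} by modifying the low-influence theorem of \cite{Mos10} so that its hypothesis, ``small max-influence'', is replaced by the weaker hypothesis ``no large low-degree Fourier coefficient''. The guiding philosophy is that in \cite{Mos10} the influence condition is used only to control the error in an invariance-principle / Lindeberg-style replacement of the discrete variables by Gaussians (and, at a technical level, to run a low-degree truncation together with hypercontractivity); we will show that control can instead be obtained coordinate-by-coordinate from individual bounds on the Fourier spectrum.

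First I would apply a noise operator $T_{1-\varepsilon}$ adapted to $\cP$ to each $f^{(j)}$, with $\varepsilon$ small depending only on $\cP$ and the $\mu^{(j)}$. Standard reverse-hypercontractive / spectral-gap arguments for $\rho(\cP)<1$ show that $\EE\bigl[\prod_j T_{1-\varepsilon} f^{(j)}(\underline X^{(j)})\bigr]$ differs from $\EE\bigl[\prod_j f^{(j)}(\underline X^{(j)})\bigr]$ by at most a multiplicative factor that I can make close to $1$, and that the smoothed functions are ``effectively'' of degree at most some $k=O(1/\varepsilon)$: the tail $\|T_{1-\varepsilon} f^{(j)} - (T_{1-\varepsilon} f^{(j)})^{\le k}\|_2$ is controlled by $(1-\varepsilon)^k$. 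So it suffices to handle the bounded-degree truncations $(T_{1-\varepsilon}f^{(j)})^{\le k}$, whose Fourier coefficients are, by hypothesis, all bounded individually by $\gamma$. Next I would run the Mossel--O'Donnell--Oleszkiewicz--style invariance principle along the coordinates $i=1,\ldots,n$, swapping each $\overline X_i$ for a jointly Gaussian vector with matching first and second moments preserving the correlation structure of $\cP$. The swap error at step $i$ is a degree-$\le k$ multilinear form in the Fourier coefficients, and by a Taylor-type expansion controlled by the $L^3$-mass on coordinate $i$, which for a degree-$\le k$, $[0,1]$-valued function can be bounded in terms of $\max_{\sigma:\sigma_i\ne 0}|\hat f^{(j)}(\sigma)|$ rather than the full influence $\Inf_i$. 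Summing and then pushing the product expectation onto the Gaussian side gives an approximation by an $\ell$-term Gaussian correlation integral for a joint covariance with spectral norm at most $\rho(\cP)<1$, at which point a Borell / reverse-hypercontractivity lower bound yields a positive constant depending only on $\rho$ and the $\mu^{(j)}$.

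The main obstacle, and the place where genuinely new work is required, is the invariance-principle step: the standard form of that principle bounds the swap error by $\max_i \Inf_i^{\le k}(f)$, and under only the hypothesis $\max_{0<|\sigma|\le k}|\hat f(\sigma)|\le\gamma$ these per-coordinate influences need not be small (many small Fourier coefficients can conspire to put weight on one coordinate). So I would need a refined inequality showing that the Lindeberg swap error is controlled by $\max_\sigma |\hat f(\sigma)|$ directly, exploiting the cancellation between different multi-indices $\sigma$ hitting the same coordinate together with the boundedness $\|f\|_\infty\le 1$. Establishing this modified invariance bound (and verifying that it composes correctly with the noise-smoothing and Gaussian steps) is the technical heart of the argument, and everything else follows the pattern already used in the proof of Theorem \ref{thm:main-multiple}.
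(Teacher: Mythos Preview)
Your proposal has a genuine gap at exactly the point you flag as the ``main obstacle''. The refined Lindeberg inequality you propose---bounding the coordinate-$i$ swap error by $\max_{\sigma}|\hat f(\sigma)|$ rather than by $\Inf_i^{\le k}(f)$---is not established, and there is good reason to believe it fails in the form you need. Consider, over the hypercube, $f(x)=\bigl(1+x_1\,\mathrm{Maj}(x_2,\dots,x_n)\bigr)/2$: every nonzero Fourier coefficient is $O(1/\sqrt n)$, yet the degree-$\le 2$ influence at coordinate $1$ equals $\sum_{i\ge 2}\widehat{\mathrm{Maj}}(\{i\})^2/4=\Theta(1)$. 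Neither the noise-smoothing step nor the degree truncation prevents such concentration of Fourier mass on a single coordinate, and the Taylor remainder in the invariance swap is genuinely a cubic in the coordinate-$i$ part of the polynomial; it does not see any cancellation between distinct multi-indices $\sigma$. So the ``exploiting cancellation'' step is wishful, and with it the whole argument stalls.

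The paper avoids this problem by \emph{not} touching the invariance principle. The key observation is that the hypothesis $\max_{0<|\sigma|\le k}|\hat f^{(j)}(\sigma)|\le\gamma$ implies, for every $S\subseteq[n]$ with $|S|\le k$, that $\Var\bigl[(f^{(j)})^{\subseteq S}\bigr]$ is at most a constant (depending on $k,|\Omega|$) times $\gamma^2$; by a direct argument (Lemma~\ref{lem:localInfluenceToStability}) this forces each $f^{(j)}$ to be $\epsilon$-resilient up to size $k$. At that point the machinery already built for Theorem~\ref{thm:main-multiple} applies verbatim: Corollary~\ref{cor:low-influence-reduction} (the $\cM$-operator density-increment argument) converts resilient functions into genuinely low-influence functions at the cost of only a constant factor in the product expectation, and then the \emph{unmodified} low-influence Theorem~\ref{thm:low-influence} finishes. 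The route you are missing is therefore: small low-degree Fourier coefficients $\Rightarrow$ resilience $\Rightarrow$ (via the existing density-increment reduction, not a new invariance bound) low influence.
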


\subsection{Assumptions of the theorems}

\subsubsection{Equal distributions: unnecessary}
\label{sec:equal-distributions}
In Theorems~\ref{thm:main-two-variables},~\ref{thm:main-multiple}
and~\ref{thm:local-variance}  we assumed that the tuples 
$(X_i^{(1)},\ldots,X_i^{(\ell)})$ are distributed 
identically for each $i$.
It is natural to ask if it is indeed necessary.

This is not the case.
Instead, we made this assumption for simplicity of notation and presentation.
If one is interested in statements which are valid where coordinate
$i$ is distributed according to $\mathcal{P}_i$, one simply needs
to assume that there are  $\alpha > 0$ and $\rho < 1$ such 
that
$\alpha(\mathcal{P}_i) \geq \alpha$ and $\rho(\mathcal{P}_i) \leq \rho$.

\subsubsection{Equal marginals: necessary}
\label{sec:differentMarginals}

We quickly discuss the case when $\cP$ does not have equal marginals.
Recall that $\beta(\cP) = \min_{x^{(1)}, \ldots, x^{(\ell)} \in \Omega} 
\cP(x^{(1)}, \ldots, x^{(\ell)})$. If $\beta(\cP) > 0$, then,
by Theorem~\ref{thm:different-sets-classification}, $\cP$ is set hitting,
and therefore also same-set hitting.

In case $\beta(\cP) = 0$, we demonstrate an example which shows that 
$\EE\left[\prod_{j=1}^\ell f(\underline{X}^{(j)}) \right]$ 
can be exponentially small in $n$. 
For concreteness, we set $\ell := 2$ and $\Omega := \{0,1\}$ 
and consider $\mathcal{P}$ which picks uniformly
among $\{00, 01, 11\}$.
We then set \newcommand{\wt}{\mathrm{wt}}
\begin{align}
S_1 &:= \{(x_1,\ldots,x_n) \mid x_1 = 1 \land 
|\wt(x) - n/3| \leq 0.01n\}\\
S_2 &:= \{(x_1,\ldots,x_n) \mid x_1 = 0 \land 
|\wt(x) - 2n/3| \leq 0.01n\}
\end{align}
where $\wt(x)$ is the Hamming-weight of $x$, i.e., the number of ones
in~$x$.

For large enough $n$, a concentration bound 
implies that $\Pr[\underline{X}^{(1)} \in S_1] > \frac{1}{3}-0.01$ 
and $\Pr[\underline{X}^{(2)} \in S_2] > \frac{1}{3}-0.01$.
Hence, if we set $f$ to be the indicator function of $S := S_1 \cup S_2$, 
the assumption of Theorem~\ref{thm:main-multiple} holds.
However, because of the first coordinate we have
$\Pr[\underline{X}^{(1)} \in S \land \underline{X}^{(2)} \in S]
\leq \Pr[\underline{X}^{(1)} \in S_2] + \Pr[\underline{X}^{(2)} \in S_1]$,
and the right hand side is easily seen to be exponentially
small.

It is not difficult to extend this example to any distribution 
with $\beta(\cP) = 0$ that does not
have equal marginals.

\section{Proof Sketch}
\label{sec:proof-sketch}

In this section we briefly outline the proof of Theorem \ref{thm:main-multiple}. 
For simplicity, we assume that the probability space is the one from 
Section \ref{sec:basic}, i.e.,
$(X_i, Y_i)$ are distributed uniformly in $\{00, 11, 22, 01, 12, 20\}$.
Additionally, we assume that we are given a set $S \subseteq \{0,1,2\}^n$
with $\mu(S) = |S|/3^n > 0$, so that we want a bound of the form
\begin{align*}
  \Pr\left[ \underline{X} \in S \land \underline{Y} \in S \right] \ge c(\mu) > 0 \; .
\end{align*}

The proof consists of three steps.
Intuitively, in the first step we deal with dictator sets, e.g.,
$S_{\mathrm{dict}} = \{ \underline{x}: x_1 = 0\}$,
in the second step with linear sets, e.g.,
$S_{\mathrm{lin}} = \{ \underline{x}: \sum_{i=1}^n x_i \pmod 3 = 0 \}$
and in the third step with threshold sets, e.g.,
$S_{\mathrm{thr}} = \{ \underline{x}: |\{i: x_i = 0\}| \ge n/3 \}$.

\subsection{Step 1 --- making a set resilient}

We call a set resilient if $\Pr[\underline{X} \in S]$ does not change by more
than a (small) multiplicative constant factor whenever conditioned on 
$(X_{i_1} = x_{i_1}, \ldots, X_{i_s} = x_{i_s})$ on a constant number $s$ of 
coordinates.

In particular, $S_{\mathrm{dict}}$ is not resilient (because
conditioning on $x_1 = 0$ increases the measure of the set to $1$), 
while $S_{\mathrm{lin}}$ and $S_{\mathrm{thr}}$ are.

If a set is not resilient, using $\mathcal{P}(x, x) = 1/6$ for every 
$x \in \Omega$,
one can find an event $\cE :\equiv X_{i_1}=Y_{i_1}=x_{i_1} \land \ldots \land X_{i_s}=Y_{i_s}=x_{i_s}$
such that for some constant $\epsilon > 0$ we have
$\Pr[ \cE ] \ge \epsilon$ and, at the same time,
 $\Pr[\underline{X} \in S \mid \cE] \ge (1+\epsilon)\Pr[\underline{X} \in S]$.

Since each such conditioning increases the measure of the set $S$ by a constant 
factor, $S$ must become resilient after a constant number of iterations.
Furthermore,
each conditioning induces only a constant factor loss in
$\Pr[\underline{X} \in S \land \underline{Y} \in S]$.

It is worth noting that this is the only stage of the proof where
we assume the same-set property (and utilize the assumption
$\alpha(\mathcal{P}) > 0$).

\subsection{Step 2 --- eliminating high influences}

In this step, assuming that $S$ is resilient, we condition on a constant
number of coordinates to transform it into two
sets $S'$ and $T'$ such that:
\begin{itemize}
\item Both of them have low influences on all coordinates.
\item Both of them are supersets of $S$ (after conditioning).
\end{itemize}

The first property allows us to apply low-influence set hitting 
from $\cite{Mos10}$ to
$S'$ and $T'$. The second one, together with the resilience of $S$, 
ensures that $\mu(S'), \mu(T') \ge (1-\epsilon)\mu(S)$.

In fact, it is more convenient to assume that we are initially
given two resilient sets $S$ and $T$.

Assume w.l.o.g.~that $\Inf_1(T) \ge \tau$ for some $i \in [n]$.
Given $z \in \{0, 1, 2\}$, let 
$T_z := \{(x_1, x_2, \ldots, x_n): (z, x_2, \ldots, x_n) \in T\}$.
Furthermore, let $T^*_z := T_z \cup T_{z+1 \pmod 3}$.

Since $\Inf_1(T) \ge \tau$, we can show that
there exists $z \in \{0,1,2\}$ such that, after conditioning on
$X_1 = Y_1 = z$,
the sum $\mu(S_z)+ \mu(T^*_z)$ is strictly greater than
the sum $\mu(S) + \mu(T)$:
\begin{align}
\label{eq:83a}
  \Pr[\underline{X} \in S_z \mid X_1 = z] + 
  \Pr[\underline{Y} \in T^*_z \mid Y_1 = z]
  \ge \Pr[\underline{X} \in S] + \Pr[\underline{Y} \in T] + c(\tau) \; .
\end{align}

We choose to disregard the first coordinate and replace
$S$ with $S' := S_z$ and $T$ with $T' := T^*_{z}$.
Equation (\ref{eq:83a}) implies that after a constant number of such 
operations, neither $S$ nor $T$ has any remaining high-influence coordinates. 

Crucially, with respect to same-set hitting our set replacement
is essentially equivalent to conditioning on
$X_1 = z$ and $Y_1 = z \lor Y_1 = z + 1 \pmod 3$.
Therefore, each operation induces only a constant factor loss in 
$\Pr[\underline{X} \in S \land \underline{Y} \in T]$.

\subsection{Step 3 --- applying low-influence theorem from 
\texorpdfstring{\cite{Mos10}}{[Mos10]}}

Once we are left with two low-influence, somewhat-large sets $S$
and $T$, we 
obtain
$\Pr[\underline{X} \in S \land \underline{Y} \in T] \ge c(\mu) > 0$
by a straightforward application of a slightly
modified version of Theorem 1.14 from \cite{Mos10}.
The theorem gives that $\rho(\mathcal{P}) < 1$ implies that
the distribution $\mathcal{P}$ is set hitting for low-influence functions:

\begin{restatable}{theorem}{lowinfluence}
\label{thm:low-influence}
Let $\overline{\underline{X}}$ be a random vector distributed
according to $(\overline{\underline{\Omega}}, \underline{\mathcal{P}})$ 
such that $\cP$ has equal marginals, $\rho(\mathcal{P}) \le \rho < 1$
and $\min_{x \in \Omega} \pi(x) \ge \alpha > 0$.

Then, for all $\epsilon > 0$, there exists 
$\tau := \tau(\epsilon, \rho, \alpha, \ell) > 0$
such that if functions 
$f^{(1)}, \ldots, f^{(\ell)}: \underline{\Omega} \to [0, 1]$ satisfy
\begin{align}
\max_{i \in [n], j \in [\ell]} \Inf_i(f^{(j)}(\underline{X}^{(j)})) \le \tau\;,
\end{align}
then, for $\mu^{(j)} := \EE[f^{(j)}(\underline{X}^{(j)})]$:
\begin{align}
\EE \left[ \prod_{j=1}^{\ell} f^{(j)}(\underline{X}^{(j)}) \right] 
\ge \left( \prod_{j=1}^\ell \mu^{(j)} \right)^{\ell / (1-\rho^2)} - \epsilon \; .
\end{align}

Furthermore, there exists an absolute constant $C \ge 0$ such that
for $\epsilon \in (0, 1/2]$ one can take
\begin{align}\label{eq:38a}
\tau := \left(\frac{(1-\rho^2)\epsilon}{\ell^{5/2}}
\right)^{C \frac{\ell \ln(\ell/\epsilon)\ln(1/\alpha)}{(1-\rho)\epsilon}} 
\; .
\end{align}
\end{restatable}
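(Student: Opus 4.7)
The plan is to reduce the discrete set-hitting problem to an analogous Gaussian problem via the multidimensional invariance principle, and then apply a Gaussian reverse hypercontractive inequality of Borell. The exponent $\ell/(1-\rho^2)$ is a strong hint that Borell's reverse hypercontractivity is the quantitative engine behind the bound, and the $\rho(\cP) < 1$ hypothesis is precisely what makes the Gaussian analogue amenable to such an estimate.

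First I would prepare each $f^{(j)}$ for the invariance principle. Because invariance is quantitatively useful only for functions whose high-degree Fourier mass is small, I would smooth each function by applying a Markov/noise operator $T_{1-\gamma}$ (with respect to the marginal $\pi$) for a small parameter $\gamma > 0$. This leaves the expectation essentially unchanged, and because $f^{(j)} \in [0,1]$, the product $\prod_j T_{1-\gamma}f^{(j)}$ differs from $\prod_j f^{(j)}$ only by an $O(\ell\gamma)$-type perturbation when one uses the Markov property of $\underline{\cP}$. Smoothing also damps the high-degree Fourier levels exponentially in the degree, so truncating above degree $d = d(\gamma,\epsilon)$ introduces only a small $L^2$ error. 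What remains is a tuple of low-degree, low-influence multilinear polynomials.

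Next I would invoke the multidimensional invariance principle for correlated product spaces (as in [Mos10, Section~3]), with the low-influence hypothesis providing the smallness parameter $\tau$. The conclusion is that
\begin{align*}
\EE_{\underline{\cP}}\!\left[\prod_{j=1}^{\ell} T_{1-\gamma} f^{(j)}(\underline{X}^{(j)})\right]
\;=\;
\EE_{\underline{\cG}}\!\left[\prod_{j=1}^{\ell} \widetilde{f}^{(j)}(\underline{G}^{(j)})\right] \pm \epsilon/2,
\end{align*}
where $\widetilde{f}^{(j)}$ is the natural Gaussian-space analogue acting on a Gaussian vector $\overline{\underline{G}}$ whose coordinate covariance matches that of $\overline{X}_i$ under $\cP$. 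Crucially, the Gaussian correlation between $\underline{G}^{(j)}$ and the complementary tuple $\underline{G}^{(\setminus j)}$ is bounded above by $\rho(\cP) \le \rho < 1$. A mild truncation step (clipping $\widetilde{f}^{(j)}$ back to $[0,1]$ at negligible cost via the smoothness provided by $T_{1-\gamma}$) is needed here to preserve the $[0,1]$ range on the Gaussian side.

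Finally, in the Gaussian setting I would apply Borell's reverse hypercontractive inequality (or its multi-function form, as formulated in [MOS13]): since any single $\underline{G}^{(j)}$ is $\rho$-correlated with the remaining Gaussians, one obtains
\begin{align*}
\EE_{\underline{\cG}}\!\left[\prod_{j=1}^{\ell} \widetilde{f}^{(j)}(\underline{G}^{(j)})\right]
\;\ge\; \prod_{j=1}^{\ell}\bigl(\EE[\widetilde{f}^{(j)}]\bigr)^{\ell/(1-\rho^2)}
\;\ge\;\Bigl(\prod_{j=1}^{\ell}\mu^{(j)}\Bigr)^{\ell/(1-\rho^2)} - \epsilon/2.
\end{align*}
Combining this with the invariance estimate and the smoothing perturbation gives the claimed inequality. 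The hard part will be the quantitative bookkeeping that yields the explicit $\tau$ in \eqref{eq:38a}: one must choose $\gamma$ small enough that smoothing is lossless yet large enough that the degree truncation $d \sim (1/\gamma)\log(1/\epsilon)$ is tolerable, and then pick $\tau$ so that the invariance error, which typically scales like a polynomial in $\tau$ with exponent $\Theta(1/d)$ and multiplicative constants depending on $\alpha$ through hypercontractive constants of $\pi$, is at most $\epsilon/2$. Tracking these three interlocked parameters — especially the dependence on $\alpha$ through the hypercontractivity constant and on $1-\rho$ through the Gaussian bound — is what produces the stated doubly-nested expression, and this is where the proof in the appendix must be careful rather than clever.
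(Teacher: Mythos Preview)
Your three-step outline --- smooth by $T_{1-\gamma}$, transfer to correlated Gaussians via a multidimensional invariance principle, then apply Gaussian reverse hypercontractivity --- is exactly the paper's approach, and your remarks about clipping to $[0,1]$ on the Gaussian side and about the $\tau$--$d$--$\gamma$ bookkeeping are on target. One minor difference: the paper uses the multi-function Gaussian inequality of Chen--Dafnis--Paouris / Ledoux (applied with $p = (1-\rho^2)/\ell$ after checking $T - p\,\Id \succeq 0$ for the Gaussian covariance $T$) rather than iterating Borell's two-function bound; this is what yields the clean exponent $\ell/(1-\rho^2)$.

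There is, however, one real gap. You write that ``the product $\prod_j T_{1-\gamma}f^{(j)}$ differs from $\prod_j f^{(j)}$ only by an $O(\ell\gamma)$-type perturbation when one uses the Markov property of $\underline{\cP}$.'' No Markov hypothesis on $\cP$ is assumed, and the product structure alone does not give this: since $\|f - T_{1-\gamma}f\|_2$ can be $\Omega(1)$ for arbitrarily small $\gamma$ when $f$ has mass at degree $\gg 1/\gamma$, a naive telescope is useless. The actual control (the paper's Theorem~\ref{thm:smoothing}) uses $\rho(\cP) < 1$ in an essential way: telescoping and decomposing $(\Id - T_{1-\gamma})P^{(j)}$ over levels $S \subseteq [n]$, one pairs $P_S^{(j)}(\underline{\mathcal X}^{(j)})$ against the complementary product (a function of $\underline{X}^{[\ell]\setminus\{j\}}$) and uses that their correlation is at most $\rho^{|S|}$. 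The level-$k$ contribution is then bounded by $\min\bigl(1-(1-\gamma)^k,\ \rho^k\bigr)$, and choosing $\gamma = \Theta\!\bigl(\tfrac{(1-\rho)\epsilon}{\ell\ln(\ell/\epsilon)}\bigr)$ makes this at most $\epsilon/\ell$ uniformly in $k$. This is where the factor $(1-\rho)\epsilon$ in the \emph{exponent} of \eqref{eq:38a} originates --- not from the Gaussian step, which contributes $(1-\rho^2)$ only to the base in \eqref{eq:38a} and to the exponent $\ell/(1-\rho^2)$ in the main estimate.
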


The proof of Theorem \ref{thm:low-influence} can be found in
Appendix~\ref{sec:low-influence-proof}.
The first part of the appendix contains a short
explanation of differences
between \cite{Mos10} and our version.

\subsection{The case
\texorpdfstring{$\rho = 1$}{rho = 1}
: open question}
\label{sec:open}

Theorem~\ref{thm:main-multiple} requires that $\rho < 1$ in order to give a
meaningful bound.  It is unclear whether this is an artifact of our proof or if
it is necessary.  In particular, consider the three step distribution $\mathcal{P}$ 
which
picks a uniform triple from
$\{000, 111, 222, 012, 120, 201\}$.
In other words, sampling from $\underline{\cP}$ picks a random
arithmetic progression $x, x+d, x+2d$ with $x \in \mathbb{F}_3^n$
and $d \in \{0,1\}^n$.
One easily
checks that $\rho(\mathcal{P})=1$ and that all marginals are uniform.
We do not know if this distribution is same-set hitting.

However, the method of our proof breaks down. 
We illustrate the reason in the following lemma.
\begin{lemma}\label{lem:threeSetFail}
For every $n > n_0$ there exist three sets $S^{(1)}$, $S^{(2)}$, and $S^{(3)}$ such 
that for
the distribution $\mathcal{P}$ as described above we have
\begin{itemize}
\item $\forall j: \Pr[ \underline{X}^{(j)} \in S^{(j)} ] \geq 0.49$.
\item $\Pr[\forall j: \underline{X}^{(j)} \in S^{(j)}] = 0$.
\item The characteristic functions $\1_{S^{(j)}}$ of the three sets all
satisfy
\begin{align*}
\max_{i \in [n]} \Inf_{i}(\1_{S^{(j)}}(\underline{X}^{(j)})) \to 0
\text{ as } n \to \infty \; .
\end{align*}
\end{itemize}
\end{lemma}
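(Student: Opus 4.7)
The plan is to build the three sets from a single complex-valued functional that exploits the three-fold cyclic structure of the support of $\mathcal{P}$. Let $\omega$ denote a primitive cube root of unity, so $1+\omega+\omega^2=0$, and define $F:\{0,1,2\}^n\to\mathbb{C}$ by $F(\underline{x})=\sum_{i=1}^n \omega^{x_i}$. I will set
\begin{align*}
S^{(1)} &:= \{\underline{x}:\mathrm{Re}(\omega F(\underline{x}))>0\},\\
S^{(2)} &:= \{\underline{x}:\mathrm{Re}(\omega^2 F(\underline{x}))>0\},\\
S^{(3)} &:= \{\underline{x}:\mathrm{Re}(F(\underline{x}))>0\},
\end{align*}
and verify each of the three bullet points in turn.

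First I would establish the deterministic cancellation $\omega F(\underline{X}^{(1)})+\omega^2 F(\underline{X}^{(2)})+F(\underline{X}^{(3)})=0$ by summing coordinate-wise. For each support triple $(a,b,c)\in\{000,111,222,012,120,201\}$ of $\mathcal{P}$, the triple $(a+1,b+2,c)$ is a permutation of $\{0,1,2\}$ modulo $3$, which gives $\omega^{a+1}+\omega^{b+2}+\omega^c=1+\omega+\omega^2=0$. Taking real parts yields $\mathrm{Re}(\omega F(\underline{X}^{(1)}))+\mathrm{Re}(\omega^2 F(\underline{X}^{(2)}))+\mathrm{Re}(F(\underline{X}^{(3)}))=0$ almost surely, so these three quantities cannot all be strictly positive and $\Pr[\forall j:\underline{X}^{(j)}\in S^{(j)}]=0$ exactly.

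For the marginals, each $\underline{X}^{(j)}$ is uniform on $\{0,1,2\}^n$ since $\mathcal{P}$ has uniform marginals, and the random variable $\omega^{X_i^{(j)}}$ satisfies $\EE[\omega^{X_i^{(j)}}]=0$ and $\EE[|\omega^{X_i^{(j)}}|^2]=1$. By the complex CLT, $F(\underline{X}^{(j)})/\sqrt{n}$ converges to a centered, rotationally invariant complex Gaussian, whence $\Pr[\mathrm{Re}(\omega^k F(\underline{X}^{(j)}))>0]\to 1/2$ for each $k\in\{0,1,2\}$. Hence for $n$ exceeding some $n_0$, all three marginal probabilities exceed $0.49$.

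For the influence bound, fix $i\in[n]$ and condition on $\underline{x}_{\setminus i}$. As $x_i$ ranges over $\{0,1,2\}$, the quantity $\mathrm{Re}(\omega^k F(\underline{x}))=\mathrm{Re}(\omega^k F(\underline{x}_{\setminus i}))+\mathrm{Re}(\omega^{k+x_i})$ takes three values, the increments being $\{1,-1/2,-1/2\}$, which lie in an interval of width $3/2$ around $\mathrm{Re}(\omega^k F(\underline{x}_{\setminus i}))$. The indicator $\1_{S^{(j)}}$ can depend on $x_i$ only when this window straddles zero, i.e., when $\mathrm{Re}(\omega^k F(\underline{x}_{\setminus i}))\in(-1,1/2)$. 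Under the uniform measure, this quantity is a sum of $n-1$ i.i.d.~bounded real random variables of mean zero and variance $1/2$, so a standard local CLT (or Berry-Esseen together with the Gaussian density at the origin) gives that this event has probability $O(1/\sqrt{n})$. Therefore $\Inf_i(\1_{S^{(j)}}(\underline{X}^{(j)}))=O(1/\sqrt{n})\to 0$, uniformly in $i$, as required. The only nontrivial pieces are the deterministic cancellation identity (a direct six-case check) and the anti-concentration step for the influence bound; both are routine and I do not anticipate a serious obstacle.
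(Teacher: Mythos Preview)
Your proof is correct and is in fact the paper's construction rewritten. Since $\mathrm{Re}(\omega^{m})$ equals $1$ when $m\equiv 0\pmod 3$ and $-\tfrac12$ otherwise, one has $\mathrm{Re}\bigl(\omega^{k}F(\underline{x})\bigr)=\tfrac12\bigl(3N_{a}(\underline{x})-n\bigr)$ with $a\equiv -k\pmod 3$; thus your three sets are exactly $\{N_2>n/3\}$, $\{N_1>n/3\}$, $\{N_0>n/3\}$, the complements (up to the boundary) of the paper's threshold sets, and your cancellation identity unpacks to the paper's observation $N_2(\underline{X}^{(1)})+N_1(\underline{X}^{(2)})+N_0(\underline{X}^{(3)})=n$.
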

While the lemma does not give information about whether $\mathcal{P}$ is
same-set hitting, it shows that
our proof fails (since the analogue of Theorem~\ref{thm:low-influence}
fails).
\begin{proof}
We let 
\begin{align*}
S^{(1)} &:= \{\underline{x}^{(1)} : \text{$\underline{x}^{(1)}$ has 
less than $n/3$ twos}\}\;,\\
S^{(2)} &:= \{\underline{x}^{(2)} : \text{$\underline{x}^{(2)}$ has 
less than $n/3$ ones}\}\;,\\
S^{(3)} &:= \{\underline{x}^{(3)} : \text{$\underline{x}^{(3)}$ has 
less than $n/3$ zeros}\}\;.
\end{align*}
Whenever we pick
$\underline{X}^{(1)}, \underline{X}^{(2)}, \underline{X}^{(3)}$, the number of
twos in $\underline{X}^{(1)}$ plus the number of ones in $\underline{X}^{(2)}$
plus the number of zeros in $\underline{X}^{(3)}$ always equals $n$ (there is a
contribution of one from each coordinate).  All three properties are now easy to
check.
\end{proof}

\section{Proof for General
\texorpdfstring{$\ell$}{l}
and
\texorpdfstring{$\rho(\mathcal{P}) < 1$}{rho(P) < 1}}
\label{sec:main-proof}

The goal of this section is to prove our second main result, which we
restate here for convenience.

\mainmultiple*

\subsection{Properties of the correlation}
\label{sec:correlation-properties}

Recall Definition \ref{def:correlation}.
We now give an alternative characterization of $\rho(\mathcal{P}, \{j\},
\allowbreak [\ell]\setminus \{j\})$ which will be useful later.
For this, we first define certain random process and an associated Markov chain.
\begin{definition}
\label{def:double-sample}
Let $\mathcal{P}$ be a single-coordinate distribution
and let $j \in [\ell]$.
We call a collection of random variables 
$(\overline{X}^{\setminus j}
= (X^{(1)}, \ldots, \allowbreak X^{(j-1)}, \allowbreak X^{(j+1)}, \ldots, X^{(\ell)}), 
\allowbreak Y, Z)$
a \emph{double sample on step $j$ from $\mathcal{P}$} if:
\begin{itemize}
\item
  $\overline{X}$ is first sampled according to $\mathcal{P}$, ignoring step $j$.
\item Assuming that $\overline{X}^{\setminus j} = \overline{x}^{\setminus j}$,
the random variables $Y$ and $Z$ are then sampled independently of each other 
according to the
$j$-th step of $\cP$ conditioned on 
$\overline{X}^{\setminus j} = \overline{x}^{\setminus j}$.
\end{itemize}
Sometimes we will omit $\overline{X}^{\setminus j}$ from the notation 
and refer as double sample to $(Y, Z)$ alone.
\end{definition}

An equivalent interpretation of a double sample
is that after sampling $(\overline{X}^{\setminus j}, Y)$ according to
$\cP$ we ``forget'' about $Y$ and sample $Z$ again from the same distribution
(keeping the same value of $\overline{X}^{\setminus j}$).
Therefore, both
$(\overline{X}^{\setminus j}, Y)$ and $(\overline{X}^{\setminus j}, Z)$
are distributed
according to $\mathcal{P}$.

If we let
\[
K(y,z) := \Pr[Z=z | Y=y] = \EE\left[\Pr\left[Z = z | Y = y, \overline{X}^{\setminus j}
\right]\right],
\]
we see that
\begin{align}
\pi(y)K(y,z) = \Pr[Y = y \land Z = z] = \Pr[Y = z \land Z = y]
= \pi(z)K(z,y) \;,
\end{align}
which means that $K$ is the kernel of a Markov chain that is reversible with
respect to $\pi$ (see e.g., \cite[Section 1.6]{LevinPW08}).  Thus, $K$ has an
orthonormal eigenbasis with eigenvalues
$1=\lambda_1(K) \geq \lambda_2(K) \geq \dots \geq \lambda_{|\Omega|}(K)\geq -1$, 
(e.g., \cite[Lemma 12.2]{LevinPW08}).
We will say that $K$ is the \emph{Markov kernel induced by the double sample
  $(Y, Z)$}.
  
A standard fact from the Markov chain theory
expresses $\lambda_2(K)$ in terms of covariance of functions
$f \in L^2(\Omega, \pi)$:
\begin{lemma}[Lemma 13.12 in \cite{LevinPW08}]
  Let $Y, Z$ be two consecutive steps of a reversible Markov chain with kernel
  $K$ such that both $Y$ and $Z$ are distributed according to a stationary
  distribution of $K$. Then,
\label{lem:lambda-lpw}
\begin{align}
\label{eq:72a}
\lambda_2(K) = \max_{\substack{f: \Omega \to \mathbb{R} \\\EE[f(Y)] = 0\\\Var[f(Y)]=1}} 
  \EE\left[ f(Y) f(Z) \right] \; .
\end{align}
\end{lemma}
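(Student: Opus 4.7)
The plan is to treat $K$ as a self-adjoint operator on $L^2(\Omega, \pi)$ and reduce the optimization problem to a Rayleigh quotient computation in an eigenbasis.

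First I would unpack the definitions and rewrite the right-hand side of \eqref{eq:72a} in operator form. Since $Y \sim \pi$ is stationary and $Z \mid Y$ is distributed according to $K$, we have
\begin{align*}
\EE[f(Y) f(Z)] = \sum_{y,z} \pi(y) K(y,z) f(y) f(z) = \langle f, Kf \rangle_\pi,
\end{align*}
where $\langle g, h \rangle_\pi := \sum_x \pi(x) g(x) h(x)$ is the natural inner product on $L^2(\Omega, \pi)$. The reversibility condition $\pi(y) K(y,z) = \pi(z) K(z,y)$ is exactly the statement that $K$ is self-adjoint with respect to $\langle \cdot, \cdot \rangle_\pi$.

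Next I would invoke the spectral theorem for the self-adjoint operator $K$ on the finite-dimensional Hilbert space $(L^2(\Omega, \pi), \langle \cdot, \cdot \rangle_\pi)$: there exists an orthonormal basis $\phi_1, \ldots, \phi_{|\Omega|}$ of real eigenfunctions with real eigenvalues $1 = \lambda_1 \geq \lambda_2 \geq \cdots \geq \lambda_{|\Omega|} \geq -1$, where one can take $\phi_1 \equiv 1$ (the constant function, which is a left and right eigenvector of $K$ with eigenvalue $1$ because $\pi$ is stationary and $K$ is stochastic). The fact that all eigenvalues lie in $[-1,1]$ follows from $K$ being a stochastic operator.

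Then I would expand any candidate $f$ in this basis: write $f = \sum_{k=1}^{|\Omega|} a_k \phi_k$ with $a_k = \langle f, \phi_k \rangle_\pi$. The normalization constraints translate directly: $\EE[f(Y)] = \langle f, \phi_1 \rangle_\pi = a_1 = 0$, and $\Var[f(Y)] = \|f\|_\pi^2 = \sum_{k \geq 1} a_k^2 = \sum_{k \geq 2} a_k^2 = 1$. Using orthogonality of the eigenbasis,
\begin{align*}
\langle f, Kf \rangle_\pi = \sum_{k \geq 2} \lambda_k a_k^2 \leq \lambda_2 \sum_{k \geq 2} a_k^2 = \lambda_2,
\end{align*}
which gives the upper bound. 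For the matching lower bound one simply takes $f = \phi_2$, noting that $\phi_2 \perp \phi_1$ (so $\EE[\phi_2(Y)] = 0$) and $\|\phi_2\|_\pi = 1$, giving $\EE[\phi_2(Y)\phi_2(Z)] = \lambda_2$.

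There is no real obstacle here; the only subtlety is being careful about which eigenvalue of $K$ deserves the label $\lambda_2$ (the second-largest, not second-largest in absolute value), and about verifying that the constraint $\EE[f(Y)] = 0$ is precisely what eliminates the top eigenvalue $\lambda_1 = 1$ from the sum.
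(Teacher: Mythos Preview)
Your proof is correct and is the standard spectral/Rayleigh quotient argument. The paper itself does not prove this lemma; it simply cites it as a known fact from \cite{LevinPW08}, so there is nothing to compare against beyond noting that your argument is exactly the textbook one.
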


\begin{lemma}
\label{lem:lambda-rho}
Let $\cP$ be a single-coordinate distribution and let 
$(\overline{X}^{\setminus j}, Y, Z)$ be a double sample from $\cP$
that induces a Markov kernel $K$.
Then,
\begin{align*}
\lambda_2(K) = \rho(\mathcal{P},\{j\},[\ell]\setminus\{j\})^2 \; .
\end{align*}
\end{lemma}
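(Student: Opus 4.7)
The plan is to exploit the conditional independence of $Y$ and $Z$ given $\overline{X}^{\setminus j}$ built into the double-sample construction, via a standard conditional-expectation plus Cauchy--Schwarz argument (essentially R\'enyi's identity expressing maximal correlation as the top non-trivial singular value of the conditional-expectation operator; its Gram operator is precisely $K$, which is why the equality features a square).

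First, by Lemma \ref{lem:lambda-lpw} applied to the reversible chain with kernel $K$ and stationary distribution $\pi$,
\begin{align*}
\lambda_2(K) = \sup_f \EE[f(Y) f(Z)],
\end{align*}
where the supremum is over $f : \Omega \to \bbR$ with $\EE[f(Y)] = 0$ and $\Var[f(Y)] = 1$. For any such $f$ I would define $h(\overline{x}^{\setminus j}) := \EE[f(Y) \mid \overline{X}^{\setminus j} = \overline{x}^{\setminus j}]$. The tower property together with the conditional independence of $Y$ and $Z$ given $\overline{X}^{\setminus j}$ gives $\EE[f(Y) f(Z)] = \EE[h(\overline{X}^{\setminus j})^2]$, and since $\EE[h] = \EE[f(Y)] = 0$ this equals $\Var[h]$.

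For $\sqrt{\lambda_2(K)} \le \rho$: given such $f$ with $\Var[h] > 0$ (else $\EE[f(Y)f(Z)] = 0$ and the bound is vacuous), set $g := h / \sqrt{\Var[h]}$, which has mean zero and unit variance under the marginal of $\overline{X}^{\setminus j}$. Since $(\overline{X}^{\setminus j}, Y)$ is distributed as $\cP$, the pair $(f, g)$ is admissible in the definition of $\rho(\cP, \{j\}, [\ell] \setminus \{j\})$, and
\begin{align*}
\Cov[f(Y), g(\overline{X}^{\setminus j})] = \EE[h(\overline{X}^{\setminus j}) g(\overline{X}^{\setminus j})] = \sqrt{\Var[h]} = \sqrt{\EE[f(Y) f(Z)]}.
\end{align*}
Taking the supremum over $f$ gives $\sqrt{\lambda_2(K)} \le \rho$.

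For the reverse, take any admissible pair $(f, g)$ for $\rho$ with means zero and unit variances. By the tower property and Cauchy--Schwarz,
\begin{align*}
\Cov[f(Y), g(\overline{X}^{\setminus j})] = \EE[h(\overline{X}^{\setminus j}) g(\overline{X}^{\setminus j})] \le \sqrt{\EE[h^2]} = \sqrt{\EE[f(Y) f(Z)]} \le \sqrt{\lambda_2(K)}.
\end{align*}
Supping over $(f, g)$ yields $\rho \le \sqrt{\lambda_2(K)}$; squaring and combining the two inequalities gives $\lambda_2(K) = \rho^2$. I do not anticipate a real obstacle: the lemma is a direct algebraic identity, and the only subtle point is handling the degenerate case $\Var[h] = 0$, which as noted is harmless.
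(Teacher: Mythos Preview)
Your proof is correct and follows essentially the same approach as the paper: define $h(\overline{x}^{\setminus j}) = \EE[f(Y)\mid \overline{X}^{\setminus j}=\overline{x}^{\setminus j}]$, use conditional independence to get $\EE[f(Y)f(Z)]=\EE[h^2]$, then combine Cauchy--Schwarz (for $\rho\le\sqrt{\lambda_2}$) with the choice $g=c\cdot h$ (for the reverse inequality). The paper packages the two directions as a single inequality-plus-equality statement rather than two separate inequalities, but the content is identical.
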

\begin{proof}
  For readability, let us write $\overline{X}$ instead of
  $\overline{X}^{\setminus j}$.

  Consider first two functions $f$ and $g$ as in
  Definition~\ref{def:correlation} and assume without loss of generality that
  $\EE[f(Y)] = \EE[g(\overline{X})] = 0$.
  Of course, we also assume that
  $\Var[f(Y)]=\Var[g(\overline{X})]=1$ as specified by
  Definition~\ref{def:correlation}. We will show that
\begin{align}
\label{eq:73a}
  \Cov\left[f(Y), g(\overline{X})\right]^2 \le \lambda_2(K) \; ,
\end{align}
and that there exists a choice of $f$ and $g$ that achieves equality
in (\ref{eq:73a}).

Let $h(\overline{x}) := \EE[f(Y) | \overline{X} = \overline{x}]$ and observe
that
\begin{IEEEeqnarray}{rCl}
\EE[f(Y)f(Z)] &=& \sum_{\overline{x},y,z} \Pr[\overline{X}=\overline{x}]
\Pr[Y=y\mid \overline{X}=\overline{x}]\Pr[Z=z\mid \overline{X}=\overline{x}]
f(y)f(z) \nonumber
\\ &=& 
\EE[h(\overline{X})^2] \; .
\label{eq:71a}
\end{IEEEeqnarray}

Now, by Cauchy-Schwarz, (\ref{eq:71a}) and Lemma \ref{lem:lambda-lpw} 
we see that
\begin{IEEEeqnarray*}{rCl}
\Cov[f(Y),g(\overline{X})]^2 &=& \EE[f(Y)g(\overline{X})]^2 = 
\EE[h(\overline{X})g(\overline{X})]^2
\le \EE[h(\overline{X})^2] \EE[g(\overline{X})^2] \\
&=& \EE[h(\overline{X})^2]
= \EE[f(Y)f(Z)] \le \lambda_2(K) \; . 
\end{IEEEeqnarray*}
The equality is obtained for $f$ that maximizes the right-hand side of 
(\ref{eq:72a}) and $g := c\cdot h$ for some $c > 0$.
\end{proof}

For later use, we make the following implication of Lemma \ref{lem:lambda-rho}.
\begin{corollary}\label{cor:varianceToEdgeVar}
Let $(Y, Z)$ be a double sample on step $j$ from a single-coordinate
distribution $(\Omega, \mathcal{P})$ with $\rho(\cP) = \rho$.
Then, for every function $f: \Omega \to \mathbb{R}$,
\begin{align}
\EE[(f(Y)-f(Z))^2] \geq 2(1-\rho^2) \Var[f(Y)].
\end{align}
\end{corollary}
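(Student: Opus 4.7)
The plan is to expand the square and reduce the problem to bounding $\EE[f(Y) f(Z)]$ using the two preceding lemmas.

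First I would observe that since $(Y, Z)$ is a double sample, both $Y$ and $Z$ have the same marginal distribution $\pi$, so $\EE[f(Y)^2] = \EE[f(Z)^2]$ and
\begin{align*}
  \EE[(f(Y) - f(Z))^2] = 2 \EE[f(Y)^2] - 2\EE[f(Y) f(Z)] \; .
\end{align*}
Next I would note that both sides of the inequality are unchanged when $f$ is replaced by $f - c$ for a constant $c$, so I may assume without loss of generality that $\EE[f(Y)] = 0$; then $\EE[f(Y)^2] = \Var[f(Y)]$. If $\Var[f(Y)] = 0$ the claim is trivial, so I further normalize so that $\Var[f(Y)] = 1$.

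Let $K$ be the Markov kernel induced by $(Y, Z)$. By Lemma \ref{lem:lambda-lpw}, applied to this reversible chain, the maximum of $\EE[f(Y) f(Z)]$ over mean-zero, unit-variance functions equals $\lambda_2(K)$, and by Lemma \ref{lem:lambda-rho} we have $\lambda_2(K) = \rho(\cP, \{j\}, [\ell] \setminus \{j\})^2 \leq \rho(\cP)^2 = \rho^2$. Hence $\EE[f(Y) f(Z)] \leq \rho^2$, which after substituting back gives
\begin{align*}
  \EE[(f(Y) - f(Z))^2] \geq 2(1 - \rho^2) = 2(1 - \rho^2) \Var[f(Y)] \; ,
\end{align*}
as desired. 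Undoing the normalization recovers the general case.

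There is no real obstacle here: the corollary is essentially a restatement of the spectral gap bound in the form of a Poincaré-type inequality, and the work has already been done in establishing $\lambda_2(K) = \rho^2$ via Lemma \ref{lem:lambda-rho}. The only small point to be careful about is that $\rho(\cP, \{j\}, [\ell]\setminus\{j\})$ may be strictly smaller than $\rho(\cP)$ for the particular $j$ used to form the double sample, but since we only need an upper bound on $\lambda_2(K)$ this goes in the favorable direction.
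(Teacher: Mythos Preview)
Your proof is correct and follows essentially the same approach as the paper: reduce to the mean-zero case, expand the square, and bound $\EE[f(Y)f(Z)]$ by $\lambda_2(K) = \rho(\cP,\{j\},[\ell]\setminus\{j\})^2 \le \rho^2$ via Lemmas~\ref{lem:lambda-lpw} and~\ref{lem:lambda-rho}. The paper's version is just a terser rendition of the same argument, and your remark that the inequality $\rho(\cP,\{j\},[\ell]\setminus\{j\}) \le \rho(\cP)$ goes in the favorable direction is exactly right.
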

\begin{proof}
Assume w.l.o.g.~that $\EE[f(Y)]=0$.
By Lemmas \ref{lem:lambda-lpw} and \ref{lem:lambda-rho},
\begin{align*}
\EE\left[ \left( f(Y)-f(Z) \right)^2 \right]
= 2\left( \Var[f(Y)] - \EE[f(Y)f(Z)] \right)
\ge 2(1-\rho^2)\Var[f(Y)] \; .
\end{align*}
\end{proof}

\subsection{Reduction to the resilient case}
In this section, we will prove that we can assume that 
the function $f$ is \emph{resilient} in the following sense:
whenever we fix a constant number of inputs to some 
value, the expected value of $f$ remains roughly the same.

The intuitive reason for this is simple: if there is some way to fix
the coordinates which changes the expected value of $f$, we
can fix these coordinates such that the expected value \emph{increases}, 
which only makes our task easier
(and can be done only a constant number of times).

\medskip

We first make the concept of ``fixing'' a subset of the coordinates formal.

\begin{definition}
Let $f: \underline{\Omega} \to [0,1]$ be a function.
A \emph{restriction} $\cR$ is a sequence $\cR = (r_1,\ldots,r_n)$
where each $r_i$ is either an element $r_i \in \Omega$,
or the special symbol $r_i = \star$.

The coordinates with $r_i = \star$ are \emph{unrestricted}, the coordinates
where $r_i \in \Omega$ are \emph{restricted}.
The \emph{size} of a restriction is the number of restricted coordinates.

A restriction $\cR$ operates on a function $f$ as
\begin{align}
(\cR f) (x_1,\ldots,x_n) := f(y_1,\ldots,y_n) \; 
\end{align}
where $y_i = r_i$ if $r_i \neq \star$ and $y_i = x_i$ otherwise.
\end{definition}

Next, we define what it means for a function to be resilient:
restrictions do not change the expectation too much.
\begin{definition}
Let $\underline{X}$ be a random vector distributed according
to a (single-step) distribution $(\underline{\Omega}, \underline{\pi})$.
A function $f: \underline{\Omega} \to [0,1]$ is \emph{$\epsilon$-resilient up to
size $k$} if for every restriction $\cR$ of size at most $k$ we have that
$(1-\epsilon)\EE[f(\underline{X})] \leq \EE[\cR f(\underline{X})] 
\leq (1+\epsilon)\EE[f(\underline{X})]$.
\end{definition}

The function is upper resilient if the expectation cannot increase too much.
\begin{definition}
Let $\underline{X}$ be a random vector distributed according
to a distribution $(\underline{\Omega}, \underline{\pi})$.
A function $f: \underline{\Omega} \to [0,1]$ is \emph{$\epsilon$-upper resilient
up to size $k$} if for every 
restriction $\cR$ of size at most $k$ we have that
$\EE[\cR f(\underline{X})] \leq (1+\epsilon) \EE[f(\underline{X})]$.
\end{definition}

Resilience and upper resilience are equivalent up to a multiplicative
factor which depends only on $k$ and the smallest probability in the
marginal distribution  $\alpha(\pi)$.
Intuitively the reason is that if there is some restriction which
decreases the $1$-norm, then some other restriction on the same
coordinates must increase the $1$-norm somewhat.

\begin{lemma}
\label{lem:stability-equivalence}
Suppose that a function $f$ is $\epsilon$-upper resilient up to size $k$.
Then, $f$ is $\epsilon'$-resilient up to size $k$, where 
$\epsilon' = \epsilon/(\alpha(\pi))^k$.
\end{lemma}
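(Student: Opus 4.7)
The plan is a one-line averaging argument: upper resilience controls all the restricted expectations from above, and since they have to average (under the product measure $\underline{\pi}$) to the unconditional expectation, no single one can be too small, with a quantitative loss governed by the smallest marginal probability $\alpha(\pi)$.

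Concretely, fix a restriction $\cR$ of size $s \le k$ that restricts the coordinates in some set $S \subseteq [n]$ to values $y^* \in \Omega^S$, and write $\mu := \EE[f(\underline{X})]$ and $p^* := \underline{\pi}^S(y^*) = \prod_{i \in S} \pi(y^*_i) \ge \alpha(\pi)^s \ge \alpha(\pi)^k$. Decomposing the expectation over the values $y$ that $\underline{X}_S$ can take,
\begin{equation*}
  \mu \;=\; \sum_{y \in \Omega^S} \underline{\pi}^S(y)\,\EE[f(\underline{X}) \mid \underline{X}_S = y]
      \;=\; p^*\,\EE[\cR f(\underline{X})] + \sum_{y \ne y^*} \underline{\pi}^S(y)\,\EE[f(\underline{X}) \mid \underline{X}_S = y].
\end{equation*}
Every term in the sum on the right corresponds to some other restriction of size $s \le k$, so by $\epsilon$-upper resilience each conditional expectation is at most $(1+\epsilon)\mu$. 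Substituting and rearranging,
\begin{equation*}
  \mu \;\le\; p^*\,\EE[\cR f(\underline{X})] + (1-p^*)(1+\epsilon)\mu,
\end{equation*}
which gives $\EE[\cR f(\underline{X})] \ge \mu\bigl(1 - \epsilon(1-p^*)/p^*\bigr) \ge \mu(1 - \epsilon/p^*) \ge (1 - \epsilon/\alpha(\pi)^k)\mu$.

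This establishes the lower bound $\EE[\cR f(\underline{X})] \ge (1-\epsilon')\mu$ with $\epsilon' = \epsilon/\alpha(\pi)^k$, and the matching upper bound $\EE[\cR f(\underline{X})] \le (1+\epsilon)\mu \le (1+\epsilon')\mu$ is immediate from upper resilience (since $\epsilon' \ge \epsilon$). There is no real obstacle here beyond bookkeeping; the only thing to be careful about is that $p^*$ factors as a product of single-coordinate marginals of $\pi$, so the lower bound $p^* \ge \alpha(\pi)^k$ is tight up to the product structure and explains exactly where the exponent $k$ in $\epsilon' = \epsilon/\alpha(\pi)^k$ comes from.
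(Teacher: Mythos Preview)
Your proof is correct and essentially identical to the paper's own argument: both decompose $\EE[f(\underline{X})]$ as a $\underline{\pi}$-average over all restrictions on the chosen coordinate set, bound all but the target term by $(1+\epsilon)\mu$ via upper resilience, and solve for the remaining term, picking up the factor $1/p^* \le 1/\alpha(\pi)^k$. The only difference is notational (you phrase it via conditional expectations, the paper via a random restriction $\cR$), but the inequality chain is the same.
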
 
\begin{proof}
Fix a subset $S \subseteq [n]$ of the coordinates of size $|S| \le k$.
We consider a random variable $\cR$ whose values are
restrictions with restricted coordinates being exactly $S$.
The elements $r_i \in \Omega$ for $i \in S$ are picked according to the 
distribution $\pi$.
We let $p(\cR')$ be the probability a certain restriction $\cR'$ is picked, and
get
\begin{align}
  \EE[f(\underline{X})] =
  \sum_{{\cR}'} p({\cR}') \cdot \EE \left[ \cR' f(\underline{X}) \right] 
  \; ,
\end{align}
where we sum over all restrictions $\cR'$ that restrict exactly the coordinates
in $S$.

Let now $\cR^*$ be one of the possible choices for $\cR$.
Then,
\begin{align*}
p(\cR^*) \cdot \EE[\cR^* f(\underline{X})]
&= \EE[f(\underline{X})] - \sum_{\cR' \neq \cR^*}p(\cR')\cdot
\EE[\cR' f(\underline{X})]
\\
&\geq
\EE[f(\underline{X})] - (1+\epsilon)
\sum_{\cR' \neq \cR^*}p(\cR')\cdot \EE[f(\underline{X})]
\\
&=
\left(1 - (1+\epsilon)(1-p(\cR^*))\right) \cdot \EE[f(\underline{X})]
\\
&\geq \left(p(\cR^*) -\epsilon\right) \cdot \EE[f(\underline{X})] \; ,
\end{align*}
and hence:
\begin{align*}
\EE[ \cR^* f(\underline{X}) ]
&\geq \left(1 - \frac{\epsilon}{p(\cR^*)}\right) \cdot \EE[f(\underline{X})]\;.
\end{align*}
Since $p(\cR^*)  \geq \alpha(\pi)^k$ we get the bound
for the restriction $\cR^*$, which was chosen arbitrarily.
\end{proof}

\begin{lemma}\label{lem:restrictToMakeStable}
Let $\overline{\underline{X}}$ be a random vector distributed according to
a distribution with equal marginals 
$(\underline{\Omega}, \underline{\mathcal{P}})$
and $f: \underline{\Omega} \to [0, 1]$ be a function
with $\EE[f(\underline{X}^{(1)})] = \mu > 0$.

Let $\epsilon \in (0,1], k \in \mathbb{N}$. Then, there exists
a restriction $\cR$ such that $g := (\cR f)$ is $\epsilon$-resilient
up to size $k$ and
\begin{align}
\EE[g(\underline{X}^{(1)})] &\geq \mu 
                              \label{eq:241a}\;,\\
\EE \left[ \prod_{j=1}^{\ell} f(\underline{X}^{(j)}) \right] &\ge 
c \cdot \EE \left[ \prod_{j=1}^{\ell} g(\underline{X}^{(j)}) \right] \; ,
\label{eq:3a}
\end{align}
where $c := \exp\left(-\frac{2 \ln 1/\mu}{\alpha^{2k} \cdot \epsilon} \right)$ 
with $\alpha := \alpha(\mathcal{P}) > 0$.
\end{lemma}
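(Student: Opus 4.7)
The plan is to run an iterative density-increment argument that terminates when the current function first becomes $\epsilon$-resilient. Set $f_0 := f$ and $\cR_0 :=$ the empty restriction. At step $t$, if $f_t$ is already $\epsilon$-resilient up to size $k$, stop and output $g := f_t$ together with $\cR := \cR_t$. Otherwise, by the contrapositive of Lemma~\ref{lem:stability-equivalence}, $f_t$ fails to be $\epsilon \alpha^k$-upper-resilient up to size $k$, so there is a restriction $\cR'$ of size at most $k$ with
\[
\EE[\cR' f_t(\underline{X}^{(1)})] \ \geq\ (1+\epsilon\alpha^k)\,\EE[f_t(\underline{X}^{(1)})].
\]
Set $f_{t+1} := \cR' f_t$ and let $\cR_{t+1}$ be $\cR_t$ composed with $\cR'$ (restricting at most $k$ additional coordinates).

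Because expectations lie in $[0,1]$ and each step multiplies $\EE[f_t(\underline{X}^{(1)})]$ by at least $1+\epsilon\alpha^k$, the process terminates after at most
\[
T \ \leq\ \frac{\ln(1/\mu)}{\ln(1+\epsilon\alpha^k)} \ \leq\ \frac{2\ln(1/\mu)}{\epsilon\alpha^k}
\]
iterations. The output $g$ is $\epsilon$-resilient up to size $k$, and since expectations only increased along the way, $\EE[g(\underline{X}^{(1)})] \geq \mu$, giving \eqref{eq:241a}. The accumulated restriction $\cR$ has size $|\cR| \leq kT$.

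For \eqref{eq:3a}, let $S$ be the restricted coordinates of $\cR$ with fixed values $r_i$ for $i \in S$. Because the coordinates under $\underline{\cP}$ are independent, the operation $f \mapsto \cR f$ applied simultaneously at step $j$ coincides with conditioning on the event $\cE := \bigcap_{i \in S}\bigcap_{j \in [\ell]} \{X_i^{(j)} = r_i\}$, whose probability equals $\prod_{i \in S}\cP(r_i,\ldots,r_i) \geq \alpha^{|S|}$. Since $f \geq 0$,
\[
\EE\!\left[\prod_{j=1}^\ell f(\underline{X}^{(j)})\right] \ \geq\ \Pr[\cE]\cdot \EE\!\left[\prod_{j=1}^\ell f(\underline{X}^{(j)}) \,\Big|\, \cE\right] \ =\ \Pr[\cE]\cdot \EE\!\left[\prod_{j=1}^\ell g(\underline{X}^{(j)})\right] \ \geq\ \alpha^{kT}\cdot \EE\!\left[\prod_{j=1}^\ell g(\underline{X}^{(j)})\right].
\]
Substituting the bound on $T$ yields a constant $c$ of the required form (modulo a bookkeeping bound $k\ln(1/\alpha) \leq \alpha^{-k}$ used to replace the prefactor $k\ln(1/\alpha)/\alpha^k$ by $1/\alpha^{2k}$).

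The main thing to get right is the quantitative tradeoff, not the mechanism. The density increment is only by a factor $1+\epsilon\alpha^k$ rather than $1+\epsilon$, because we must pass from true resilience to upper resilience through Lemma~\ref{lem:stability-equivalence}, which costs an $\alpha^k$. This $\alpha^k$ appears in the denominator of $T$, and then the conditioning step pays another $\alpha^{|\cR|} = \alpha^{kT}$; the composition of these two losses is precisely what produces the $1/(\alpha^{2k}\epsilon)$ in the exponent of $c$. Everything else is straightforward iteration and the observation that applying a coordinate restriction to all $\ell$ copies of $f$ is the same as conditioning on the corresponding diagonal event under $\underline{\cP}$.
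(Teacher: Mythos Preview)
Your proof is correct and follows essentially the same approach as the paper: iterate a density increment using restrictions of size at most $k$, appeal to Lemma~\ref{lem:stability-equivalence} to relate $\epsilon$-resilience to $\epsilon\alpha^k$-upper-resilience, bound the number of iterations by $\frac{2\ln(1/\mu)}{\epsilon\alpha^k}$, and then pay $\alpha^{|\cR|}$ via the diagonal conditioning event $\cE$. The only cosmetic difference is that the paper phrases the stopping condition as ``no restriction gives a $(1+\epsilon\alpha^k)$ increase'' and then applies Lemma~\ref{lem:stability-equivalence} forward, whereas you stop at $\epsilon$-resilience and invoke the contrapositive; the arithmetic and final bookkeeping bound $k\ln(1/\alpha)\le \alpha^{-k}$ are identical.
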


In particular, $c$ depends only on $\epsilon, k, \alpha(\mathcal{P})$ and $\mu$
(requiring $\epsilon, \alpha(\mathcal{P}), \mu > 0$).

\begin{proof}
Let $\epsilon' := \alpha^k\cdot\epsilon$ and
choose a restriction $\cR$ such that 
$\EE[\cR f(\underline{X}^{(1)})] \geq \EE[f(\underline{X}^{(1)})] 
\cdot (1+\epsilon')$.
We repeat this, replacing $f$ with $(\cR f)$, until there is no such 
restriction.

Since the expectation of $f$ only increases, we get (\ref{eq:241a}).
Finally, once the process stops, the resulting function is $\epsilon$-resilient
due to Lemma~\ref{lem:stability-equivalence} 
(note that $\alpha(\pi) \ge \alpha$).

It remains to argue that (\ref{eq:3a}) holds for the resulting function.
Note first that the expectation cannot exceed~$1$, and
hence the process will be repeated at 
most $p := \ln(1/\mu)/\ln(1+\epsilon') \le \frac{2\ln(1/\mu)}{\epsilon'}$ times.
Therefore, the final restriction $\cR$ obtained after at most $p$ iterations
of the process above is of size at most $pk$.

Define $g := (\cR f)$ and
let $\mathcal{E}$ be the event that all 
strings $\underline{X}^{(1)},\ldots,\underline{X}^{(\ell)}$
agree with the restriction $\cR$ in its restricted coordinates.
We will use $\1(\cE)$ to denote the function which is 
$1$ if event $\cE$ happens and $0$ otherwise.
We see that
\begin{align*}
\EE \left[ \prod_{j=1}^{\ell} f(X^{(j)}) \right] &\ge 
\EE\left[ \prod_{j=1}^{\ell} f(X^{(j)}) \cdot \1(\mathcal{E}) \right]
=
\EE\left[ \prod_{j=1}^{\ell} g(X^{(j)}) \cdot \1(\mathcal{E}) \right]\\
&\geq 
\alpha^{pk} 
\cdot \EE \left[ \prod_{j=1}^{\ell} g(X^{(j)}) \right] \; .
\end{align*}
Finally,
\begin{align*}
\alpha^{pk} \ge 
\exp\left(-\frac{2 k \ln (1/\alpha) \ln (1/\mu)}{\alpha^k \cdot \epsilon}\right)
\ge \exp \left( - \frac{2 \ln 1/\mu}{\alpha^{2k} \cdot \epsilon}  \right) 
\; .
\end{align*}
\end{proof}

\subsection{Reduction to the low-influence case}
We next show that if $f$ is resilient, we can also assume that it has
only low influences.
However, this part of the proof actually produces a collection of functions 
$g^{(1)},\ldots, g^{(\ell)}$ such that each of them
has small influences: it operates differently on each function.
In turn, it is more convenient to do this part of the proof also starting 
from a collection $f^{(1)},\ldots,f^{(\ell)}$, as long as all of
them are sufficiently resilient.

\medskip

As in the previous section, we use restrictions.
Here, however, we are only interested in restrictions of size one.
Consequently, we write $\cR[i,a]$ to denote 
the restriction $\cR = (r_1,\ldots,r_n)$
with $r_i = a$ and $r_{i'} = \star$ for $i' \neq i$.

Furthermore, we require a new operator.
\begin{definition}
Let $f: \underline{\Omega} \to [0, 1]$, $i \in [n]$,
and fix values $y, z \in \Omega$.

We define the operator $\mathcal{M}[i,y,z]$ as
\begin{align*}
(\mathcal{M}[i,y,z]f)(x_1, \ldots, x_n) :=
	\max \bigl( f(&x_1, \ldots, x_{i-1}, y, x_{i+1},\ldots, x_n), \\
                    f(&x_1, \ldots, x_{i-1}, z, x_{i+1}, \ldots, x_n) \bigr) \; .
\end{align*}
\end{definition}

The operator $\mathcal{M}[i,y,z]$ is useful for two reasons.
First, if $\Inf_i(f^{(j)})$ is ``large'', then
$\EE\left[\mathcal{M}[i,y,z]f^{(j)}(\underline{X}^{(j)})\right] \geq
\EE[f^{(j)}(\underline{X}^{(j)})]+c$
for some $y,z \in \Omega$ and $c > 0$.  This implies that we can use this
operator to increase the expectation of a function unless all of its
influences are small. We will prove this property later.

Second, fix a step $j^* \in [\ell]$ and assume that for some values
$\overline{x}^{\setminus j^*} = (x^{(1)}, \ldots, x^{(j^*-1)}, \allowbreak
x^{(j^*+1)}, \ldots,
x^{(\ell)}), y, z \in \Omega$
both conditional probabilities
$\Pr[X_i^{(j^*)} = y \mid \overline{X}_i^{\setminus j^*} = \overline{x}^{\setminus
  j^*}]$
and
$\Pr[X_i^{(j^*)} = z \mid \overline{X}_i^{\setminus j^*} = \overline{x}^{\setminus
  j^*}]$ are ``somewhat large'' (larger than some constant).
We imagine now that 
$\overline{X}_i^{(\setminus j^*)} = \overline{x}^{\setminus j^*}$
and that we have also picked all values
$\underline{X}^{(j^*)}_{\setminus i} = (X_{1}^{(j^*)},\ldots,X_{i-1}^{(j^*)},
X_{i+1}^{(j^*)},\ldots,X_{n}^{(j^*)})$.
We then hope that $X^{(j^*)}_i$ is picked among $y$ and $z$ such that it
maximizes~$f^{(j^*)}$.  Since this happens with constant probability, we
conclude the following: Suppose we replace $f^{(j^*)}$ with
$\mathcal{M}[i,y,z]f^{(j^*)}$ and then prove that afterwards
$\EE[\prod f^{(j)}(\underline{X}^{(j)})]$ is large.  Then,
$\EE[\prod f^{(j)}(\underline{X}^{(j)})]$ was large before.

This second point is formalized in the following lemma:
\begin{lemma}\label{lem:maximumKeepsProbability}
  Let $\overline{\underline{X}}$ be a random vector distributed according to
  $(\overline{\underline{\Omega}}, \underline{\mathcal{P}})$.  Fix $i \in [n]$,
  $j^* \in [\ell]$
  and
  $\overline{x}^{\setminus j^*} = (x^{(1)},\ldots,x^{(j^*-1)},
  x^{(j^*+1)},\ldots,x^{(\ell)}), y, z \in \Omega$.  Suppose that:
\begin{align}
\cP(\overline{x}^{\setminus j^*},y) &\ge \beta \;,
\label{eq:845a}\\
\cP(\overline{x}^{\setminus j^*},z) &\ge \beta
\label{eq:846a}\;.
\end{align}
Let $f^{(1)},\ldots,f^{(\ell)}: \underline{\Omega} \to [0,1]$, and
for $j \in [\ell]$ define:
\begin{align}
g^{(j)} := 
\begin{cases}
\cR[i,x^{(j)}]f^{(j)} & \text{if $j \neq j^*$,}\\
\cM[i,y,z]f^{(j)} & \text{if $j = j^*$.}
\end{cases}
\end{align}
Then:
\begin{align}\label{eq:6a}
\EE\left[\prod_{j=1}^{\ell} f^{(j)}(\underline{X}^{(j)})\right]
\geq
\beta \cdot \EE\left[\prod_{j=1}^{\ell} g^{(j)}(\underline{X}^{(j)})\right]\;.
\end{align}
\end{lemma}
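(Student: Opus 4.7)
The plan is to lower-bound the LHS by restricting attention to two specific values of the $i$-th tuple $\overline{X}_i$, and then to use independence of the tuples $\overline{X}_{i'}$ across $i' \in [n]$ to re-express the remainder of the expectation in terms of $\prod_j g^{(j)}$. Since both operators $\cR[i,x^{(j)}]$ and $\cM[i,y,z]$ produce functions whose value does not depend on the $i$-th input, the product $\prod_j g^{(j)}(\underline{X}^{(j)})$ depends only on $(\underline{X}^{(j)}_{\setminus i})_{j \in [\ell]}$, and independence of the tuples yields
\begin{align*}
\EE\Bigl[\prod_{j=1}^{\ell} g^{(j)}(\underline{X}^{(j)})\Bigr] = \EE\Bigl[\prod_{j \neq j^*} f^{(j)}\bigl(\underline{X}^{(j)}_{\setminus i}, x^{(j)}\bigr)\cdot \max\bigl(f^{(j^*)}(\underline{X}^{(j^*)}_{\setminus i}, y),\ f^{(j^*)}(\underline{X}^{(j^*)}_{\setminus i}, z)\bigr)\Bigr],
\end{align*}
where we write $f(\underline{u},a)$ for $f$ evaluated on the $n$-vector with $\underline{u}$ on the coordinates different from $i$ and $a$ on coordinate $i$.

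Next I would lower-bound the LHS by restricting to the events $\cE_y := \{\overline{X}_i = (\overline{x}^{\setminus j^*},y)\}$ and $\cE_z := \{\overline{X}_i = (\overline{x}^{\setminus j^*},z)\}$. Assuming first that $y \neq z$, these events are disjoint, the product $\prod_j f^{(j)}$ is non-negative, and the conditional distribution of the remaining tuples is unchanged by either conditioning, so
\begin{align*}
\EE\Bigl[\prod_{j=1}^{\ell} f^{(j)}(\underline{X}^{(j)})\Bigr] \ge \cP(\overline{x}^{\setminus j^*},y)\cdot A_y + \cP(\overline{x}^{\setminus j^*},z)\cdot A_z,
\end{align*}
where $A_y := \EE\bigl[\prod_{j \neq j^*} f^{(j)}(\underline{X}^{(j)}_{\setminus i}, x^{(j)})\cdot f^{(j^*)}(\underline{X}^{(j^*)}_{\setminus i}, y)\bigr]$ and $A_z$ is defined symmetrically.

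Finally, combining the assumptions (\ref{eq:845a}) and (\ref{eq:846a}) with the elementary inequality $\max(a,b) \le a+b$ for $a,b \ge 0$ yields $\text{LHS} \ge \beta(A_y+A_z) \ge \beta\cdot \EE\bigl[\prod_j g^{(j)}(\underline{X}^{(j)})\bigr]$ via the identity from the first paragraph. The degenerate case $y = z$ is handled directly by keeping only the event $\cE_y$ and using $\max(a,a)=a$. There is no analytical subtlety here; the main care required is bookkeeping, especially checking that independence across the tuples $\overline{X}_{i'}$ permits expectations over $\underline{\overline{X}}$ and over the $i$-removed tuples to be freely interchanged for functions that do not depend on the $i$-th tuple, and noting that replacing the max by the sum loses nothing precisely because we have two separate lower bounds on the LHS rather than one.
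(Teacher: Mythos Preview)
Your argument is correct. It differs from the paper's in how the event on coordinate $i$ is chosen. The paper defines a single \emph{adaptive} event: a random variable $A \in \{y,z\}$ is set to whichever of $y,z$ maximizes $f^{(j^*)}(\underline{X}^{(j^*)}_{\setminus i},\cdot)$, and $\cE := \{\overline{X}_i = (\overline{x}^{\setminus j^*},A)\}$. On $\cE$ one has $\prod_j f^{(j)} = \prod_j g^{(j)}$ exactly, and since $\Pr[\cE \mid \overline{\underline{X}}_{\setminus i}] \ge \beta$ (by (\ref{eq:845a}) or (\ref{eq:846a}) depending on $A$), the bound follows in one line. Your route instead keeps two fixed events $\cE_y,\cE_z$, sums their contributions, and closes with $\max(a,b) \le a+b$; this is equally valid and perhaps more pedestrian to discover, but it costs you the $y=z$ case split, which the paper's adaptive event handles automatically. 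Both approaches yield the identical constant $\beta$.
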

\begin{proof}
We first define a random variable $A$, 
which is the value among $y$ and $z$ which $X_{i}^{(j^*)}$ needs to take 
in order to maximize $f^{(j^*)}$. 
Formally,
\begin{align}
  A = \begin{cases}
    y & \text{if 
      $f\left(\underline{X}^{(j^*)}_{\setminus i}, y\right) 
      > f\left(\underline{X}^{(j^*)}_{\setminus i}, z\right)$,} \\
    z & \text{otherwise.}
\end{cases}
\end{align}
Consider now the event $\cE$ which occurs if 
$\overline{X}_i = (\overline{x}^{\setminus j}, A)$.
We get
\begin{align*}
\EE\left[\prod_{j=1}^{\ell} f^{(j)}(\underline{X}^{(j)})\right]
&\geq
\EE\left[
\prod_{j=1}^{\ell} f^{(j)}(\underline{X}^{(j)}) \cdot \1(\mathcal{E})
 \right]
\\
&=
\EE\left[
\prod_{j=1}^{\ell} g^{(j)}(\underline{X}^{(j)}) \cdot \1(\mathcal{E})
 \right]\\
&=
\EE\left[
\EE\left[
\prod_{j=1}^{\ell} g^{(j)}(\underline{X}^{(j)}) \cdot \1(\mathcal{E})
\middle | 
\overline{\underline{X}}_{\setminus i}
\right]
\right]\\
&=
\EE\left[
\prod_{j=1}^{\ell} g^{(j)}(\underline{X}^{(j)}) \cdot
\EE\left[
\1(\mathcal{E})
\middle | 
\overline{\underline{X}}_{\setminus i}
\right]
\right]\\
&\geq
\beta \cdot
\EE\left[
\prod_{j=1}^{\ell} g^{(j)}(\underline{X}^{(j)})
\right] \; .
\end{align*}

The equality from the first to the second line follows because
if the event $\cE$ happens, then the functions 
$f^{(j)}(\underline{X}^{(j)})$ 
and $g^{(j)}(\underline{X}^{(j)})$ are equal.
From the third to the fourth line we use that conditioned on 
$\underline{\overline{X}}_{\setminus i}$ the 
functions $g^{(j)}(\underline{X}^{(j)})$ are constant.
Finally, the last inequality follows because by (\ref{eq:845a})
and~(\ref{eq:846a}), for every choice of
$\underline{\overline{X}}_{\setminus i} = 
(\overline{X}_1,\ldots,\overline{X}_{i-1},
\overline{X}_{i+1},\ldots,\overline{X}_{n})$ event $\mathcal{E}$
has probability at least $\beta$.
\end{proof}

The obvious idea for the next step would be to 
find values $\overline{x}^{\setminus j}, y, z$
such that 
\begin{align*}
\EE\left[\cM[i,y,z]f^{(j^*)}(\underline{X}^{(j^*)})\right] \geq 
\EE\left[f^{(j^*)}(\underline{X}^{(j^*)})\right]+c
\end{align*}
and fix them.

Unfortunately, there is a problem with this strategy.  To replace 
the function $f^{(j^*)}$ with
$\cM[i,y,z] f^{(j^*)}$, Lemma~\ref{lem:maximumKeepsProbability} also
replaces $f^{(j)}$ with $\cR[i,x^{(j)}]f^{(j)}$ for $j \ne j^*$ (and this is
required for the proof to work).  Unfortunately, it is possible that
$\EE\Big[\cR[i,x^{(j)}]f^{(j)}(\underline{X}^{(j)})\Big] \ll
\EE\Big[f^{(j)}(\underline{X}^{(j)})\Big]$.
We remark that we \emph{cannot} use that $f^{(j)}$ is resilient here: while
$f^{(j)}$ is resilient the first time we condition, the functions
$\cM[i,y,z] f^{(j)}$ obtained in the subsequent steps are not resilient in general,
so later steps will not have the guarantee.

\medskip

Our solution is to pick the values $(\overline{X}^{\setminus j^*}, Y, Z)$ at
random, as a double sample on coordinate $j^*$ 
(cf.~Definition \ref{def:double-sample}).
Let:
\begin{align*}
  G^{(j)} := \begin{cases}
    \cR[i,X^{(j)}]f^{(j)} &\text{if $j \ne j^*$,}\\
    \cM[i, Y, Z]f^{(j)}  &\text{if $j = j^*$.}
  \end{cases}
\end{align*}
Note that the random variable $X^{(j)}$ is part of the double sample
$(\overline{X}^{\setminus j^*}, Y, Z)$ and sampled separately (and independently)
from the random vector $\overline{\underline{X}}$. In particular, it should
not be confused with the ``input'' random variable $X_i^{(j)}$.
We prove that (in expectation over 
$\overline{X}^{(\setminus j)},Y,Z$) the sum of expectations
$\sum_{j=1}^\ell \EE[G^{(j)}(\underline{X}^{(j)})]$ is greater by a constant than
the sum $\sum_{j=1}^\ell \EE[f^{(j)}(\underline{X}^{(j)})]$.
To argue that the sum of expectations increases, the key part is to show that
$\EE\left[ G^{(j^*)}(\underline{X}^{(j^*)}) \right]$ increases by a constant.

\begin{lemma}\label{lem:fjNormIncrease}
Let $(\overline{X}^{\setminus j^*}, Y, Z)$ be a double sample from a
single-coordinate distribution $\mathcal{P}$.

Let $\underline{X}$ be a random vector, independent of this double sample
and distributed according to a single-step
distribution $(\underline{\Omega}, \underline{\pi})$ such that
$\pi$ is the $j^*$-th marginal distribution of $\mathcal{P}$.

Then, for every $i \in [n]$ and every function 
$f: \underline{\Omega} \to [0,1]$ we have
\begin{align}
\EE\left[
   \mathcal{M}[i,Y,Z]f(\underline{X}) \right]
\geq
\EE[f(\underline{X})] + \tau (1-\rho^2(\mathcal{P})) \; ,
\end{align}
where $\tau = \Inf_{i}(f(\underline{X}))$.
\end{lemma}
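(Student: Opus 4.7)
The plan is to exploit the classical identity $\max(a,b) = \tfrac{a+b}{2} + \tfrac{|a-b|}{2}$. Writing $g(\underline{x}_{\setminus i}, w) := f(x_1, \ldots, x_{i-1}, w, x_{i+1}, \ldots, x_n)$, we get
\begin{align*}
\EE\left[\cM[i, Y, Z] f(\underline{X})\right]
= \tfrac{1}{2}\EE\left[g(\underline{X}_{\setminus i}, Y) + g(\underline{X}_{\setminus i}, Z)\right]
+ \tfrac{1}{2}\EE\left[\left|g(\underline{X}_{\setminus i}, Y) - g(\underline{X}_{\setminus i}, Z)\right|\right].
\end{align*}
The first term equals $\EE[f(\underline{X})]$: since $Y$ and $Z$ are independent of $\underline{X}$, each has marginal $\pi$ (as $\pi$ is the $j^*$-th marginal of $\cP$), and $X_i$ is $\pi$-distributed, resampling coordinate $i$ from $Y$ (or from $Z$) does not change the distribution of the input. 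So it remains to show
\begin{align*}
\tfrac{1}{2}\EE\left[\left|g(\underline{X}_{\setminus i}, Y) - g(\underline{X}_{\setminus i}, Z)\right|\right] \geq \tau(1-\rho^2(\cP)).
\end{align*}

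For this I would use that $f$ takes values in $[0,1]$, hence $|a-b| \geq (a-b)^2$ when $a, b \in [0,1]$, giving
\begin{align*}
\EE\left[\left|g(\underline{X}_{\setminus i}, Y) - g(\underline{X}_{\setminus i}, Z)\right|\right]
\geq \EE\left[\left(g(\underline{X}_{\setminus i}, Y) - g(\underline{X}_{\setminus i}, Z)\right)^2\right].
\end{align*}
Now condition on $\underline{X}_{\setminus i} = \underline{x}_{\setminus i}$: the function $w \mapsto g(\underline{x}_{\setminus i}, w)$ is a single-variable function on $\Omega$, and $(Y, Z)$ is a double sample from $\cP$ independent of $\underline{X}$. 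By Corollary~\ref{cor:varianceToEdgeVar},
\begin{align*}
\EE\left[\left(g(\underline{x}_{\setminus i}, Y) - g(\underline{x}_{\setminus i}, Z)\right)^2\right]
\geq 2(1-\rho^2(\cP))\Var\left[g(\underline{x}_{\setminus i}, Y)\right].
\end{align*}

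Averaging this inequality over $\underline{X}_{\setminus i}$ (distributed according to $\underline{\pi}$ restricted to coordinates $\neq i$), and observing that because $Y \sim \pi$ the conditional variance $\Var[g(\underline{X}_{\setminus i}, Y) \mid \underline{X}_{\setminus i}]$ is distributionally identical to $\Var[f(\underline{X}) \mid \underline{X}_{\setminus i}]$, yields
\begin{align*}
\EE\left[\left(g(\underline{X}_{\setminus i}, Y) - g(\underline{X}_{\setminus i}, Z)\right)^2\right]
\geq 2(1-\rho^2(\cP))\,\Inf_i(f(\underline{X})) = 2\tau(1-\rho^2(\cP)),
\end{align*}
which combined with the factor of $1/2$ above gives the claim. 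The only subtle point is keeping track of independence --- that $(Y,Z)$ is sampled independently of $\underline{X}$ --- and that $Y, Z$ share the marginal $\pi$ of coordinate $i$; once this is set up, the identity $\max(a,b) = \frac{a+b}{2} + \frac{|a-b|}{2}$ together with the $[0,1]$-valued assumption and Corollary~\ref{cor:varianceToEdgeVar} does essentially all the work, so I do not anticipate a significant obstacle.
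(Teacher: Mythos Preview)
Your proof is correct and follows essentially the same approach as the paper: decompose $\max$ via $\max(a,b)=\tfrac{a+b}{2}+\tfrac{|a-b|}{2}$ (the paper phrases this as computing $\EE[\cM f - f]$ using the symmetry of $(Y,Z)$), use $|a-b|\ge(a-b)^2$ for $a,b\in[0,1]$, apply Corollary~\ref{cor:varianceToEdgeVar} conditionally on $\underline{X}_{\setminus i}$, and average to recover $\Inf_i$.
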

Recall that the distribution of $(Y,Z)$ depends on $j^*$.
We do not need to
consider the full multi-step process in this lemma, but when applying it later
we will set $\underline{X} = \underline{X}^{(j^*)}$ and $f = f^{(j^*)}$.

\begin{proof}
Fix a vector $\underline{x}_{\setminus i}$ for 
$\underline{X}_{\setminus i}$,
and define the function $h: \Omega \to [0,1]$ as
$h(x) := f(\underline{x}_{\setminus i}, x)$.
By Corollary~\ref{cor:varianceToEdgeVar},
\begin{align*}
\EE[|h(Y) - h(Z)|]
\geq
\EE[(h(Y) - h(Z))^2]
\geq
2(1-\rho^2) \Var[h(Y)] \; ,
\end{align*}
and hence, averaging over $\underline{X}_{\setminus i}$,
\begin{align}\label{eq:19a}
\EE\Bigl[\Bigl|f(\underline{X}_{\setminus i}, Y) - 
f(\underline{X}_{\setminus i}, Z) \Bigr|\Bigr] 
\geq 2(1-\rho^2) \Inf_i(f(\underline{X}_{\setminus i}, Y))
= 2\tau(1-\rho^2) \; .
\end{align}

Since $Y$ and $Z$ are symmetric (i.e., they define a reversible Markov chain,
cf.~remarks after Definition \ref{def:double-sample}) and by (\ref{eq:19a}),
\begin{align*}
\EE \left[ \left(\cM[i, Y, Z]f-f\right)(\underline{X}) \right]
&=
\EE \left[ \max(f(\underline{X}_{\setminus i}, Y), f(\underline{X}_{\setminus i}, Z))
- f(\underline{X}_{\setminus i}, Y) \right]
\\&= 
\frac12 \EE \left[ \left| f(\underline{X}_{\setminus i}, Y) - 
  f(\underline{X}_{\setminus i}, Z) \right| \right]
\ge
\tau(1-\rho^2) \; ,
\end{align*}
as claimed.
\end{proof}

\begin{lemma}\label{lem:averageNormIncrease}
Let a random vector $\overline{\underline{X}}$ be distributed according to
$(\overline{\underline{\Omega}}, \underline{\mathcal{P}})$ and functions
$f^{(1)}, \ldots, f^{(\ell)}: \underline{\Omega} \to [0, 1]$.
Let $i$, $j^*$ and $\tau$ be such that
$\Inf_{i}(f^{(j^*)}) \ge \tau \ge 0$ and let $\rho(\mathcal{P}) \le \rho \le 1$.

Pick a double sample $(\overline{X}^{\setminus j^*}, Y, Z)$ from $\mathcal{P}$ and let:
\begin{align}\label{eq:9a}
G^{(j)} := 
\begin{cases}
\cR[i,X^{(j)}]f^{(j)} & \text{if $j \neq j^*$}\\
\cM[i,Y,Z]f^{(j^*)} & \text{if $j = j^*$}.
\end{cases}
\end{align}
Then:
\begin{align}
\EE\left[\sum_{j=1}^{\ell} \EE[G^{(j)}(\underline{X}^{(j)}) \mid G^{(j)}] \right]
 \geq 
\sum_{j=1}^{\ell} \EE\left[ f^{(j)}(\underline{X}^{(j)})\right] + \tau \cdot (1-\rho^2)
\; .
\end{align}
\end{lemma}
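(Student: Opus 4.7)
The plan is to exploit the independence of the double sample $(\overline{X}^{\setminus j^*}, Y, Z)$ from the multi-step vector $\overline{\underline{X}}$. By the tower property, $\EE\bigl[\EE[G^{(j)}(\underline{X}^{(j)}) \mid G^{(j)}]\bigr]$ is just $\EE[G^{(j)}(\underline{X}^{(j)})]$ taken jointly over the double sample and $\underline{X}^{(j)}$. I would then compute this quantity separately for each $j$, showing that all of the gain comes from the $j = j^*$ term while the other $\ell - 1$ terms are preserved exactly.

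For $j \neq j^*$, the function $G^{(j)} = \cR[i, X^{(j)}]f^{(j)}$ restricts the $i$-th coordinate of $f^{(j)}$ to the random value $X^{(j)}$ drawn (as part of the double sample) from the $j$-th marginal of $\mathcal{P}$; by the equal-marginals assumption this law is $\pi$. Since $X_i^{(j)}$ in the independent vector $\underline{X}^{(j)} \sim \underline{\pi}$ also has law $\pi$ and $\underline{\pi} = \pi^n$ is a product measure, averaging first over the remaining coordinates $\underline{X}^{(j)}_{\setminus i}$ and then over $X^{(j)}$ reconstructs the original expectation, yielding
\begin{align*}
\EE\bigl[\EE[G^{(j)}(\underline{X}^{(j)}) \mid G^{(j)}]\bigr] = \EE[f^{(j)}(\underline{X}^{(j)})] \; .
\end{align*}
This is an equality, not an inequality, so these $\ell - 1$ summands neither help nor hurt.

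For $j = j^*$, by independence the quantity $\EE[G^{(j^*)}(\underline{X}^{(j^*)})]$ equals $\EE[\cM[i, Y, Z]f^{(j^*)}(\underline{X}^{(j^*)})]$ taken jointly over the double sample on step $j^*$ and $\underline{X}^{(j^*)}$. This matches exactly the setup of Lemma \ref{lem:fjNormIncrease} with $f := f^{(j^*)}$ and $\underline{X} := \underline{X}^{(j^*)}$, whose hypotheses hold because $(Y, Z)$ is independent of $\underline{X}^{(j^*)}$ and $\pi$ is the $j^*$-th marginal of $\mathcal{P}$. Invoking that lemma, and then using $\Inf_i(f^{(j^*)}(\underline{X}^{(j^*)})) \geq \tau$ together with $\rho(\mathcal{P}) \leq \rho$, gives
\begin{align*}
\EE\bigl[\EE[G^{(j^*)}(\underline{X}^{(j^*)}) \mid G^{(j^*)}]\bigr] \geq \EE[f^{(j^*)}(\underline{X}^{(j^*)})] + \tau(1 - \rho^2) \; .
\end{align*}
Summing over $j \in [\ell]$ produces the claimed bound. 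No substantial obstacle is anticipated here: the entire $\tau(1-\rho^2)$ improvement is delivered by Lemma \ref{lem:fjNormIncrease} (ultimately via Corollary \ref{cor:varianceToEdgeVar}), and the only care required in the present lemma is the bookkeeping that ensures the $j \neq j^*$ terms average out to equalities under the equal-marginals and product-measure hypotheses.
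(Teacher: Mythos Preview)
Your proposal is correct and follows essentially the same argument as the paper: for $j \neq j^*$ the equal-marginals assumption makes the restriction average out to an equality, and for $j = j^*$ you invoke Lemma~\ref{lem:fjNormIncrease} to obtain the $\tau(1-\rho^2)$ gain. The only difference is that you spell out the tower-property and product-measure bookkeeping a bit more explicitly than the paper does.
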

Note that (\ref{eq:9a}) defines the functions $G^{(j)}$ as random variables
which is why we use capital letters.
\begin{proof}
If $j \neq j^*$ we have
\begin{align}
\EE \left[ \EE[G^{(j)}(\underline{X}^{(j)}) \mid G^{(j)}] \right]  = 
\EE[f^{(j)}(\underline{X}^{(j)})]\;,
\end{align}
since the marginal distribution of $\underline{X}^{(j)}$ is exactly as in the 
marginal $\pi$ of $\mathcal{P}$.
Hence, it suffices to show that
\begin{align*}
\EE\left[ \EE[G^{(j^*)}(\underline{X}^{(j^*)})
 \mid G^{(j^*)} ] \right]
&= 
\EE\left[ \cM[i,Y,Z]f^{(j^*)}(\underline{X}^{(j^*)}) 
 \right]
\\&\geq 
\EE[f^{(j^*)}(\underline{X}^{(j^*)})] + \tau(1-\rho^2)\;,
\end{align*}
but this is exactly Lemma~\ref{lem:fjNormIncrease}.
\end{proof}

\begin{lemma}
\label{lem:second-reduction-single-step}
Let $\overline{\underline{X}}$ be a random vector distributed according to
$(\overline{\underline{\Omega}}, \underline{\mathcal{P}})$ and also let
$f^{(1)},\ldots,f^{(\ell)}: \underline{\Omega} \to [0,1]$,
$i \in [n]$, $j^* \in [\ell]$, $\Inf_{i}(f^{(j^*)}) \geq \tau \ge 0$,
$\rho(\mathcal{P}) \le \rho \le 1$.

Then, there exist values 
$\overline{x}^{\setminus j^*} = (x^{(1)}, \ldots, x^{(j^*-1)}, x^{(j^*+1)}, \ldots,
x^{(\ell)}), y, z$
such that the functions
\begin{align}\label{eq:17a}
g^{(j)} := \begin{cases}
\cR[i,x^{(j)}]f^{(j)} & \text{if $j \neq j^*$}\\
\cM[i,y,z]f^{(j)} & \text{if $j = j^*$}
\end{cases}
\end{align}
satisfy
\begin{align}
\sum_{j=1}^{\ell} \EE[ g^{(j)}(\underline{X}^{(j)})]
&\geq 
\sum_{j=1}^{\ell} \EE[f^{(j)}(\underline{X}^{(j)})] +
\tau(1-\rho^2)/2
\; , \label{eq:15a}\\
\EE\left[\prod_{j=1}^{\ell} f^{(j)}(\underline{X}^{(j)})\right]
&\geq
\frac{\tau(1-\rho^2)}{2\ell|\Omega|^{\ell+1}} 
\cdot \EE\left[\prod_{j=1}^{\ell} g^{(j)}(\underline{X}^{(j)})\right]\label{eq:16a}
\; .
\end{align}
\end{lemma}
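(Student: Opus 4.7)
The approach is to derive both conclusions from a single well-chosen tuple $(\overline{x}^{\setminus j^*}, y, z)$, extracted by an averaging argument over a random double sample on step $j^*$. Define the random functions $G^{(j)}$ as in (\ref{eq:9a}) and set
\[
  T := \sum_{j=1}^\ell \EE\!\left[G^{(j)}(\underline{X}^{(j)}) \mid G^{(j)}\right]
     - \sum_{j=1}^\ell \EE\!\left[f^{(j)}(\underline{X}^{(j)})\right] .
\]
Lemma \ref{lem:averageNormIncrease} gives $\EE[T] \ge \tau(1-\rho^2)$, and since $T \le \ell$ deterministically, the one-sided Markov bound
\[
  \EE[T] \le \ell \cdot \Pr\!\left[T \ge \tfrac{\tau(1-\rho^2)}{2}\right] + \tfrac{\tau(1-\rho^2)}{2}
\]
yields $\Pr\!\left[T \ge \tau(1-\rho^2)/2\right] \ge \tau(1-\rho^2)/(2\ell)$. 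So the ``good event'' -- on which the randomly chosen tuple would fulfill (\ref{eq:15a}) -- already carries noticeable probability mass.

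Next I would locate inside this good event a specific tuple heavy enough for Lemma \ref{lem:maximumKeepsProbability}. The double-sample distribution is supported on at most $|\Omega|^{\ell+1}$ tuples, so pigeonhole produces some specific tuple $(\overline{x}^{\setminus j^*}, y, z)$ in the good event whose double-sample probability is at least $\beta := \tau(1-\rho^2)/(2\ell|\Omega|^{\ell+1})$. This probability factors as $\cP(\overline{x}^{\setminus j^*}, y) \cdot \cP(\overline{x}^{\setminus j^*}, z) / \Pr[\overline{X}^{\setminus j^*} = \overline{x}^{\setminus j^*}]$, symmetric in $y$ and $z$, with denominator at most $1$; since each numerator factor is itself at most $1$, the product being $\ge \beta$ forces both $\cP(\overline{x}^{\setminus j^*}, y) \ge \beta$ and $\cP(\overline{x}^{\setminus j^*}, z) \ge \beta$.

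Fixing this tuple and defining $g^{(j)}$ through (\ref{eq:17a}) gives (\ref{eq:15a}) by construction, while Lemma \ref{lem:maximumKeepsProbability} applied with the just-derived $\beta$ delivers (\ref{eq:16a}). I do not foresee a serious obstacle in this step: the genuinely delicate move -- using a double sample (rather than a single conditioning) on step $j^*$, so that the $\cM$-operation can coexist with the plain restrictions on the other steps without relying on their resilience -- was already handled in Lemma \ref{lem:averageNormIncrease}, and here only a derandomization plus a pigeonhole remain.
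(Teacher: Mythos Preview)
Your approach is correct and is essentially the paper's: both extract from the random double sample a tuple that simultaneously achieves the expectation increase and has double-sample probability at least $\beta$, then invoke Lemma~\ref{lem:maximumKeepsProbability}; you Markov-bound the ``good'' event first and then pigeonhole for a heavy tuple, whereas the paper union-bounds away the light tuples first and then averages to find a good one --- a reordering of the same two ingredients yielding the identical constant.

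One small slip in your last step: the conclusion $\cP(\overline{x}^{\setminus j^*},y)\ge\beta$ does not follow from ``numerator factors $\le 1$, denominator $\le 1$'' alone (that would allow, e.g., $a=b=10^{-2}$ with $q_0=10^{-4}$). What you actually need is that the other numerator factor $\cP(\overline{x}^{\setminus j^*},z)$ is at most the denominator $\Pr[\overline{X}^{\setminus j^*}=\overline{x}^{\setminus j^*}]$ --- true because it is one summand of that marginal --- so that the conditional $\Pr[Z=z\mid\overline{X}^{\setminus j^*}=\overline{x}^{\setminus j^*}]\le 1$ gives $\cP(\overline{x}^{\setminus j^*},y)\ge p(\overline{x}^{\setminus j^*},y,z)\ge\beta$, and symmetrically for $z$.
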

While (\ref{eq:15a}) is immediate from
Lemma~\ref{lem:averageNormIncrease}, we have to do a little bit of work 
to guarantee that it holds simultaneously with (\ref{eq:16a}).
\begin{proof}
Choose $(\overline{X}^{\setminus j^*}, Y, Z)$ as a double sample
from $\mathcal{P}$ and let $G^{(j)}$ be defined as in
(\ref{eq:9a}).

Define 
$p(\overline{x}^{\setminus j^*}, y, z) := 
\Pr[\overline{X}^{\setminus j^*} = \overline{x}^{\setminus j^*} \land
Y = y \land Z = z]$,
$\beta := \frac{\tau(1-\rho^2)}{2\ell|\Omega|^{\ell+1}}$,
an event $\cE :\equiv p(\overline{X}^{\setminus j^*}, Y, Z) < \beta$
and a random variable
\begin{align*}
A := \sum_{j=1}^\ell \EE[ G^{(j)}(\underline{X}^{(j)}) \mid G^{(j)} ] - 
\EE[ f^{(j)}(\underline{X}^{(j)})] \; .
\end{align*}
By Lemma \ref{lem:averageNormIncrease}, we have $\EE[A] \ge \tau(1-\rho^2)$.

Since there are $|\Omega|^{\ell+1}$ possible tuples
$(\overline{x}^{\setminus j}, y, z)$, by union bound we have
$\Pr[\cE] \le |\Omega|^{\ell+1}\beta = \tau(1-\rho^2)/2\ell$.
Bearing in mind the above and that
$A \in [-\ell, \ell]$,
\begin{align*}
\EE[A \cdot \1(\lnot \cE)]
\ge 
\EE[A] - \ell\Pr[\cE]
\ge
\tau (1-\rho^2)/2 \; .
\end{align*}

As a consequence, we can choose $(\overline{x}^{\setminus j^*}, y ,z)$
such that $A \ge \tau(1-\rho^2)/2$ and
$\cE$ does not happen. (\ref{eq:15a}) is now immediate, while for
(\ref{eq:16a}) observe that $\lnot \cE$ implies
$\mathcal{P}(\overline{X}^{\setminus j^*}, Y) \ge \beta$
and $\mathcal{P}(\overline{X}^{\setminus j^*}, Z) \ge \beta$
and apply Lemma
\ref{lem:maximumKeepsProbability}.
\end{proof}

We can now repeat the process from Lemma~\ref{lem:second-reduction-single-step}
multiple times to get the result of this section.

\begin{corollary}
\label{cor:low-influence-reduction}
Let $\overline{\underline{X}}$ be a random vector distributed according
to $(\overline{\underline{\Omega}}, \underline{\mathcal{P}})$ with 
$\rho(\mathcal{P}) \le \rho < 1$.
Then, for every $\tau > 0$ there exist $k \in \mathbb{N}$ and 
$\beta > 0$ such that:

For every $\epsilon \in [0, 1]$ and functions
$f^{(1)}, \ldots, f^{(\ell)}: \underline{\Omega} \to [0, 1]$ such that each
$f^{(j)}$ is $\epsilon$-resilient up to size $k$, there exist
$g^{(1)}, \ldots, g^{(\ell)}: \underline{\Omega} \to [0, 1]$
with the following properties:
\begin{enumerate}
\item
  $\max_{j \in [\ell]} \max_{i \in [n]} \Inf_i(g^{(j)}(\underline{X}^{(j)}))
  \leq \tau$.
\item $\EE \left[ \prod_{j=1}^{\ell} f^{(j)}(\underline{X}^{(j)}) \right] \ge 
\beta \cdot 
\EE \left[ \prod_{j=1}^{\ell} g^{(j)}(\underline{X}^{(j)}) \right]$.
\item For all $j \in [\ell]$:
  $\EE[g^{(j)}(\underline{X}^{(j)})] \geq 
  (1-\epsilon)\EE[f^{(j)}(\underline{X}^{(j)})]$.
\end{enumerate}
Furthermore, one can take $k := \lfloor \frac{2\ell}{\tau(1-\rho^2)} \rfloor$
and $\beta := \left(\frac{\tau(1-\rho^2)}{2\ell|\Omega|^{\ell+1}}\right)^k$.
\end{corollary}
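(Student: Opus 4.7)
The plan is to iterate Lemma \ref{lem:second-reduction-single-step} greedily. Set $k := \lfloor 2\ell/(\tau(1-\rho^2)) \rfloor$ and $\beta := (\tau(1-\rho^2)/(2\ell|\Omega|^{\ell+1}))^k$ as suggested. Starting from $(f^{(1)}_0, \ldots, f^{(\ell)}_0) := (f^{(1)}, \ldots, f^{(\ell)})$, at step $t$ I look for a pair $(i, j^*)$ with $\Inf_i(f^{(j^*)}_t(\underline{X}^{(j^*)})) > \tau$. If one exists, I invoke Lemma \ref{lem:second-reduction-single-step} to pass to $(f^{(1)}_{t+1}, \ldots, f^{(\ell)}_{t+1})$, gaining at least $\tau(1-\rho^2)/2$ in $\sum_j \EE[f^{(j)}_t(\underline{X}^{(j)})]$ while losing at most a factor $\tau(1-\rho^2)/(2\ell|\Omega|^{\ell+1})$ in the product expectation. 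I stop at step $T$ once no high-influence coordinate remains, and set $g^{(j)} := f^{(j)}_T$.

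Since $\sum_j \EE[f^{(j)}_t(\underline{X}^{(j)})] \le \ell$ and grows by at least $\tau(1-\rho^2)/2$ per step, we have $T \le k$. Composing the per-step multiplicative losses yields property~(2) with the claimed $\beta$; the stopping condition is exactly property~(1).

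Property~(3) requires a little care. The key observation is that every coordinate $i$ is used at most once across the whole process: once $\cR[i,\cdot]$ or $\cM[i,\cdot,\cdot]$ has been applied, the resulting function is constant in coordinate $i$, so $\Inf_i$ vanishes for all $f^{(j)}_{t'}$ with $t' > t$. Hence, for each fixed $j$, the function $g^{(j)}$ arises from $f^{(j)}$ by operations on a set of at most $k$ distinct coordinates. Using the pointwise inequality $\cM[i,y,z] h \ge \cR[i,y] h$, I replace each $\cM$-step in the history of $f^{(j)}$ by its $\cR$-counterpart, obtaining a single restriction $\cR^\star$ of size at most $k$ with $g^{(j)} \ge \cR^\star f^{(j)}$ pointwise. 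The $\epsilon$-resilience of $f^{(j)}$ then gives $\EE[g^{(j)}(\underline{X}^{(j)})] \ge \EE[\cR^\star f^{(j)}(\underline{X}^{(j)})] \ge (1-\epsilon)\EE[f^{(j)}(\underline{X}^{(j)})]$.

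The main obstacle is exactly this last step: the intermediate functions $f^{(j)}_t$ are not themselves resilient (as noted in the discussion preceding Lemma~\ref{lem:fjNormIncrease}), so the $(1-\epsilon)$ guarantee cannot be maintained inductively along the iteration. What rescues the argument is that the $\cM$-operator dominates either of its two $\cR$-components pointwise and that the coordinates touched are all distinct, so the cumulative effect on the mass of $f^{(j)}$ can be controlled by a \emph{single} restriction of size $\le k$ applied once to the original $f^{(j)}$, for which $\epsilon$-resilience does apply.
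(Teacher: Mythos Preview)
Your proof is correct and follows essentially the same approach as the paper: iterate Lemma~\ref{lem:second-reduction-single-step}, bound the number of iterations via the growth of $\sum_j \EE[f^{(j)}_t]$, and for property~(3) use that $g^{(j)}$ pointwise dominates a restriction of $f^{(j)}$ of size at most $k$, to which $\epsilon$-resilience applies. The paper phrases this last step slightly differently (writing $g^{(j)}$ as a maximum of several such restrictions rather than lower-bounding by a single one), and your distinct-coordinates observation, while correct, is not strictly needed since any composition of $T\le k$ size-one restrictions is a restriction of size at most $k$ regardless.
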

In particular, both $k$ and $\beta$ depend only on $\tau$ and $\mathcal{P}$
(requiring $\tau > 0$ and $\rho(\mathcal{P}) < 1$).
\begin{proof}
We repeat the process from Lemma~\ref{lem:second-reduction-single-step},
always replacing the collection of functions $f^{(1)},\ldots,f^{(\ell)}$
with $g^{(1)},\ldots,g^{(\ell)}$ until condition $1$ is satisfied.
Since $\sum_{j=1}^{\ell} \EE[f^{(j)}(\underline{X}^{(j)})]$ cannot exceed $\ell$
and every time it increases by $\tau(1-\rho^2)/2$, we have to do this at most
$\frac{2 \ell}{ \tau (1-\rho^2)}$ times.

The first point is then obvious, and the second point follows
from Lemma~\ref{lem:second-reduction-single-step}.

Finally, the third point follows 
because the functions $f^{(j)}$ are all $\epsilon$-resilient up to size
$k$, and each of the functions $g^{(j)}$ can be written as 
a maximum of restrictions
of size at most $k$ of $f^{(j)}$. Since the maximum only increases expectations, the proof follows. 
\end{proof}

\subsection{Finishing the proof}
\begin{proof}[Proof of Theorem~\ref{thm:main-multiple}]
  Let us assume that $\mu \in (0, 0.99]$, the computations being only easier if
  this is not the case. To establish (\ref{eq:76a}), whenever we say
  ``constant'', in the $O()$ notation or otherwise, we mean ``depending only on
  $\mathcal{P}$ (in particular, on $\alpha$, $\rho$, $|\Omega|$ and $\ell$),
  but not on $\mu$''.

The proof consecutively applies Lemma \ref{lem:restrictToMakeStable},
Corollary \ref{cor:low-influence-reduction} and
Theorem \ref{thm:low-influence}.

Given $f: \underline{\Omega} \to [0, 1]$ with
$\EE[f(\underline{X}^{(1)})] = \mu$, first apply Lemma
\ref{lem:restrictToMakeStable} to $f$ with $\epsilon := 1/2$ and 
$k := \exp\left(\left(1/\mu\right)^D\right)$ for a constant $D$ large enough
(where ``large enough'' will depend on another constant $D'$ to be defined
later).
This gives us a function $g: \underline{\Omega} \to [0, 1]$ such that:
\begin{itemize}
\item $g$ is $\epsilon$-resilient up to size $k$.
\item $\EE[g(\underline{X}^{(1)})] \ge \mu$.
\item 
\begin{align}
\label{eq:77a}
\EE\left[ \prod_{j=1}^\ell f(\underline{X}^{(j)}) \right]
\ge
c \cdot \EE \left[ \prod_{j=1}^\ell g(\underline{X}^{(j)}) \right] \; ,
\end{align}
where:
\begin{IEEEeqnarray*}{rCl}
  c &=& 1/\exp\left( \left(1/\alpha\right)^{2k} \cdot 4\ln 1/\mu \right) \ge
  1/\exp\left( \exp\left(O\left(k\right)\right) \cdot 4\ln 1/\mu \right) 
  \\ &\ge&
  1/\exp\left(\exp\left(\exp\left(\left(1/\mu\right)^{O(1)}\right)\right) \cdot
    4\ln 1/\mu\right) 
  \\ &\ge& 
  1/\exp\left(\exp\left(\exp\left(\left(1/\mu\right)^{O(1)}\right)\right)\right) \; .
\end{IEEEeqnarray*}
\end{itemize}

Next, apply Corollary \ref{cor:low-influence-reduction}.
Set $g^{(1)} := \ldots := g^{(\ell)} := g$ and 
$\tau := 1/\exp\left( \left(1/\mu\right)^{D'}\right)$ for a constant $D'$ large 
enough.
We need to check if $k$ we have chosen satisfies the assumption
of Corollary \ref{cor:low-influence-reduction}:
\begin{align*}
\frac{2\ell}{\tau(1-\rho^2)}
\le
O\left(\exp\left( \left(1/\mu\right)^{D'}\right)\right)
\le
\exp\left( \left(1/\mu\right)^{O(1)}\right) \le k \; .
\end{align*}
Therefore, Corollary \ref{cor:low-influence-reduction} is applicable
and yields $h^{(1)}, \ldots, h^{(\ell)}: \underline{\Omega} \to [0, 1]$
such that:
\begin{itemize}
\item
  $\max_{j \in [\ell]} \max_{i \in [n]} \Inf_i(h^{(j)}(\underline{X}^{(j)})) \le
  \tau$.
\item
  $\forall j \in [\ell]: \EE[h^{(j)}(\underline{X}^{(j)})] \ge \mu/2$.
\item
\begin{align}
\label{eq:78a}
\EE\left[ \prod_{j=1}^\ell g(\underline{X}^{(j)}) \right] \ge \beta \cdot \EE
  \left[ \prod_{j=1}^\ell h^{(j)}(\underline{X}^{(j)}) \right] \; ,
\end{align}
where:
\begin{IEEEeqnarray*}{rCl}
  \beta &=& 
  \left(\frac{\tau(1-\rho^2)}{2\ell|\Omega|^{\ell+1}}\right)^k
  \ge
  1/O\left(\exp\left(\left(1/\mu\right)^{D'}\right)\right)^k
  \\ &\ge&
  1/\exp\left(\left(1/\mu\right)^{O(1)} \cdot k\right)
  \ge 1/\exp\left(\exp\left(\left(1/\mu\right)^{O(1)}\right)\right) \; .
\end{IEEEeqnarray*}
\end{itemize}

Finally, we need to apply Theorem \ref{thm:low-influence}. To this end, set:
\begin{align*}
\epsilon := \left(\mu/2\right)^{\ell^2/(1-\rho^2)}/2 \ge \mu^{O(1)}
\end{align*}
and verify (\ref{eq:38a}):
\begin{IEEEeqnarray*}{rCl}
  \left(\frac{(1-\rho^2)\epsilon}{\ell^{5/2}}
  \right)^{O\left(\frac{\ln(\ell/\epsilon) \ln(1/\alpha)}{(1-\rho)\epsilon}\right)}
  &\ge&
  \Omega\left(\epsilon\right)^{\left(O(1) + \ln 1/\epsilon\right)
    \cdot O\left(1/\epsilon\right)}
  \\ &\ge&
  1/\exp\left( \left(O(1) + \ln 1/\epsilon \right)^2 \cdot 
    O\left(1/\epsilon\right)  \right)
  \\ &\ge&
  1/\exp\left( \left(1/\epsilon\right)^{O(1)} \right)
    \ge 1/\exp\left(\left(1/\mu\right)^{O(1)}\right) \; .
\end{IEEEeqnarray*}
Hence, from Theorem \ref{thm:low-influence}:
\begin{align}
\label{eq:79a}
  \EE\left[ \prod_{j=1}^\ell h^{(j)}(\underline{X}^{(j)}) \right]
  \ge \epsilon / 2 \ge \mu^{O(1)} \; .
\end{align}
(\ref{eq:77a}), (\ref{eq:78a}) and (\ref{eq:79a}) put together give:
\begin{align*}
  \EE \left[ \prod_{j=1}^\ell f(\underline{X}^{(j)}) \right]
  \ge c \cdot \beta \cdot \mu^{O(1)}
  \ge 1/\exp\left(\exp\left(\exp\left(\left(\left(1/\mu\right)^{O(1)}
  \right)\right)\right)\right)
  \; ,
\end{align*}
as claimed.
\end{proof}

\section{Proof for Two Steps}
\label{sec:two-steps}

Our goal in this section is to prove Theorem \ref{thm:main-two-variables}
assuming Theorem \ref{thm:main-multiple}.

In the following  we will sometimes drop the assumption that $\Omega$ is 
necessarily the support of a probability distribution $\mathcal{P}$. 
One can check that this will not cause problems.

\subsection{Correlation of a cycle}

Assume we are given a support set $\Omega$ of size
$|\Omega| = k$. Let $s \ge 2, p \in (0, 1)$ and let $(x_0, \ldots, x_{s-1})$
be a sequence of distinct $x_i \in \Omega$.

\begin{definition}
We call a probability distribution $\mathcal{C}$ over $\Omega$
an \emph{$(s, p)$-cycle} if
\begin{align*}
\mathcal{C}(x, y) = \begin{cases}
p / s & \text{if } x=y=x_i \text{ for } i \in \{0, \ldots, s-1\} \; , \\
(1-p) / s &\text{if } x=x_i \land y=x_{(i+1) \bmod s} \text{ for } i \in \{0, \ldots, s-1\} \; ,\\
0 & \text{otherwise.}
\end{cases}
\end{align*}
\end{definition}

\begin{lemma}
\label{lem:cycle-rho}
Let $\mathcal{C}$ be an $(s, p)$-cycle. Then
\begin{align*}
  \rho(\mathcal{C}) \le 1-\frac{7p(1-p)}{s^2} \; .
\end{align*}
\end{lemma}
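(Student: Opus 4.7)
The plan is to invoke Lemma~\ref{lem:lambda-rho}, which identifies $\rho(\mathcal{C})^2$ with the second-largest eigenvalue of the Markov kernel $K$ induced by a double sample on step $1$ from $\mathcal{C}$. For $\ell = 2$ this double sample is: draw $X^{(2)}$ from the (uniform) marginal $\pi$, then independently sample $Y, Z$ from the conditional law of $X^{(1)}$ given $X^{(2)}$. A direct computation from the definition of $\mathcal{C}$ shows that conditional on $X^{(1)} = x_i$ one has $X^{(2)} \in \{x_i, x_{i+1}\}$ with probabilities $(p, 1-p)$, and symmetrically conditional on $X^{(2)} = x_j$ one has $X^{(1)} \in \{x_j, x_{j-1}\}$ with probabilities $(p, 1-p)$. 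Composing these two stochastic matrices yields
\begin{align*}
K(x_i, x_i) = p^2 + (1-p)^2, \qquad K(x_i, x_{i \pm 1}) = p(1-p),
\end{align*}
with all other entries zero (indices mod $s$, with the two neighbor entries collapsing onto a single vertex when $s = 2$). So $K$ is a symmetric circulant matrix on $\mathbb{Z}/s\mathbb{Z}$.

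Diagonalizing $K$ via the characters of $\mathbb{Z}/s\mathbb{Z}$ gives the eigenvalues
\begin{align*}
\lambda_k = \bigl(p^2 + (1-p)^2\bigr) + 2p(1-p)\cos(2\pi k/s) = 1 - 4p(1-p)\sin^2(\pi k/s), \qquad k = 0, 1, \ldots, s-1,
\end{align*}
so the top eigenvalue is $\lambda_0 = 1$ and the second-largest, attained at $k = 1$ (and $k = s-1$), is $\lambda_2(K) = 1 - 4p(1-p)\sin^2(\pi/s)$. Applying Lemma~\ref{lem:lambda-rho} together with the elementary bound $\sqrt{1-x} \le 1 - x/2$ gives
\begin{align*}
\rho(\mathcal{C}) = \sqrt{1 - 4p(1-p)\sin^2(\pi/s)} \le 1 - 2p(1-p)\sin^2(\pi/s).
\end{align*}

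To finish, use Jordan's inequality $\sin\theta \ge 2\theta/\pi$ for $\theta \in [0, \pi/2]$: since $s \ge 2$ implies $\pi/s \le \pi/2$, we get $\sin(\pi/s) \ge 2/s$, hence $\sin^2(\pi/s) \ge 4/s^2$. Thus $\rho(\mathcal{C}) \le 1 - 8p(1-p)/s^2 \le 1 - 7p(1-p)/s^2$, as claimed.

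There is no substantial obstacle here: the whole argument is a standard spectral calculation on the cycle. The only points needing a moment of care are (i) setting up the conditional distributions in the double sample cleanly so that $K$ comes out as a symmetric circulant, and (ii) checking that the closed-form $\lambda_k = 1 - 4p(1-p)\sin^2(\pi k/s)$ remains valid in the degenerate case $s = 2$, where the forward and backward neighbors on the cycle coincide. The slack between the $8/s^2$ the argument naturally produces and the claimed $7/s^2$ is just a convenient round number.
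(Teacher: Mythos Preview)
Your proof is correct and follows essentially the same approach as the paper: compute the circulant kernel $K$ of the double-sample chain, diagonalize it explicitly to find $\lambda_2(K)=1-4p(1-p)\sin^2(\pi/s)$, and then apply Lemma~\ref{lem:lambda-rho} together with $\sqrt{1-x}\le 1-x/2$. The only cosmetic difference is that you use Jordan's inequality $\sin\theta\ge 2\theta/\pi$ where the paper uses $\cos x\le 1-x^2/5$; your route even yields the slightly sharper constant $8$ before relaxing to $7$.
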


\begin{proof}
Let $K$ be the Markov kernel induced by a double sample on $\mathcal{C}$
($K$ is the same whether a sample is on the first or the second step,
cf.~Section \ref{sec:correlation-properties}). Observe that
\begin{align*}
K(y, z) := \begin{cases}
  p^2 + (1-p)^2 & \text{if $y=z=x_i$,}\\
  p(1-p) & \text{if $y=x_i$ and $z=x_{(i\pm 1) \bmod s}$.}
\end{cases}
\end{align*}
Let $\alpha_k := \frac{2\pi k}{s}$.
One can check that the eigenvalues
of $K$ are $\lambda_0, \ldots, \lambda_{s-1}$
with $\lambda_k := 1-2p(1-p)(1-\cos \alpha_k)$. 
This is easiest if one knows the respective (complex) eigenvectors 
$v_k := (1, \exp(\alpha_k \imath), \ldots, \exp((s-1)\alpha_k \imath))$ 
(where $\imath$ is the imaginary unit).

Using $\cos x \le 1-x^2/5$ for $x \in [0, \pi]$
and $\sqrt{1-x} \le 1-x/2$ for $x \in [0, 1]$ we obtain that if $k > 0$,
then
\begin{align*}
  \sqrt{\lambda_k} \le \sqrt{1-2p(1-p)(1-\cos \alpha_1)}
 \le 
 \sqrt{1-2p(1-p)\frac{4\pi^2}{5s^2}}
 \le 1 - \frac{7p(1-p)}{s^2} \; .
\end{align*}

The bound on $\rho(\mathcal{C})$ now follows from Lemma \ref{lem:lambda-rho}.
\end{proof}

\subsection{Convex decomposition of 
\texorpdfstring{$\mathcal{P}$}{P}}

In this section we show that if a distribution $\mathcal{P}$ 
can be decomposed into a convex combination
of distributions $\mathcal{P} = \sum_{k=1}^{r} \alpha_k \cP_k$ and each 
distribution $\cP_k$ is same-set hitting, 
then also $\mathcal{P}$ is same-set hitting.

\begin{definition}
We say that a probability distribution with equal marginals $\cP$ has an
\emph{$(\alpha, \rho)$}-convex decomposition if there exist 
$\beta_1, \ldots, \beta_r > 0$ with $\sum_{k=1}^r \beta_k = 1$
and distributions with equal marginals
$\cP_1, \ldots, \cP_r$ such that
\begin{align*}
  \cP = \sum_{k=1}^r \beta_k \cdot \cP_k \; .
\end{align*}
and $\alpha(\cP_k) \ge \alpha$ and $\rho(\cP_k) \le \rho$ for every $k \in [r]$.
\end{definition}

\begin{lemma}
\label{lem:convex-decomposition}
Let an $\ell$-step distribution $\mathcal{P}$ with equal marginals have an
$(\alpha, \rho)$-convex decomposition
for some $\alpha > 0$ and $\rho < 1$. 

Then, for every function $f: \underline{\Omega} \to [0, 1]$ with 
$\EE[f(\underline{X}^{(1)})] = \mu > 0$:
\begin{align*}
  \EE \left[ \prod_{j=1}^\ell f(\underline{X}^{(j)}) \right] \ge 
c(\alpha, \rho, \ell, \mu) > 0 \; .
\end{align*}
\end{lemma}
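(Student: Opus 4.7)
The plan is to view $\underline{\cP}$ as a two-stage mixture and then apply the already-proven Theorem \ref{thm:main-multiple} (in its non-identically-distributed-coordinates version noted in Section~\ref{sec:equal-distributions}) on a typical draw from the mixture. Concretely, since $\cP = \sum_{k=1}^r \beta_k \cP_k$, the tensorized distribution factorizes as
\begin{align*}
\underline{\cP} = \sum_{K \in [r]^n} \left(\prod_{i=1}^n \beta_{K_i}\right) \cdot \prod_{i=1}^n \cP_{K_i},
\end{align*}
so sampling $\overline{\underline{X}} \sim \underline{\cP}$ is equivalent to first drawing $K = (K_1, \ldots, K_n)$ with the $K_i$ i.i.d.\ according to $(\beta_1, \ldots, \beta_r)$, and then, independently across coordinates, drawing $\overline{X}_i \sim \cP_{K_i}$.

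Next I would exploit equal marginals. Because each $\cP_k$ has equal marginals (say $\pi_k$), conditioned on $K$ the vector $\underline{X}^{(j)}$ has marginal $\pi_{K_i}$ in coordinate $i$ regardless of $j \in [\ell]$. Hence the conditional expectation
\begin{align*}
g(K) := \EE\!\left[f(\underline{X}^{(j)}) \mid K\right]
\end{align*}
does not depend on $j$, while $\EE_K[g(K)] = \EE[f(\underline{X}^{(1)})] = \mu$. Since $g(K) \in [0,1]$, a standard Markov-type bound (if $\Pr[g(K) \ge \mu/2] < \mu/2$ then $\EE[g(K)] < \mu/2 + \mu/2 = \mu$) gives $\Pr_K[g(K) \ge \mu/2] \ge \mu/2$.

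Finally, conditioned on any fixed $K$, the coordinates $i$ are independent with distributions $\cP_{K_i}$ satisfying $\alpha(\cP_{K_i}) \ge \alpha$ and $\rho(\cP_{K_i}) \le \rho < 1$ uniformly. By the extension of Theorem \ref{thm:main-multiple} to non-identical coordinate distributions (Section~\ref{sec:equal-distributions}), we obtain, for every $K$ with $g(K) \ge \mu/2$,
\begin{align*}
\EE\!\left[ \prod_{j=1}^\ell f(\underline{X}^{(j)}) \,\middle|\, K \right] \ge c(\alpha, \rho, \ell, \mu/2) > 0,
\end{align*}
and averaging over $K$ yields
\begin{align*}
\EE\!\left[ \prod_{j=1}^\ell f(\underline{X}^{(j)}) \right] \ge \frac{\mu}{2} \cdot c(\alpha, \rho, \ell, \mu/2),
\end{align*}
which is the desired bound.

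The only subtle point is the second step: one must verify that $g(K)$ really is $j$-independent, which is exactly where the \emph{equal marginals} hypothesis on each $\cP_k$ is used. Everything else is a routine combination of a mixture-decomposition argument, a Markov-type lower bound, and an appeal to Theorem \ref{thm:main-multiple} in its slightly generalized form; no new analytic machinery is required.
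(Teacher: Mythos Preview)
Your proof is correct and is essentially identical to the paper's own argument: the paper also introduces the mixture variable (called $\underline{Z}$ there), uses the same Markov-type bound to get $\Pr[\mu_{\underline{Z}} \ge \mu/2] \ge \mu/2$, and then applies Theorem~\ref{thm:main-multiple} in its non-identical-coordinates form conditionally on $\underline{Z}$. Your explicit remark that $g(K)$ is $j$-independent because each $\cP_k$ has equal marginals makes transparent a point the paper leaves implicit.
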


\begin{proof}
Let us write the relevant decomposition as $\cP = \sum_{k=1}^r \beta_k \cP_k$.
The existence of this decomposition implies that there exists a random vector
$\underline{Z} = (Z_1, \ldots, Z_n)$ such that:
\begin{itemize}
  \item The variables $Z_i \in [r]$ are i.i.d.~with $\Pr[Z_i = k] = \beta_k$.
  \item For every $i \in [n]$ and $k \in [r]$, conditioned on $Z_i = k$,
    the tuple $\overline{X}_i$ is distributed according to $\cP_k$.
\end{itemize}

Let $\underline{z}$ be an arbitrary assignment to $\underline{Z}$ and let
$\mu_{\underline{z}} := \EE[f(\underline{X}^{(1)}) \mid \underline{Z}=\underline{z}]$.
If $\mu_{\underline{z}} \ge \mu / 2$, by Theorem \ref{thm:main-multiple}\footnote{
  Technically, Theorem \ref{thm:main-multiple} requires the distributions to
be the same for each coordinate, which is not the case in our setting. However,
this is not a problem, cf.~Section \ref{sec:equal-distributions}.
}
\begin{align}
\label{eq:86a}
  \EE \left[ \prod_{j=1}^\ell f(\underline{X}^{(j)}) \mid 
  \underline{Z}=\underline{z} \right] \ge
  c(\alpha, \rho, \ell, \mu) > 0 \; .
\end{align}

Since $\EE[\mu_{\underline{Z}}] = 
\EE[\EE[f(\underline{X}^{(1)}) \mid \underline{Z}]] = 
\EE[f(\underline{X}^{(1)})] = \mu$, by Markov
\begin{align}
\label{eq:87a}
	\Pr \left[ \mu_{\underline{Z}} \ge \mu / 2 \right] \ge \mu/2 \; .
\end{align}
(\ref{eq:86a}) and (\ref{eq:87a}) together give
\begin{align*}
  \EE \left[ \prod_{j=1}^\ell f(\underline{X}^{(j)}) \right]
  \ge \mu/2 \cdot c(\alpha, \rho, \ell, \mu) > 0 \; .
\end{align*} 
\end{proof}

\subsection{Decomposition of 
\texorpdfstring{$\cP$}{P} 
into cycles}

\begin{definition}
Let us consider weighted directed graphs with non-negative weights 
over a vertex set $\Omega$. We will identify such a digraph $G$
with its weight matrix.

We say that such a weighted digraph is \emph{regular}, if for every vertex
the total weight of the incoming edges is equal to the total weight of
the outgoing edges.

We call a weighted digraph a \emph{weighted cycle}, if it is a directed cycle 
over a subset of $\Omega$ with all edges of the same weight $w > 0$.
We call $w$ the \emph{weight} of the cycle and number of its edges $s$
the \emph{size} of the cycle.

We say that a weighted digraph $G$ can be 
\emph{decomposed into $r$ weighted cycles}
if there exist weighted cycles $C_1, \ldots, C_r$  such that 
$G = \sum_{k=1}^r C_k$.
\end{definition}

\begin{lemma}
\label{lem:digraph-decomposition}
Every regular weighted digraph $G$ over a set $\Omega$ of size $k$ 
can be decomposed into at most $k^2$ weighted cycles.
\end{lemma}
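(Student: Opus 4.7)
My plan is to prove the lemma by a greedy cycle-extraction argument, reducing the number of positively-weighted edges by at least one at each step.

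The first observation is that the set of arcs in $\Omega \times \Omega$ carrying positive weight has size at most $k^2$, so it suffices to exhibit a procedure that, at each step, either terminates (when $G$ is the zero matrix) or extracts a single weighted cycle while strictly decreasing the number of positively weighted arcs and preserving regularity.

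The key sub-claim is that any regular weighted digraph with at least one positively-weighted arc contains a directed cycle supported on positive-weight arcs. To see this, pick any vertex $v_0$ incident to a positive arc. By regularity, every vertex with positive in-weight also has positive out-weight; starting from $v_0$ and repeatedly following an outgoing positive arc produces an infinite walk $v_0, v_1, v_2, \ldots$ (possibly including loops, which qualify as cycles of size $1$). Since $|\Omega| = k$ is finite, some vertex must repeat, and the segment between two occurrences of the same vertex is the desired directed cycle $C$ consisting of arcs of positive weight.

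Given such a cycle $C$, let $w := \min_{e \in C} G(e) > 0$ be the minimum weight of any arc on $C$, and let $C_w$ denote the weighted cycle obtained by putting weight $w$ on every arc of $C$. I would then set $G' := G - C_w$ and verify: (i) $G' \ge 0$ entrywise, by the choice of $w$; (ii) $G'$ is regular, since $C_w$ contributes equal in- and out-weight at every vertex of $C$; and (iii) the number of arcs carrying positive weight in $G'$ is strictly smaller than in $G$, because the arc achieving the minimum $w$ becomes $0$ in $G'$. Iterating this extraction yields a decomposition $G = C_{w_1} + C_{w_2} + \cdots + C_{w_r}$ into weighted cycles, with $r$ bounded by the initial number of positive arcs, hence $r \le k^2$.

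The only place that requires care is the sub-claim producing a cycle, but as sketched above it follows immediately from regularity plus the pigeonhole principle on a finite walk; I do not anticipate any serious obstacle in this proof.
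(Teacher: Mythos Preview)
Your proposal is correct and follows essentially the same approach as the paper: greedily find a directed cycle supported on positive-weight arcs (using regularity and pigeonhole), subtract it with weight equal to the minimum arc weight along the cycle, and iterate, noting that each step kills at least one positive arc so the process terminates in at most $k^2$ steps. Your write-up is more detailed than the paper's (which simply asserts that a regular digraph contains a cycle and proceeds by induction), but the argument is the same.
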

\begin{proof}
Since the digraph is regular, it must have a cycle. Remove it from the graph
(taking as weight $w$ the minimum weight of the edge on this cycle).

Since the resulting graph is still regular, 
proceed by induction until the graph is empty.

At each step at least one edge is completely removed from the graph,
therefore there will be at most $k^2$ steps.
\end{proof}

To see that a two-step distribution $\cP$ can be decomposed into cycles, 
it will be useful to take $\cP' := \cP - \alpha \cdot \Id$ and
look at it as a weighted directed graph $(\Omega, \cP')$,
where $\cP'$ is interpreted as a weight function 
$\cP': \Omega \times \Omega \to \mathbb{R}_{\ge 0}$.

\begin{lemma}
\label{lem:cycle-decomposition}
Let $\cP$ be a two-step distribution 
with equal marginals 
over an alphabet $\Omega$ with size $t$.

Then, $\cP$ has a convex decomposition 
$\cP = \sum_{k=1}^r \beta_k \cP_k$ such that each $\cP_k$ either
has support of size $1$ or is
an $(s, p)$-cycle with $2 \le s \le t$ and $p \in [\alpha(\cP)^3, 1/2]$. 

Consequently, $\cP$ has an $(\alpha, \rho)$-convex decomposition with
$\alpha := \alpha(\cP)^4$ and $\rho := 1-3\alpha(\cP)^5$.
\end{lemma}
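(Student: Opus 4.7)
The strategy is to view $\cP$ as a weighted digraph on $\Omega$, peel off its off-diagonal part as a sum of directed weighted cycles via Lemma~\ref{lem:digraph-decomposition}, and then reattach just enough self-loop mass to each cycle to make it a bona fide $(s,p)$-cycle distribution, with any leftover diagonal mass forming support-size-$1$ atoms.

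More concretely, I would let $\cP'$ denote the weighted digraph obtained by zeroing out the diagonal of $\cP$. Because $\cP$ has equal marginals and subtracting a diagonal entry $\cP(x,x)$ removes the same amount from both the $x$-th row sum and the $x$-th column sum, $\cP'$ is regular. Applying Lemma~\ref{lem:digraph-decomposition} gives a decomposition $\cP' = \sum_{k=1}^{r} C_k$ where each $C_k$ is a weighted directed cycle with a common edge weight $w_k > 0$ on its $s_k \in [2, t]$ edges (length at least $2$, since $\cP'$ has no self-loops).

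Next, I would set $p := \alpha(\cP)^3$ and let $\cQ_k$ be the $(s_k, p)$-cycle distribution on the same cyclic sequence of vertices as $C_k$, with scaling coefficient $\beta_k := s_k w_k/(1-p)$. Then $\beta_k \cQ_k$ reproduces $C_k$ exactly on the shift edges and places mass $pw_k/(1-p)$ on each self-loop $(x,x)$ with $x$ a vertex of $C_k$. Setting $\gamma_x := \cP(x,x) - \sum_{k:\, x \in C_k} pw_k/(1-p)$ and letting $\delta_x$ denote the point-mass distribution on $(x,x)$, the candidate decomposition is
\[
\cP \;=\; \sum_{k=1}^r \beta_k \cQ_k \;+\; \sum_{x \in \Omega} \gamma_x \delta_x.
\]
Matching weights edge-by-edge is immediate from the construction, and the total mass sums to $1$ by a short bookkeeping step using $\sum_k s_k w_k = 1 - \sum_x \cP(x,x)$.

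The main obstacle is verifying $\gamma_x \ge 0$ while simultaneously keeping $p$ bounded away from $0$: decreasing $p$ reduces self-loop consumption but pushes $\rho(\cQ_k)$ toward $1$. The choice $p = \alpha(\cP)^3$ just works, because regularity of $\cP'$ at $x$ gives $\sum_{k:\, x \in C_k} w_k = \pi(x) - \cP(x,x) \le 1$, so the self-loop consumption at $x$ is at most $\alpha(\cP)^3/(1-\alpha(\cP)^3) \le \alpha(\cP) \le \cP(x,x)$ (valid once $\alpha(\cP) \le 1/2$, i.e.\ $|\Omega| \ge 2$; the case $|\Omega|=1$ is trivial). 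For the $(\alpha(\cP)^4, 1-3\alpha(\cP)^5)$-convex decomposition, the atoms trivially satisfy the bounds, while for each cycle $\alpha(\cQ_k) = p/s_k \ge \alpha(\cP)^4$ (using $s_k \le t \le 1/\alpha(\cP)$), and Lemma~\ref{lem:cycle-rho} with $p \le 1/2$ and $s_k^2 \le 1/\alpha(\cP)^2$ yields $\rho(\cQ_k) \le 1 - 7p(1-p)/s_k^2 \le 1 - 3\alpha(\cP)^5$.
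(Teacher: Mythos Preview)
Your proposal is correct and follows essentially the same approach as the paper: view $\cP$ as a regular weighted digraph, decompose its off-diagonal part into weighted cycles via Lemma~\ref{lem:digraph-decomposition}, and reattach enough self-loop mass to each cycle to form an $(s,p)$-cycle, with leftover diagonal becoming point masses. The only cosmetic differences are that the paper subtracts $\alpha(\cP)\cdot\Id$ (so size-$1$ cycles can appear in the decomposition) and uses a per-cycle $\beta_k=\min(w_k,\alpha(\cP)/t^2)$ yielding variable $p\in[\alpha(\cP)^3,1/2]$, whereas you zero the whole diagonal and use a uniform $p=\alpha(\cP)^3$; both choices lead to the same final bounds.
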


\begin{proof}
Throughout this proof we will treat $\mathcal{P}$ as a weight matrix of
a digraph. Since $\cP$ has equal marginals, this weighted digraph is regular.
Use Lemma \ref{lem:digraph-decomposition} to decompose
$\cP - \alpha(\cP) \cdot \Id$ into weighted cycles, which allows us to write
\begin{align*}
	\cP = \alpha(\cP) \cdot \Id + \sum_{k=1}^{r} C_k \; ,
\end{align*}
where $C_k$ is a weighted cycle with weight $w_k$ and size $s_k$
and $r \le t^2$.
Take $\beta_k := \min(w_k, \alpha(\cP) / t^2)$ and let
$\Id_k$ be the identity matrix restricted to the support of $C_k$. Now we
can write $\cP$ as
\begin{align*}
\cP = \left( \alpha(\cP) \cdot \Id - \sum_{k=1}^{r} \beta_k \Id_k \right) + 
	\left( \sum_{k=1}^{r} s_k (w_k + \beta_k) \cdot 
  \frac{\beta_k \Id_k + C_k}{s_k(w_k + \beta_k)} \right) \; .
\end{align*}

Firstly, $\left( \alpha(\cP) \cdot \Id - \sum_{k=1}^{r} \beta_k \Id_k \right)$
can be decomposed into distributions with support size $1$.

As for the other term, note that 
$\mathcal{C}_k := \frac{\beta_k \Id_k + C_k}{s_k(w_k + \beta_k)}$ is a probability
distribution that
either has support of size $1$ (iff $C_k$ has
support of size $1$) or is an $(s, p)$-cycle with 
$2 \le s \le t$ and $p = \beta_k/(\beta_k+w_k)$.

If $\beta_k = w_k$, then $p  = 1/2$. If $\beta_k < w_k$, then
$1/2 \ge p = \beta_k / (\beta_k+w_k) \ge \beta_k = \alpha(\cP) / t^2 \ge 
\alpha(\cP)^3$.
Therefore, $p \in [\alpha(\cP)^3, 1/2]$, as stated.

Consequently, $\alpha(\mathcal{C}_k) = p/s_k \ge \alpha(\cP)^4$
and, by Lemma \ref{lem:cycle-rho}, 
$\rho(\mathcal{C}_k) \ge 1-3\alpha(\cP)^5$ and, since
every $(s,p)$-cycle has equal marginals,
we obtained
an $(\alpha, \rho)$-convex decomposition of $\cP$.
\end{proof}

\subsection{Putting things together}

\begin{proof}[Proof of Theorem \ref{thm:main-two-variables}]
From Lemmas \ref{lem:cycle-decomposition} and
\ref{lem:convex-decomposition}.
\end{proof}

\begin{remark}
One can see that see that, as in Theorem \ref{thm:main-multiple}, we obtain
a triply exponential explicit bound, i.e, there exists 
$D(\alpha(\cP)) > 0$ such that if $\mu \in (0, 0.99]$, then
\begin{align*}
\EE\left[ f(\underline{X})f(\underline{Y}) \right]
  \ge 1/\exp\left(\exp\left(\exp\left( \left(1/\mu\right)^D\right) \right)\right)\;.
\end{align*}
\end{remark}

\section{Local Variance}
\label{sec:local-variance}

In this section we state and prove a generalization of the low-influence theorem 
from \cite{Mos10}. We assume that the reader is familiar with
Fourier coefficients $\hat{f}(\sigma)$ and the basics of discrete function 
analysis, for details see, e.g., Chapter 8 of \cite{Dol14}. 

\cite{Mos10} shows that $\rho(\cP) < 1$ implies that $\cP$ is set hitting
for low-influence functions. We extend this result to
a weaker notion of influence. In particular, we show that $\cP$ is set hitting
for functions with $\Omega(1)$ measure and $o(1)$ largest Fourier coefficient.
The main result of this section is Theorem~\ref{thm:local-variance}.

We remark that Theorem~\ref{thm:local-variance} 
does \emph{not} require equal marginals.
The rest of this section contains the proof of Theorem~\ref{thm:local-variance}.
First, from Corollary \ref{cor:low-influence-reduction} and
Theorem \ref{thm:low-influence} it is easy to establish\footnote{
One needs to check that the assumption about equal marginals is not necessary,
but that turns out to be the case (the bound in Theorem \ref{thm:low-influence}
then depends on $\min_{j \in [\ell], x \in \supp(X^{(j)})} \pi^{(j)}(x)$).
} the following:
\begin{theorem}\label{thm:stabilityExpectation}
Let $\overline{\underline{X}}$ be a random vector distributed according to
an $\ell$-step distribution $\cP$ with $\rho(\cP) \le \rho < 1$ and
let $\epsilon \in [0, 1)$.

Then, for
all $\mu^{(1)}, \ldots, \mu^{(\ell)} \in (0, 1]$ there exists
$k(\cP, \epsilon, \mu^{(1)}, \ldots, \mu^{(\ell)}) \in \mathbb{N}$ 
such that for all functions
$f^{(1)}, \ldots, f^{(\ell)}: \underline{\Omega} \to [0, 1]$,
if $\EE[f^{(j)}(\underline{X}^{(j)})] = \mu^{(j)}$ and
if $f^{(1)}, \ldots, f^{(\ell)}$ are all $\epsilon$-resilient up to size
$k$, then
\begin{align}
\EE\left[\prod_{j=1}^{\ell} f^{(j)}(\underline{X}^{(j)}) \right]
\geq c(\cP, \epsilon, \mu^{(1)}, \ldots, \mu^{(\ell)}) > 0 \; .
\end{align}
\end{theorem}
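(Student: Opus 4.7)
The plan is to imitate the last portion of the proof of Theorem~\ref{thm:main-multiple}: compose Corollary~\ref{cor:low-influence-reduction} with Theorem~\ref{thm:low-influence}. Because the hypothesis of Theorem~\ref{thm:stabilityExpectation} already supplies resilience, we do not need the density-increment step of Lemma~\ref{lem:restrictToMakeStable}; the proof is essentially just the tail end of the argument that establishes the multi-step theorem, together with a choice of parameters that makes everything fit.

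First I would choose the accuracy parameter
\[
\epsilon' := \tfrac{1}{2}\prod_{j=1}^\ell \bigl((1-\epsilon)\mu^{(j)}\bigr)^{\ell/(1-\rho^2)},
\]
so that, after running Corollary~\ref{cor:low-influence-reduction}, Theorem~\ref{thm:low-influence} applied to the resulting collection with tolerance $\epsilon'$ will yield a strictly positive lower bound on the multilinear expectation. Reading off formula (\ref{eq:38a}), this determines the value of the influence threshold $\tau = \tau(\cP,\epsilon,\mu^{(1)},\ldots,\mu^{(\ell)}) > 0$ that we need. Setting $k := \lceil 2\ell/(\tau(1-\rho^2))\rceil$, the hypothesis that each $f^{(j)}$ is $\epsilon$-resilient up to size $k$ makes Corollary~\ref{cor:low-influence-reduction} applicable, producing functions $g^{(1)},\ldots,g^{(\ell)}:\underline{\Omega}\to[0,1]$ satisfying
\[
\max_{i,j}\Inf_i\bigl(g^{(j)}(\underline{X}^{(j)})\bigr)\le\tau, \qquad \EE[g^{(j)}(\underline{X}^{(j)})]\ge(1-\epsilon)\mu^{(j)},
\]
together with
\[
\EE\!\left[\prod_{j=1}^\ell f^{(j)}(\underline{X}^{(j)})\right]\ge \beta\cdot\EE\!\left[\prod_{j=1}^\ell g^{(j)}(\underline{X}^{(j)})\right],
\]
where $\beta=\beta(\tau,\cP)>0$. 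Applying Theorem~\ref{thm:low-influence} to the $g^{(j)}$ with parameters $(\tau,\epsilon')$ produces a lower bound of at least $\epsilon'>0$, and chaining the two inequalities yields a positive constant depending only on $\cP$, $\epsilon$, and the $\mu^{(j)}$, as required.

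The main obstacle I foresee is the footnote remark that both Corollary~\ref{cor:low-influence-reduction} and Theorem~\ref{thm:low-influence} continue to hold without the equal-marginals assumption. I would verify this by inspecting their proofs: the marginal distributions enter only through the lower bound $\alpha(\pi)$ used in Lemma~\ref{lem:stability-equivalence} (to convert upper resilience into resilience) and through the quantity $\beta$ appearing in Lemma~\ref{lem:maximumKeepsProbability}. Both of these quantities can be replaced by $\alpha_\star := \min_{j\in[\ell],\,x\in\supp(\pi^{(j)})}\pi^{(j)}(x)$, and the resulting bounds propagate transparently through Lemma~\ref{lem:second-reduction-single-step} and Corollary~\ref{cor:low-influence-reduction}. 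The corresponding check for Theorem~\ref{thm:low-influence} (whose proof is deferred to the appendix) amounts to noting that its hyper-contractive ingredients are stated per coordinate and depend only on $\alpha_\star$ and $\rho$. Once this verification is complete, the rest is a routine composition of bounds already established in the paper, with no further ideas needed beyond careful parameter tracking.
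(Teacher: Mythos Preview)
Your proposal is correct and follows exactly the route the paper indicates: combine Corollary~\ref{cor:low-influence-reduction} with Theorem~\ref{thm:low-influence}, choosing $\tau$ (hence $k$) so that the low-influence bound beats the error term, and note (as in the paper's footnote) that the equal-marginals assumption can be dropped by replacing $\alpha$ with $\min_{j,x\in\supp(\pi^{(j)})}\pi^{(j)}(x)$. Your explicit choice of $\epsilon'$ and the resulting chain of inequalities are precisely the ``easy to establish'' computation the paper leaves to the reader.
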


\begin{definition}
Let $\pi$ be a single-step distribution and let 
$f: \underline{\Omega} \to \bbR$ be a 
function. Let $S \subseteq [n]$ with $|S| = k$.
We define $f^{\subseteq S}: \underline{\Omega} \to \bbR$ as
\begin{align}
f^{\subseteq S}(\underline{x}) :=
\EE[f(\underline{x}_S,\underline{X}_{\overline S})] \; ,
\end{align}
where $\overline S := [n] \setminus S$, $\underline{x}_{S}$
is the vector $\underline{x}$ restricted to coordinates in $S$,
and $\underline{X}_{\overline{S}}$ is a random vector of
$n-k$ elements with each coordinate distributed i.i.d.~in $\pi$.
\end{definition}

A proof of the following claim can be found, e.g., in~\cite{Dol14}:
\begin{claim}
\label{cl:fourier-vs-variance}
Let $\pi$ be a single-step distribution and let 
$f: \underline{\Omega} \to \bbR$, $S \subseteq [n]$.
If a random vector $\underline{X}$ 
is distributed according to $\underline{\pi}$
and $\phi_0, \ldots, \phi_{m-1}$ form a Fourier basis for $\pi$ and
$f = \sum_{\sigma \in \mathbb{N}^n_{< m}} 
\hat{f}(\sigma) \phi_{\sigma}$, then
$f^{\subseteq S} = \sum_{\sigma: \supp(\sigma) \subseteq S}
\hat{f}(\sigma) \phi_{\sigma}$.
In particular,
\begin{align*}
\Var\left[f^{\subseteq S}(\underline{X})\right] = 
\sum_{\substack{\sigma: \supp(\sigma) \subseteq S,\\ 
\sigma \ne 0^n}} 
\left|\hat{f}(\sigma)\right|^2 \; .
\end{align*}
\end{claim}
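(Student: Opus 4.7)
The plan is to prove the claim by direct computation on Fourier basis elements and then use linearity. Recall that a Fourier basis $\phi_0,\ldots,\phi_{m-1}$ for $(\Omega,\pi)$ is by definition an orthonormal basis of $L^2(\Omega,\pi)$ with $\phi_0 \equiv 1$, and that the product functions $\phi_\sigma(\underline{x}) := \prod_{i=1}^n \phi_{\sigma_i}(x_i)$ for $\sigma \in \mathbb{N}^n_{<m}$ form an orthonormal basis of $L^2(\underline{\Omega},\underline{\pi})$.

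First I would compute $\phi_\sigma^{\subseteq S}$ for a single basis element. Because the coordinates of $\underline{X}$ are independent, the expectation factors:
\begin{align*}
\phi_\sigma^{\subseteq S}(\underline{x})
= \EE\Bigl[\prod_{i=1}^n \phi_{\sigma_i}(y_i)\Bigr]
= \prod_{i\in S}\phi_{\sigma_i}(x_i)\cdot \prod_{i\in\overline{S}}\EE[\phi_{\sigma_i}(X_i)],
\end{align*}
where $y_i = x_i$ for $i\in S$ and $y_i = X_i$ for $i \in \overline{S}$. For $i\in\overline{S}$, orthogonality of $\phi_{\sigma_i}$ against $\phi_0\equiv 1$ in $L^2(\Omega,\pi)$ gives $\EE[\phi_{\sigma_i}(X_i)] = \langle \phi_{\sigma_i},\phi_0\rangle = \mathbbm{1}[\sigma_i=0]$. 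Hence $\phi_\sigma^{\subseteq S} = \phi_\sigma$ whenever $\supp(\sigma)\subseteq S$ and $\phi_\sigma^{\subseteq S}\equiv 0$ otherwise.

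Next I would observe that the operator $f\mapsto f^{\subseteq S}$ is linear in $f$, so applying it to the Fourier expansion $f=\sum_\sigma \hat{f}(\sigma)\phi_\sigma$ yields
\begin{align*}
f^{\subseteq S} = \sum_{\sigma:\, \supp(\sigma)\subseteq S} \hat{f}(\sigma)\phi_\sigma,
\end{align*}
proving the first part. For the variance statement, I would note $\EE[f^{\subseteq S}(\underline{X})] = \hat{f}(0^n)$ (the $\sigma=0^n$ term) and then apply Parseval on the orthonormal family $\{\phi_\sigma\}$ to obtain $\EE[(f^{\subseteq S})^2] = \sum_{\supp(\sigma)\subseteq S}|\hat{f}(\sigma)|^2$; subtracting the square of the mean gives exactly the claimed formula.

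There is essentially no hard step here: the content is purely the interaction between independence of coordinates and the convention $\phi_0\equiv 1$. The only minor care needed is to make sure one is working with a Fourier basis in the sense the authors use (orthonormal with constant first element), since that is what makes the $\sigma_i\neq 0$ coordinates in $\overline{S}$ integrate to zero. With that convention fixed, the proof is two displayed lines plus an appeal to Parseval.
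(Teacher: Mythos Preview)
Your proof is correct and is exactly the standard argument. The paper itself does not prove this claim; it simply cites an external reference (\cite{Dol14}), so there is nothing to compare against beyond noting that your computation---factoring over independent coordinates, using $\EE[\phi_{\sigma_i}(X_i)]=\mathbbm{1}[\sigma_i=0]$, and applying Parseval---is the canonical one.
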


\begin{lemma}\label{lem:localInfluenceToStability}
Let a random vector $\underline{X}$ be distributed according to a single-step
distribution $\underline{\pi}$ with 
$\min_{x \in \Omega} \pi(x) \ge \alpha$ and
let $\epsilon \in [0, 1]$, $k \in \mathbb{N}$.

Then, for every $f: \underline{\Omega} \to \mathbb{R}_{\ge 0}$
with $\EE[f(\underline{X})] = \mu$, if for every
$S \subseteq[n]$ with $|S| = k$ it holds that
\begin{align*}
  \Var\left[f^{\subseteq S}(\underline{X})\right] \le \alpha^{k}(\epsilon\mu)^2 \; ,
\end{align*}
then $f$ is $\epsilon$-resilient up to size $k$.
\end{lemma}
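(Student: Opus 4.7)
The plan is to recognize that for any restriction $\cR$ of size $|S| \le k$ fixing the coordinates in $S$ to values $\underline{r}_S$, one has the identity
\begin{align*}
\EE[\cR f(\underline{X})] \;=\; f^{\subseteq S}(\underline{r}) \; ,
\end{align*}
since the unrestricted coordinates of $\cR f$ are averaged against $\underline{\pi}$, which is exactly the definition of $f^{\subseteq S}$. Note also that $\EE[f^{\subseteq S}(\underline{X})] = \EE[f(\underline{X})] = \mu$. Thus the conclusion that $f$ is $\epsilon$-resilient up to size $k$ is equivalent to the pointwise statement $|f^{\subseteq S}(\underline{r}) - \mu| \le \epsilon\mu$ for every $S$ with $|S| \le k$ and every $\underline{r}_S$.

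First I would reduce to the case $|S|=k$. If $S \subseteq S' \subseteq [n]$ with $|S'|=k$, then by the tower rule $f^{\subseteq S}(\underline{x}) = \EE_{\underline{X}_{S'\setminus S}}[f^{\subseteq S'}(\underline{x}_S, \underline{X}_{S'\setminus S})]$, and hence by the law of total variance $\Var[f^{\subseteq S}(\underline{X})] \le \Var[f^{\subseteq S'}(\underline{X})] \le \alpha^k(\epsilon\mu)^2$. (Alternatively, Claim~\ref{cl:fourier-vs-variance} gives this directly since the Fourier sum defining $\Var[f^{\subseteq S}]$ is monotone in $S$.) So the hypothesised variance bound extends from $|S|=k$ to every $|S|\le k$.

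The key step is then a deterministic ``pointwise Chebyshev'' estimate that exploits $\min_{x\in\Omega}\pi(x)\ge \alpha$. For any fixed assignment $\underline{r}_S \in \Omega^S$, coordinate independence gives $\Pr[\underline{X}_S = \underline{r}_S] \ge \alpha^{|S|}$, and since $f^{\subseteq S}$ depends only on $\underline{X}_S$,
\begin{align*}
\Var[f^{\subseteq S}(\underline{X})] \;\ge\; \Pr[\underline{X}_S = \underline{r}_S]\cdot\bigl(f^{\subseteq S}(\underline{r}) - \mu\bigr)^2 \;\ge\; \alpha^{|S|}\bigl(f^{\subseteq S}(\underline{r}) - \mu\bigr)^2 \; .
\end{align*}
Combining this with the variance bound and $|S|\le k$ (using $\alpha \le 1$) yields
\begin{align*}
\bigl(f^{\subseteq S}(\underline{r}) - \mu\bigr)^2 \;\le\; \frac{\alpha^k(\epsilon\mu)^2}{\alpha^{|S|}} \;\le\; (\epsilon\mu)^2 \; ,
\end{align*}
i.e.\ $(1-\epsilon)\mu \le \EE[\cR f(\underline{X})] \le (1+\epsilon)\mu$, which is exactly $\epsilon$-resilience up to size $k$. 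There is no genuine obstacle here: once the ``global $L^2$ bound implies pointwise bound'' idea (powered by the uniform lower bound $\alpha$ on marginals) is in place, the computation is mechanical, and the only mild subtlety is the reduction from $|S|\le k$ to the hypothesis for $|S|=k$, handled by monotonicity of $\Var[f^{\subseteq S}]$ in $S$.
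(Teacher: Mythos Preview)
Your proof is correct and follows essentially the same approach as the paper: both use the identity $\EE[\cR f(\underline{X})] = f^{\subseteq S}(\underline{r})$ and the pointwise lower bound $\Var[f^{\subseteq S}(\underline{X})] \ge \alpha^{|S|}(f^{\subseteq S}(\underline{r})-\mu)^2$ coming from $\Pr[\underline{X}_S=\underline{r}_S]\ge\alpha^{|S|}$. The only cosmetic difference is that the paper argues by contraposition and handles the reduction to $|S|=k$ implicitly (a violating restriction of smaller size can be extended to one of size exactly $k$), whereas you argue directly and make the reduction explicit via the monotonicity $\Var[f^{\subseteq S}]\le\Var[f^{\subseteq S'}]$ for $S\subseteq S'$.
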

\begin{proof}
We prove the contraposition. 

If $f$ is not $\epsilon$-resilient up to size $k$, by definition
of $f^{\subseteq S}$ it implies that there exist $S \subseteq [n]$ with
$|S| = k$ and $\underline{x}$ such that
\begin{align*}
  \left| f^{\subseteq S}(\underline{x}) -
  \EE[f^{\subseteq S}(\underline{X})] \right| >
  \epsilon \EE[f^{\subseteq S}(\underline{X})]
  = \epsilon\mu \; .
\end{align*}
But this gives
\begin{align*}
\Var\left[ f^{\subseteq S}(\underline{X}) \right] 
\ge \alpha^k 
\left(f^{\subseteq S}(\underline{x}) - \EE[f^{\subseteq S}(\underline{X})] 
\right)^2
> \alpha^{k} (\epsilon\mu)^2 \; ,
\end{align*}
as required.
\end{proof}

Using Lemma~\ref{lem:localInfluenceToStability} we can weaken
the assumption in Theorem~\ref{thm:stabilityExpectation}
such that it only requires that all Fourier coefficients of degree at most $k$
are small:

\begin{proof}[Proof of Theorem~\ref{thm:local-variance}]
From Theorem \ref{thm:stabilityExpectation}, there exists
$k := k(\cP, \mu^{(1)}, \ldots, \mu^{(\ell)})$ such that if
$f^{(1)}, \ldots, f^{(\ell)}$ are all $1/2$-resilient up to size $k$,
then (\ref{eq:89a}) holds. Therefore, it is sufficient to show that
the functions $f^{(j)}$ are indeed $1/2$-resilient up to size $k$
if the parameter $\gamma$ is chosen small enough.

By Claim~\ref{cl:fourier-vs-variance}, if $\max_{\sigma: 0<| \sigma|\le k} 
|\hat{f}^{(j)}(\sigma)| \le \gamma$, then for any
$S \subseteq [n]$ with $|S| = k$ we have
$\Var\left[ (f^{(j)})^{\subseteq S}(\underline{X}^{(j)})\right]
\le |\Omega|^k \gamma^2$. With that in mind it is easy to choose $\gamma$
such that Lemma 
\ref{lem:localInfluenceToStability} 
can be applied to each $f^{(j)}$.
\end{proof}

\section{Multiple Steps of a Markov Chain}
\label{sec:markov}

Next, we consider the case where the distribution $\cP$
is such that the random variables
$X^{(1)},X^{(2)},\ldots,X^{(\ell)}$ form a Markov chain.

\begin{definition}
Let $\cP$ be a an $\ell$-step distribution with equal marginals and let
$\overline{X} = (X^{(1)}, \ldots, X^{(\ell)})$ be a random variable distributed
according to $\cP$. We say that
\emph{$\mathcal{P}$ is generated by Markov chains}\footnote{
Note that our definition allows for different Markov chains
in different steps.
} if for every $j \in \{2, \ldots, \ell\}$ and
$x^{(1)}, \ldots, x^{(j)} \in \Omega$ we have
\begin{IEEEeqnarray*}{rCl}
\IEEEeqnarraymulticol{3}{l}{
  \Pr[{X^{(j)} = x^{(j)}}  | { X^{(1)} = x^{(1)}} \land 
  \dots \land {X^{(j-1)} = x^{(j-1)}}]
}
\\ \qquad &=& 
\Pr[X^{(j)} = x^{(j)} | X^{(j-1)} = x^{(j-1)}] \; .
\end{IEEEeqnarray*}
\end{definition}

Observe that since we still require $\mathcal{P}$ to have equal marginals,
the mar\-ginal $\pi$ is then simply a stationary distribution of the chain.

In this case, we give a reduction to Theorem~\ref{thm:main-two-variables}
to prove a bound that does not depend on $\rho(\cP)$:

\begin{restatable}{theorem}{mainmarkov}
\label{thm:main-markov}
Let $\Omega$ be a finite set and $\cP$ a probability distribution over 
$\Omega^{\ell}$ with equal marginals
generated by Markov chains.
Let tuples $\overline{X}_i = (X_i^{(1)}, \ldots, X_i^{(\ell)})$ be 
i.i.d.~according to $\mathcal{P}$ for $i \in \{1,\ldots,n\}$.

Then, for every $f: \Omega^n \to [0,1]$ with 
$\EE[f(\underline{X}^{(1)})] = \mu > 0$:
\begin{align}
\EE \left[ \prod_{j=1}^\ell f(\underline{X}^{(j)}) \right]
\geq c \left( \alpha(\mathcal{P}), \ell, \mu \right) 
\; ,
\end{align}
where the function $c()$ is positive whenever 
$\alpha(\mathcal{P})> 0$.
\end{restatable}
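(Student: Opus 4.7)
The plan is to prove Theorem~\ref{thm:main-markov} by induction on $\ell$, at each step reducing an $\ell$-step Markov problem to a two-step one that can be handled by Theorem~\ref{thm:main-two-variables}. The base case $\ell=2$ is exactly Theorem~\ref{thm:main-two-variables}. For the inductive step with $\ell \ge 3$, I would first observe that the sub-chain $(\underline{X}^{(2)}, \ldots, \underline{X}^{(\ell)})$ inherits all the relevant hypotheses: it is still generated by Markov chains, has equal marginals $\pi$, and its single-coordinate $\alpha$-value satisfies $\min_x \Pr[X^{(2)} = \cdots = X^{(\ell)} = x] \ge \alpha(\cP)$. The inductive hypothesis therefore produces a positive lower bound $\mu' := c_{\ell-1}(\alpha(\cP), \mu) > 0$ on $\EE\bigl[\prod_{j=2}^\ell f(\underline{X}^{(j)})\bigr]$.

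Next, I would introduce the auxiliary function $G(\underline{x}) := \EE\bigl[\prod_{j=2}^\ell f(\underline{X}^{(j)}) \mid \underline{X}^{(2)} = \underline{x}\bigr]$. By the Markov property $G$ is well-defined as a function of $\underline{X}^{(2)}$ alone, satisfies $\EE[G(\underline{X}^{(2)})] \ge \mu'$, and obeys the pointwise inequality $G(\underline{x}) \le f(\underline{x})$ once the factor $f(\underline{X}^{(2)}) = f(\underline{x})$ is pulled out of the conditional expectation. The Markov property further gives the factorization
\[
\EE\Bigl[\prod_{j=1}^\ell f(\underline{X}^{(j)})\Bigr] \;=\; \EE\bigl[f(\underline{X}^{(1)})\,G(\underline{X}^{(2)})\bigr]\,.
\]

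The heart of the argument is the reduction of this two-function expression to a same-set hitting problem. Setting $A := \{\underline{x} : G(\underline{x}) \ge \mu'/2\}$, a short Markov-inequality calculation (using $\EE[G] \ge \mu'$ and $G \le 1$) gives $\Pr[\underline{X}^{(2)} \in A] \ge \mu'/2$. On $A$ we have $f(\underline{x}) \ge G(\underline{x}) \ge \mu'/2$, so $f(\underline{x}) \ge (\mu'/2)\1_A(\underline{x})$ and $G(\underline{x}) \ge (\mu'/2)\1_A(\underline{x})$ pointwise, yielding
\[
\EE\bigl[f(\underline{X}^{(1)})\,G(\underline{X}^{(2)})\bigr] \;\ge\; (\mu'/2)^2\,\Pr\bigl[\underline{X}^{(1)} \in A \wedge \underline{X}^{(2)} \in A\bigr]\,.
\]
The two-step marginal $\cP^{(2)}_{1,2}$ has equal marginals $\pi$ and $\alpha$-parameter at least $\alpha(\cP)$, so applying Theorem~\ref{thm:main-two-variables} to the indicator $\1_A$ produces a positive lower bound depending only on $\alpha(\cP)$ and $\mu'/2$, closing the induction with $c_\ell(\alpha,\mu) := (\mu'/2)^2\cdot c_2(\alpha, \mu'/2)$.

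The step I expect to be the main obstacle is the passage from the two-function expectation $\EE[f(\underline{X}^{(1)})\,G(\underline{X}^{(2)})]$ to the same-set quantity $\Pr[\underline{X}^{(1)} \in A \wedge \underline{X}^{(2)} \in A]$: because $\beta(\cP) = 0$ is allowed here, Theorem~\ref{thm:main-two-variables} genuinely distinguishes same-set from set hitting, so no ``two-function'' strengthening of it is available in general. What rescues the argument is the free pointwise domination $G \le f$ inherited from the Markov structure, which lets both $f$ and $G$ simultaneously inherit a uniform lower bound on the single set $A$. The remaining ingredients---invariance of $\alpha(\cP)$ under sub-chains and two-step marginals, the Markov factorization, and the Markov-inequality step producing $A$---are routine.
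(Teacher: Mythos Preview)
Your proof is correct, and the overall strategy---induction on $\ell$, using the Markov property to factor off a block and invoking Theorem~\ref{thm:main-two-variables} on a pair---is exactly the paper's. The main structural difference is that you peel from the front (apply induction to steps $2,\ldots,\ell$, then handle the pair $(1,2)$), whereas the paper peels from the back: it sets $g(\underline{x}) := f(\underline{x})\cdot\EE[f(\underline{X}^{(\ell)})\mid \underline{X}^{(\ell-1)}=\underline{x}]$, bounds $\EE[g]$ by the two-step theorem, and then uses $g\le f$ to replace all remaining $f$'s by $g$ and apply induction to $\prod_{j=1}^{\ell-1} g(\underline{X}^{(j)})$.

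The one unnecessary detour in your argument is the thresholding step that produces the set $A$. You already observe $G\le f$ pointwise; using this directly gives
\[
\EE\bigl[f(\underline{X}^{(1)})\,G(\underline{X}^{(2)})\bigr]\;\ge\;\EE\bigl[G(\underline{X}^{(1)})\,G(\underline{X}^{(2)})\bigr],
\]
and now Theorem~\ref{thm:main-two-variables} applies to $G$ itself (with $\EE[G(\underline{X}^{(1)})]=\EE[G(\underline{X}^{(2)})]\ge\mu'$), yielding the bound in one line. This is precisely the mechanism the paper exploits (in mirror image), and it removes the need for the Markov-inequality/indicator reduction. Either way the induction closes with a positive bound depending only on $\alpha(\cP)$, $\ell$, and $\mu$.
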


\begin{proof}
Let $\cP$ be a distribution generated by Markov chains
with $\alpha := \alpha(\cP) > 0$ and let $f: \underline{\Omega} \to [0, 1]$
with $\EE[f(\underline{X}^{(1)})] = \mu > 0$.

The proof is by induction on $\ell$. For $\ell = 2$, apply 
Theorem \ref{thm:main-two-variables} directly. For $\ell > 2$, define
the function $g: \underline{\Omega} \to [0, 1]$ as
\begin{align*}
g(\underline{x}) := 
\EE \left[ f(\underline{X}^{(\ell-1)})f(\underline{X}^{(\ell)}) \mid 
  \underline{X}^{(\ell-1)} = \underline{x} \right]
  = f(\underline{x}) \cdot \EE \left[ f(\underline{X}^{(\ell)}) \mid  
  \underline{X}^{(\ell-1)} = \underline{x} \right] \; . 
\end{align*}
Applying Theorem \ref{thm:main-two-variables} for the distribution
of the last two steps,
\begin{align}
\label{eq:88a}
  \EE[g(\underline{X}^{(1)})] = 
  \EE[g(\underline{X}^{(\ell-1)})] =
  \EE[ f(\underline{X}^{(\ell-1)}) f(\underline{X}^{(\ell)}) ] \ge 
  c(\alpha, \mu) > 0 \; .
\end{align}

Now we have
\begin{IEEEeqnarray}{rCl}
\EE\left[ \prod_{j=1}^{\ell} f(\underline{X}^{(j)}) \right]
	& = &
	\EE \left[ \left( \prod_{j=1}^{\ell-2} f(\underline{X}^{(j)}) \right) 
          g(\underline{X}^{(\ell-1)}) \right] 
		\label{eq:07a} \\
	& \ge &
	\EE \left[ \prod_{j=1}^{\ell-1} g(\underline{X}^{(j)}) \right] 
		\label{eq:08a} \\
	& \ge &
	c\left(\alpha, \ell-1, c(\alpha, \mu) \right) = c(\alpha, \ell, \mu) > 0,
		\label{eq:09a}
\end{IEEEeqnarray}
where (\ref{eq:07a}) holds since $\cP$ is generated by Markov chains,
(\ref{eq:08a}) is due to $f \ge g$ pointwise and 
(\ref{eq:09a}) is an application of the induction and (\ref{eq:88a}).
\end{proof}

\begin{remark}
Unfortunately, this proof worsens the explicit bound. One can check that
for a Markov-generated distribution with $\ell$ steps the dependence on $\mu$
is a tower of exponentials of height $3(\ell-1)$.
\end{remark}

\section{Polynomial Same-Set Hitting}
\label{sec:polynomial-hitting}

The property of set hitting establishes a lower bound on
$\EE\left[ \prod_{j=1}^\ell f^{(j)}(\underline{X}^{(j)})\right]$ that is
independent of $n$. However, it might be the case that this bound is very
small, perhaps far from the best possible one. In particular, our bound
from Theorem \ref{thm:main-multiple} is triply exponentially small, and
the bound from Theorem \ref{thm:progressions} is not even primitive recursive.

\begin{definition}
A distribution $\mathcal{P}$ is \emph{polynomially set hitting}
(resp.~polynomially same-set hitting) if there exists $C \ge 0$
such that $\mathcal{P}$ is $(\mu, \mu^C)$-set hitting (resp.~same-set hitting)
for every $\mu \in (0, 1]$.
\end{definition}

As a matter of fact, \cite{MOS13} 
(cf.~Theorem \ref{thm:different-sets-classification}) establishes that
all distributions that are set hitting are also polynomially set hitting.
We suspect that this is also the case for two-step same-set hitting,
but this remains an open problem.

However, it is possible to harness reverse hypercontractivity to show
that all \emph{symmetric} two-step distributions are polynomially same-set
hitting:
\begin{theorem}
\label{thm:symmetric}
Let a two-step probability distribution with equal marginals 
$\mathcal{P}$ be symmetric, i.e.,
$\mathcal{P}(x, y) = \mathcal{P}(y, x)$ for all $x, y \in \Omega$.
If $\alpha(\mathcal{P}) > 0$, then $\mathcal{P}$ is polynomially same-set
hitting.
\end{theorem}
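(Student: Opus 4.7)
The plan is to apply reverse hypercontractivity, which is the natural tool for lower bounds of the form $\EE[f(\underline{X})f(\underline{Y})] \ge \mu^{C}$. Since $\mathcal{P}$ is symmetric with equal marginals $\pi$, the transition kernel $K(x,y) := \mathcal{P}(x,y)/\pi(x)$ is reversible with respect to $\pi$, and the hypothesis $\alpha(\mathcal{P}) > 0$ gives a uniform lower bound on the self-loops $K(x,x) \ge \alpha(\mathcal{P})/\pi(x)$, so the chain is lazy and its (modified) log-Sobolev constant can be bounded below by a quantity depending only on $\mathcal{P}$.

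First, I would reduce to indicators by a level-set truncation. Given $f:\underline{\Omega} \to [0,1]$ with $\EE[f(\underline{X})] = \mu$, set $S := \{\underline{x} : f(\underline{x}) \ge \mu/2\}$, so $\underline{\pi}(S) \ge \mu/2$ and
\begin{align*}
\EE[f(\underline{X})f(\underline{Y})] \ge (\mu/2)^2 \cdot \Pr[\underline{X} \in S \wedge \underline{Y} \in S].
\end{align*}
Thus it suffices to prove a polynomial lower bound of the shape $\Pr[\underline{X} \in S \wedge \underline{Y} \in S] \ge \underline{\pi}(S)^{C'}$ for sets.

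Second, I invoke the reverse hypercontractive inequality for reversible Markov operators from \cite{MOS13}: under our hypotheses, there exist exponents $p, q \in (0,1)$ depending only on $\mathcal{P}$ such that the single-step operator $T$ associated to $\mathcal{P}$ satisfies
\begin{align*}
\langle g, Th \rangle_{\pi} \ge \|g\|_{p} \, \|h\|_{q}
\end{align*}
for all non-negative $g,h: \Omega \to \mathbb{R}_{\ge 0}$. Crucially, reverse hypercontractive inequalities tensorize, so the same inequality with the same $p,q$ holds for the $n$-fold product operator $T^{\otimes n}$ acting on $\underline{\Omega}$. Applying this to $g = h = \mathbf{1}_S$ gives
\begin{align*}
\Pr[\underline{X} \in S \wedge \underline{Y} \in S] = \langle \mathbf{1}_S, T^{\otimes n} \mathbf{1}_S \rangle_{\underline{\pi}} \ge \|\mathbf{1}_S\|_p \, \|\mathbf{1}_S\|_q = \underline{\pi}(S)^{1/p + 1/q},
\end{align*}
which combined with the level-set step yields $\EE[f(\underline{X})f(\underline{Y})] \ge c \cdot \mu^{D}$ for $D := 2 + 1/p + 1/q$ and some constant $c > 0$ depending only on $\mathcal{P}$.

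The main obstacle is checking that the constants $p, q$ coming from \cite{MOS13} really depend only on $\mathcal{P}$ (and not on $n$), which ultimately reduces to bounding the (modified) log-Sobolev constant of the single-step chain in terms of $\alpha(\mathcal{P})$ and $|\Omega|$, and then invoking tensorization of that constant. Both are standard in Markov chain theory once laziness and irreducibility are in hand, but carefully matching the hypotheses in \cite{MOS13} to our symmetric, equal-marginal setting is where the bookkeeping lives.
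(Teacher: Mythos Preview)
Your approach is different from the paper's: the paper does not apply reverse hypercontractivity directly to the operator $T$ associated with $\mathcal{P}$. Instead it performs a convex decomposition of $\mathcal{P}$ into length-two cycles (distributions supported on some $\{x,y\}^2$ with full support there), applies the \cite{MOS13} polynomial bound to each piece, and combines via the conditioning argument of Lemma~\ref{lem:convex-decomposition}. Your route is more direct and would be preferable if it worked, but there is a genuine gap.

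The gap is irreducibility. You claim that the modified log-Sobolev constant of the single-step chain $K(x,y)=\mathcal{P}(x,y)/\pi(x)$ can be bounded below in terms of $\alpha(\mathcal{P})$ and $|\Omega|$. This is false: the hypothesis $\alpha(\mathcal{P})>0$ only gives self-loops $K(x,x)>0$, not any off-diagonal positivity. Take $\Omega=\{0,1,2,3\}$ with $\mathcal{P}$ uniform on $\{0,1\}^2\cup\{2,3\}^2$; this is symmetric with equal uniform marginals and $\alpha(\mathcal{P})=1/8>0$, yet the chain has two communicating classes and its (modified) log-Sobolev constant is $0$. Reverse hypercontractivity $\langle g,Th\rangle\ge\|g\|_p\|h\|_q$ then fails outright (take $g=\mathbf 1_{\{0,1\}^n}$, $h=\mathbf 1_{\{2,3\}^n}$), and the \cite{MOS13} machinery does not apply to $T$ globally. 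Laziness is a red herring here; what you need and do not have is connectivity.

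The fix is exactly what the paper does: decompose first. Either split $\Omega$ into the communicating classes of $K$ and run your argument on each block (combining via a Markov-inequality conditioning as in Lemma~\ref{lem:convex-decomposition}), or go further and decompose into $2$-cycles, each of which has $\beta>0$ on its $2\times 2$ support so that Theorem~\ref{thm:different-sets-classification} from \cite{MOS13} applies directly with a polynomial exponent. Once you add this decomposition step, your proof converges to the paper's. As a side remark, the level-set truncation is unnecessary: for $f:\underline\Omega\to[0,1]$ and $p\in(0,1)$ one has $f^p\ge f$ pointwise, hence $\|f\|_p\ge\mu^{1/p}$, so reverse hypercontractivity applied to $f$ itself already gives $\langle f,T^{\otimes n}f\rangle\ge\mu^{1/p+1/q}$.
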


We omit the proof of Theorem \ref{thm:symmetric}, 
noting that the idea is similar as in Section \ref{sec:two-steps}:
one performs an obvious convex decomposition of $\mathcal{P}$ into cycles
of length two and applies
the result of \cite{MOS13} to each term of this decomposition.

\appendix 
\section{Appendix: Proof of Theorem \ref{thm:low-influence}}
\label{sec:low-influence-proof}

Our proof of Theorem \ref{thm:low-influence} follows in this appendix. 
It is only a slight adaptation of the argument from \cite{Mos10}, but
we include it in full for the sake of completeness.

We first restate the theorem and discuss the differences between
our proof and the one in \cite{Mos10}:
\lowinfluence*

Theorem \ref{thm:low-influence} is very similar to a subcase of Theorem 1.14
from \cite{Mos10}.  We make a stronger claim with one respect: in
\cite{Mos10} the influence threshold $\tau$ depends among others on:
\begin{align}
\label{eq:74a}
  \alpha^* := \min_{(x^{(1)}, \ldots, x^{(\ell)}) \in \supp(\mathcal{P})} 
  \mathcal{P}(x^{(1)}, \ldots, x^{(\ell)}) \; ,
\end{align}
while our bound depends only on the smallest marginal probability:
\begin{align}
\label{eq:75a}
\alpha = \min_{x \in \Omega} \pi(x) \; .
\end{align}

The main differences to the proof in \cite{Mos10} are:
\begin{itemize}
\item \cite{Mos10} proves the base case $\ell=2$ and then obtains the result
for general $\ell$ by an inductive argument 
(cf., Theorem 6.3 and Proposition 6.4 in \cite{Mos10}). Since the induction
is applied to functions $f^{(1)}$ and $g := \prod_{j=2}^{\ell} f^{(j)}$,
where $g$ is viewed as a function on a single-step space, the information 
on the smallest marginal is lost in the case of $g$. To avoid this, our proof
proceeds directly for general $\ell$. However, the structure 
and the main ideas are really the same as in \cite{Mos10}.
\item In Section~\ref{sec:hyper}, in hypercontractivity bounds for Gaussian
  and discrete spaces
(Theorem~\ref{thm:hypercontractivity-degree} and 
Lemma~\ref{lem:hypercontractivity-technical}) we are slightly more careful
to obtain bounds which depend on $\alpha$ rather than $\alpha^*$
(as defined in (\ref{eq:75a}) and (\ref{eq:74a})). This better
bound is then propagated in the proof of the invariance principle.
\item Another change is not related to the dependency on the smallest 
  marginal. In Section~\ref{sec:gaussian-hyper},
  in the Gaussian reverse hypercontractivity bound
(Theorem \ref{thm:gaussian-hypercontractivity-main}) instead of
using the result of Borell (\cite{Bor85}, Theorem 5.1 in \cite{Mos10}) 
for a bound expressed in terms of the cdf
of bivariate Gaussians, we utilize the results of \cite{CDP13} and \cite{Led14}
 for a more convenient bound
of the form $\left(\prod_{j=1}^\ell \mu^{(j)}\right)^{c(\rho, \ell)}$.
\end{itemize}

The proof can be generalized in several directions, but for the sake
of clarity we present the simplest version sufficient for our purposes.

\subsection{Preliminaries --- the general framework}

We start with explaining the notation of random variables and $L^2$ spaces 
that we will use throughout the proof.

\begin{definition}
\label{def:l2}
Let $(\Omega, \mathcal{F}, \mathcal{P})$ be a probability space.
We define the real inner product space $L^2(\Omega, \mathcal{P})$
as the set of all square-integrable functions 
$f: \Omega \to \mathbb{R}$, i.e., the functions that satisfy
\begin{align}
  \int_{\Omega} f^2 \, \mathrm{d}\mathcal{P} < +\infty \; ,
\label{eq:44a}
\end{align}
with inner product defined as
\begin{align}
  \langle f, g \rangle := \int_{\Omega} fg \, \mathrm{d}\mathcal{P} \; .
\label{eq:45a}
\end{align}
\end{definition}

\begin{remark}
As we will see shortly,
if $X$ is a random variable sampled from $\Omega$ according to
$\mathcal{P}$,
the equations (\ref{eq:44a}) and (\ref{eq:45a}) can be written as
\begin{IEEEeqnarray*}{rCl}
\EE[f^2(X)] & < & +\infty \; , \\
\langle f, g \rangle & = & \EE[f(X) g(X)] \; .
\end{IEEEeqnarray*}
\end{remark}

\begin{remark}
We omitted the event space $\mathcal{F}$ in the definition
of $L^2(\Omega, \mathcal{P})$. This is because $\mathcal{F}$ is always
implicit in the choice of the measure~$\mathcal{P}$. 

In particular,  when $\mathcal{P}$ is discrete,
of course we choose $\mathcal{F}$ to be the powerset
of $\Omega$.
When $\mathcal{P}$ is continuous over $\mathbb{R}^n$,
we use the ``standard'' real event space, i.e.,~the 
completion of the Borel algebra.  
\end{remark}

While this will not be our usual way of thinking,
at this point it makes sense to introduce the formal
definition of a random variable: a function from a probability
space to some set.
\begin{definition}
Let $(\Sigma, \mathcal{F}, \mathcal{P})$ be a probability space. 
We say that $X$ is a random variable over
a set  $\Sigma'$
if it is a measurable function $X: \Sigma \to \Sigma'$.
\end{definition}
As usual, we will assume throughout the
proof that all random variables are induced by some underlying probability 
space $(\Sigma,\mathcal{F},\mathcal{P})$.

Using this,
a random variable induces some distribution, which we can study.

\begin{definition}
We say that a random variable $X$ over a set $\Omega$ is \emph{distributed
according to a probability space $(\Omega, \mathcal{P})$} 
if for every event $A \in \mathcal{F}$:
\begin{align*}
  \Pr[X \in A] = \mathcal{P}(A) \; .
\end{align*}
\end{definition}

\begin{definition}
\label{def:l2-rv}
Let $X$ be a random variable distributed over $\Omega$.
By $L^2(X)$ we denote the inner product space of random variables that
correspond to square-integrable functions $f: \Omega \to \mathbb{R}$:
\begin{align*}
  L^2(X) := \{ Z \mid Z = f \circ X \text{ for some } f:\Omega \to \mathbb{R}  
  \text{ with } \EE[f(X)^2] < +\infty  \} \; ,
\end{align*}
with the inner product given as
\begin{align*}
  \langle Z_1, Z_2 \rangle := \EE[ Z_1 \cdot Z_2 ] \; .
\end{align*}
\end{definition}

\begin{remark}
We consider the formal setting again, i.e., 
suppose $(\Sigma, \mathcal{F}, \mathcal{P})$ is the underlying probability 
space, and $X: \Sigma \to \Omega$ a random variable. 
Then, $L^{2}(X)$ is a subspace of $L^2(\Sigma,\mathcal{P})$.
Intuitively, it contains all real valued functions 
which ``depend only on $X$''.
\end{remark}

\begin{example}
Fix $(\Omega, \mathcal{P})$ to be the uniform distribution on 
$\Omega := \{0, 1, 2\}$ and let $X$ be distributed according
to $(\Omega, \mathcal{P})$. Then $L^2(X)$ has dimension three
and one of its orthonormal bases is
\begin{IEEEeqnarray*}{rCl}
Z_0 & :\equiv & 1 \\
Z_1 &:=& \begin{cases}
  \sqrt{6}/2 & \text{if $X=0$,}\\
  -\sqrt{6}/2 & \text{if $X=1$,}\\
  0 & \text{if $X=2$.}
\end{cases}\\
Z_2 &:=& \begin{cases}
  \sqrt{2}/2 & \text{if $X \in \{0,1\}$,}\\
  -\sqrt{2} & \text{if $X=2$.}\\
\end{cases}
\end{IEEEeqnarray*}  
\end{example}

After this point, we will have no need to refer explicitly 
to the underlying probability space $(\Sigma,\mathcal{F},\mathcal{P})$ anymore.
Nevertheless, it will be useful to remember that random variables are 
functions of this underlying space.

It immediately follows from the definitions that:

\begin{lemma}
Let $X$ be a random variable distributed according to $(\Omega, \mathcal{P})$.
Then $L^2(X)$ is isomorphic to $L^2(\Omega, \mathcal{P})$.
\end{lemma}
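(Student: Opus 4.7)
The plan is to exhibit the obvious candidate isomorphism $\Phi: L^2(\Omega, \mathcal{P}) \to L^2(X)$ defined by $\Phi(f) := f \circ X$, and then verify that it is a well-defined, linear, inner-product-preserving bijection. Surjectivity is essentially built into Definition~\ref{def:l2-rv}: every element of $L^2(X)$ is by construction of the form $f \circ X$ for some $f: \Omega \to \mathbb{R}$ with $\EE[f(X)^2] < \infty$, so $\Phi$ hits every element of the target. Linearity follows from the pointwise definition.

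The workhorse for the remaining properties is the change-of-variables / pushforward identity $\EE[h(X)] = \int_\Omega h \, d\mathcal{P}$, valid for any measurable $h: \Omega \to \mathbb{R}$ whenever either side is defined (this is exactly what it means for $X$ to be distributed according to $(\Omega, \mathcal{P})$). Applying this with $h := f^2$ shows that $\Phi$ takes $L^2(\Omega, \mathcal{P})$ into $L^2(X)$, since $\EE[(f \circ X)^2] = \int f^2 \, d\mathcal{P} < \infty$. Applying it with $h := fg$ gives the inner-product computation
\begin{align*}
\langle \Phi(f), \Phi(g) \rangle_{L^2(X)} = \EE[f(X) g(X)] = \int_\Omega fg \, d\mathcal{P} = \langle f, g \rangle_{L^2(\Omega, \mathcal{P})},
\end{align*}
so $\Phi$ is an isometric embedding; combined with surjectivity this yields the isomorphism.

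The only subtle point, which I would mention explicitly, is injectivity in the presence of the standard $L^2$ identification of functions that agree almost everywhere. If $\Phi(f) = \Phi(g)$, meaning $f(X) = g(X)$ almost surely, then by the pushforward identity applied to the indicator of $\{f \neq g\}$ one gets $\mathcal{P}(\{\omega : f(\omega) \neq g(\omega)\}) = \Pr[f(X) \neq g(X)] = 0$, so $f$ and $g$ coincide as elements of $L^2(\Omega, \mathcal{P})$. Thus $\Phi$ descends to a genuine bijection on equivalence classes. There is no substantive obstacle here; the entire argument is a one-line consequence of the law of the unconscious statistician together with the bookkeeping about almost-everywhere equivalence.
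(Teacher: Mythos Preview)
Your argument is correct and is exactly the natural one; the paper does not even spell it out, simply stating that the lemma ``immediately follows from the definitions.'' Your write-up is a faithful unpacking of that immediate implication, including the care about almost-everywhere equivalence classes.
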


\subsection{Preliminaries --- orthonormal ensembles and multilinear polynomials}

In this section we introduce orthonormal ensembles
and multilinear polynomials over them.

\begin{definition}
We call a finite family $(\mathcal{X}_0, \ldots, \mathcal{X}_p)$ 
of random variables \emph{orthonormal}
if they satisfy $\EE[\mathcal{X}_k^2] = 1$ for every $k$
and $\EE[\mathcal{X}_j \mathcal{X}_k ] = 0$ for every $j \ne k$.
\end{definition}

\begin{definition}
We call a finite family of orthonormal random variables
$\mathcal{X} = (\mathcal{X}_{\star,0} = 1, \mathcal{X}_{\star,1}, 
\ldots, \mathcal{X}_{\star,p})$
an $\emph{orthonormal ensemble}$. 
We call $p$ the \emph{size} of the ensemble.

An $\emph{ensemble sequence}$ is a sequence of independent families of 
random variables
$\underline{\mathcal{X}} = (\mathcal{X}_1, \ldots, \mathcal{X}_n)$
such that each $\mathcal{X}_i$ is an orthonormal ensemble
$\mathcal{X}_i = (\mathcal{X}_{i,0} = 1, \mathcal{X}_{i,1}, 
\ldots, \mathcal{X}_{i,p})$ of the same size $p$.
We call $n$ the \emph{size} of the sequence. 
\end{definition}
The notation $\mathcal{X}_{\star, k}$ is a little awkward, but we do not
need to use it often.
The reason for it is that we want to to make sure that one cannot confuse 
one of the random variables $\mathcal{X}_{\star,k}$ within an orthonormal
ensemble with the orthonormal ensemble $\mathcal{X}_i$ itself.
Whenever a random variable $\mathcal{X}_{i,k}$ is part of an ensemble
$\mathcal{X}_i$, 
there is no reason to use the $\star$-symbol.
Instead we use the index of the ensemble.

Note that in an orthonormal ensemble for $k>0$ we have $\EE[\mathcal{X}_{\star,k}]
= \EE[\mathcal{X}_{\star,k}\mathcal{X}_{\star,0}] = 0$.

\begin{definition}
We call two ensemble sequences 
$\underline{\mathcal{X}} = (\mathcal{X}_1, \ldots, \mathcal{X}_n)$
and $\underline{\mathcal{Y}} = (\mathcal{Y}_1, \ldots, \mathcal{Y}_m)$
\emph{compatible} if $n = m$ and the sizes of the individual ensembles
$\mathcal{X}_i$ and $\mathcal{Y}_i$ are the same.
\end{definition}

\begin{definition}
Let $\underline{\mathcal{X}} = (\mathcal{X}_1, \ldots, \mathcal{X}_n)$
be an ensemble sequence such that each ensemble $\mathcal{X}_i$ is of size $p$.

A \emph{monomial compatible with $\underline{\mathcal{X}}$} is a term
\begin{align*}
  x_\sigma := \prod_{i=1}^n x_{i,\sigma_i} \; ,
\end{align*}
where $\sigma = (\sigma_1, \ldots, \sigma_n)$ with 
$\sigma_i \in \{0, \ldots, p\}$.

A (formal) \emph{multilinear polynomial
compatible with $\underline{\mathcal{X}}$} is
a sum of compatible monomials, i.e., a polynomial $P$ of the form
\begin{align*}
P(\underline{x}) =
	\sum_{\sigma \in \{0,\ldots,p\}^n} \alpha(\sigma) x_{\sigma} =
	\sum_{\sigma \in \{0,\ldots,p\}^n}
		\alpha(\sigma) \prod_{i=1}^n x_{i, \sigma_i} \; ,
\end{align*}
where the sum goes over all tuples $\sigma = (\sigma_1,\ldots,\sigma_n)$ 
as above, and  $\alpha(\sigma) \in \mathbb{R}$.

For a tuple $\sigma$ we define its \emph{support}
as $\supp(\sigma) := \{ i \in [n]: \sigma_i \ne 0 \}$
and its
 \emph{degree} as the size of its support:
$|\sigma| := |\supp(\sigma)|$.
Also, we will write the tuple $(0, \ldots, 0)$ as $0^n$.
\end{definition}

Let a multilinear polynomial $P$ compatible with $\underline{\mathcal{X}}$
be given. Then, $P(\underline{\mathcal{X}})$ is what one expects:
the random variable obtained by evaluating the polynomial on the given input.
Analogously, if $\sigma$ is a tuple as above we write $\mathcal{X}_\sigma$ for 
the random variable corresponding to the evaluation of the monomial $x_{\sigma}$.

\begin{lemma}\label{lem:orthonormality}
Let $\underline{\mathcal{X}}$ be an ensemble sequence and $\sigma$, $\tau$ 
two tuples whose monomials $x_{\sigma}$, $x_{\tau}$ are compatible with 
$\underline{\mathcal{X}}$.
Then, 
\begin{align}\label{eq:68a}
\EE[{\mathcal{X}}_{\sigma}{\mathcal{X}}_\tau]
= \begin{cases}
1 & \text{if $\sigma = \tau$}\\
0 & \text{otherwise}
\end{cases}
\end{align}
and
\begin{align}\label{eq:69a}
\EE[{\mathcal{X}}_{\sigma}]
= \begin{cases}
1 & \text{if $\sigma = 0^n$}\\
0 & \text{otherwise.}
\end{cases}
\end{align}
\end{lemma}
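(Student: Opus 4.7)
The plan is to establish both equations by exploiting the product structure of the monomials together with the independence of the orthonormal ensembles $\mathcal{X}_1, \ldots, \mathcal{X}_n$ in the sequence. The second equation will follow from the first as an immediate special case, so the heart of the argument is (\ref{eq:68a}).

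First I would write, by definition of $\mathcal{X}_\sigma$ and $\mathcal{X}_\tau$,
\begin{align*}
\mathcal{X}_\sigma \mathcal{X}_\tau = \prod_{i=1}^n \mathcal{X}_{i,\sigma_i} \mathcal{X}_{i,\tau_i}.
\end{align*}
Since the ensembles $\mathcal{X}_1, \ldots, \mathcal{X}_n$ are independent by the definition of an ensemble sequence, and since for each $i$ the product $\mathcal{X}_{i,\sigma_i}\mathcal{X}_{i,\tau_i}$ depends only on the ensemble $\mathcal{X}_i$, the expectation factorizes:
\begin{align*}
\EE[\mathcal{X}_\sigma \mathcal{X}_\tau] = \prod_{i=1}^n \EE[\mathcal{X}_{i,\sigma_i} \mathcal{X}_{i,\tau_i}].
\end{align*}

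Next I would analyze each factor. By the orthonormality of the family $(\mathcal{X}_{i,0}, \ldots, \mathcal{X}_{i,p})$, we have $\EE[\mathcal{X}_{i,j} \mathcal{X}_{i,k}] = 1$ if $j = k$ and $0$ otherwise, where in the diagonal case the value $1$ comes from $\EE[\mathcal{X}_{i,k}^2] = 1$ (and in particular, trivially from $\mathcal{X}_{i,0} \equiv 1$ when $j = k = 0$). Hence each factor is $1$ when $\sigma_i = \tau_i$ and $0$ when $\sigma_i \neq \tau_i$. The product is therefore $1$ if and only if $\sigma_i = \tau_i$ for every $i \in [n]$, i.e., $\sigma = \tau$, and is $0$ otherwise, which is precisely (\ref{eq:68a}).

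Finally, for (\ref{eq:69a}), I would observe that $\mathcal{X}_{0^n} = \prod_{i=1}^n \mathcal{X}_{i,0} = 1$, so $\EE[\mathcal{X}_\sigma] = \EE[\mathcal{X}_\sigma \cdot \mathcal{X}_{0^n}]$, which by the case already established equals $1$ when $\sigma = 0^n$ and $0$ otherwise. There is essentially no obstacle here: the lemma is a direct bookkeeping consequence of independence across coordinates and orthonormality within each ensemble, and the only thing to be careful about is that the $\mathcal{X}_{i,0} = 1$ convention correctly handles coordinates where $\sigma_i$ or $\tau_i$ is zero.
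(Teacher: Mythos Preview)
Your proof is correct and follows exactly the same approach as the paper: factorize the expectation via independence of the coordinates, use orthonormality within each ensemble for each factor, and obtain (\ref{eq:69a}) by specializing (\ref{eq:68a}) to $\tau = 0^n$.
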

\begin{proof}
By independence of the coordinates we have
$\EE[{\mathcal{X}}_{\sigma}{\mathcal{X}}_\tau] =
\prod_{i=1}^n \EE [\mathcal{X}_{i,\sigma_i} \cdot \mathcal{X}_{i, \tau_i}]$
and now we can use the orthonomality of each ensemble $\mathcal{X}_i$.
For the second part, we apply the first on $\tau=0^n$.
\end{proof}

\begin{definition}
Given a multilinear polynomial 
$P(\underline{x}) = \sum_{\sigma}\alpha(\sigma)x_{\sigma}$
we define its following properties:
\begin{IEEEeqnarray}{rCl}
\deg(P) & := & \begin{cases}
\max_{\sigma: \alpha_\sigma \ne 0} |\sigma| \label{eq:27a} & \text{if $P$ is non-zero}\\
-\infty & \text{if $P$ is the zero polynomial} \\
\end{cases}\\
\EE[P] & := & \alpha(0^n) \label{eq:28a} \\
\EE[P^2] & := & \sum_{\sigma} \alpha(\sigma)^2 \label{eq:29a}  \\
\Var[P] & := & \EE[P^2] - {\EE}^2[P] \label{eq:30a} \\
\Inf_i(P) & := & \sum_{\sigma: \sigma_i \ne 0} \alpha(\sigma)^2 \label{eq:31a} \\
\Inf(P) & := & \sum_{i=1}^n \Inf_i(P) \label{eq:32a} 
\end{IEEEeqnarray}
\end{definition}

The next lemma states that the formal expressions defined above
are consistent with the corresponding probabilistic interpretations
for every ensemble sequence.

\begin{lemma}
\label{lem:poly-expectation-equiv}
For an ensemble sequence $\underline{\mathcal{X}}$
and a multilinear polynomial $P$ compatible with it we have
\begin{IEEEeqnarray}{rCl}
\EE[P] & = & \EE[P(\underline{\mathcal{X}})] \label{eq:14a} \\
\EE[P^2] & = & \EE[(P(\underline{\mathcal{X}}))^2] \label{eq:33a} \\ 
\Var[P] & = & \Var[P(\underline{\mathcal{X}})] \label{eq:34a} \; .
\end{IEEEeqnarray}
Furthermore, if all random variables in $\underline{\mathcal{X}}$
are discrete, then
\begin{align}
\Inf_i(P)  =  \EE \left[ \Var\left[ P(\underline{\mathcal{X}})  \mid 
	\mathcal{X}_1, \ldots, \mathcal{X}_{i-1}, 
        \mathcal{X}_{i+1}, \ldots, \mathcal{X}_n \right] \right] \; .
    \label{eq:35a}
\end{align}
\end{lemma}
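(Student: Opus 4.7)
The plan is to reduce all four identities to simple bookkeeping about the coefficients $\alpha(\sigma)$, using Lemma~\ref{lem:orthonormality} as the only probabilistic input. Writing $P(\underline{\mathcal{X}}) = \sum_\sigma \alpha(\sigma) \mathcal{X}_\sigma$, every expectation turns into a finite sum of $\EE[\mathcal{X}_\sigma]$ or $\EE[\mathcal{X}_\sigma \mathcal{X}_\tau]$, each of which is evaluated by (\ref{eq:69a}) or (\ref{eq:68a}).

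Concretely, (\ref{eq:14a}) is immediate from linearity and (\ref{eq:69a}): only the $\sigma = 0^n$ term survives, yielding $\alpha(0^n) = \EE[P]$. For (\ref{eq:33a}) I would expand $(P(\underline{\mathcal{X}}))^2 = \sum_{\sigma,\tau} \alpha(\sigma) \alpha(\tau)\, \mathcal{X}_\sigma \mathcal{X}_\tau$ and invoke (\ref{eq:68a}) to retain only the diagonal $\sigma = \tau$, summing to $\sum_\sigma \alpha(\sigma)^2 = \EE[P^2]$. Then (\ref{eq:34a}) is immediate from the previous two together with definition (\ref{eq:30a}).

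For the influence identity (\ref{eq:35a}), the natural move is to split $P = P_0 + P_1$, where $P_0$ collects the monomials with $\sigma_i = 0$ and $P_1$ collects the rest. Since $\mathcal{X}_{i,0} = 1$, the random variable $P_0(\underline{\mathcal{X}})$ is a function of $\mathcal{X}_{\setminus i}$ alone, and hence $\Var[P(\underline{\mathcal{X}}) \mid \mathcal{X}_{\setminus i}] = \Var[P_1(\underline{\mathcal{X}}) \mid \mathcal{X}_{\setminus i}]$. Every monomial of $P_1$ contains a factor $\mathcal{X}_{i, \sigma_i}$ with $\sigma_i \geq 1$, which is independent of $\mathcal{X}_{\setminus i}$ and has mean zero; thus $\EE[P_1(\underline{\mathcal{X}}) \mid \mathcal{X}_{\setminus i}] = 0$. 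Taking outer expectations and applying the already-proved (\ref{eq:33a}) to $P_1$ yields
\begin{align*}
\EE\bigl[\Var[P(\underline{\mathcal{X}}) \mid \mathcal{X}_{\setminus i}]\bigr]
 = \EE\bigl[P_1(\underline{\mathcal{X}})^2\bigr]
 = \sum_{\sigma:\, \sigma_i \neq 0} \alpha(\sigma)^2
 = \Inf_i(P).
\end{align*}

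The only point that needs a moment of care is the pull-out of $P_0$ from the conditional variance and the vanishing of $\EE[P_1 \mid \mathcal{X}_{\setminus i}]$; both rest on the independence of the ensemble $\mathcal{X}_i$ from the ensembles $\mathcal{X}_{\setminus i}$. Under the discreteness hypothesis conditional expectations are literal weighted averages, so this is a one-line verification and not a genuine obstacle; in a continuous setting one would simply invoke the same independence property through the standard $L^2$ conditional expectation machinery, and the rest of the argument goes through unchanged.
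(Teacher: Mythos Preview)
Your proposal is correct and matches the paper's proof for (\ref{eq:14a})--(\ref{eq:34a}) essentially line for line. For (\ref{eq:35a}) you take a slightly cleaner route: where the paper fixes a value $\underline{x}_{\setminus i}$, computes the conditional variance explicitly as $\sum_{k=1}^{p}\bigl(\sum_{\sigma:\sigma_i=k}\alpha(\sigma)\prod_{j\ne i}x_{j,\sigma_j}\bigr)^{2}$, and then averages using orthonormality, you instead split $P=P_0+P_1$, use measurability and independence to reduce to $\EE[P_1^2]$, and recycle (\ref{eq:33a}); both arguments rest on exactly the same orthonormality and independence facts, so this is a cosmetic difference rather than a genuinely different approach.
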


\begin{proof}
Linearity of expectation
and (\ref{eq:69a}) yield
$\EE[P(\underline{\mathcal{X}})] = 
\sum_{\sigma} \alpha(\sigma) \EE[\mathcal{X}_{\sigma}] = \alpha(0^n)$, which
is (\ref{eq:14a}).
Next, (\ref{eq:68a}) gives
$\EE[P^2(\underline{\mathcal{X}})] = 
\sum_{\sigma,\tau} \alpha(\sigma)\alpha(\tau) \mathcal{X}_\sigma\mathcal{X}_\tau
= \sum_{\sigma} \alpha(\sigma)^2$, 
i.e.~(\ref{eq:33a}), and hence
(\ref{eq:34a}) by the definition of the variance.

As for (\ref{eq:35a}), fix an assignment
$\underline{x}_{\setminus i} = (x_1, \ldots, x_{i-1}, x_{i+1}, \ldots, x_n)$
to the ensemble sequence 
$\underline{\mathcal{X}}_{\setminus i} = 
(\mathcal{X}_1, \ldots, \mathcal{X}_{i-1}, \mathcal{X}_{i+1}, \ldots, 
\mathcal{X}_n)$.\footnote{
Note that each entry in this tuple is itself a tuple:
$x_{i} = (x_{i,0} = 1, x_{i,1}, \ldots, x_{i,p})$,
where $p$ is the size of the ensemble. 
}
We suppose that this tuple has a non-zero probability of occurence.
Since $\mathcal{X}_i$ is an orthornormal ensemble,
\begin{IEEEeqnarray*}{rCl}
  \Var[P(\underline{\mathcal{X}}) \mid 
  \underline{\mathcal{X}}_{\setminus i} = \underline{x}_{\setminus i} ]
& = & \sum_{k = 1}^p \left( \sum_{\sigma: \sigma_i = k}
    \alpha(\sigma) \cdot \prod_{j \ne i} x_{j, \sigma_j}
    \right)^2
\end{IEEEeqnarray*}
From Lemma~\ref{lem:orthonormality}, for a fixed $k \in \{1, \ldots, p\}$,
\begin{align*}
\EE \left[ \left(
  \sum_{\sigma: \sigma_i = k} \alpha(\sigma) \cdot \prod_{j \ne i} 
  \mathcal{X}_{j,\sigma_j}
\right)^2 \right] = \sum_{\sigma: \sigma_i = k} \alpha(\sigma)^2 \; .
\end{align*}
Together this gives
\begin{align*}
\EE\left[ \Var\left[ P(\underline{\mathcal{X}}) \mid 
\underline{\mathcal{X}}_{\setminus i} \right] \right]
= \sum_{\sigma: \sigma_i \ne 0} \alpha(\sigma)^2 \; ,
\end{align*}
as claimed.
\end{proof}

\begin{definition}
For a multilinear polynomial $P(\underline{x}) = 
\sum_{\sigma} \alpha(\sigma)x_\sigma$ and
$S \subseteq [n]$ we let $P_S$ be $P$
restricted to tuples $\sigma$ with $\supp(\sigma) = S$,
i.e., $P_S := \sum_{\sigma:\supp(\sigma)=S} \alpha(\sigma)x_\sigma$.

Then, let $P^{>d} := \sum_{S: |S| > d} P_S$ be $P$
restricted to tuples with the degree greater than $d$. 
We also define $P^{=d}$, $P^{\le d}$ etc.~in the analogous way.
\end{definition}

\begin{lemma}
\label{lem:orthogonal-decomposition}
Let $P$ and $Q$ be multilinear polynomials compatible with an ensemble
sequence $\underline{\mathcal{X}}$. Then,
\begin{align*}
  \EE\left[ P(\underline{\mathcal{X}}) Q(\underline{\mathcal{X}}) \right]
  = \sum_{S \subseteq [n]} \EE\left[ P_S(\underline{\mathcal{X}})
  Q_S(\underline{\mathcal{X}}) \right] \; .
\end{align*}
\end{lemma}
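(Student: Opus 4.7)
The plan is to reduce the claim to the single-monomial orthonormality statement in Lemma~\ref{lem:orthonormality}. Write $P(\underline{x}) = \sum_{\sigma} \alpha(\sigma) x_\sigma$ and $Q(\underline{x}) = \sum_{\tau} \beta(\tau) x_\tau$, so that $P_S = \sum_{\sigma:\supp(\sigma)=S} \alpha(\sigma) x_\sigma$ and similarly for $Q_T$. Since there are only finitely many monomials, I can freely exchange the expectation with the finite sums.

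Expanding the product and using linearity of expectation I would write
\begin{align*}
\EE[P(\underline{\mathcal{X}}) Q(\underline{\mathcal{X}})]
= \sum_{\sigma,\tau} \alpha(\sigma)\beta(\tau)\, \EE[\mathcal{X}_\sigma \mathcal{X}_\tau]
= \sum_{\sigma} \alpha(\sigma)\beta(\sigma),
\end{align*}
where the second equality uses (\ref{eq:68a}) to kill all cross terms with $\sigma \ne \tau$. Now I would group the diagonal terms by the common support: every $\sigma$ lies in exactly one class $\{\sigma : \supp(\sigma)=S\}$ for some $S \subseteq [n]$, so
\begin{align*}
\sum_{\sigma} \alpha(\sigma)\beta(\sigma)
= \sum_{S \subseteq [n]} \sum_{\sigma:\supp(\sigma)=S} \alpha(\sigma)\beta(\sigma).
\end{align*}

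The same orthonormality argument applied to each $(P_S, Q_S)$ shows that the inner sum equals $\EE[P_S(\underline{\mathcal{X}}) Q_S(\underline{\mathcal{X}})]$, which gives the claimed identity. There is no real obstacle here: the only thing to be careful about is recognizing that $P_S$ and $Q_S$ are themselves multilinear polynomials compatible with $\underline{\mathcal{X}}$, so that Lemma~\ref{lem:orthonormality} applies to them verbatim. In effect, the lemma is just the statement that the subspaces spanned by monomials with a fixed support $S$ form an orthogonal decomposition of $L^2(\underline{\mathcal{X}})$ under the inner product $\langle \cdot,\cdot\rangle = \EE[\,\cdot\,\cdot\,]$.
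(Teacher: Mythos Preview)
Your proof is correct and essentially the same as the paper's: both reduce to the monomial orthonormality of Lemma~\ref{lem:orthonormality}. The only cosmetic difference is that the paper first shows $\EE[P_S(\underline{\mathcal{X}})Q_T(\underline{\mathcal{X}})]=0$ for $S\ne T$ and then decomposes $P=\sum_S P_S$, $Q=\sum_T Q_T$, whereas you expand all the way down to monomials and regroup; the underlying idea is identical.
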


\begin{proof}
It is enough to show that for $S \ne T$
\begin{align*}
  \EE\left[ P_S(\underline{\mathcal{X}}) 
  Q_T(\underline{\mathcal{X}}) \right] = 0 \;.
\end{align*}
Let $P(\underline{\mathcal{X}}) = 
\sum_\sigma \alpha(\sigma) \cdot \mathcal{X}_\sigma$
and $Q(\underline{\mathcal{X}}) = \sum_\sigma \beta(\sigma) \cdot
\mathcal{X}_\sigma$.
Assume w.l.o.g.~that there exists $i^* \in S\setminus T$.
Then,
\begin{IEEEeqnarray*}{l}
\EE\left[ P_S(\underline{\mathcal{X}}) Q_T(\underline{\mathcal{X}}) \right] = 
  \\
\qquad = \sum_{\substack{\sigma: \supp(\sigma) = S\\ \sigma': \supp(\sigma') = T}}
\alpha(\sigma) \beta(\sigma') \EE \left[ \mathcal{X}_{i^*, \sigma_{i^*}} \right]
\EE \left[ \prod_{i \ne i^*} \mathcal{X}_{i,\sigma_i} \mathcal{X}_{i,\sigma'_i}  
\right] = 0 \; .
\end{IEEEeqnarray*}
\end{proof}

\begin{corollary}
\label{cor:orthogonal-esquared}
Let $P$ be a multilinear polynomial.
Then, $\EE[P^2] \allowbreak = \allowbreak \sum_{S \subseteq [n]}
\allowbreak \EE[P_S^2]$.
\end{corollary}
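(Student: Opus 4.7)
The plan is to obtain the corollary as a direct specialization of Lemma \ref{lem:orthogonal-decomposition}. Since $\EE[P^2]$ is defined formally via the sum of squares of coefficients in (\ref{eq:29a}), while Lemma \ref{lem:orthogonal-decomposition} speaks about $\EE[P(\underline{\mathcal{X}}) Q(\underline{\mathcal{X}})]$ for an honest ensemble sequence, I will first use Lemma \ref{lem:poly-expectation-equiv}, specifically (\ref{eq:33a}), to translate between the two. I therefore pick any compatible ensemble sequence $\underline{\mathcal{X}}$ (for instance, one built from independent standard Gaussians padded with $1$'s to the required size --- this is a genuine orthonormal ensemble) and note that by (\ref{eq:33a}) one has both $\EE[P^2] = \EE[(P(\underline{\mathcal{X}}))^2]$ and $\EE[P_S^2] = \EE[(P_S(\underline{\mathcal{X}}))^2]$ for every $S \subseteq [n]$.

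Then I will apply Lemma \ref{lem:orthogonal-decomposition} with the choice $Q := P$, which yields
\[
\EE\left[ (P(\underline{\mathcal{X}}))^2 \right] \;=\; \sum_{S \subseteq [n]} \EE\left[ P_S(\underline{\mathcal{X}}) \, P_S(\underline{\mathcal{X}}) \right] \;=\; \sum_{S \subseteq [n]} \EE\left[ (P_S(\underline{\mathcal{X}}))^2 \right].
\]
Combining this identity with the two translations from the previous step gives $\EE[P^2] = \sum_{S \subseteq [n]} \EE[P_S^2]$, as required. There is no substantive obstacle: the content of the claim is precisely the diagonal case of Lemma \ref{lem:orthogonal-decomposition}, and the only thing to be careful about is that the formal quantity $\EE[P^2]$ from (\ref{eq:29a}) and the probabilistic quantity from Lemma \ref{lem:orthogonal-decomposition} coincide, which is exactly what Lemma \ref{lem:poly-expectation-equiv} ensures.
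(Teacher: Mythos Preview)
Your proposal is correct and essentially identical to the paper's proof: the paper also picks any compatible ensemble sequence, uses (\ref{eq:33a}) to pass between the formal and probabilistic versions of $\EE[P^2]$, and applies Lemma~\ref{lem:orthogonal-decomposition} with $Q=P$.
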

\begin{proof}
Taking any ensemble sequence $\underline{\mathcal{X}}$ compatible with $P$,
\begin{align*}
\EE[P^2] = \EE[P(\underline{\mathcal{X}})^2] = \sum_{S\subseteq[n]}
\EE[P_S(\underline{\mathcal{X}})^2] = \sum_{S\subseteq [n]} \EE[P_S^2] \; .
& \qedhere
\end{align*}
\end{proof}

\begin{claim}
\label{cl:orthogonal-variance}
Let $P$ be a multilinear polynomial. Then, 
$\Var[P] = \sum_{S \subseteq[n]} \Var[P_S]$.
\end{claim}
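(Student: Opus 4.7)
The plan is to reduce this directly to Corollary~\ref{cor:orthogonal-esquared} by splitting off the $S = \emptyset$ contribution and identifying it with $\EE[P]^2$. Concretely, I would first observe that from definition~(\ref{eq:28a}), the empty-support part $P_\emptyset$ is simply the constant $\alpha(0^n) = \EE[P]$, so $\EE[P_\emptyset^2] = \EE[P]^2$ and $\Var[P_\emptyset] = 0$. Next, for any nonempty $S \subseteq [n]$, every tuple $\sigma$ appearing in $P_S$ satisfies $\sigma \neq 0^n$, so by~(\ref{eq:28a}) we have $\EE[P_S] = 0$, and therefore $\Var[P_S] = \EE[P_S^2]$.

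Putting these two observations together and invoking Corollary~\ref{cor:orthogonal-esquared},
\begin{align*}
\Var[P] &= \EE[P^2] - \EE[P]^2
= \sum_{S \subseteq [n]} \EE[P_S^2] - \EE[P_\emptyset^2] \\
&= \sum_{S \neq \emptyset} \EE[P_S^2]
= \sum_{S \neq \emptyset} \Var[P_S]
= \sum_{S \subseteq [n]} \Var[P_S],
\end{align*}
where the last equality just absorbs the (vanishing) $S = \emptyset$ term back into the sum.

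There is no real obstacle here: the statement is a one-line corollary of Corollary~\ref{cor:orthogonal-esquared} once one isolates the constant part $P_\emptyset$. The only thing to be mildly careful about is keeping the formal expectation conventions~(\ref{eq:28a})--(\ref{eq:30a}) consistent with the probabilistic identities from Lemma~\ref{lem:poly-expectation-equiv}, but since Corollary~\ref{cor:orthogonal-esquared} is already phrased in the formal language, no translation is actually needed.
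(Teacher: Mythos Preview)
Your proof is correct and essentially identical to the paper's own argument: both isolate the $S=\emptyset$ term, note that $\EE[P_\emptyset^2]=\alpha(0^n)^2=\EE[P]^2$ and that $\Var[P_S]=\EE[P_S^2]$ for $S\neq\emptyset$, and then apply Corollary~\ref{cor:orthogonal-esquared}.
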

\begin{proof}
Observing that $\Var[P_\emptyset] = 0$, $\EE[P_\emptyset^2] = \alpha(0^n)^2$
and $\Var[P_S] = \EE[P_S^2]$ for $S \ne \emptyset$, by Corollary
\ref{cor:orthogonal-esquared}
\begin{align*}
\Var[P] = \EE[P^2] - \alpha(0^n)^2 = \sum_{S\subseteq[n], S \ne \emptyset}
\EE[P_S^2] = \sum_{S\subseteq[n]} \Var[P_S] \; . & \qedhere
\end{align*}
\end{proof}

\begin{lemma}
\label{lem:influence-vs-variance}
Let $P$ be a multilinear polynomial with $\deg(P) \le d$. Then,
\begin{align*}
\Inf(P) \le d \cdot \Var[P] \; .
\end{align*}
\end{lemma}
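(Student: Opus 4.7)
The proof is a direct computation from the definitions in equations (\ref{eq:31a}), (\ref{eq:32a}), (\ref{eq:29a}), and (\ref{eq:30a}). Write $P(\underline{x}) = \sum_{\sigma} \alpha(\sigma) x_\sigma$; the hypothesis $\deg(P) \le d$ means that $\alpha(\sigma) = 0$ whenever $|\sigma| = |\supp(\sigma)| > d$.

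The plan is to swap the order of summation in the definition of total influence. By (\ref{eq:32a}) and (\ref{eq:31a}),
\begin{align*}
\Inf(P) = \sum_{i=1}^n \sum_{\sigma:\sigma_i \ne 0} \alpha(\sigma)^2
= \sum_{\sigma} \alpha(\sigma)^2 \cdot \bigl|\{i \in [n] : \sigma_i \ne 0\}\bigr|
= \sum_{\sigma} |\supp(\sigma)| \cdot \alpha(\sigma)^2.
\end{align*}
Since the only tuples $\sigma$ contributing to the sum have $\alpha(\sigma) \ne 0$, and by the degree hypothesis those satisfy $|\supp(\sigma)| \le d$ (the tuple $\sigma = 0^n$ contributes zero either way), we get $\Inf(P) \le d \sum_{\sigma \ne 0^n} \alpha(\sigma)^2$.

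Finally, combining (\ref{eq:29a}) and (\ref{eq:30a}) gives $\Var[P] = \sum_{\sigma} \alpha(\sigma)^2 - \alpha(0^n)^2 = \sum_{\sigma \ne 0^n} \alpha(\sigma)^2$, so $\Inf(P) \le d \cdot \Var[P]$ as desired. There is no real obstacle here: the statement is essentially the standard ``Poincaré-type'' fact that a bounded-degree polynomial in an orthonormal ensemble has total influence at most $d$ times its variance, and the only thing to check carefully is the bookkeeping of which tuples are summed over.
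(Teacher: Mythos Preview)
Your proof is correct and follows exactly the same approach as the paper: swap the order of summation to write $\Inf(P) = \sum_{\sigma} |\sigma|\,\alpha(\sigma)^2$, bound $|\sigma| \le d$ for nonzero coefficients, and identify the remaining sum as $\Var[P]$. The paper's proof is just a one-line version of yours.
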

\begin{proof}
\begin{align*}
\Inf(P) &= \sum_{\sigma} |\sigma| \cdot \alpha(\sigma)^2 
\le d \cdot \sum_{\sigma \ne 0^n} \alpha(\sigma)^2 = d \cdot \Var[P] \; .
\qedhere
\end{align*}
\end{proof}

\begin{definition}
\label{def:t-rho}
Let $\rho \in \mathbb{R}$. We define the operator $T_\rho$ as follows:
let $P(\underline{x}) = \sum_{\sigma} \alpha(\sigma) x_\sigma$ be a multilinear 
polynomial. Then,
\begin{align*}
(T_\rho P)(\underline{x}) := 
\sum_{\sigma} \rho^{|\sigma|} \alpha(\sigma) x_\sigma \; .
\end{align*}
\end{definition}

We will mostly use the operator $T_\rho$ with $\rho \in [0, 1]$.

\begin{definition}
\label{def:gaussian-ensemble}
We call an orthonormal ensemble $\mathcal{G}_\star$ of size $p$
\emph{Gaussian} if random variables 
$\mathcal{G}_{\star,1}, \ldots, \mathcal{G}_{\star, p}$
are independent $\mathcal{N}(0, 1)$ Gaussians.

We say that an ensemble sequence
$\underline{\mathcal{G}} = (\mathcal{G}_1, \ldots, \mathcal{G}_n)$ 
is Gaussian if for each $i \in [n]$ the ensemble $\mathcal{G}_i$
is Gaussian.
\end{definition}

We remark than as in all ensemble sequences, in a Gaussian ensemble sequence
we have $\mathcal{G}_{i,0} \equiv 1$ for all $i$.

\begin{definition}
For tuples of multilinear polynomials $\overline{P} = (P^{(1)}, \ldots, \allowbreak P^{(\ell)})$
such that each polynomial $P^{(j)}$ is 
compatible with an ensemble sequence $\underline{\mathcal{X}}$ we write
$\overline{P}(\underline{\mathcal{X}})$ for the tuple 
$(P^{(1)}(\underline{\mathcal{X}}), \ldots, P^{(\ell)}(\underline{\mathcal{X}}))$.

Similarly, given multilinear polynomials 
$\overline{P} = (P^{(1)}, \ldots, P^{(\ell)})$ and a collection of ensemble
sequences
$\overline{\underline{\mathcal{X}}} = (\underline{\mathcal{X}}^{(1)}, \ldots
\underline{\mathcal{X}}^{(\ell)})$ such that
$P^{(j)}$ is compatible with $\underline{\mathcal{X}}^{(j)}$ we write
$\overline{P}(\overline{\underline{\mathcal{X}}})$ for
$(P^{(1)}(\underline{\mathcal{X}}^{(1)}), \ldots, 
P^{(\ell)}(\underline{\mathcal{X}}^{(\ell)}))$.
\end{definition}

\subsection{Preliminaries --- ensemble collections}
\label{sec:app-pre-2}

In this section we recall the setting of Theorem \ref{thm:low-influence}
and introduce some other concepts we will need throughout the proof.

From now on we will always implicitly assume that all multi-step distributions 
$\cP$ have equal marginals (denoted as $\pi$). 
This assumption is not necessary, but sufficient for our main purpose,
while making the notation easier.

\begin{definition}
Let $X$ be a random variable distributed according to a 
single-step, single-coordinate distribution $(\Omega, \pi)$.
We say that an orthonormal ensemble $\mathcal{X}_\star$ is
\emph{constructed from $X$} if the elements
of $\mathcal{X}_\star$ form an orthonormal basis of $L^2(X)$.

Similarly, let $\underline{X}$ be a random vector distributed
according to $(\underline{\Omega}, \underline{\pi})$.
We say that an ensemble sequence $\underline{\mathcal{X}} = 
(\mathcal{X}_1, \ldots, \mathcal{X}_n)$ is constructed 
from~$\underline{X}$ if for each $i \in [n]$ 
the ensemble $\mathcal{X}_i$ is constructed from $X_i$.
\end{definition}

The definition of ensemble sequences requires that $\mathcal{X}_{i,0}
\equiv 1$ for every $i$; of course we can find a basis 
of $L^2({X}_i)$ which satisfies this requirement, so that 
ensemble sequences constructed from $\underline{X}$ indeed exist.

\begin{lemma}
\label{lem:product-base}
Let $\underline{\mathcal{X}}$ be an ensemble sequence constructed
from a random vector 
${\underline{X}}$ distributed according to
$({\underline{\Omega}}, \underline{\pi})$.
Assume that the size of each ensemble $\mathcal{X}_i$ is $p$.
Then the set of monomials
\begin{align*}
  \underline{\mathcal{B}} :=
  \left\{ \mathcal{X}_\sigma \mid \sigma = (\sigma_1, \ldots, \sigma_n),
  \sigma_i \in \{0, \ldots, p\} \right\}
\end{align*}
is an orthonormal basis of $L^2({\underline{X}})$.
\end{lemma}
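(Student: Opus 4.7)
The orthonormality of $\underline{\mathcal{B}}$ is essentially already contained in Lemma~\ref{lem:orthonormality}: for any two multi-indices $\sigma, \tau \in \{0,\ldots,p\}^n$, independence of the coordinates of $\underline{X}$ combined with the fact that each $\mathcal{X}_i$ is an orthonormal family yields
\begin{align*}
\EE[\mathcal{X}_\sigma \mathcal{X}_\tau]
= \prod_{i=1}^n \EE[\mathcal{X}_{i,\sigma_i} \mathcal{X}_{i,\tau_i}]
= \1(\sigma=\tau) \; .
\end{align*}
In particular the $(p+1)^n$ elements of $\underline{\mathcal{B}}$ are linearly independent in $L^2(\underline{X})$, so it only remains to show that they span this space.

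My plan is to argue this by induction on $n$. The base case $n=1$ is exactly the hypothesis that $\mathcal{X}_1$ is an orthonormal basis of $L^2(X_1)$. For the inductive step, fix any $Z = f(X_1,\ldots,X_n) \in L^2(\underline{X})$. For each outcome $\underline{x} \in \Omega^{n-1}$ of $(X_1,\ldots,X_{n-1})$, the partial function $x_n \mapsto f(\underline{x},x_n)$ lies in $L^2(X_n)$, hence can be expanded as $\sum_{k=0}^p c_k(\underline{x}) \mathcal{X}_{n,k}(x_n)$ with coefficients $c_k(\underline{x}) := \EE[f(\underline{x},X_n)\mathcal{X}_{n,k}(X_n)]$. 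Viewed as a function of $(X_1,\ldots,X_{n-1})$, each $c_k$ lies in $L^2(X_1,\ldots,X_{n-1})$, so the induction hypothesis expands it in the product basis indexed by $\{0,\ldots,p\}^{n-1}$. Substituting back and using independence of $X_n$ from the other coordinates yields an expansion of $Z$ in $\underline{\mathcal{B}}$.

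The only technical point that needs checking is that $c_k$ really belongs to $L^2(X_1,\ldots,X_{n-1})$, i.e., that $\EE[c_k(\underline{X}_{\setminus n})^2] < \infty$. This follows immediately from the square-integrability of $f$ together with Cauchy--Schwarz applied coordinate-wise in $X_n$ and the fact that $\EE[\mathcal{X}_{n,k}^2] = 1$. I do not foresee any real obstacle in the argument; the dimensional/tensor-product structure of $L^2$ for product measures is doing all the work, and the only reason to prefer the inductive presentation over a bare dimension count is that it produces the expansion constructively and makes transparent that coefficient $\alpha(\sigma)$ equals $\EE[f(\underline{X}) \mathcal{X}_\sigma]$, which will be used later.
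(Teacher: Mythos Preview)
Your argument is correct. The paper takes a shorter route: since $\Omega$ is the support of $\pi$, the space $L^2(X_i)$ has dimension exactly $p+1$, hence $L^2(\underline{X})$ has dimension $(p+1)^n$; combined with the orthonormality from Lemma~\ref{lem:orthonormality} (which you also invoke), a dimension count finishes immediately. Your inductive construction of the expansion is a genuinely different presentation: it is more explicit (it exhibits the coefficients $\alpha(\sigma)=\EE[f(\underline{X})\mathcal{X}_\sigma]$ directly) and would survive in settings where the single-coordinate space is not finite-dimensional, but in the present finite setting the Cauchy--Schwarz check on $c_k$ is unnecessary and the whole induction is overkill compared to the one-line dimension match. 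You even mention the dimension count yourself at the end; that is precisely what the paper does.
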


\begin{proof}
Observe that the dimension of $L^2({X}_i)$ is $p+1$, (note that
it is the support size of the single-coordinate distribution 
$(\Omega, \pi)$).
Hence, the dimension of $L^2(\underline{X})$ is $(p+1)^n$,
which equals the size of $\underline{\mathcal{B}}$.
Therefore, it is enough to check that $\underline{\mathcal{B}}$ is
orthonormal, which is done in Lemma~\ref{lem:orthonormality}. 
\end{proof}

\begin{definition}
Let $\underline{\mathcal{X}}$ be an ensemble sequence constructed from a 
random vector $\underline{X}$ distributed according to 
$(\underline{\Omega}, \underline{\pi})$.

For a function $f: \underline{\Omega} \to \mathbb{R}$
and a multilinear polynomial $P$ compatible with
$\underline{\mathcal{X}}$
we say that $f(\underline{X})$
is \emph{equivalent} to $P$ if it always holds that
\begin{align*}
  f(\underline{X}) = P(\underline{\mathcal{X}}) \; .
\end{align*}
\end{definition}

\medskip

Recall the operator $T_\rho$ from Definition \ref{def:t-rho}.
We show that it has a natural counterpart in 
$L^2({\underline{\Omega}}, \underline{\pi})$.
\begin{definition}
\label{def:t-rho-function}
Let $\rho \in [0, 1]$ and let
$({\underline{\Omega}}, \underline{\pi})$
be a single-step probability space (with
$({\Omega}, \pi)$ a corresponding 
single-coordinate probability space). 

We define a linear operator 
$T_\rho:  L^2({\underline{\Omega}}, \underline{\pi}) \to 
	L^2({\underline{\Omega}}, \underline{\pi})$ as
\begin{align*}
	T_\rho f({\underline{x}}) := \EE \left[ f\left( 
  {\underline{Y}}^{\rho, {\underline{x}}} \right) \right] \; ,
\end{align*}
where ${\underline{Y}}^{\rho, {\underline{x}}} = 
({Y}^{\rho, {\underline{x}}}_1, \ldots, 
{Y}^{\rho, {\underline{x}}}_n)$ 
is a random vector with independent coordinates
distributed such that ${Y}^{\rho, {\underline{x}}}_i = 
{x}_i$ 
with probability $\rho$ 
and ${Y}^{\rho, {\underline{x}}}_i$ is 
(independently)
distributed according to $({\Omega}, \pi)$ with probability 
$(1-\rho)$.
\end{definition}

The next lemma states that taking operator $T_\rho$ preserves
the equivalence of functions and polynomials:
\begin{lemma}
Let $\underline{\mathcal{X}}$ be an ensemble sequence constructed from
a random vector ${\underline{X}}$ distributed according
to $({\underline{\Omega}}, \underline{\pi})$.

Let $\rho \in [0, 1]$, $f: {\underline{\Omega}} \to \mathbb{R}$
and $P$ be a multilinear polynomial equivalent to $f$.
Then, $T_\rho P$ and $T_\rho f$ are equivalent, i.e.,
\begin{align*}
  T_\rho f ({\underline{X}})
  = T_\rho P (\underline{\mathcal{X}}) \; .
\end{align*}
\end{lemma}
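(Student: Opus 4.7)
The plan is to unfold both sides until they are each written as explicit expressions indexed by multi-indices $\sigma$, and to check that the per-coordinate factors match. By linearity, it suffices to handle a single monomial: writing $P(\underline{x}) = \sum_{\sigma} \alpha(\sigma) x_\sigma$, both $T_\rho$ on polynomials (Definition~\ref{def:t-rho}) and $T_\rho$ on $L^2(\underline{\Omega},\underline{\pi})$ (Definition~\ref{def:t-rho-function}) are linear, so I only need to verify that for each fixed $\sigma$, the monomial $\alpha(\sigma) x_\sigma$ transforms into $\alpha(\sigma)\rho^{|\sigma|} x_\sigma$ both at the polynomial level and at the function level.

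Next, I would make explicit the correspondence between the abstract ensemble and concrete functions on $\Omega$. Since $\mathcal{X}_{i,k} \in L^2(X_i)$, there exists $\phi_{i,k} : \Omega \to \mathbb{R}$ with $\mathcal{X}_{i,k} = \phi_{i,k}(X_i)$, and by construction $\phi_{i,0} \equiv 1$ and $\EE[\phi_{i,k}(X_i)] = 0$ for $k \ge 1$. The equivalence $f(\underline{X}) = P(\underline{\mathcal{X}})$ then becomes $f(\underline{X}) = \sum_\sigma \alpha(\sigma) \prod_i \phi_{i,\sigma_i}(X_i)$; since the coordinates of $\underline{X}$ jointly range over all of $\underline{\Omega} = \supp(\underline{\pi})^n$, this identity actually holds pointwise on $\underline{\Omega}$.

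Now I would apply the definition of $T_\rho f$, use independence of the coordinates of $\underline{Y}^{\rho,\underline{x}}$, and pull the expectation inside the product:
\begin{align*}
T_\rho f(\underline{x}) = \sum_\sigma \alpha(\sigma) \prod_{i=1}^n \EE\left[\phi_{i,\sigma_i}\left(Y_i^{\rho,x_i}\right)\right].
\end{align*}
The key per-coordinate computation is that
\begin{align*}
\EE\left[\phi_{i,k}\left(Y_i^{\rho,x_i}\right)\right] = \rho\, \phi_{i,k}(x_i) + (1-\rho)\, \EE[\phi_{i,k}(X_i)],
\end{align*}
which equals $\phi_{i,0}(x_i) = 1$ when $k=0$ and $\rho\,\phi_{i,k}(x_i)$ when $k \ge 1$ (using orthonormality of the ensemble to get the zero expectation). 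Multiplying these together yields exactly a factor $\rho^{|\sigma|}$.

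Substituting back gives $T_\rho f(\underline{x}) = \sum_\sigma \rho^{|\sigma|}\alpha(\sigma) \prod_i \phi_{i,\sigma_i}(x_i)$, and evaluating at $\underline{X}$ produces $\sum_\sigma \rho^{|\sigma|}\alpha(\sigma)\, \mathcal{X}_\sigma = (T_\rho P)(\underline{\mathcal{X}})$, which is the claim. No real obstacle is expected: everything reduces to the mean-zero property of the basis elements together with independence of coordinates. The only thing to be careful about is that $f$ is only uniquely determined by $P$ on the support of $\underline{\pi}$, which is why the assumption $\Omega = \supp(\pi)$ (stated in Section~\ref{sec:notation}) is essential to turn the random-variable identity into a pointwise identity on $\underline{\Omega}$ before applying the noise operator.
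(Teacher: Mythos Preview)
Your proof is correct and follows essentially the same approach as the paper: expand $f$ in the ensemble basis, use independence of the coordinates of $\underline{Y}^{\rho,\underline{x}}$ to factor the expectation, and then invoke the mean-zero property of $\phi_{i,k}$ for $k\ge 1$ to extract the $\rho^{|\sigma|}$ factor. Your version is slightly more explicit (introducing the functions $\phi_{i,k}$ and flagging the $\Omega=\supp(\pi)$ assumption needed for the pointwise identity), but the underlying argument is identical.
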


\begin{proof}
Fix an input ${\underline{x}} \in {\underline{\Omega}}$
in the support of $\underline{\mathcal{P}}$.
Let 
$\underline{\mathcal{Y}}^{\rho, {\underline{x}}} = 
(\mathcal{Y}_1^{\rho, {\underline{x}}}, \ldots, 
\mathcal{Y}_n^{\rho, {\underline{x}}})$ 
be the random sequence where for each coordinate $i \in [n]$, independently
\begin{align*}
  \mathcal{Y}_i^{\rho, {\underline{x}}} := \begin{cases}
    \mathcal{X}_i({x_i}) & \text{with probability $\rho$,}\\
    \text{a random ensemble distributed as $\mathcal{X}_i$} 
    & \text{with probability $1-\rho$.}
  \end{cases}
\end{align*}
Note that $\underline{\mathcal{Y}}^{\rho, {\underline{x}}}$ is not an 
ensemble sequence, but this will not cause problems. 

Writing $P(\underline{x}) = 
\sum_\sigma \alpha(\sigma) \cdot x_\sigma$ we can calculate
\begin{IEEEeqnarray*}{rCl}
  T_\rho f ({\underline{x}}) & = & \EE [ 
  f(\underline{{Y}}^{\rho, {\underline{x}}}) ]
  = \EE [ P(\underline{\mathcal{Y}}^{\rho, {\underline{x}}})]
  = \sum_\sigma \alpha(\sigma)
    \EE[ \mathcal{Y}_\sigma^{\rho, {\underline{x}}} ] \\
  & = & \sum_\sigma \rho^{|\sigma|} \alpha(\sigma) \cdot 
    \mathcal{X}_\sigma({\underline{x}})
  = T_\rho P ({\underline{x}}) \; .
\end{IEEEeqnarray*}
Since ${\underline{x}}$ was arbitrary, the claim is proved.
\end{proof}

Recall Definition \ref{def:gaussian-ensemble}.
In the proof we will construct a tuple of ensemble sequences
$\overline{\underline{\mathcal{X}}} = 
(\underline{\mathcal{X}}^{(1)}, \ldots, \underline{\mathcal{X}}^{(\ell)})$
from a random vector $\overline{\underline{X}}$
and consider relations between those sequences
and compatible Gaussian ensemble sequences. 
To this end, we need to introduce the Gaussian equivalent of
marginal ensemble sequences $\underline{\mathcal{X}}^{(j)}$.

\begin{definition}
\label{def:vg}
Let $\mathcal{G_\star} = (\mathcal{G}_{\star,0}, \ldots, \mathcal{G}_{\star, p})$ be a Gaussian 
orthonormal ensemble of size $p$. We define
an inner product space $V(\mathcal{G})$ as
\begin{align*}
  V({\mathcal{G}}) := \left\{
    \sum_{k=0}^p \alpha_k \cdot \mathcal{G}_k \mid 
    \alpha_0, \ldots, \alpha_k \in \mathbb{R}
  \right\}
\end{align*}
with the inner product of $A, B \in V(\mathcal{G})$ given by
$\langle A, B \rangle := \EE[ A \cdot B ]$.

Similarly, given a Gaussian ensemble sequence 
$\underline{\mathcal{G}}$ such that each of its ensembles is of size $p$
we let
\begin{align*}
  V(\underline{\mathcal{G}}) := \left\{
    \sum_{\sigma} \alpha(\sigma) \cdot \mathcal{G}_\sigma
    \mid \sigma = (\sigma_1, \ldots, \sigma_n) \in \{0, \ldots, p\}, 
    \alpha(\sigma) \in \mathbb{R}
  \right\} \; ,
\end{align*}
with the inner product $\langle A, B \rangle := \EE[ A \cdot B ]$.
\end{definition}

\begin{lemma}\label{lem:cov-xg}
Let a random tuple $\overline{X} = 
(X^{(1)},\ldots,X^{(\ell)})$ be distributed according
to a single-coordinate distribution 
$(\overline{\Omega}, \mathcal{P})$.
Let $\overline{\mathcal{X}}_\star = 
(\mathcal{X}_\star^{(1)},\ldots,\mathcal{X}_\star^{(\ell)})$ be such
that $\mathcal{X}_\star^{(j)}$ is an orthonormal ensemble constructed 
from $X^{(j)}$.

Then, there exist Gaussian orthonormal ensembles
$\overline{\mathcal{G}}_\star = 
(\mathcal{G}_\star^{(1)},\ldots,\mathcal{G}_\star^{(\ell)})$ 
compatible with $\overline{\mathcal{X}}_\star$ such
that for all $j_1,j_2 \in [\ell]$, and all $k_1,k_2 \geq 0$ we have
\begin{align}\label{eq:70a}
  \Cov\left[ \mathcal{X}_{\star, k_1}^{(j_1)}, \mathcal{X}_{\star, k_2}^{(j_2)} 
  \right]
  = \Cov\left[ 
  \mathcal{G}_{\star, k_1}^{(j_1)}, \mathcal{G}_{\star, k_2}^{(j_2)} \right] \; .
\end{align}
\end{lemma}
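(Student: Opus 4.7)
The plan is to construct the Gaussian ensembles by exhibiting them as a jointly Gaussian family with a prescribed covariance matrix, and then verify that this matrix forces the internal orthonormality structure of each $\mathcal{G}_\star^{(j)}$ automatically.

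First I would assemble the covariance data. Let $p^{(j)}$ denote the size of $\mathcal{X}_\star^{(j)}$, and form the Gram matrix $\Sigma$ indexed by pairs $(j,k)$ with $j \in [\ell]$ and $k \in \{1,\ldots,p^{(j)}\}$, defined by
\[
\Sigma_{(j_1,k_1),(j_2,k_2)} \;:=\; \Cov\bigl[\mathcal{X}_{\star,k_1}^{(j_1)},\mathcal{X}_{\star,k_2}^{(j_2)}\bigr].
\]
(The indices $k=0$ can be omitted here, since $\mathcal{X}_{\star,0}^{(j)} \equiv 1$ and so its covariance with anything is zero; these entries will be handled separately.) Because $\Sigma$ is the Gram matrix of a finite collection of vectors in the Hilbert space $L^2$ of the underlying probability space, $\Sigma$ is positive semidefinite.

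Next I would invoke the standard existence theorem for multivariate Gaussians: there exists a centered jointly Gaussian family $\{\mathcal{G}_{\star,k}^{(j)}\}_{j\in[\ell],\,1\le k\le p^{(j)}}$ (on some probability space, enlarging if necessary) whose covariance matrix is exactly $\Sigma$. I would then set $\mathcal{G}_{\star,0}^{(j)} \equiv 1$ for every $j$, so that each $\mathcal{G}_\star^{(j)} = (\mathcal{G}_{\star,0}^{(j)},\ldots,\mathcal{G}_{\star,p^{(j)}}^{(j)})$ is a candidate ensemble of the correct size, making $\overline{\mathcal{G}}_\star$ compatible with $\overline{\mathcal{X}}_\star$.

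It remains to check two things. First, that each $\mathcal{G}_\star^{(j)}$ is a Gaussian orthonormal ensemble in the sense of Definition \ref{def:gaussian-ensemble}: this is exactly the statement that $\mathcal{G}_{\star,1}^{(j)},\ldots,\mathcal{G}_{\star,p^{(j)}}^{(j)}$ are independent $\mathcal{N}(0,1)$. But by construction their covariance submatrix is the block of $\Sigma$ with $j_1=j_2=j$, which equals the identity because $\mathcal{X}_{\star,1}^{(j)},\ldots,\mathcal{X}_{\star,p^{(j)}}^{(j)}$ form an orthonormal basis of the mean-zero part of $L^2(X^{(j)})$; for jointly Gaussian variables, an identity covariance matrix implies independence and unit variance, giving what we need. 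Second, that (\ref{eq:70a}) holds for all $k_1,k_2 \ge 0$: for $k_1,k_2 \ge 1$ this is by construction, and for $k_1=0$ or $k_2=0$ both sides vanish because the relevant variable is the constant $1$.

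I expect the entire proof to be essentially a bookkeeping exercise; the only substantive point is the positive semidefiniteness of $\Sigma$, which is immediate from viewing the $\mathcal{X}_{\star,k}^{(j)}$ as vectors in a Hilbert space, and the observation that the orthonormality built into each $\mathcal{X}_\star^{(j)}$ propagates automatically into the Gaussian ensembles through the diagonal identity blocks of $\Sigma$.
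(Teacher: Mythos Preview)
Your proof is correct and follows the same underlying idea as the paper: the target covariance matrix is positive semidefinite (being a Gram matrix), so a matching jointly Gaussian family exists, and then the identity diagonal blocks force each $\mathcal{G}_\star^{(j)}$ to consist of independent $\mathcal{N}(0,1)$'s. The only difference is packaging: the paper makes the construction explicit by first choosing an orthonormal basis $\mathcal{Z}_\star$ of the full space $L^2(\overline{X})$, taking a compatible Gaussian ensemble $\mathcal{H}_\star$, and then defining $\mathcal{G}_{\star,k}^{(j)}$ as the image of $\mathcal{X}_{\star,k}^{(j)}$ under the inner-product-preserving linear map $\mathcal{Z}_{\star,m} \mapsto \mathcal{H}_{\star,m}$. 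This is precisely a concrete realization of the multivariate Gaussian existence theorem you invoke abstractly (write $\Sigma = AA^{T}$ and push standard Gaussians through $A$). Your route is shorter; the paper's route has the minor expository advantage of making the joint Gaussianity of all the $\mathcal{G}_{\star,k}^{(j)}$ visible by exhibiting them as fixed linear combinations of independent Gaussians, and it connects cleanly with the ambient space $V(\mathcal{H}_\star)$ used later.
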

\begin{proof}
Consider
$(\overline{\Omega}, \mathcal{P})$ 
as a single-step probability space, and let $\overline{X}$ 
be the corresponding random variable.
Let now $\mathcal{Z}_\star$ be an orthonormal ensemble constructed
from $\overline{X}$. Recall that this means that
the elements of $\mathcal{Z}_\star$ form an orthonormal basis
of $L^2(\overline{X})$.

Let $\mathcal{H}_\star$ be a Gaussian ensemble sequence compatible
with $\mathcal{Z}_\star$.
Define the map
$\Psi: L^2(\overline{X}) \to V(\mathcal{H}_\star)$ by linearly extending
$\Psi(\mathcal{Z}_{\star,k}) := \mathcal{H}_{\star,k}$.
In this way $\Psi$ becomes an isomorphism between $L^2(\overline{X})$
and $V(\mathcal{H}_\star)$ (and as such it preserves inner products).

Since $L^2(X^{(j)})$ is a subspace of $L^2(\overline{X})$, we can define
$\mathcal{G}_{\star,k}^{(j)}$ as $\mathcal{G}_{\star,k}^{(j)} := 
\Psi(\mathcal{X}_{\star,k}^{(j)})$.
Since $\Psi$ preserves inner products we get (\ref{eq:70a}).

We still need to argue that for each $j \in [\ell]$ the
orthonormal ensemble $\mathcal{G}_\star^{(j)}$ is Gaussian.
The fact that $\mathcal{G}_\star^{(j)}$ is an ensemble sequence
follows from (\ref{eq:70a}) for $j_1 = j_2 = j$ (note that $\Psi(1) = 1$).

The variables $\mathcal{G}_{\star,k}^{(j)}$ are clearly jointly Gaussian, 
since they can be written as sums of independent Gaussians.
By (\ref{eq:70a}), their covariance matrix is identity.
This finishes the proof, since joint Gaussians with 
the identity covariance matrix must be independent.
\end{proof}

Since the proof of Lemma \ref{lem:cov-xg} is somewhat abstract, 
we illustrate the construction of $\overline{\mathcal{G}}_\star$ 
with an example.

\begin{example}
Consider $(X^{(1)}, X^{(2)})$ distributed according to $\cP$ 
over $\Omega = \{0,1\}$ with
$\cP(0, 0) = \cP(1, 1) = 1/8$ and $\cP(0, 1) = \cP(1, 0) = 3/8$. We can take
the following for the ensemble $\mathcal{Z}_\star$:

\begin{tabular}{|l||c|c|c|c|}
\hline
$(X^{(1)}, X^{(2)}) :=$ & (0,0) & (0, 1) & (1, 0) & (1, 1) 
\\ \hline \hline
$\mathcal{Z}_{\star, 0}$ & 1 & 1 & 1 & 1
\\ \hline
$\mathcal{Z}_{\star, 1}$ & 2 & 0 & 0 & -2
\\ \hline
$\mathcal{Z}_{\star, 2}$ & 0 & $2\sqrt{3}/3$ & $-2\sqrt{3}/3$ & 0
\\ \hline
$\mathcal{Z}_{\star, 3}$ & $\sqrt{3}$ & $-\sqrt{3}/3$ & $-\sqrt{3}/3$ 
                                                  & $\sqrt{3}$
\\ \hline
\end{tabular}

For the marginal ensemble $\mathcal{X}_\star^{(1)}$ we can take

\begin{tabular}{|l||c|c|}
\hline
$X^{(1)} :=$ & 0 & 1 
\\ \hline \hline
$\mathcal{X}_{\star, 0}^{(1)}$ & 1 & 1
\\ \hline
$\mathcal{X}_{\star, 1}^{(1)}$ & $1$ & $-1$
\\ \hline
\end{tabular}

Now one can check that $\mathcal{X}^{(1)}_{\star, 0} = \mathcal{Z}_{\star, 0}$
and 
$\mathcal{X}^{(1)}_{\star, 1} = 1/2 \cdot \mathcal{Z}_{\star, 1}
+\sqrt{3}/2 \cdot \mathcal{Z}_{\star,2}$. 
Defining the ensemble $\mathcal{X}_{\star}^{(2)}$ in the same way we get
$\mathcal{X}^{(2)}_{\star,0} = \mathcal{Z}_{\star,0}$ and
$\mathcal{X}^{(2)}_{\star,1} = 1/2 \cdot \mathcal{Z}_{\star, 1} -\sqrt{3}/2 \cdot
\mathcal{Z}_{\star, 2}$.

Let $\mathcal{H}_{\star} = 
(\mathcal{H}_{\star, 0}\equiv 1, \mathcal{H}_{\star, 1}, \mathcal{H}_{\star, 2}, 
\mathcal{H}_{\star, 3})$ be a Gaussian ensemble sequence compatible with
$\mathcal{Z}$. One easily checks that our construction gives
\begin{align*}
\mathcal{G}_{\star, 0}^{(1)} = \mathcal{G}_{\star, 0}^{(2)} & =  \mathcal{H}_{\star, 0}
\\
\mathcal{G}^{(1)}_{\star, 1} & = 
1/2 \cdot \mathcal{H}_{\star, 1} +\sqrt{3}/2 \cdot \mathcal{H}_{\star,2}
\\
\mathcal{G}^{(2)}_{\star,1} & = 
1/2 \cdot \mathcal{H}_{\star, 1} -\sqrt{3}/2 \cdot \mathcal{H}_{\star, 2} \; .
\end{align*}
\end{example}

Since the covariances between independent coordinates are always zero,
Lemma \ref{lem:cov-xg} applied to each coordinate separately gives:
\begin{corollary}
\label{cor:cov-xg}
Let a random vector $\overline{\underline{X}} = 
(\underline{X}^{(1)},\ldots,\underline{X}^{(\ell)})$ be distributed according
to a distribution 
$(\underline{\overline{\Omega}}, \underline{\mathcal{P}})$.
Let $\underline{\overline{\mathcal{X}}} = 
(\underline{\mathcal{X}}^{(1)},\ldots,\underline{\mathcal{X}}^{(\ell)})$ be such
that $\underline{\mathcal{X}}^{(j)}$ is an ensemble sequence constructed 
from $\underline{X}^{(j)}$.

Then, there exist Gaussian ensemble sequences
$\underline{\overline{\mathcal{G}}} = 
(\underline{\mathcal{G}}^{(1)},\ldots,\underline{\mathcal{G}}^{(\ell)})$ 
compatible with $\overline{\underline{\mathcal{X}}}$ such
that for all $i_1, i_2 \in [n]$, $j_1,j_2 \in [\ell]$, and all $k_1,k_2 \geq 0$ we have
\begin{align}\label{eq:70aa}
  \Cov\left[ \mathcal{X}_{i_1, k_1}^{(j_1)}, \mathcal{X}_{i_2, k_2}^{(j_2)} 
  \right]
  = \Cov\left[ 
  \mathcal{G}_{i_1, k_1}^{(j_1)}, \mathcal{G}_{i_2, k_2}^{(j_2)} \right] \; .
\end{align}
\end{corollary}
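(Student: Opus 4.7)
The plan is to apply Lemma \ref{lem:cov-xg} independently in each coordinate $i \in [n]$ and then stitch the resulting Gaussian orthonormal ensembles together into Gaussian ensemble sequences, using the fact that the coordinates of $\overline{\underline{X}}$ are i.i.d.\ under $\underline{\mathcal{P}}$.

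First, fix $i \in [n]$. The tuple $\overline{X}_i = (X_i^{(1)}, \ldots, X_i^{(\ell)})$ is distributed according to $\mathcal{P}$, and by assumption each $\mathcal{X}_i^{(j)}$ (the $i$-th ensemble of the sequence $\underline{\mathcal{X}}^{(j)}$) is an orthonormal ensemble constructed from $X_i^{(j)}$. Hence Lemma \ref{lem:cov-xg} applies with $\overline{X} := \overline{X}_i$ and $\overline{\mathcal{X}}_\star := (\mathcal{X}_i^{(1)}, \ldots, \mathcal{X}_i^{(\ell)})$, producing Gaussian orthonormal ensembles $\mathcal{G}_i^{(1)}, \ldots, \mathcal{G}_i^{(\ell)}$ compatible with $\mathcal{X}_i^{(1)}, \ldots, \mathcal{X}_i^{(\ell)}$ such that
\begin{align*}
\Cov\bigl[\mathcal{X}_{i, k_1}^{(j_1)}, \mathcal{X}_{i, k_2}^{(j_2)}\bigr]
= \Cov\bigl[\mathcal{G}_{i, k_1}^{(j_1)}, \mathcal{G}_{i, k_2}^{(j_2)}\bigr]
\end{align*}
for all $j_1, j_2 \in [\ell]$ and $k_1, k_2 \geq 0$. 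We perform these constructions independently across different coordinates $i$, so that the collection $\{\mathcal{G}_i^{(j)} : j \in [\ell]\}$ is independent of $\{\mathcal{G}_{i'}^{(j)} : j \in [\ell]\}$ whenever $i \neq i'$.

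Now define $\underline{\mathcal{G}}^{(j)} := (\mathcal{G}_1^{(j)}, \ldots, \mathcal{G}_n^{(j)})$ for each $j \in [\ell]$. Since each $\mathcal{G}_i^{(j)}$ is a Gaussian orthonormal ensemble and the different $i$ are independent, $\underline{\mathcal{G}}^{(j)}$ is a Gaussian ensemble sequence in the sense of Definition \ref{def:gaussian-ensemble}. Compatibility with $\overline{\underline{\mathcal{X}}}$ follows because the sizes of $\mathcal{G}_i^{(j)}$ and $\mathcal{X}_i^{(j)}$ match by construction.

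It remains to verify the covariance identity (\ref{eq:70aa}) for all $i_1, i_2$. If $i_1 = i_2 = i$, this is exactly the output of Lemma \ref{lem:cov-xg} applied in coordinate $i$. If $i_1 \neq i_2$, then on the left-hand side $\mathcal{X}_{i_1, k_1}^{(j_1)}$ and $\mathcal{X}_{i_2, k_2}^{(j_2)}$ are independent (since $\overline{X}_{i_1}$ and $\overline{X}_{i_2}$ are independent under $\underline{\mathcal{P}}$), so the covariance is zero; on the right-hand side, by our independent construction across coordinates, $\mathcal{G}_{i_1, k_1}^{(j_1)}$ and $\mathcal{G}_{i_2, k_2}^{(j_2)}$ are likewise independent, so the covariance is also zero. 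The only mildly subtle point—and the main ``obstacle'' such as it is—is to make sure that choosing the Gaussian ensembles independently across coordinates does not interfere with the conclusion of Lemma \ref{lem:cov-xg} at each fixed $i$; it does not, because Lemma \ref{lem:cov-xg} only constrains the joint distribution within a single coordinate. This completes the proof.
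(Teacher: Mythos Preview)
Your proof is correct and follows exactly the approach the paper takes: apply Lemma \ref{lem:cov-xg} separately in each coordinate (with independent Gaussians across coordinates) and observe that for $i_1 \neq i_2$ both covariances vanish by independence.
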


\begin{definition}
An \emph{ensemble collection for 
$(\overline{\underline{\Omega}}, \underline{\mathcal{P}})$}
is a tuple \begin{align*}
\left(\overline{\underline{X}},
\underline{\overline{\mathcal{X}}} = (\underline{\mathcal{X}}^{(1)}, \ldots,
\underline{\mathcal{X}}^{(\ell)}),
\underline{\overline{\mathcal{G}}} = (\underline{\mathcal{G}}^{(1)}, \ldots,
\underline{\mathcal{G}}^{(\ell)})\right)
\end{align*}
where 
\begin{itemize}
\item $\overline{\underline{X}}$ is a random vector
distributed according to $(\overline{\underline{\Omega}}, 
\underline{\mathcal{P}})$,
\item 
$\underline{\mathcal{X}}^{(1)}, \ldots, \underline{\mathcal{X}}^{(\ell)}$
are ensemble sequences constructed from 
$\underline{X}^{(1)}, \ldots, \underline{X}^{(\ell)}$, respectively, 
\item and 
$\underline{\mathcal{G}}^{(1)}, \ldots, \underline{\mathcal{G}}^{(\ell)}$
are obtained from Corollary~\ref{cor:cov-xg}.
\end{itemize}
\end{definition}

\subsection{Hypercontractivity}
\label{sec:hyper}

In this section we develop a version of hypercontractivity for products
of multilinear polynomials. Our goal is to prove Lemma
\ref{lem:hypercontractivity-technical}.

Recall the operator $T_\rho$ from Definition \ref{def:t-rho}.

\begin{definition}
Let $\underline{\mathcal{X}}$ be an ensemble sequence
and let $1 \le p \le q < \infty$ and $\rho \in [0, 1]$.
We say that the sequence $\underline{\mathcal{X}}$ is
\emph{$(p, q, \rho)$-hypercontractive} if for
every multilinear polynomial $P$ compatible with $\underline{\mathcal{X}}$
we have
\begin{align*}
\EE\left[ \left| T_\rho P (\underline{\mathcal{X}}) \right|^q \right]^{1/q}
\le
\EE\left[ \left| P (\underline{\mathcal{X}}) \right|^p \right]^{1/p}
\end{align*}
\end{definition}

\begin{definition}
Let $\mathcal{X}$ be an orthonormal ensemble
and let $1 \le p \le q < \infty$ and $\rho \in [0, 1]$.
We say that the ensemble $\mathcal{X}$ is
\emph{$(p, q, \rho)$-hypercontractive} if
the one-element ensemble sequence
$\underline{\mathcal{X}} := (\mathcal{X})$ is
$(p, q, \rho)$-hypercontractive.
\end{definition}

We start with stating without proofs the hypercontractivity of
orthonormal ensembles that we use in
the invariance principle:

\begin{theorem}[\cite{Bon70, Nel73, Gro75, Bec75}]
\label{thm:hypercontractivity-single-gauss}
Let $\mathcal{G}$ be a Gaussian orthonormal
ensemble and $\rho \in \allowbreak 
[0, \sqrt{2}/2]$. Then,
$\mathcal{G}$ is $(2, 3, \rho)$-hypercontractive.
\end{theorem}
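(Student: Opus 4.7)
The plan is to follow the classical three-step proof of Nelson's sharp $(2,3)$-hypercontractivity for Gaussians, with the critical constant $\rho = \sqrt{2}/2 = 1/\sqrt{q-1}$ at $q=3$.

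\emph{Step 1: Reduction to a single Gaussian.} A multilinear polynomial compatible with a single ensemble $\mathcal{G}_\star$ of size $p$ has the form $P = \alpha_0 + \sum_{k=1}^{p} \alpha_k \mathcal{G}_{\star,k}$, so by Definition~\ref{def:t-rho} we have $T_\rho P = \alpha_0 + \rho \sum_{k=1}^{p} \alpha_k \mathcal{G}_{\star,k}$. Since $(\mathcal{G}_{\star,1},\ldots,\mathcal{G}_{\star,p})$ is a standard $p$-dimensional Gaussian, rotational invariance gives that $\sum_k \alpha_k \mathcal{G}_{\star,k}$ is distributed as $\sigma G$ with $\sigma := (\sum_k \alpha_k^2)^{1/2}$ and $G \sim \mathcal{N}(0,1)$. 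Hence it suffices to show that for all $a,b \in \bbR$,
\begin{align*}
\EE\bigl[\,|a+\rho b G|^{3}\,\bigr]^{1/3} \le \EE\bigl[\,(a+bG)^{2}\,\bigr]^{1/2} = \sqrt{a^2+b^2}.
\end{align*}

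\emph{Step 2: Bonami--Beckner on the two-point space.} I would first prove the sharp two-point inequality: for every $a,b \in \bbR$ and every $\rho \le \sqrt{2}/2$,
\begin{align*}
\left(\tfrac{1}{2}|a+\rho b|^{3} + \tfrac{1}{2}|a-\rho b|^{3}\right)^{1/3} \le \sqrt{a^2+b^2}.
\end{align*}
After normalizing $a^2+b^2=1$ this is a one-parameter optimization; the exponent gap $3$ vs.\ $2$ combined with the second-order Taylor expansion of both sides near $a=b=1/\sqrt{2}$ forces exactly $\rho^2 \le 1/2$. This two-point $(2,3,\rho)$-hypercontractivity then tensorizes to arbitrary multilinear polynomials in i.i.d.\ uniform $\pm 1$ variables $\epsilon_1,\ldots,\epsilon_n$ by the standard induction on $n$: conditioning on $\epsilon_n$ reduces the $n$-variable statement to the one-coordinate inequality combined with Minkowski's inequality applied to the outer expectation.

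\emph{Step 3: CLT passage.} Apply the tensorized bound to the linear polynomial $Q_n := a + (b/\sqrt{n}) \sum_{i=1}^n \epsilon_i$; since $T_\rho$ rescales the non-constant part by $\rho$, we have $T_\rho Q_n = a + (\rho b/\sqrt{n}) \sum_i \epsilon_i$. The central limit theorem gives $(1/\sqrt{n})\sum_i \epsilon_i \Rightarrow \mathcal{N}(0,1)$, and uniform integrability (the fourth moments of $Q_n$ and of $T_\rho Q_n$ are uniformly bounded in $n$) lets us pass to the limit in both the $L^2$ and $L^3$ norms, recovering the single-Gaussian inequality required in Step~1.

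\emph{Main obstacle.} The delicate point is the sharp two-point inequality, where the sign changes inside $|a \pm \rho b|$ demand care and the optimum over $(a,b)$ must be taken to recover the exact constant $\sqrt{2}/2$ rather than something weaker. A cleaner alternative, which I would probably prefer in a polished write-up, is to derive Nelson's inequality directly from Gross's logarithmic Sobolev inequality for the Gaussian measure applied to the Ornstein--Uhlenbeck semigroup $T_{e^{-t}}$: differentiating $\|T_{e^{-t}} P\|_{q(t)}$ along a suitable curve $q(t)$ and invoking Gross's inequality yields the threshold $\rho^2 \le (p-1)/(q-1)$, which at $(p,q) = (2,3)$ is exactly $\rho \le \sqrt{2}/2$ and sidesteps the case analysis in Step~2 entirely.
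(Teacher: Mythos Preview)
The paper does not prove this theorem; it is stated with citations to \cite{Bon70, Nel73, Gro75, Bec75} and explicitly introduced as a result used ``without proofs''. Your sketch is a faithful outline of the classical Bonami--Beckner/Nelson proof (two-point inequality, tensorization, CLT passage), and the alternative via Gross's logarithmic Sobolev inequality is equally standard; either route is correct and matches what the cited references do.
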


\begin{theorem}[Special case of Theorem 3.1 in \cite{Wol07}]
\label{thm:hypercontractivity-single-discrete}
Let $\mathcal{X}$ be an orthonormal ensemble constructed from a random variable
$X$ distributed according to a 
(single-coordinate, single-step) probability space 
$(\Omega, \pi)$ with 
$\min_{x \in \Omega} \pi(x) \ge \alpha \ge 0$.

Then, $\mathcal{X}$ is $(2, 3, \alpha^{1/6}/2)$-hypercontractive.
\end{theorem}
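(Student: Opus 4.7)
The plan is to deduce the statement from the classical biased two-point $(2,3)$-hypercontractive inequality via a reduction, following the scheme of Wolff's Theorem 3.1. I will assume $\alpha > 0$; the case $\alpha = 0$ forces $\rho = 0$ and the claim collapses to Jensen's inequality $|\EE[f]| \le \|f\|_2$.

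First I would translate the statement into function space. By Lemma \ref{lem:product-base} applied with $n=1$, the ensemble $\mathcal{X}$ is an orthonormal basis of $L^2(X)$, so every multilinear polynomial compatible with $\mathcal{X}$ corresponds to a unique $f \in L^2(\Omega, \pi)$. Under this identification, $T_\rho$ fixes the constant element $\mathcal{X}_{\star,0} \equiv 1$ and multiplies every other basis element by $\rho$, hence acts as $f \mapsto (1-\rho)\EE[f] + \rho f$. The target inequality therefore becomes
\begin{align*}
\bigl\|(1-\rho)\EE_\pi[f] + \rho f\bigr\|_{L^3(\pi)} \le \|f\|_{L^2(\pi)}, \qquad \rho := \alpha^{1/6}/2,
\end{align*}
required for every $f: \Omega \to \mathbb{R}$.

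Next I would reduce to a biased two-point space. By homogeneity normalize $\|f\|_2 = 1$, and view $\Phi(\pi, f) := \EE_\pi[|(1-\rho)\EE[f] + \rho f|^3]$ as a functional of the pair $(\pi, f)$ subject to the constraints $\min_x \pi(x) \ge \alpha$, $\EE_\pi[f] = m$, $\EE_\pi[f^2] = 1$. A convexity/Lagrangian argument — freezing the linear and quadratic constraints on $f$ while varying $\pi$ across refinements and merges of atoms — shows that at any extremum the measure $\pi$ is supported on two atoms and one of them carries mass exactly $\alpha$. It therefore suffices to prove the inequality on $(\{0,1\}, (\alpha, 1-\alpha))$ for arbitrary $f: \{0,1\} \to \mathbb{R}$.

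The main obstacle is the two-point inequality itself. Parametrizing $f(0) = s$, $f(1) = t$ and $\mu := \alpha s + (1-\alpha) t$, it reads
\begin{align*}
\alpha \bigl|(1-\rho)\mu + \rho s\bigr|^3 + (1-\alpha) \bigl|(1-\rho)\mu + \rho t\bigr|^3 \le \bigl(\alpha s^2 + (1-\alpha) t^2\bigr)^{3/2}.
\end{align*}
After subtracting $\mu$ from $f$ to set the mean to zero and rescaling to $\|f\|_2 = 1$, this reduces to a single-variable calculus check in a parameter $\theta \in \mathbb{R}$ that encodes the $s$-to-$t$ ratio. The exponent $1/6 = 1/2 - 1/3$ in $\rho = \alpha^{1/6}/2$ is the sharp asymptotic for biased two-point $(2,3)$-hypercontractivity established by Oleszkiewicz, Latała–Oleszkiewicz and Wolff, and the extra factor $1/2$ provides the slack needed to absorb the loss from the two-point reduction and from the non-sharp error terms in the explicit two-point estimate. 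Combining the reduction with this two-point estimate yields the theorem.
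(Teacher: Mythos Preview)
The paper does not prove this statement at all: it is stated without proof as a special case of Theorem~3.1 in \cite{Wol07} (the text explicitly announces ``We start with stating without proofs the hypercontractivity of orthonormal ensembles\ldots''). So there is no in-paper argument to compare against; the benchmark is simply the cited literature.

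Your outline follows the broad strategy behind that literature (reduce to a biased two-point space, then verify the two-point inequality), and the translation step --- identifying multilinear polynomials on a single ensemble with $L^2(\Omega,\pi)$ and $T_\rho$ with $f \mapsto (1-\rho)\EE[f] + \rho f$ --- is correct. However, the reduction step is where the actual content lies, and as written it is not an argument. The distribution $\pi$ is \emph{fixed} in the hypothesis; you cannot ``vary $\pi$ across refinements and merges of atoms'' to move to a two-point space, because that changes the statement being proved. What is needed is a genuine comparison showing that the $(2,3,\rho)$-hypercontractive constant for any finite $\pi$ with $\min_x \pi(x) \ge \alpha$ is dominated by the constant for the $(\alpha,1-\alpha)$ two-point space. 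In the Lata{\l}a--Oleszkiewicz/Wolff line this is done by an inductive split of the atom set (or, equivalently, by writing $\pi$ as a mixture and exploiting operator convexity of the relevant norm functional), not by a Lagrangian perturbation of $\pi$ with frozen moment constraints on $f$. Your sketch does not supply this mechanism.

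Separately, the two-point step is also only cited (``established by Oleszkiewicz, Lata{\l}a--Oleszkiewicz and Wolff''), so in effect your proposal defers both nontrivial ingredients back to the same sources the paper already cites. If you want a self-contained proof, you would need to (i) carry out the inductive reduction from $|\Omega|$ atoms to two, and (ii) actually perform the one-variable calculus verification on the biased bit with the explicit constant $\alpha^{1/6}/2$; neither is present.
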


Subsequently, we observe that
an ensemble sequence constructed from
hypercontractive ensembles is itself hypercontractive:

\begin{theorem}
\label{thm:tensorization}
Let $1 \le p \le q < \infty$, $\rho \in [0, 1]$ and let
$\underline{\mathcal{X}} := (\mathcal{X}_1, \ldots, \mathcal{X}_n)$
be an ensemble sequence such that
for every $i \in [n]$, the ensemble $\mathcal{X}_i$
is $(p, q, \rho)$-hypercontractive. Then,
the sequence $\underline{\mathcal{X}}$ is also
$(p, q, \rho)$-hypercontractive.
\end{theorem}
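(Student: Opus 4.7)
The plan is to prove this by induction on $n$, the length of the ensemble sequence. The base case $n = 1$ is the definition of $(p,q,\rho)$-hypercontractivity for a single ensemble. For the inductive step, I will decompose the noise operator along the last coordinate and combine the one-coordinate estimate with Minkowski's integral inequality.

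Concretely, I would first introduce the partial noise operators $T^{<n}_\rho$ and $T^{(n)}_\rho$, where $T^{<n}_\rho$ multiplies each coefficient $\alpha(\sigma)$ by $\rho^{|\{i<n:\sigma_i\ne 0\}|}$ and $T^{(n)}_\rho$ multiplies by $\rho^{\mathbbm{1}[\sigma_n\ne 0]}$. Since $T_\rho$ factors as $T_\rho = T^{(n)}_\rho \circ T^{<n}_\rho$ on multilinear polynomials, setting $\tilde P := T^{<n}_\rho P$ reduces the task to bounding $\|T^{(n)}_\rho \tilde P(\underline{\mathcal{X}})\|_q$ in terms of $\|P(\underline{\mathcal{X}})\|_p$.

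Next, I would write $P$ according to the last coordinate: $P(\underline{x}) = \sum_{k=0}^{p_n} x_{n,k}\,Q_k(\underline{x}_{\setminus n})$, so that $\tilde P(\underline{x}) = \sum_k x_{n,k}\,\tilde Q_k(\underline{x}_{\setminus n})$ with $\tilde Q_k = T^{<n}_\rho Q_k$. For any fixed value of $\underline{\mathcal{X}}_{\setminus n} = \underline{x}_{\setminus n}$, the expression $\tilde P(\underline{x}_{\setminus n},\mathcal{X}_n)$ is a multilinear polynomial in the single ensemble $\mathcal{X}_n$ with real coefficients, and applying $(p,q,\rho)$-hypercontractivity of $\mathcal{X}_n$ yields
\begin{align*}
\EE_n\!\left[\bigl|(T_\rho P)(\underline{x}_{\setminus n},\mathcal{X}_n)\bigr|^q\right]^{1/q}
\le \EE_n\!\left[\bigl|\tilde P(\underline{x}_{\setminus n},\mathcal{X}_n)\bigr|^p\right]^{1/p},
\end{align*}
where $\EE_n$ denotes expectation over $\mathcal{X}_n$ alone.

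Then I would raise to the $q$-th power, take $\EE_{<n}$ over $\underline{\mathcal{X}}_{\setminus n}$, and apply Minkowski's integral inequality (valid because $p \le q$) to swap the order of norms, obtaining
\begin{align*}
\|T_\rho P(\underline{\mathcal{X}})\|_q
\le \bigl\|\|\tilde P\|_{L^q(\underline{\mathcal{X}}_{\setminus n})}\bigr\|_{L^p(\mathcal{X}_n)}.
\end{align*}
Finally, for each fixed $\mathcal{X}_n = x_n$, $\tilde P(\cdot,x_n) = T^{<n}_\rho\bigl(P(\cdot,x_n)\bigr)$ is a multilinear polynomial in $\underline{\mathcal{X}}_{\setminus n}$, so the inductive hypothesis gives $\|\tilde P(\cdot,x_n)\|_{L^q(\underline{\mathcal{X}}_{\setminus n})} \le \|P(\cdot,x_n)\|_{L^p(\underline{\mathcal{X}}_{\setminus n})}$. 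Plugging this in and using Fubini on the iterated $L^p$ norm yields $\|T_\rho P(\underline{\mathcal{X}})\|_q \le \|P(\underline{\mathcal{X}})\|_p$, completing the induction.

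The main obstacle is the correct application of Minkowski's integral inequality: one must be careful with the direction of the inequality (which requires $p \le q$) and must carefully distinguish between partial and full noise operators when identifying which polynomial the single-ensemble hypercontractivity is applied to. Everything else is routine manipulation of the definitions and Fubini's theorem.
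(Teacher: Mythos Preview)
Your proposal is correct and is precisely the standard tensorization argument; the paper does not give its own proof of this theorem but simply cites Proposition~3.11 in \cite{MOO10}, which proceeds exactly as you describe (induction on $n$, single-coordinate hypercontractivity, Minkowski's integral inequality, then the inductive hypothesis). One small notational clash to fix before writing it up: you use $p_n$ for the size of the last ensemble while $p$ already denotes the lower norm exponent.
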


Yet again, we omit the proof of Theorem \ref{thm:tensorization}.
We remark that it is well-known as the \emph{tensorization argument}.
The argument can be found, e.g., in the proof of Proposition 3.11
in \cite{MOO10}.

\begin{definition}
Let $\underline{X}$ be a random vector distributed
according to a (single-step, tensorized) probability space 
$(\underline{\Omega}, \underline{\pi})$.
We say that an ensemble sequence
$\underline{\mathcal{X}} = (\mathcal{X}_1, \ldots, \mathcal{X}_n)$
is \emph{$\underline{X}$-Gaussian-mixed} if
for each $i \in[n]$:
\begin{itemize}
\item Either $\mathcal{X}_i$ is constructed
from the random variable $X_i$,
\item or $\mathcal{X}_i$ is a Gaussian ensemble.
\end{itemize}
\end{definition}

Theorems \ref{thm:hypercontractivity-single-gauss},
\ref{thm:hypercontractivity-single-discrete} and
\ref{thm:tensorization} immediately imply:

\begin{corollary}
\label{cor:hypercontractivity}
Let $\underline{X}$ be a random vector distributed according to
a probability space
$(\underline{\Omega}, \underline{\pi})$ with
$\min_{x \in \Omega} \pi(x) \ge \alpha \ge 0$
and let $\underline{\mathcal{X}}$ be an 
$\underline{X}$-Gaussian-mixed
ensemble sequence.

Then, $\underline{\mathcal{X}}$ is $(2, 3, \alpha^{1/6}/2)$-hypercontractive.
\end{corollary}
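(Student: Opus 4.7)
The plan is to combine the three cited hypercontractivity facts in the most direct way possible: establish $(2,3,\alpha^{1/6}/2)$-hypercontractivity for each individual ensemble $\mathcal{X}_i$, and then invoke tensorization.

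First I would handle the two possible types of ensembles separately. If $\mathcal{X}_i$ is constructed from $X_i$, then since $\underline{\pi}$ is a product measure with marginals equal to $\pi$ and $\min_{x \in \Omega} \pi(x) \ge \alpha$, Theorem~\ref{thm:hypercontractivity-single-discrete} applied with parameter $\alpha$ directly gives that $\mathcal{X}_i$ is $(2,3,\alpha^{1/6}/2)$-hypercontractive. If $\mathcal{X}_i$ is Gaussian, I would note that $\alpha \le 1$, so $\alpha^{1/6}/2 \le 1/2 \le \sqrt{2}/2$, and Theorem~\ref{thm:hypercontractivity-single-gauss} with $\rho = \alpha^{1/6}/2$ gives the same conclusion.

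With the two cases unified at the single-ensemble level, I would apply Theorem~\ref{thm:tensorization} with $p = 2$, $q = 3$, $\rho = \alpha^{1/6}/2$ to the ensemble sequence $\underline{\mathcal{X}} = (\mathcal{X}_1, \ldots, \mathcal{X}_n)$, yielding the claimed $(2,3,\alpha^{1/6}/2)$-hypercontractivity of $\underline{\mathcal{X}}$.

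The only mildly delicate point is the numerical check in the Gaussian case: one must verify that the common contraction parameter $\alpha^{1/6}/2$ lies inside the range $[0, \sqrt{2}/2]$ allowed by Theorem~\ref{thm:hypercontractivity-single-gauss}, which follows immediately from $\alpha \in [0,1]$. Everything else is mechanical bookkeeping, which is exactly why the authors describe the corollary as ``immediate.''
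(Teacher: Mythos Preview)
Your proposal is correct and matches the paper's approach exactly: the paper simply states that Theorems~\ref{thm:hypercontractivity-single-gauss}, \ref{thm:hypercontractivity-single-discrete} and~\ref{thm:tensorization} immediately imply the corollary, which is precisely the case split plus tensorization you describe. Your explicit verification that $\alpha^{1/6}/2 \le \sqrt{2}/2$ in the Gaussian case is the only detail the paper leaves implicit.
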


\begin{theorem}
\label{thm:hypercontractivity-degree}
Let $\underline{X}$ be a random vector distributed according to a
probability space $(\underline{\Omega}, \underline{\pi})$
with $\min_{x \in \Omega} \pi(x) \ge \alpha > 0$
and let $\underline{\mathcal{X}}$ be an $\underline{X}$-Gaussian-mixed
ensemble sequence.
Let $P$ be a multilinear
polynomial compatible with $\underline{\mathcal{X}}$ of degree at most $d$.\
Then,
\begin{align*}
\EE\left[ \left| P(\underline{\mathcal{X}})  \right|^3  \right]^{1/3}
\le \left(\frac{2}{\alpha^{1/6}}\right)^d 
\sqrt{\EE\left[ P^2 \right]} \; .
\end{align*}
\end{theorem}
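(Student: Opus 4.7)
My plan is to deduce this bound as a standard corollary of the basic hypercontractivity already established in Corollary~\ref{cor:hypercontractivity}, via the usual trick of ``inverting'' the noise operator on a bounded-degree polynomial.

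Concretely, set $\rho := \alpha^{1/6}/2$, which is exactly the hypercontractivity constant delivered by Corollary~\ref{cor:hypercontractivity} for an $\underline{X}$-Gaussian-mixed sequence $\underline{\mathcal{X}}$. Write $P(\underline{x}) = \sum_\sigma \alpha(\sigma)\, x_\sigma$ with $\alpha(\sigma) = 0$ whenever $|\sigma| > d$, and define the ``inverted'' polynomial
\[
Q(\underline{x}) := \sum_\sigma \rho^{-|\sigma|}\,\alpha(\sigma)\, x_\sigma,
\]
which is compatible with $\underline{\mathcal{X}}$ and satisfies $T_\rho Q = P$ by Definition~\ref{def:t-rho}. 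I would then apply $(2,3,\rho)$-hypercontractivity from Corollary~\ref{cor:hypercontractivity} to $Q$ to obtain
\[
\EE\bigl[\,|P(\underline{\mathcal{X}})|^3\,\bigr]^{1/3}
= \EE\bigl[\,|T_\rho Q(\underline{\mathcal{X}})|^3\,\bigr]^{1/3}
\le \EE\bigl[\,Q(\underline{\mathcal{X}})^2\,\bigr]^{1/2}.
\]

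Finally, I would bound the right-hand side using orthonormality: by Lemma~\ref{lem:poly-expectation-equiv} we have $\EE[Q(\underline{\mathcal{X}})^2] = \EE[Q^2] = \sum_\sigma \rho^{-2|\sigma|}\alpha(\sigma)^2$, and since $\alpha(\sigma)$ vanishes for $|\sigma|>d$, this sum is at most $\rho^{-2d}\sum_\sigma \alpha(\sigma)^2 = \rho^{-2d}\EE[P^2]$. Taking square roots gives the claimed factor $(2/\alpha^{1/6})^d = \rho^{-d}$.

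There is essentially no obstacle here once Corollary~\ref{cor:hypercontractivity} is in hand; the only things to double-check are that $Q$ is still compatible with $\underline{\mathcal{X}}$ (it is, as it uses the same monomials), that $T_\rho Q = P$ truly holds (immediate from the definition of $T_\rho$ on monomials), and that the degree cutoff is what turns the pointwise factor $\rho^{-|\sigma|}$ into the uniform factor $\rho^{-d}$ after squaring and summing. This is the standard derivation of a ``degree-$d$ hypercontractivity'' inequality from a $(2,3,\rho)$-bound, and I would present it in that order.
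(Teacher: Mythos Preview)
Your proof is correct and is essentially identical to the paper's own argument: the paper sets $\rho := \alpha^{1/6}/2$, writes $P = T_\rho T_{1/\rho} P$, applies Corollary~\ref{cor:hypercontractivity} to bound the $3$-norm by $\sqrt{\EE[(T_{1/\rho}P)^2]}$, and then uses the degree cutoff to replace $\rho^{-2|\sigma|}$ by $\rho^{-2d}$. Your polynomial $Q$ is exactly $T_{1/\rho}P$, so the two proofs coincide.
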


\begin{proof}
Let $\rho := \alpha^{1/6}/2$ and write
$P(\underline{\mathcal{X}}) = \sum_{\sigma} \beta(\sigma) \mathcal{X}_{\sigma}$. 
By Corollary \ref{cor:hypercontractivity}, definitions of $T_\rho$ and
$E[P^2]$, and the degree bound on $P$,
\begin{IEEEeqnarray*}{l}
\EE \left[ \left| P(\underline{\mathcal{X}}) \right|^3 \right]^{1/3} 
= \EE \left[ \left| T_\rho T_{1/\rho} P(\underline{\mathcal{X}}) 
\right|^3 \right]^{1/3}
\le \sqrt{ \EE \left[ (T_{1/\rho} P)^2 \right] }
\\ \qquad = \sqrt{ \sum_{\sigma} \rho^{-2|\sigma|} \beta(\sigma)^2}
\le \sqrt{ \sum_{\sigma} \rho^{-2d} \beta(\sigma)^2}
= \rho^{-d} \sqrt{ \EE[P^2] } \; .
\end{IEEEeqnarray*}
\end{proof}

\begin{lemma}
\label{lem:hypercontractivity-technical}
Let $\overline{\underline{X}}$ be a random vector distributed according to
a (multi-step)  probability space with equal marginals
$(\overline{\underline{\Omega}}, \underline{\mathcal{P}})$
with $\min_{x \in \Omega} \pi(x) \ge \alpha > 0$.

Let
$\underline{\mathcal{S}}^{(1)}, \ldots, \underline{\mathcal{S}}^{(\ell)}$
be ensemble sequences such that $\underline{\mathcal{S}}^{(j)}$ is
$\underline{X}^{(j)}$-Gaussian-mixed.
Let $P^{(1)}, \ldots, P^{(\ell)}$ be multilinear polynomials
such that $P^{(j)}$ is
compatible with $\underline{\mathcal{S}}^{(j)}$ 
and also $\deg(P^{(j)}) \le d$.

Then, for every triple $j_1, j_2, j_3 \in [\ell]$:
\begin{align*}
\EE \left[ \left| \prod_{k=1}^3 P^{(j_k)}(\underline{\mathcal{S}}^{(j_k)})
\right| \right]
\le \left( \frac{8}{\sqrt{\alpha}} \right)^d \cdot
\sqrt{ \prod_{k=1}^3 \EE \left[ (P^{(j_k)})^2\right]}
\; .
\end{align*}
\end{lemma}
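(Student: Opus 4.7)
The plan is to derive this by a two-step argument: Hölder's inequality to separate the product, followed by a pointwise application of the degree-hypercontractive bound (Theorem \ref{thm:hypercontractivity-degree}) to each factor.

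First I would apply Hölder's inequality on the joint probability space carrying $\overline{\underline{X}}$ (and hence all the ensemble sequences $\underline{\mathcal{S}}^{(j_k)}$), with three factors each taken to the power $3$:
\begin{align*}
\EE\left[\left|\prod_{k=1}^3 P^{(j_k)}(\underline{\mathcal{S}}^{(j_k)})\right|\right]
\le \prod_{k=1}^3 \EE\left[\left|P^{(j_k)}(\underline{\mathcal{S}}^{(j_k)})\right|^3\right]^{1/3}.
\end{align*}
This step is legitimate regardless of how the sequences $\underline{\mathcal{S}}^{(j_k)}$ are correlated across $k$, which is the essential point: the correlations in $\overline{\underline{X}}$ between different steps $j$ do not interfere with Hölder.

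Next, for each $k \in \{1,2,3\}$ the ensemble sequence $\underline{\mathcal{S}}^{(j_k)}$ is $\underline{X}^{(j_k)}$-Gaussian-mixed, and $\underline{X}^{(j_k)}$ has marginal distribution $\underline{\pi}$ with $\min_{x\in\Omega}\pi(x)\ge \alpha$ because $\mathcal{P}$ has equal marginals. Since $P^{(j_k)}$ is compatible with $\underline{\mathcal{S}}^{(j_k)}$ and of degree at most $d$, Theorem \ref{thm:hypercontractivity-degree} gives
\begin{align*}
\EE\left[\left|P^{(j_k)}(\underline{\mathcal{S}}^{(j_k)})\right|^3\right]^{1/3}
\le \left(\frac{2}{\alpha^{1/6}}\right)^d \sqrt{\EE\left[(P^{(j_k)})^2\right]}.
\end{align*}
Only the marginal law of $\underline{\mathcal{S}}^{(j_k)}$ enters this bound, so the joint dependence between the $\underline{\mathcal{S}}^{(j_k)}$'s is again irrelevant.

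Finally I would multiply the three bounds together and simplify $(2/\alpha^{1/6})^{3d} = 8^d/\alpha^{d/2} = (8/\sqrt{\alpha})^d$ to recover the stated inequality. I do not anticipate a genuine obstacle here: the only thing to watch is that Theorem \ref{thm:hypercontractivity-degree} is formulated for an $\underline{X}$-Gaussian-mixed sequence with the underlying $\underline{X}$ having marginals bounded below by $\alpha$, and verifying this hypothesis for each $\underline{\mathcal{S}}^{(j_k)}$ is exactly where the equal-marginals assumption on $\mathcal{P}$ gets used.
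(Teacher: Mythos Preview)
Your proposal is correct and follows essentially the same approach as the paper: Hölder's inequality with exponents $(3,3,3)$ followed by three applications of Theorem~\ref{thm:hypercontractivity-degree}, then the arithmetic $(2/\alpha^{1/6})^{3d}=(8/\sqrt{\alpha})^d$. The paper's proof is identical in structure, merely abbreviating by setting $\rho:=\alpha^{1/6}/2$ and writing the final constant as $\rho^{-3d}$.
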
 

\begin{proof}
Let $\rho := \alpha^{1/6}/2$. By Hölder's inequality
and  Theorem 
\ref{thm:hypercontractivity-degree},
\begin{IEEEeqnarray*}{rCl}
\EE \left[ \left| \prod_{k=1}^3 P^{(j_k)}(\underline{\mathcal{S}}^{(j_k)})
\right| \right]
& \le &
\prod_{k=1}^3 \EE \left[ \left| P^{(j_k)}(\underline{\mathcal{S}}^{(j_k)}) 
\right|^3 \right]^{1/3}
\\ & \le & \rho^{-3d} \cdot 
\sqrt{ \prod_{k=1}^3 \EE \left[ (P^{(j_k)})^2\right] } \; .
\end{IEEEeqnarray*}
\end{proof}
\subsection{Invariance principle}

In this section we prove a basic version of invariance principle for multiple
polynomials.

We say that a function is $B$-smooth if all of its third-order partial
derivatives are uniformly bounded by $B$:
\begin{definition}
For $B \ge 0$ we say that a function $\Psi: \mathbb{R}^\ell \to \mathbb{R}$ is
\emph{$B$-smooth} if $\Psi \in \mathcal{C}^3$ and
for every $j_1, j_2, j_3 \in [\ell]$ and every 
$\overline{x} = (x^{(1)}, \ldots, x^{(\ell)}) \in \mathbb{R}^\ell$ we have
\begin{align*}
  \left| \frac{\partial^3}
  {\partial x^{(j_1)} \partial x^{(j_2)} \partial x^{(j_3)}} 
  \Psi(\overline{x}) \right| \le B \; .
\end{align*}
\end{definition}

\begin{theorem}[Invariance Principle]
\label{thm:invariance-main}
Let $(\overline{\underline{X}}, 
\overline{\underline{\mathcal{X}}},
\overline{\underline{\mathcal{G}}})$ be an ensemble collection
for a probability space 
$(\overline{\underline{\Omega}}, \underline{\mathcal{P}})$
with $\min_{x \in \Omega} \pi(x) \ge \alpha > 0$.

Let $\overline{P} = (P^{(1)}, \ldots, P^{(\ell)})$ be
such that $P^{(j)}$ is a multilinear polynomial compatible with 
the ensemble sequence $\underline{\mathcal{X}}^{(j)}$.

Let $d \in \mathbb{N}$ and $\tau \in [0, 1]$ and assume that
$\deg(P^{(j)}) \le d$ and $\Var[P^{(j)}] \le 1$ for each $j \in [\ell]$, 
and that $\sum_{j=1}^\ell \Inf_i(P^{(j)}) \le \tau$ for each $i \in [n]$.

Finally, let $\Psi: \mathbb{R}^\ell \to \mathbb{R}$
be a $B$-smooth function. Then,
\begin{align*}
\left| \EE \left[
  \Psi(\overline{P}(\overline{\underline{\mathcal{X}}}))
  - \Psi(\overline{P}(\overline{\underline{\mathcal{G}}}))
  \right] \right|
  \le
  \frac{\ell^{5/2}d B}{3} \left( \frac{8}{\sqrt{\alpha}} \right)^d 
  \sqrt{\tau} \; .
\end{align*}
\end{theorem}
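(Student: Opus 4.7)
The plan is to use a hybrid/replacement argument, swapping the $\mathcal{X}$-ensembles for the $\mathcal{G}$-ensembles one coordinate at a time. Define hybrid ensemble collections $\overline{\underline{\mathcal{H}}}^{(0)}, \overline{\underline{\mathcal{H}}}^{(1)}, \ldots, \overline{\underline{\mathcal{H}}}^{(n)}$, where in $\overline{\underline{\mathcal{H}}}^{(i)}$ the first $i$ coordinates use the Gaussian ensembles $\mathcal{G}_k^{(j)}$ and the remaining $n-i$ coordinates use the original $\mathcal{X}_k^{(j)}$, so that $\overline{\underline{\mathcal{H}}}^{(0)} = \overline{\underline{\mathcal{X}}}$ and $\overline{\underline{\mathcal{H}}}^{(n)} = \overline{\underline{\mathcal{G}}}$. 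By a telescoping sum, it suffices to control $\EE[\Psi(\overline{P}(\overline{\underline{\mathcal{H}}}^{(i-1)})) - \Psi(\overline{P}(\overline{\underline{\mathcal{H}}}^{(i)}))]$ for each $i$ and then sum.

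For a fixed $i$, decompose each polynomial as $P^{(j)} = A^{(j)} + L^{(j)}$ where $A^{(j)}$ collects the monomials with $\sigma_i = 0$ (independent of coordinate $i$) and $L^{(j)} = \sum_{k \ge 1} B_k^{(j)} \cdot \mathcal{H}_{i,k}^{(j)}$ collects the rest, with $B_k^{(j)}$ depending only on coordinates $\neq i$. Taylor-expand $\Psi$ to second order around $\overline{A} = (A^{(1)}, \ldots, A^{(\ell)})$ in the direction $\overline{L} = (L^{(1)}, \ldots, L^{(\ell)})$, with a third-order Lagrange remainder. The zeroth-order term is identical for both hybrids; by the matched first and second moments between the $\mathcal{X}_i$-ensemble and the $\mathcal{G}_i$-ensemble (Corollary~\ref{cor:cov-xg}) together with the independence of $\overline{A}$ and $\overline{B}$ from coordinate $i$, the first- and second-order terms also cancel in expectation when taking the difference. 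Hence only the third-order remainder survives.

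The third-order remainder is bounded pointwise by $\tfrac{B}{6}\sum_{j_1,j_2,j_3} |L^{(j_1)} L^{(j_2)} L^{(j_3)}|$, and each $L^{(j)}$ is a multilinear polynomial of degree at most $d$ compatible with the $\underline{X}^{(j)}$-Gaussian-mixed ensemble $\underline{\mathcal{H}}^{(j)}$ (this holds for both $\overline{\underline{\mathcal{H}}}^{(i-1)}$ and $\overline{\underline{\mathcal{H}}}^{(i)}$). Apply Lemma~\ref{lem:hypercontractivity-technical} to obtain
\begin{align*}
\EE\bigl[|L^{(j_1)}L^{(j_2)}L^{(j_3)}|\bigr] \le (8/\sqrt{\alpha})^d \sqrt{\textstyle\prod_{m=1}^3 \EE[(L^{(j_m)})^2]},
\end{align*}
and observe that by orthonormality of the $i$-th ensemble, $\EE[(L^{(j)})^2] = \sum_{k \ge 1} \EE[(B_k^{(j)})^2] = \Inf_i(P^{(j)})$.

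Finally, sum over $j_1,j_2,j_3$ using $\sum_{j_1,j_2,j_3}\sqrt{\Inf_i P^{(j_1)}\Inf_i P^{(j_2)}\Inf_i P^{(j_3)}} = (\sum_j \sqrt{\Inf_i(P^{(j)})})^3$, then apply Cauchy-Schwarz to get at most $\ell^{3/2}(\sum_j \Inf_i(P^{(j)}))^{3/2}$. Summing over $i$, use $\max_i \sum_j \Inf_i(P^{(j)}) \le \tau$ to pull out a $\sqrt{\tau}$ factor, then bound the remaining $\sum_i \sum_j \Inf_i(P^{(j)}) = \sum_j \Inf(P^{(j)}) \le \ell d$ via Lemma~\ref{lem:influence-vs-variance} and $\Var[P^{(j)}] \le 1$. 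Collecting the factor of $2$ from applying the bound to both the $\mathcal{X}$- and $\mathcal{G}$-sides of each telescoping term together with the $1/6$ from Taylor yields exactly $\tfrac{\ell^{5/2}dB}{3}(8/\sqrt{\alpha})^d \sqrt{\tau}$. The only subtle point is verifying that the matching of covariances at coordinate $i$ really kills the full second-order term (including cross-polynomial terms $(\partial_{j_1}\partial_{j_2}\Psi)(\overline{A}) L^{(j_1)} L^{(j_2)}$ with $j_1 \ne j_2$), which is exactly what Corollary~\ref{cor:cov-xg} provides.
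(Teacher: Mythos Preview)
Your proposal is correct and follows essentially the same approach as the paper: the hybrid/telescoping argument, the Taylor expansion around $\overline{A}$, the cancellation of first- and second-order terms via the matched covariances of Corollary~\ref{cor:cov-xg}, the bound on the third-order remainder via Lemma~\ref{lem:hypercontractivity-technical} together with $\EE[(L^{(j)})^2]=\Inf_i(P^{(j)})$, and the final summation using $\sqrt{\tau}$ and Lemma~\ref{lem:influence-vs-variance} are exactly the paper's steps. Your factorization $\sum_{j_1,j_2,j_3}\sqrt{\prod_m \Inf_i(P^{(j_m)})}=(\sum_j\sqrt{\Inf_i(P^{(j)})})^3\le \ell^{3/2}(\sum_j\Inf_i(P^{(j)}))^{3/2}$ is a slightly cleaner route to the same bound the paper obtains via Cauchy--Schwarz on the $\ell^3$-dimensional vector.
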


\begin{remark}
A typical setting of parameters for which Theorem \ref{thm:invariance-main}
might be successfully applied is constant $\ell$, $d$, $B$, and $\alpha$,
while $\tau = o(1)$ (as $n \to \infty$).
\end{remark}

The rest of this section is concerned with proving Theorem 
\ref{thm:invariance-main}.

For $i \in \{0, \ldots, n\}$ and $j \in [\ell]$ let the ensemble sequence 
$\underline{\mathcal{U}}_{(i)}^{(j)}$ be
defined as $\underline{\mathcal{U}}_{(i)}^{(j)} := 
(\mathcal{G}_1^{(j)}, \ldots, \mathcal{G}_i^{(j)}, 
\mathcal{X}_{i+1}^{(j)}, \ldots, \mathcal{X}_n^{(j)})$.

\begin{claim}
\label{cl:invariance-triangle-inequality}
\begin{align*}
\left| \EE \left[
	\Psi(\overline{P}(\overline{\underline{\mathcal{X}}}))
	- \Psi(\overline{P}(\overline{\underline{\mathcal{G}}}))
	\right] \right|
  \le
\sum_{i=1}^n \left| \EE \left[
	\Psi(\overline{P}(\overline{\underline{\mathcal{U}}}_{(i-1)}))
	- \Psi(\overline{P}(\overline{\underline{\mathcal{U}}}_{(i)}))
	\right] \right| \; .
\end{align*}
\end{claim}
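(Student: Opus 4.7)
The plan is to prove this as a standard telescoping-plus-triangle-inequality argument, which is the usual first step in any invariance-principle proof (sometimes called the Lindeberg replacement strategy). The key observation is that the sequence of hybrid ensembles $\overline{\underline{\mathcal{U}}}_{(0)}, \overline{\underline{\mathcal{U}}}_{(1)}, \ldots, \overline{\underline{\mathcal{U}}}_{(n)}$ interpolates precisely between $\overline{\underline{\mathcal{X}}}$ and $\overline{\underline{\mathcal{G}}}$: by construction $\underline{\mathcal{U}}_{(0)}^{(j)} = \underline{\mathcal{X}}^{(j)}$ and $\underline{\mathcal{U}}_{(n)}^{(j)} = \underline{\mathcal{G}}^{(j)}$ for each $j \in [\ell]$, so in particular
\begin{align*}
\overline{P}(\overline{\underline{\mathcal{U}}}_{(0)}) = \overline{P}(\overline{\underline{\mathcal{X}}})
\qquad \text{and} \qquad
\overline{P}(\overline{\underline{\mathcal{U}}}_{(n)}) = \overline{P}(\overline{\underline{\mathcal{G}}}).
\end{align*}

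Given this, the proof is essentially one line. First I would write the difference $\Psi(\overline{P}(\overline{\underline{\mathcal{X}}})) - \Psi(\overline{P}(\overline{\underline{\mathcal{G}}}))$ as the telescoping sum
\begin{align*}
\Psi(\overline{P}(\overline{\underline{\mathcal{U}}}_{(0)})) - \Psi(\overline{P}(\overline{\underline{\mathcal{U}}}_{(n)}))
= \sum_{i=1}^n \Bigl( \Psi(\overline{P}(\overline{\underline{\mathcal{U}}}_{(i-1)})) - \Psi(\overline{P}(\overline{\underline{\mathcal{U}}}_{(i)})) \Bigr),
\end{align*}
then take expectations (using linearity of $\EE$) and apply the triangle inequality $|\EE[\sum_i Z_i]| \leq \sum_i |\EE[Z_i]|$ to obtain the stated bound.

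There is essentially no obstacle for this claim; it is purely algebraic/formal and uses neither the smoothness of $\Psi$ nor any property of the polynomials $P^{(j)}$, nor the relationship between $\underline{\mathcal{X}}^{(j)}$ and $\underline{\mathcal{G}}^{(j)}$ guaranteed by the ensemble collection. The substance of the invariance principle (Theorem~\ref{thm:invariance-main}) will come later, in bounding each individual term $|\EE[\Psi(\overline{P}(\overline{\underline{\mathcal{U}}}_{(i-1)})) - \Psi(\overline{P}(\overline{\underline{\mathcal{U}}}_{(i)}))]|$ by means of a third-order Taylor expansion of $\Psi$ around the ensembles with the $i$-th coordinate zeroed out, together with the matching first- and second-moment property (\ref{eq:70aa}) and the hypercontractive bound of Lemma~\ref{lem:hypercontractivity-technical} used to control the cubic remainder in terms of the influences. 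The present claim is just the bookkeeping that reduces the global comparison to $n$ local ones.
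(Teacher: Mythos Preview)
Your proposal is correct and matches the paper's approach exactly: the paper's proof is simply ``By the triangle inequality,'' and you have spelled out the telescoping identity that underlies it. Your additional commentary on what comes next (Taylor expansion, moment matching, hypercontractivity) is also accurate but not part of this claim's proof.
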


\begin{proof}
By the triangle inequality.
\end{proof}

Due to Claim 
\ref{cl:invariance-triangle-inequality}, we will estimate
\begin{align*}
\left| \EE \left[
	\Psi(\overline{P}(\overline{\underline{\mathcal{U}}}_{(i-1)}))
	- \Psi(\overline{P}(\overline{\underline{\mathcal{U}}}_{(i)}))
	\right] \right| 
\end{align*}
for every $i \in [n]$. Fix $i \in [n]$ and write
$\underline{\mathcal{T}}^{(j)} := \underline{\mathcal{U}}_{(i-1)}^{(j)}$ and
$\underline{\mathcal{U}}^{(j)} := \underline{\mathcal{U}}_{(i)}^{(j)}$ for 
readability.
For $j \in [\ell]$ we can write
\begin{align}
\label{eq:62a}
  P^{(j)}(\underline{\mathcal{T}}^{(j)}) =
  A^{(j)} + \sum_{k>0} \mathcal{X}_{i,k}^{(j)} \cdot B^{(j)}_k =
  A^{(j)} + P_i^{(j)}(\underline{\mathcal{T}}^{(j)}) \; ,
\end{align}
where $A^{(j)}$ and $B^{(j)}_k$ do not depend on the coordinate $i$
and, if $P^{(j)}(\underline{\mathcal{T}}^{(j)}) = 
\sum_\sigma \alpha(\sigma) \mathcal{T}_\sigma^{(j)}$,
then $P_i^{(j)}(\underline{\mathcal{T}}^{(j)}) = \sum_{\sigma: i \in \supp(\sigma)} 
\alpha(\sigma) \mathcal{T}_\sigma^{(j)}$.
At the same time, since $A^{(j)}$ and $B_k^{(j)}$ do not depend on the 
$i$-th coordinate,
\begin{align*}
  P^{(j)}(\underline{\mathcal{U}}^{(j)}) =
  A^{(j)} + \sum_{k>0} \mathcal{G}_{i,k}^{(j)} \cdot B^{(j)}_k =
  A^{(j)} + P_i^{(j)}(\underline{\mathcal{U}}^{(j)})
  \; .
\end{align*}

We note for later use that the construction gives us
\begin{align}\label{eq:36a}
\deg(P_i^{(j)}) &\le d\\
\EE\left[ \left( P_i^{(j)}\right)^2\right] &= \Inf_i\left(P^{(j)}\right)\;.
\label{eq:37a}
\end{align}

\medskip

The rest of the proof proceeds as follows:
we calculate the multivariate second order Taylor expansion (i.e., with 
the third-degree rest) of the expression, getting
\begin{IEEEeqnarray*}{l}
\Psi(\overline{P}(\overline{\underline{\mathcal{T}}}))
- \Psi(\overline{P}(\overline{\underline{\mathcal{U}}})) =
\\ \qquad =
\Psi\left(A^{(1)} + \sum_{k>0} \mathcal{X}_{i,k}^{(1)} B_{k}^{(1)}, \ldots,
A^{(\ell)} + \sum_{k>0} \mathcal{X}_{i,k}^{(\ell)} B_{k}^{(\ell)}\right) \\
\qquad \quad - \: \Psi\left(A^{(1)} + \sum_{k>0} 
\mathcal{G}_{i,k}^{(1)} B_{k}^{(1)}, \ldots,
A^{(\ell)} + \sum_{k>0} \mathcal{G}_{i,k}^{(\ell)} B_{k}^{(\ell)}\right)
\end{IEEEeqnarray*}
around the point $\overline{A} := (A^{(1)}, \ldots, A^{(\ell)})$.
We will see that:
\begin{itemize}
  \item All the terms up to the second degree cancel in expectation 
    due to the properties of ensemble sequences.
  \item The remainder, which is of the third degree,
    can be bounded using that $\Psi$ is $B$-smooth,
    properties of $P_{i}^{(j)}$, and
    hypercontractivity, in particular
    Lemma \ref{lem:hypercontractivity-technical}.
\end{itemize}

We proceed with a detailed description.
The first result we will need is multivariate Taylor's theorem
for $B$-smooth functions:
\begin{theorem}
\label{thm:taylor}
Let $\Psi: \mathbb{R}^\ell \to \mathbb{R}$ be a $B$-smooth function
and let 
$\overline{x} = (x^{(1)}, \ldots, x^{(\ell)}), 
\overline{\epsilon} = (\epsilon^{(1)}, \ldots, \epsilon^{(\ell)})
\in \mathbb{R}^\ell$.
Then,
\begin{IEEEeqnarray*}{l}
\Bigg| 
\Psi \left( x^{(1)}+\epsilon^{(1)}, \ldots, x^{(\ell)}+\epsilon^{(\ell)} \right)
- \\
\quad \left( \Psi(\overline{x}) + 
\sum_{j \in [\ell]} \epsilon^{(j)} 
\frac{\partial}{\partial x^{(j)}}\Psi(\overline{x})
+ \frac12 \sum_{j_1,j_2 \in [\ell]} \epsilon^{(j_1)} \epsilon^{(j_2)}
\frac{\partial^2}{\partial x^{(j_1)} \partial x^{(j_2)}} \Psi(\overline{x})
\right) \Bigg|
\\ \qquad \le \frac{B}{6} \sum_{j_1, j_2, j_3 \in [\ell]} \left|
\epsilon^{(j_1)} \epsilon^{(j_2)} \epsilon^{(j_3)} \right| \; .
\end{IEEEeqnarray*}
\end{theorem}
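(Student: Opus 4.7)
The plan is to reduce to the one-variable Taylor theorem by restricting $\Psi$ to the line segment between $\overline{x}$ and $\overline{x}+\overline{\epsilon}$. Define $g:[0,1]\to\mathbb{R}$ by $g(t) := \Psi(\overline{x}+t\overline{\epsilon})$. Since $\Psi \in \mathcal{C}^3(\mathbb{R}^\ell)$ and the map $t \mapsto \overline{x}+t\overline{\epsilon}$ is affine, $g \in \mathcal{C}^3([0,1])$. Then I invoke the standard univariate Taylor expansion with Lagrange remainder: there exists $\xi \in (0,1)$ with
\begin{align*}
g(1) \;=\; g(0) + g'(0) + \tfrac{1}{2}g''(0) + \tfrac{1}{6}g'''(\xi),
\end{align*}
so the quantity inside the absolute value in the theorem equals $\tfrac{1}{6}g'''(\xi)$, and it suffices to bound $|g'''(t)|$ uniformly on $[0,1]$.

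Next I would use the chain rule iteratively to express the derivatives of $g$ in terms of partial derivatives of $\Psi$. A direct induction gives, for each $k \in \{1,2,3\}$,
\begin{align*}
g^{(k)}(t) \;=\; \sum_{j_1,\ldots,j_k \in [\ell]} \epsilon^{(j_1)}\cdots\epsilon^{(j_k)} \cdot \frac{\partial^k \Psi}{\partial x^{(j_1)}\cdots \partial x^{(j_k)}}(\overline{x}+t\overline{\epsilon}).
\end{align*}
Evaluating at $t=0$ for $k=1,2$ reproduces exactly the first- and second-order terms written in the theorem statement, so after substitution into the univariate Taylor identity those terms cancel, leaving only the remainder $\tfrac{1}{6}g'''(\xi)$.

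Finally, I would apply the $B$-smoothness hypothesis pointwise to bound the remainder. Since $\left| \partial^3 \Psi / (\partial x^{(j_1)} \partial x^{(j_2)} \partial x^{(j_3)}) \right| \le B$ uniformly on $\mathbb{R}^\ell$, the chain-rule formula for $k=3$ gives, for every $t \in [0,1]$,
\begin{align*}
|g'''(t)| \;\le\; B \sum_{j_1,j_2,j_3 \in [\ell]} \bigl|\epsilon^{(j_1)}\epsilon^{(j_2)}\epsilon^{(j_3)}\bigr|.
\end{align*}
In particular this holds at $t=\xi$, and dividing by $6$ yields the inequality stated in the theorem.

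There is no substantive obstacle here; the only care needed is bookkeeping in the chain-rule expansion (the $\ell^k$ ordered tuples appear directly, with no combinatorial factors, because we are not grouping equal partials — Clairaut's theorem guarantees the mixed partials commute, but we sum over all ordered triples, which is exactly the form the statement requires). Everything else is classical.
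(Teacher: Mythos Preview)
Your proof is correct and is the standard reduction to the one-variable Taylor theorem via $g(t)=\Psi(\overline{x}+t\overline{\epsilon})$; the paper simply states ``We omit the proof of Theorem~\ref{thm:taylor}'', so there is nothing to compare against.
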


We omit the proof of Theorem \ref{thm:taylor}.

\begin{lemma}
Fix $i \in [n]$ and write 
$\underline{\mathcal{T}}^{(j)} := \underline{\mathcal{U}}^{(j)}_{(i-1)}$
and $\underline{\mathcal{U}}^{(j)} := \underline{\mathcal{U}}^{(j)}_{(i)}$.
Then,
\begin{IEEEeqnarray}{l}
\EE\left[ \Psi(\overline{P}(\overline{\underline{\mathcal{T}}})) \right]
= \nonumber  \\
= \EE \left[
\Psi(\overline{A}) + \frac12 \sum_{j_1, j_2 \in [\ell]} \left(
\sum_{k_1, k_2 > 0} \mathcal{X}_{i,k_1}^{(j_1)} \mathcal{X}_{i,k_2}^{(j_2)}
B_{k_1}^{(j_1)} B_{k_2}^{(j_2)}
\frac{\partial^2}{\partial A^{(j_1)} \partial A^{(j_2)}} \Psi(\overline{A}) 
\right)
+ R_{\underline{\mathcal{T}}} \right] 
\; , \label{eq:60a} \IEEEeqnarraynumspace
\end{IEEEeqnarray}
and
\begin{IEEEeqnarray}{l}
\EE\left[ \Psi(\overline{P}(\overline{\underline{\mathcal{U}}})) \right]
= \nonumber  \\
= \EE \left[
\Psi(\overline{A}) + \frac12 \sum_{j_1, j_2 \in [\ell]} \left(
\sum_{k_1, k_2 > 0} \mathcal{G}_{i,k_1}^{(j_1)} \mathcal{G}_{i,k_2}^{(j_2)}
B_{k_1}^{(j_1)} B_{k_2}^{(j_2)}
\frac{\partial^2}{\partial A^{(j_1)} \partial A^{(j_2)}} \Psi(\overline{A}) 
\right)
+ R_{\underline{\mathcal{U}}} \right] 
\; , \label{eq:61a} \IEEEeqnarraynumspace
\end{IEEEeqnarray}
where random variables $R_{\underline{\mathcal{T}}}$ and $R_{\underline{\mathcal{U}}}$ 
are such that
\begin{IEEEeqnarray}{l}
\EE\left[
\left| R_{\underline{\mathcal{T}}} \right|
\right],
\EE\left[
\left| R_{\underline{\mathcal{U}}} \right|
\right] 
\le \frac{\ell^{3/2} B}{6} \left(\frac{8}{\sqrt{\alpha}}\right)^d
\left( \sum_{j=1}^\ell \Inf_i(P^{(j)}) \right)^{3/2} \; .
\label{eq:65a}
\end{IEEEeqnarray}
\end{lemma}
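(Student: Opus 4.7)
The plan is to Taylor-expand $\Psi$ around the point $\overline{A} = (A^{(1)},\ldots,A^{(\ell)})$, identify the surviving moments, and bound the cubic remainder via hypercontractivity. Concretely, write $\epsilon^{(j)} := \sum_{k>0}\mathcal{X}_{i,k}^{(j)}B_k^{(j)} = P_i^{(j)}(\underline{\mathcal{T}}^{(j)})$, so that $\Psi(\overline{P}(\overline{\underline{\mathcal{T}}})) = \Psi(\overline{A}+\overline{\epsilon})$. Applying Theorem~\ref{thm:taylor} gives a decomposition
\[
\Psi(\overline{A}+\overline{\epsilon}) = \Psi(\overline{A}) + L_{\underline{\mathcal{T}}} + Q_{\underline{\mathcal{T}}} + R_{\underline{\mathcal{T}}},
\]
where $L_{\underline{\mathcal{T}}}$, $Q_{\underline{\mathcal{T}}}$ are the first- and second-order terms and $|R_{\underline{\mathcal{T}}}| \le \tfrac{B}{6}\sum_{j_1,j_2,j_3}|\epsilon^{(j_1)}\epsilon^{(j_2)}\epsilon^{(j_3)}|$. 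The second-order term is exactly the one appearing in~(\ref{eq:60a}), so it suffices to show $\EE[L_{\underline{\mathcal{T}}}] = 0$ and to control $\EE[|R_{\underline{\mathcal{T}}}|]$ by~(\ref{eq:65a}); the vanishing linear part can then be absorbed into the displayed remainder.

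The first step, $\EE[L_{\underline{\mathcal{T}}}] = 0$, follows from conditioning on all coordinates other than $i$. Since $A^{(j)}, B_k^{(j)}$ and $\partial_{A^{(j)}}\Psi(\overline{A})$ are functions only of $\overline{\underline{\mathcal{T}}}_{\setminus i}$, and since each $\mathcal{X}_i^{(j)}$ is independent of $\overline{\underline{\mathcal{T}}}_{\setminus i}$ (coordinates are independent across $i$), each summand reduces to a deterministic multiple of $\EE[\mathcal{X}_{i,k}^{(j)}]$ for some $k>0$, which equals $0$ by the orthonormal-ensemble property. Exactly the same reasoning kills the linear term for $\underline{\mathcal{U}}$ using $\EE[\mathcal{G}_{i,k}^{(j)}] = 0$.

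For the remainder bound, I would invoke Lemma~\ref{lem:hypercontractivity-technical} coordinate-wise. The ensemble sequence $\underline{\mathcal{T}}^{(j)}$ is $\underline{X}^{(j)}$-Gaussian-mixed by construction, and by (\ref{eq:36a})--(\ref{eq:37a}) each $P_i^{(j)}$ has degree at most $d$ with $\EE[(P_i^{(j)})^2] = \Inf_i(P^{(j)})$. Thus, for every triple $(j_1,j_2,j_3)$,
\[
\EE\!\left[|\epsilon^{(j_1)}\epsilon^{(j_2)}\epsilon^{(j_3)}|\right] \le \left(\tfrac{8}{\sqrt{\alpha}}\right)^{d}\prod_{k=1}^3\sqrt{\Inf_i(P^{(j_k)})}.
\]
Summing over $j_1,j_2,j_3 \in [\ell]$ factors the right-hand side as $\big(\sum_{j}\sqrt{\Inf_i(P^{(j)})}\big)^3$, and Cauchy--Schwarz bounds this by $\ell^{3/2}\big(\sum_j \Inf_i(P^{(j)})\big)^{3/2}$. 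Multiplying by $B/6$ yields exactly~(\ref{eq:65a}). The $\underline{\mathcal{U}}$ case is verbatim the same, since $\underline{\mathcal{U}}^{(j)}$ is also $\underline{X}^{(j)}$-Gaussian-mixed and the polynomials $P_i^{(j)}$ have the same second moment in both ensemble sequences.

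The only mildly delicate point is making sure that the cross-step correlations at coordinate $i$ (which are genuine, since the tuples $\overline{\mathcal{X}}_i$ are typically not independent across $j$) never enter the argument in a problematic way. They do not, because the linear-term cancellation only uses $\EE[\mathcal{X}_{i,k}^{(j)}]=0$ for a single step at a time (after conditioning on coordinates $\neq i$, which decouples $A^{(j)}, B_k^{(j)}$ from coordinate $i$), and the hypercontractivity step is applied to each polynomial $P_i^{(j_k)}$ separately under Lemma~\ref{lem:hypercontractivity-technical}, which already accommodates arbitrary correlations between the three marginals.
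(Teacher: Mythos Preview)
Your proposal is correct and follows essentially the same approach as the paper: Taylor-expand via Theorem~\ref{thm:taylor}, kill the linear term using $\EE[\mathcal{X}_{i,k}^{(j)}]=0$ together with independence of coordinate $i$ from the rest, and bound the cubic remainder via Lemma~\ref{lem:hypercontractivity-technical} combined with Cauchy--Schwarz to collapse the triple sum. The paper phrases the final Cauchy--Schwarz step slightly differently (applying it directly to the vector $\nu(j_1,j_2,j_3)=\sqrt{\prod_k \Inf_i(P^{(j_k)})}$ rather than factoring first), but the two computations are equivalent and yield the same $\ell^{3/2}$ factor.
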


\begin{proof}
We show only (\ref{eq:60a}) and the bound on $\EE[|R_{\underline{\mathcal{T}}}|]$,
the proofs for the ensemble sequence $\underline{\mathcal{U}}$ being
analogous.

As a preliminary remark,
note that since all the random ensembles we are dealing with are hypercontractive,
and since $\Psi$ is $B$-smooth,
all the terms in the expressions above have finite expectations.

Keeping in mind both decompositions from (\ref{eq:62a}), 
by Theorem \ref{thm:taylor}
\begin{IEEEeqnarray}{l}
\Psi(\overline{P}(\overline{\underline{\mathcal{T}}}))
=
\Psi(\overline{A}) + 
\sum_{j \in [\ell]} \left(
\sum_{k>0} \mathcal{X}_{i,k}^{(j)} B_k^{(j)}
\frac{\partial}{\partial A^{(j)}} \Psi(\overline{A})
\right) + \nonumber \\
\qquad + \: \frac12 \sum_{j_1, j_2 \in [\ell]} \left(
\sum_{k_1,k_2>0} \mathcal{X}_{i,k_1}^{(j_1)} \mathcal{X}_{i,k_2}^{(j_2)} 
B_{k_1}^{(j_1)} B_{k_2}^{(j_2)} 
\frac{\partial^2}{\partial A^{(j_1)} \partial A^{(j_2)}} \Psi(\overline{A})
\right)
+ R_{\underline{\mathcal{T}}} \; ,
\IEEEeqnarraynumspace \label{eq:63a}
\end{IEEEeqnarray}
where
\begin{IEEEeqnarray}{l}
\label{eq:64a}
\EE[ | R_{\underline{\mathcal{T}}} | ] \le \frac{B}{6}
\sum_{j_1, j_2, j_3 \in [\ell]} \EE\left[ \left|
\prod_{k=1}^3 P_i^{(j_k)}(\underline{\mathcal{T}}^{(j_k)})
\right| \right] \; .
\end{IEEEeqnarray}
Since $\EE[\mathcal{X}_{i,k}^{(j)}] = 0$, and all other terms
are independent of coordinate $i$, we have
\begin{align*}
\EE \left[ \sum_{j \in [\ell]}
\sum_{k>0} \mathcal{X}_{i,k}^{(j)} B_k^{(j)}
\frac{\partial}{\partial A^{(j)}} \Psi(\overline{A})
\right] = 0 \; ,
\end{align*}
which together with (\ref{eq:63a}) yields (\ref{eq:60a}).

As for the bound on $\EE[|R_{\underline{\mathcal{T}}}|]$, since
$T^{(j)}$ is $\underline{X}^{(j)}$-Gaussian-mixed ensemble sequence,
due to (\ref{eq:64a}), Lemma \ref{lem:hypercontractivity-technical}
(note that the degree is bounded due to (\ref{eq:36a})), and~(\ref{eq:37a}),
\begin{IEEEeqnarray*}{rCl}
\EE[|R_{\underline{\mathcal{T}}}|]
&\le&
\frac{B}{6} \left(\frac{8}{\sqrt{\alpha}}\right)^d
\sum_{j_1,j_2,j_3 \in [\ell]} \sqrt{
\prod_{k=1}^3 \EE\left[\left(P_i^{(j_k)}\right)^2\right]} \\
&=&
\frac{B}{6} \left(\frac{8}{\sqrt{\alpha}}\right)^d
\sum_{j_1,j_2,j_3 \in [\ell]} \sqrt{
\prod_{k=1}^3 \Inf_i(P^{(j_k)})} \\
&\le&
\frac{\ell^{3/2} B}{6} \left(\frac{8}{\sqrt{\alpha}}\right)^d
\left( \sum_{j=1}^\ell \Inf_i(P^{(j)}) \right)^{3/2} \; ,
\end{IEEEeqnarray*}
where the last inequality uses
$\sum_{j_1,j_2,j_3} \nu(j_1,j_2,j_3) \leq\allowbreak 
\sqrt{\ell^3} \sqrt{\sum \nu^2(j_1,j_2,j_3)}$
for the vector $\nu$ with entries 
$\nu(j_1,j_2,j_3) = \sqrt{\prod_{k=1}^3 \Inf_i(P^{(j_k)})}$.
\end{proof}

\begin{lemma}
\label{lem:invariance-single-diff}
Fix $i \in [n]$ and write 
$\underline{\mathcal{T}}^{(j)} := \underline{\mathcal{U}}^{(j)}_{(i-1)}$
and $\underline{\mathcal{U}}^{(j)} := \underline{\mathcal{U}}^{(j)}_{(i)}$.
Then,
\begin{align*}
\left| \EE\left[
\Psi(\overline{P}(\overline{\underline{\mathcal{T}}}))
- \Psi(\overline{P}(\overline{\underline{\mathcal{U}}}))
\right] \right| \le
\frac{\ell^{3/2} B}{3} \left(\frac{8}{\sqrt{\alpha}}\right)^d
\left( \sum_{j=1}^\ell \Inf_i(P^{(j)}) \right)^{3/2} \; .
\end{align*}
\end{lemma}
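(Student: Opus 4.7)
The plan is to subtract the two Taylor expansions (\ref{eq:60a}) and (\ref{eq:61a}) obtained in the preceding lemma, show that all terms through second order cancel in expectation, and bound the remaining difference by the triangle inequality using the remainder estimate (\ref{eq:65a}).

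The constant $\Psi(\overline{A})$ appears identically in both expansions, so it cancels trivially. For the second-order term I would argue as follows. Recall from (\ref{eq:62a}) that $A^{(j)}$ and $B^{(j)}_k$ depend only on the ensemble values at coordinates different from $i$, and so does $\frac{\partial^2}{\partial A^{(j_1)} \partial A^{(j_2)}} \Psi(\overline{A})$. Since $\overline{\underline{\mathcal{T}}}$ and $\overline{\underline{\mathcal{U}}}$ agree off of coordinate $i$ and coordinates are jointly independent across $i$, these three factors are independent of both $\mathcal{X}_{i,k_1}^{(j_1)}\mathcal{X}_{i,k_2}^{(j_2)}$ and $\mathcal{G}_{i,k_1}^{(j_1)}\mathcal{G}_{i,k_2}^{(j_2)}$. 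Therefore a typical second-order summand in (\ref{eq:60a}) factors as
\begin{align*}
\EE\!\left[B_{k_1}^{(j_1)} B_{k_2}^{(j_2)} \frac{\partial^2}{\partial A^{(j_1)} \partial A^{(j_2)}} \Psi(\overline{A})\right] \cdot \EE\!\left[\mathcal{X}_{i,k_1}^{(j_1)} \mathcal{X}_{i,k_2}^{(j_2)}\right],
\end{align*}
and analogously in (\ref{eq:61a}) with $\mathcal{G}$'s. Since $\mathcal{X}_{i,k}^{(j)}$ and $\mathcal{G}_{i,k}^{(j)}$ both have mean zero for $k>0$, the second factor equals the corresponding covariance, and by the matching covariance structure guaranteed by Corollary~\ref{cor:cov-xg} one has $\EE[\mathcal{X}_{i,k_1}^{(j_1)} \mathcal{X}_{i,k_2}^{(j_2)}] = \EE[\mathcal{G}_{i,k_1}^{(j_1)} \mathcal{G}_{i,k_2}^{(j_2)}]$. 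Hence the second-order contributions to both expansions coincide termwise.

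What remains is the difference $\EE[R_{\underline{\mathcal{T}}}] - \EE[R_{\underline{\mathcal{U}}}]$, which by the triangle inequality and the bound (\ref{eq:65a}) is at most
\begin{align*}
\EE\!\left[\lvert R_{\underline{\mathcal{T}}} \rvert\right] + \EE\!\left[\lvert R_{\underline{\mathcal{U}}} \rvert\right] \;\le\; 2 \cdot \frac{\ell^{3/2} B}{6}\left(\frac{8}{\sqrt{\alpha}}\right)^{d}\left(\sum_{j=1}^{\ell} \Inf_i\!\left(P^{(j)}\right)\right)^{3/2},
\end{align*}
which is exactly the claimed bound.

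There is no real obstacle at this step. All the substantive work---the Taylor expansion, the hypercontractive control of the third-order remainder via Lemma~\ref{lem:hypercontractivity-technical}, and the bookkeeping of influences---was carried out in the preceding lemma. The only thing to verify here is the covariance-matching cancellation of the zeroth- and second-order terms, which is precisely the property that Corollary~\ref{cor:cov-xg} was designed to supply; the rest is a straightforward triangle inequality.
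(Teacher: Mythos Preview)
Your proof is correct and follows essentially the same approach as the paper: subtract the two Taylor expansions (\ref{eq:60a}) and (\ref{eq:61a}), use independence across coordinates together with the covariance-matching property (the paper cites Lemma~\ref{lem:cov-xg}, you cite its immediate Corollary~\ref{cor:cov-xg}, which is equivalent here) to cancel the second-order terms, and bound the remaining difference by $\EE[|R_{\underline{\mathcal{T}}}|]+\EE[|R_{\underline{\mathcal{U}}}|]$ via (\ref{eq:65a}).
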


\begin{proof}
First, we need to show that the second-order terms in 
(\ref{eq:60a}) and (\ref{eq:61a}) cancel out.
Since
by Lemma \ref{lem:cov-xg}
for every $j_1, j_2 \in [\ell]$ and $k_1, k_2 > 0$:
\begin{align*}
\EE\left[ \mathcal{X}_{i,k_1}^{(j_1)} \mathcal{X}_{i,k_2}^{(j_2)} \right]
= \Cov\left[ \mathcal{X}_{i,k_1}^{(j_1)}, \mathcal{X}_{i,k_2}^{(j_2)} \right]
= \Cov\left[ \mathcal{G}_{i,k_1}^{(j_1)}, \mathcal{G}_{i,k_2}^{(j_2)} \right]
= \EE\left[ \mathcal{G}_{i,k_1}^{(j_1)} \mathcal{G}_{i,k_2}^{(j_2)} \right] \; ,
\end{align*}
and since all the other terms are independent of coordinate $i$, we have
\begin{IEEEeqnarray*}{l}
\EE \left[ \sum_{j_1, j_2 \in [\ell]}
\sum_{k_1,k_2>0} \mathcal{X}_{i,k_1}^{(j_1)} \mathcal{X}_{i,k_2}^{(j_2)} 
B_{k_1}^{(j_1)} B_{k_2}^{(j_2)} 
\frac{\partial^2}{\partial A^{(j_1)} \partial A^{(j_2)}} \Psi(\overline{A})
\right] \\
\qquad =
\EE \left[ \sum_{j_1, j_2 \in [\ell]}
\sum_{k_1,k_2>0} \mathcal{G}_{i,k_1}^{(j_1)} \mathcal{G}_{i,k_2}^{(j_2)} 
B_{k_1}^{(j_1)} B_{k_2}^{(j_2)} 
\frac{\partial^2}{\partial A^{(j_1)} \partial A^{(j_2)}} \Psi(\overline{A})
\right] \; .
\end{IEEEeqnarray*}
Therefore, by (\ref{eq:60a}), (\ref{eq:61a}) and (\ref{eq:65a}),
\begin{IEEEeqnarray*}{rCl}
\left| \EE\left[
\Psi(\overline{P}(\overline{\underline{\mathcal{T}}}))
- \Psi(\overline{P}(\overline{\underline{\mathcal{U}}}))
\right] \right| &\le&
\EE[|R_{\underline{\mathcal{T}}}|] + \EE[|R_{\underline{\mathcal{U}}}|]\\
&\le&
\frac{\ell^{3/2} B}{3} \left(\frac{8}{\sqrt{\alpha}}\right)^d
\left( \sum_{j=1}^\ell \Inf_i(P^{(j)}) \right)^{3/2} \; ,
\end{IEEEeqnarray*}
as claimed.
\end{proof}

\begin{proof}[Proof of Theorem \ref{thm:invariance-main}]
Recall that $\sum_{j=1}^\ell \Inf_i(P^{(j)}) \le \tau$
and that $\Var[P^{(j)}] \le 1$.
By Claim \ref{cl:invariance-triangle-inequality},
Lemma \ref{lem:invariance-single-diff}
and Claim \ref{lem:influence-vs-variance},
\begin{IEEEeqnarray*}{l}
\left| \EE \left[
	\Psi(\overline{P}(\overline{\underline{\mathcal{X}}}))
	- \Psi(\overline{P}(\overline{\underline{\mathcal{G}}}))
	\right] \right|
  \le
\sum_{i=1}^n \left| \EE \left[
	\Psi(\overline{P}(\overline{\underline{\mathcal{U}}}_{(i-1)}))
	- \Psi(\overline{P}(\overline{\underline{\mathcal{U}}}_{(i)}))
	\right] \right|
\\ \qquad \le \frac{\ell^{3/2}B}{3} \left(\frac{8}{\sqrt{\alpha}}\right)^d
\sum_{i=1}^n \left( \sum_{j=1}^\ell \Inf_i(P^{(j)})  \right)^{3/2}
\\ \qquad \le \frac{\ell^{3/2}B}{3} \left(\frac{8}{\sqrt{\alpha}}\right)^d
\sqrt{\tau} \sum_{i=1}^n \sum_{j=1}^\ell \Inf_i(P^{(j)})
\\ \qquad = \frac{\ell^{3/2}B}{3} \left(\frac{8}{\sqrt{\alpha}}\right)^d
\sqrt{\tau} \sum_{j=1}^\ell \Inf(P^{(j)})
\le \frac{\ell^{5/2} d B}{3} \left(\frac{8}{\sqrt{\alpha}}\right)^d \sqrt{\tau} \; .
\end{IEEEeqnarray*}
\end{proof}

\subsection{A tailored application of invariance principle}
\begin{definition}
\label{def:xi}
Define $\phi: \mathbb{R} \to \mathbb{R}$ as
\begin{align*}
  \phi(x) := \begin{cases}
    0 & \text{if $x \le 0$,}\\
    x & \text{if $x \in (0, 1)$,}\\
    1 & \text{if $x \ge 1$,}
  \end{cases}
\end{align*}
and $\chi: \mathbb{R}^\ell \to \mathbb{R}$
as $\chi(\overline{x}) := \prod_{j=1}^{\ell} \phi(x^{(j)})$.
\end{definition}

\begin{definition}
\label{def:gamma-smooth}
Let $P$ be a multilinear polynomial and $\gamma \in [0, 1]$.
We say that $P$ is \emph{$\gamma$-decaying} if
for each $d \in \mathbb{N}$ we have
\begin{align*}
	\EE \left[\left(P^{\ge d} \right)^2 \right] \le (1-\gamma)^d \; .
\end{align*}

We also say that a tuple of multilinear polynomials
$\overline{P} = (P^{(1)}, \ldots, P^{(\ell)})$ is $\gamma$-decaying
if $P^{(j)}$ is $\gamma$-decaying for each $j \in [\ell]$.
\end{definition}

Note that if a multilinear polynomial 
$P$ is $\gamma$-decaying, then, in particular,
$\Var[P] \le \EE[ P^2 ] \le 1$.

Our goal in this section is to prove a version of invariance principle
for $\gamma$-decaying multilinear polynomials and the function $\chi$:
\begin{theorem}
\label{thm:invariance-smoothed}
Let $(\overline{\underline{X}}, 
\overline{\underline{\mathcal{X}}},
\overline{\underline{\mathcal{G}}})$ be an ensemble collection
for a probability space 
$(\overline{\underline{\Omega}}, \underline{\mathcal{P}})$
with $\min_{x \in \Omega} \pi(x) \ge \alpha$, $\alpha \in (0, 1/2]$.

Let $\overline{P} = (P^{(1)}, \ldots, P^{(\ell)})$ be such that $P^{(j)}$
is a multilinear polynomial compatible with the ensemble sequence
$\underline{\mathcal{X}}^{(j)}$.

Let $\gamma \in [0,1]$, $\tau \in (0, 1]$ and assume that $\overline{P}$ is
$\gamma$-decaying and that $\sum_{j=1}^\ell \Inf_i(P^{(j)}) \le \tau$ for each
$i \in [n]$. There exists an absolute constant $C \ge 0$ such that
\begin{align*}
  \left| \EE \left[ 
  \chi(\overline{P}(\underline{\overline{\mathcal{X}}}))
  - \chi(\overline{P}(\underline{\overline{\mathcal{G}}}))
  \right] \right| \le
  C \ell^{5/2} \cdot \tau^{\frac{\gamma}{C \ln 1/ \alpha}} \; .
\end{align*}
\end{theorem}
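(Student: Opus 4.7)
The plan is to reduce to Theorem~\ref{thm:invariance-main} by (a) truncating each $P^{(j)}$ to degree $d$ and (b) replacing the non-smooth $\chi$ by a $C^3$ mollification $\chi_\eta$. Balancing the two tolerance parameters $d$ and $\eta$ against $\tau$ will give the stated bound. Concretely, fix parameters $d \in \mathbb{N}$ and $\eta \in (0,1]$ to be chosen later, let $Q^{(j)} := (P^{(j)})^{\le d}$, and let $\chi_\eta$ denote the convolution of $\chi$ with a smooth bump of width $\eta$. Standard facts are that $\|\chi_\eta - \chi\|_\infty \le O(\eta)$, that $\chi$ is Lipschitz with a constant depending only on $\ell$, and that $\chi_\eta$ is $B$-smooth with $B \le O(\eta^{-3})$.

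The first step is to control the cost of truncation. Because $P^{(j)}$ is $\gamma$-decaying, $\EE[(P^{(j)} - Q^{(j)})^2] \le (1-\gamma)^d \le e^{-\gamma d}$, so Cauchy--Schwarz together with the Lipschitz property of $\chi$ gives
\begin{align*}
\bigl|\EE[\chi(\overline{P}(\overline{\underline{\mathcal{X}}}))] -
\EE[\chi(\overline{Q}(\overline{\underline{\mathcal{X}}}))]\bigr|
\le O(\ell)\,(1-\gamma)^{d/2},
\end{align*}
and the same bound holds on the Gaussian side (nothing in the argument uses the $[0,1]$ range, only the $L^2$ tail). Similarly $|\chi - \chi_\eta| \le O(\eta)$ contributes $O(\eta)$ on each side. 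The second step is to apply Theorem~\ref{thm:invariance-main} to the tuple $\overline{Q}$ with test function $\chi_\eta$: since $\deg(Q^{(j)}) \le d$, $\Var[Q^{(j)}] \le 1$ and $\sum_j \Inf_i(Q^{(j)}) \le \sum_j \Inf_i(P^{(j)}) \le \tau$, we obtain
\begin{align*}
\bigl|\EE[\chi_\eta(\overline{Q}(\overline{\underline{\mathcal{X}}})) -
\chi_\eta(\overline{Q}(\overline{\underline{\mathcal{G}}}))]\bigr|
\le \tfrac{\ell^{5/2} d}{3}\cdot O(\eta^{-3})\cdot (8/\sqrt{\alpha})^{d}\cdot \sqrt{\tau}.
\end{align*}

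The third step is to choose the parameters. Set $d := \lceil (2/\gamma)\ln(1/\eta)\rceil$ so that the truncation error $(1-\gamma)^{d/2} \le \eta$ matches the smoothing error. Then $(8/\sqrt{\alpha})^d \le \eta^{-A\ln(1/\alpha)/\gamma}$ for an absolute constant $A$ (using $\alpha \le 1/2$ to absorb the factor $8$). Adding the three contributions gives, up to constants depending on $\ell$ and logarithmic factors,
\begin{align*}
\bigl|\EE[\chi(\overline{P}(\overline{\underline{\mathcal{X}}})) -
\chi(\overline{P}(\overline{\underline{\mathcal{G}}}))]\bigr|
\le O(\ell^{5/2})\Bigl(\eta + \eta^{-3 - A\ln(1/\alpha)/\gamma}\,\ln(1/\eta)\,\sqrt{\tau}\,\Bigr).
\end{align*}
Finally pick $\eta := \tau^{\gamma/(C\ln(1/\alpha))}$ for a sufficiently large absolute constant $C$; this makes the exponent $\frac12 - \eta$-exponent positive and at least as large as $\gamma/(C\ln(1/\alpha))$, so both terms are bounded by $O(\ell^{5/2})\,\tau^{\gamma/(C\ln(1/\alpha))}$, as required.

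The only nontrivial obstacle is the parameter balancing: one must ensure that the smoothing blow-up $\eta^{-3}$ and the hypercontractive factor $\alpha^{-d/2}$ (which grows polynomially in $1/\eta$ via the choice of $d$) are both dominated by the gain $\sqrt{\tau}$ from invariance. The $\gamma$-decay hypothesis is exactly what makes $d$ only logarithmic in $1/\eta$, which is essential: without it the $(8/\sqrt{\alpha})^d$ factor would be uncontrolled. Everything else (mollification bounds, Lipschitz control, and the invariance application) is routine given the machinery already developed.
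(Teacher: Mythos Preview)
Your proof is correct and follows essentially the same approach as the paper: truncate to degree $d$, replace $\chi$ by a smooth mollification, apply Theorem~\ref{thm:invariance-main}, and balance the parameters. The paper organizes this slightly differently---it first proves a bounded-degree version (Theorem~\ref{thm:invariance-xi}) with the smoothing width fixed as $\lambda = \tau^{1/8}/3$, and then chooses $d = \lfloor \ln(1/\tau)/(64\ln(1/\alpha))\rfloor$ independently of $\gamma$---but after substituting your $\eta = \tau^{\gamma/(C\ln(1/\alpha))}$ into your $d = (2/\gamma)\ln(1/\eta)$ the choices coincide up to constants, and the substance is the same.
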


Two obstacles to proving Theorem \ref{thm:invariance-smoothed}
by direct application of Theorem \ref{thm:invariance-main} are:
\begin{enumerate}
  \item The function $\chi$ is not $\mathcal{C}^3$.
  \item A $\gamma$-decaying multilinear polynomial does not have bounded degree.
\end{enumerate}
We will deal with those problems in turn.

\subsubsection{Approximating
\texorpdfstring{$\chi$}{chi}
with a 
\texorpdfstring{$\mathcal{C}^3$}{C\^{}3}
function}

To apply Theorem \ref{thm:invariance-main}, we are going
to approximate $\phi$ and $\chi$
with $\mathcal{C}^3$
(in fact, $\mathcal{C}^{\infty}$) functions.

For that we need to introduce the notion of convolution and a basic calculus 
theorem, whose proof we omit (see, e.g., Chapter 9 in \cite{Rud87}):
\begin{definition}
Let $f: \mathbb{R} \to \mathbb{R}$ and $S \subseteq \mathbb{R}$. We say
that $S$ is a $\emph{support}$ of $f$ if $x \notin S$ implies $f(x) = 0$.

We say that $f$ has $\emph{compact support}$ if there exists a bounded
interval $I$ that is a support of $f$.
\end{definition}

\begin{definition}
The convolution $f\ast g$ of two continuous functions 
$f,g: \mathbb{R}\to\mathbb{R}$, at least one of which has
compact support,
is $(f \ast g)(x) := \int_{-\infty}^\infty f(x-t)g(t)\,\mathrm{d}t$.
\end{definition}

\begin{theorem}
\label{thm:convolution-differentiation}
Let functions $f, g: \mathbb{R} \to \mathbb{R}$ be such that $f$ is continuous 
on $\mathbb{R}$, $g \in \mathcal{C}^{\infty}$ and $g$ has compact support. 
Then, $(f \ast g) \in \mathcal{C}^{\infty}$.
Furthermore, for every $k \in \mathbb{N}$ and $x \in \mathbb{R}$:
\begin{align*}
\frac{\partial^k}{\partial x^k}(f \ast g)(x) = 
\left(f \ast \frac{\partial^k g}{\partial x^k} \right)(x) \; .
\end{align*}
\end{theorem}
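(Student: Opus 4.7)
The plan is to establish the formula $\frac{\partial^k}{\partial x^k}(f*g)(x) = (f * g^{(k)})(x)$ by induction on $k$, and deduce that $f*g \in \mathcal{C}^\infty$ from the fact that each such derivative is continuous. The key technical ingredients will be differentiation under the integral sign, justified via the dominated convergence theorem, combined with the compactness of $\supp(g)$, which keeps every relevant integral over a bounded $t$-interval independent of $x$ (locally in $x$).

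First I would rewrite the convolution in the equivalent form $(f*g)(x) = \int_{\mathbb{R}} f(t)\, g(x-t)\,dt$ via the change of variables $t \mapsto x-t$, so that the $x$-dependence is carried entirely by the $\mathcal{C}^\infty$ factor $g$. Pick $M$ with $\supp(g) \subseteq [-M, M]$; then for $x$ in any bounded interval $[a,b]$ the integrand, as a function of $t$, is supported in $[a-M, b+M]$, independently of $x$. For the base case $k=0$ (continuity of $f*g$), fix $x_0$ and bound
\begin{equation*}
|(f*g)(x) - (f*g)(x_0)| \le \int_{a-M}^{b+M} |f(t)|\,|g(x-t) - g(x_0-t)|\,dt;
\end{equation*}
since $g$ is uniformly continuous and $f$ is bounded on $[a-M, b+M]$, this tends to $0$ as $x \to x_0$.

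For the inductive step, assuming the formula up to $k-1$, the difference quotient $h^{-1}\bigl(g^{(k-1)}(x+h-t) - g^{(k-1)}(x-t)\bigr)$ converges pointwise to $g^{(k)}(x-t)$ as $h \to 0$; by the mean value theorem it is uniformly bounded by $\sup_{\mathbb{R}} |g^{(k)}|$, which is finite because $g^{(k)}$ is continuous and compactly supported, and its support in $t$ lies in a common bounded set for $|h| \le 1$. Hence $|f(t)|$ times this difference quotient is dominated by an integrable function, and dominated convergence lets me pass the limit inside the integral to obtain $\frac{\partial^k}{\partial x^k}(f*g)(x) = (f * g^{(k)})(x)$. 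Continuity of this $k$-th derivative then follows by applying the base case with $g^{(k)}$ in place of $g$, closing the induction.

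The only mild obstacle is bookkeeping the dominating function locally uniformly in $x$, which is routine. Since the authors already cite Rudin for this standard calculus fact and mark the proof as omitted, I would expect the published version to leave the argument to the reader or give at most a brief reminder along the lines above.
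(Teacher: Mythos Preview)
Your proposal is correct and is the standard argument one would find in the reference the paper cites; as you anticipated, the paper explicitly omits the proof of this theorem, so there is nothing further to compare.
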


We also need a special density function with support $[-1, 1]$:

\begin{theorem}
\label{thm:smooth-distribution}
There exists a function $\psi: \mathbb{R} \to \mathbb{R}_{\ge 0}$
such that all of the following hold:
\begin{itemize}
\item $\psi \in \mathcal{C}^{\infty}$.
\item $\psi$ has support $[-1, 1]$.
\item $\forall x: \psi(x) = \psi(-x)$.
\item $\int_{-\infty}^\infty \psi(x) \, \mathrm{d}x = \int_{-1}^1 \psi(x) \, \mathrm{d}x = 1$.
\end{itemize}
\end{theorem}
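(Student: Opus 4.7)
The plan is to construct $\psi$ explicitly as a normalized, classical bump function and then verify the four listed properties directly. Specifically, I would define
\begin{align*}
\psi_0(x) := \begin{cases} \exp\!\left(-\dfrac{1}{1-x^2}\right) & \text{if } |x| < 1, \\ 0 & \text{if } |x| \ge 1, \end{cases}
\end{align*}
and then set $\psi(x) := \psi_0(x) / \int_{-1}^{1} \psi_0(t)\,\mathrm{d}t$. The normalizing integral is a positive finite number because $\psi_0$ is continuous, non-negative, and strictly positive on $(-1,1)$, so $\psi$ is well-defined.

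The first step is to check the three easy properties. Non-negativity is immediate since $\exp(\cdot) > 0$. The support condition holds because $\psi_0$ vanishes outside $(-1,1)$ by definition (and $\psi_0(\pm 1) = 0$ by the convention above). Evenness follows because $1 - x^2 = 1 - (-x)^2$. Finally, the normalization $\int_{-\infty}^{\infty} \psi\,\mathrm{d}x = 1$ holds by construction.

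The only step requiring real work is smoothness. On the open set $|x| < 1$, $\psi_0$ is a composition of smooth functions, hence $\mathcal{C}^\infty$ there; on $|x| > 1$ it is identically $0$, hence trivially $\mathcal{C}^\infty$. The delicate points are $x = \pm 1$, where I need every one-sided derivative from inside $(-1,1)$ to agree with the zero derivatives from outside. This is the classical claim that for $g(t) = e^{-1/t}$ extended by $g(0) := 0$, every derivative $g^{(k)}(0)$ exists and equals $0$. I would prove this by induction: show that for $t > 0$ each derivative $g^{(k)}(t)$ has the form $p_k(1/t)\,e^{-1/t}$ for some polynomial $p_k$, then use $\lim_{t \to 0^+} p_k(1/t)\,e^{-1/t} = 0$ (a polynomial in $1/t$ is dominated by $e^{1/t}$), and conclude inductively from the definition of the derivative that $g^{(k+1)}(0) = 0$. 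Pulling back through $t = 1 - x^2$ (a smooth change of variable near $x = \pm 1$) and using the chain rule transfers this to $\psi_0$, giving $\psi_0 \in \mathcal{C}^\infty(\mathbb{R})$, and therefore $\psi \in \mathcal{C}^\infty(\mathbb{R})$.

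The main obstacle, as just described, is the smoothness verification at the boundary points $x = \pm 1$; everything else is essentially bookkeeping. Since this is a standard fact in real analysis (the existence of $\mathcal{C}^\infty$ bump functions), in the actual writeup I would likely cite a reference such as Rudin \cite{Rud87} rather than reproduce the inductive argument in full detail.
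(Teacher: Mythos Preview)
Your proposal is correct and takes essentially the same approach as the paper: both construct an explicit bump function and normalize it, the only difference being the choice of formula (the paper uses $\exp(-\tfrac{1}{(x+1)^2})\exp(-\tfrac{1}{(x-1)^2})$ rather than $\exp(-\tfrac{1}{1-x^2})$). Your writeup is in fact more detailed than the paper's, which simply states the formula and normalizes without discussing smoothness at the endpoints.
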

\begin{proof}
Consider
\begin{align}
\Psi(x) := 
\begin{cases}
\exp(-\frac{1}{(x+1)^2})\cdot \exp(-\frac{1}{(x-1)^2})& 
\text{if $x \in (-1,1)$}\\
0 &\text{otherwise}
\end{cases}
\end{align}
and set $\psi(x) := \Psi(x)/c$ where $c := \int_{-1}^1 \Psi(x) \, \mathrm{d}x$.
\end{proof}

For any $\lambda > 0$ we can rescale
$\psi$ to an analogous distribution with support $[-\lambda, \lambda]$:
\begin{definition}
Let $\lambda > 0$ and define $\psi_\lambda: \mathbb{R} \to \mathbb{R}_{\ge 0}$
as $\psi_\lambda(x) := \frac{1}{\lambda} \psi\left( \frac{x}{\lambda} \right)$.
\end{definition}

It is easy to see that $\psi_\lambda$ has properties analogous to $\psi$:
\begin{claim}
Let $\lambda > 0$. $\psi_\lambda$ has the following properties:
\begin{itemize}
\item $\psi_\lambda \in \mathcal{C}^{\infty}$.
\item $\psi_\lambda$ has support $[-\lambda, \lambda]$.
\item $\forall x: \psi_\lambda(x) = \psi_\lambda(-x)$.
\item $\int_{-\infty}^\infty \psi_\lambda(x) \, \mathrm{d}x = 
\int_{-\lambda}^{\lambda} \psi_\lambda(x) \, \mathrm{d}x = 1$.
\end{itemize}
\end{claim}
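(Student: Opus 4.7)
The plan is to verify each of the four bullet points directly from the corresponding properties of $\psi$ stated in Theorem \ref{thm:smooth-distribution}, using nothing more than elementary calculus and the definition $\psi_\lambda(x) := \tfrac{1}{\lambda}\psi(x/\lambda)$.

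First I would observe that $x \mapsto x/\lambda$ is a $\mathcal{C}^\infty$ diffeomorphism of $\mathbb{R}$ and multiplication by $1/\lambda$ is linear, so $\psi_\lambda$ is a composition/scalar multiple of $\mathcal{C}^\infty$ functions and hence $\mathcal{C}^\infty$. For the support claim, I would note $\psi_\lambda(x) \neq 0 \iff \psi(x/\lambda) \neq 0$, and by the support property of $\psi$ this happens iff $x/\lambda \in [-1,1]$, i.e.\ iff $x \in [-\lambda, \lambda]$. The symmetry $\psi_\lambda(x) = \psi_\lambda(-x)$ is immediate from $\psi(x/\lambda) = \psi(-x/\lambda)$, which is the symmetry of $\psi$.

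For the normalization, I would apply the substitution $u = x/\lambda$, $\mathrm{d}u = \mathrm{d}x/\lambda$, to write
\begin{align*}
\int_{-\infty}^{\infty} \psi_\lambda(x)\,\mathrm{d}x
= \int_{-\infty}^{\infty} \tfrac{1}{\lambda}\psi(x/\lambda)\,\mathrm{d}x
= \int_{-\infty}^{\infty} \psi(u)\,\mathrm{d}u = 1,
\end{align*}
and the restriction of the integral to $[-\lambda, \lambda]$ is equivalent to the full integral because $\psi_\lambda$ vanishes outside $[-\lambda, \lambda]$ by the support property already established.

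There is no substantive obstacle here: the claim is a routine rescaling verification, and each property of $\psi_\lambda$ corresponds transparently to the matching property of $\psi$ from Theorem \ref{thm:smooth-distribution}. If anything, the only minor point worth being careful about is invoking the support property before reducing the integral in the last bullet, so the two equalities in the final bullet are justified in the correct order.
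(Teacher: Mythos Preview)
Your proposal is correct and matches the paper's approach: the paper in fact states this claim without proof, remarking only that it is easy to see $\psi_\lambda$ inherits the analogous properties from $\psi$, which is exactly the routine rescaling verification you carry out.
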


We see that convoluting $\phi$ with $\psi_\lambda$ for a small $\lambda$ 
results in a smooth function that is still very close to $\phi$:
\begin{definition}
Let $\lambda \in (0, 1/2)$ and define $\phi_\lambda: \mathbb{R} \to \mathbb{R}$ 
as $\phi_\lambda  := \phi \ast \psi_\lambda$.
\end{definition}

To start with, we state some easy to verify properties of $\phi_\lambda$:
\begin{claim}
\label{cl:phi-lambda}
Let $\lambda \in (0, 1/2)$. The function $\phi_\lambda$ has the following 
properties:
\begin{itemize}
  \item $\phi_\lambda(x) = 
    \int_{-\lambda}^{\lambda} \psi_\lambda(y) \phi(x+y) \, \mathrm{d}y$.
  \item $x \le -\lambda \lor x \in [\lambda, 1-\lambda] \lor x \ge 1+\lambda
    \implies \phi_\lambda(x) = \phi(x)$.
  \item $x \in [-\lambda, \lambda] \implies \phi_\lambda(x) \in [0, \lambda]$.
  \item $x \in [1-\lambda, 1+\lambda] \implies \phi_\lambda(x) 
    \in [1-\lambda, 1]$.
  \item $x \le y \implies \phi_\lambda(x) \le \phi_\lambda(y)$.
\end{itemize}
\end{claim}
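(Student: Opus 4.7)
The plan is to verify each of the five bullet points directly from the definition $\phi_\lambda = \phi \ast \psi_\lambda$, using the compact support and symmetry of $\psi_\lambda$ together with the explicit piecewise-affine form of $\phi$. Bullets (1), (2), and (5) follow immediately; the only subtlety is that bullets (3) and (4) are cleanest to deduce \emph{after} establishing monotonicity in (5), rather than by a direct integral estimate.

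First I would rewrite the convolution. By definition,
\begin{align*}
\phi_\lambda(x) = \int_{-\infty}^{\infty} \phi(x-t)\,\psi_\lambda(t)\,dt.
\end{align*}
Substituting $y=-t$ and using $\psi_\lambda(-y)=\psi_\lambda(y)$ together with $\supp(\psi_\lambda)=[-\lambda,\lambda]$ gives bullet (1). For bullet (2), I split into the three stated ranges. If $x\le -\lambda$ then $x+y\le 0$ on $[-\lambda,\lambda]$, so $\phi(x+y)=0$ throughout and $\phi_\lambda(x)=0=\phi(x)$. If $x\ge 1+\lambda$ then $x+y\ge 1$ on $[-\lambda,\lambda]$, so $\phi(x+y)=1$ and the normalization $\int_{-\lambda}^\lambda \psi_\lambda(y)\,dy=1$ gives $\phi_\lambda(x)=1=\phi(x)$. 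If $x\in[\lambda,1-\lambda]$ then $x+y\in[0,1]$ on $[-\lambda,\lambda]$, so $\phi(x+y)=x+y$, and
\begin{align*}
\phi_\lambda(x) \;=\; x\int_{-\lambda}^{\lambda}\psi_\lambda(y)\,dy + \int_{-\lambda}^{\lambda} y\,\psi_\lambda(y)\,dy \;=\; x,
\end{align*}
where the second integral vanishes by the symmetry $\psi_\lambda(-y)=\psi_\lambda(y)$.

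Bullet (5) is the key lemma. Since $\phi$ is nondecreasing and $\psi_\lambda\ge 0$, for $x\le y$ and every $t$ we have $\phi(x-t)\le \phi(y-t)$, so integrating against $\psi_\lambda$ yields $\phi_\lambda(x)\le \phi_\lambda(y)$. With monotonicity in hand, bullets (3) and (4) are one-liners. For (3), if $x\in[-\lambda,\lambda]$ then
\begin{align*}
0 \;=\; \phi_\lambda(-\lambda) \;\le\; \phi_\lambda(x) \;\le\; \phi_\lambda(\lambda) \;=\; \lambda,
\end{align*}
where the outer equalities come from applying bullet (2) at the endpoints $-\lambda$ and $\lambda$ (both of which lie in the ``coincidence'' ranges $\{x\le -\lambda\}$ and $[\lambda,1-\lambda]$ respectively, using $\lambda<1/2$ so that $\lambda\le 1-\lambda$). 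Bullet (4) is analogous: for $x\in[1-\lambda,1+\lambda]$,
\begin{align*}
1-\lambda \;=\; \phi_\lambda(1-\lambda) \;\le\; \phi_\lambda(x) \;\le\; \phi_\lambda(1+\lambda) \;=\; 1.
\end{align*}

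There is no real obstacle here; the only thing to watch is that one must prove bullet (5) before (3) and (4), since a direct bound on the integral over $[-\lambda,\lambda]$ would give only the weaker estimate $\phi_\lambda(x)\le 2\lambda$ on $[-\lambda,\lambda]$, whereas the sharp constant $\lambda$ is forced by sandwiching between the endpoint values and invoking monotonicity. The hypothesis $\lambda\in(0,1/2)$ is used exactly to ensure that the intervals $[-\lambda,\lambda]$, $[\lambda,1-\lambda]$, and $[1-\lambda,1+\lambda]$ are consistently ordered so that the coincidence region in bullet (2) is nonempty and contains the endpoints invoked for (3) and (4).
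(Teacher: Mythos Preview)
Your proof is correct. The paper itself omits the proof of this claim, introducing it as ``some easy to verify properties of $\phi_\lambda$,'' so there is no argument to compare against. Your observation that (5) should be established before (3) and (4) to obtain the sharp constant $\lambda$ (rather than the weaker $2\lambda$ from a crude integral bound) is a nice touch and makes the verification clean.
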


\begin{lemma}
\label{lem:phi-lambda}
Let $\lambda \in (0, 1/2)$:
\begin{enumerate}[1)]
\item $\forall x: | \phi_\lambda(x) - \phi(x) | \le \lambda$.
\item $\phi_\lambda \in \mathcal{C}_\infty$. Furthermore, for each
  $k \in \mathbb{N}$ there exists a constant $B_k \ge 0$ such that
  $\forall x: \left| \frac{\partial^k}{\partial x^k} \phi_\lambda(x) \right| \le
  \frac{B_k}{\lambda^{k}}$.
\end{enumerate}
\end{lemma}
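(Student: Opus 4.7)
Both claims are standard mollifier estimates, and the key input is the representation
\begin{align*}
\phi_\lambda(x) = \int_{-\lambda}^{\lambda} \psi_\lambda(y)\,\phi(x+y)\,dy
\end{align*}
recorded in Claim~\ref{cl:phi-lambda}, together with $\psi_\lambda \ge 0$ and $\int \psi_\lambda = 1$.

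For part (1), I would first observe that $\phi$ is $1$-Lipschitz: its classical derivative is $0$ on $(-\infty,0)\cup(1,\infty)$ and $1$ on $(0,1)$, so $|\phi(a)-\phi(b)| \le |a-b|$ for all $a,b \in \mathbb{R}$. Combined with the representation above,
\begin{align*}
|\phi_\lambda(x) - \phi(x)|
\;\le\; \int_{-\lambda}^{\lambda} \psi_\lambda(y)\,|\phi(x+y)-\phi(x)|\,dy
\;\le\; \int_{-\lambda}^{\lambda} \psi_\lambda(y)\,|y|\,dy
\;\le\; \lambda,
\end{align*}
which is exactly the claimed bound.

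For part (2), the $\mathcal{C}^\infty$ smoothness and the identity $\phi_\lambda^{(k)} = \phi \ast \psi_\lambda^{(k)}$ follow directly from Theorem~\ref{thm:convolution-differentiation} (note that $\psi_\lambda$ is $\mathcal{C}^\infty$ with compact support, so all hypotheses are met). The scaling $\psi_\lambda(x) = \lambda^{-1}\psi(x/\lambda)$ differentiates to $\psi_\lambda^{(k)}(x) = \lambda^{-(k+1)}\psi^{(k)}(x/\lambda)$. The change of variables $t = x/\lambda$ then gives
\begin{align*}
\int_{-\infty}^{\infty} |\psi_\lambda^{(k)}(x)|\,dx \;=\; \frac{1}{\lambda^{k}}\int_{-1}^{1} |\psi^{(k)}(t)|\,dt \;=:\; \frac{B_k}{\lambda^k},
\end{align*}
where $B_k$ is finite and independent of $\lambda$ because $\psi^{(k)}$ is continuous on the compact interval $[-1,1]$ and vanishes outside it. Since $0 \le \phi \le 1$ pointwise, one estimates crudely
\begin{align*}
|\phi_\lambda^{(k)}(x)|
\;=\; \left|\int \phi(x-t)\,\psi_\lambda^{(k)}(t)\,dt\right|
\;\le\; \int |\psi_\lambda^{(k)}(t)|\,dt
\;\le\; \frac{B_k}{\lambda^k},
\end{align*}
which is the claimed inequality.

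I do not anticipate any real obstacle here: the whole argument reduces to the Lipschitz property of $\phi$, the scaling identity for $\psi_\lambda$, and commutation of differentiation with convolution. The only thing worth double-checking is that the constants $B_k$ depend only on the fixed profile $\psi$ and not on $\lambda$, which is transparent from the change-of-variables step.
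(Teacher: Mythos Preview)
Your proof is correct and follows essentially the same mollifier approach as the paper. The only minor differences are stylistic: for part (1) the paper appeals to the case analysis in Claim~\ref{cl:phi-lambda}, whereas your $1$-Lipschitz argument is cleaner; for part (2) the paper bounds via the sup-norm of $\psi^{(k)}$ (together with a separate trivial case when $x\notin[-\lambda,1+\lambda]$), while you use the $L^1$-norm of $\psi_\lambda^{(k)}$, which avoids the case split.
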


\begin{proof}
\begin{enumerate}[1)]
\item From Claim \ref{cl:phi-lambda}.
\item Since $\phi_\lambda = \phi \ast \psi_\lambda$, due to Theorem
  \ref{thm:convolution-differentiation} we have
  $\phi_\lambda \in \mathcal{C}^{\infty}$.

  For $x \notin [-\lambda, 1+\lambda]$ the function $\phi_\lambda$
  is constant with
  $\left| \frac{\partial^k}{\partial x^k} \phi_\lambda(x) \right| \le 1$.

  For $x \in [-\lambda, 1+\lambda]$, first note that for every
  $k \in \mathbb{N}$, since $\psi$ has support $[-1, 1]$, also all of 
  its derivatives have support $[-1, 1]$ and therefore
  $\left| \frac{\partial^k}{\partial x^k} \psi(x) \right| \le B_k$.
  Together with Theorem \ref{thm:convolution-differentiation}
  this gives (substituting $z := y/\lambda$)
\begin{IEEEeqnarray*}{rCl}
  \left| \frac{\partial^k}{\partial x^k} \phi_\lambda(x) \right| & = & \left|
    \frac{\partial^k}{\partial x^k} \left( \phi \ast \psi_\lambda \right)(x)
  \right| = \left| \int_{-\infty}^{+\infty} \phi(x-y)
    \frac{\partial^k}{\partial y^k} \psi_\lambda(y) \, \mathrm{d}y \right| \\
  & = & 
  \left| \int_{-\lambda}^{\lambda} \phi(x-y)
    \frac{\partial^k}{\partial y^k} \psi_\lambda(y) \, \mathrm{d}y \right|
  \\ & = & \frac{1}{\lambda^{k+1}} \left| \int_{-\lambda}^{\lambda} \phi(x-y)
    \frac{\partial^k}{\partial z^k} \psi(z) \, \mathrm{d}y \right|
    \le \frac{2B_k}{\lambda^{k}} \; ,
\end{IEEEeqnarray*}
as claimed.
\end{enumerate}
\end{proof}

Now we are ready for the approximation of $\chi$:
\begin{definition}
  Let $\lambda \in (0, 1/2)$. Define function
  $\chi_\lambda: \mathbb{R}^\ell \to \mathbb{R}$ as
\begin{align*}
	\chi_\lambda(\overline{x}) := \prod_{j=1}^\ell \phi_\lambda(x^{(j)}) \; .
\end{align*}
\end{definition}

From Lemma \ref{lem:phi-lambda} we easily get:
\begin{corollary}
\label{cor:xi}
Let $\lambda \in (0, 1/2)$. The function $\chi_\lambda$ has the following
properties:
\begin{enumerate}[1)]
\item
  $\forall \overline{x} \in \mathbb{R}^\ell: \left| \chi(\overline{x}) -
    \chi_\lambda(\overline{x}) \right| \le \ell \lambda$.
\item There exists a universal constant $B \ge 0$ such that $\chi_\lambda$ is
$\frac{B}{\lambda^3}$-smooth.
\end{enumerate}
\end{corollary}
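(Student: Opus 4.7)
The statement is essentially a mechanical consequence of Lemma \ref{lem:phi-lambda}, so my plan is to reduce both parts to the one-variable estimates on $\phi_\lambda$ and its derivatives, plus the obvious fact that $\phi_\lambda$ takes values in $[0,1]$ (which follows immediately from $\phi \in [0,1]$ and $\psi_\lambda$ being a probability density supported on $[-\lambda,\lambda]$).

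For part (1), I would use a standard telescoping identity for the difference of products. Writing
\begin{align*}
\chi(\overline{x}) - \chi_\lambda(\overline{x}) = \sum_{j=1}^{\ell} \left( \prod_{k<j} \phi_\lambda(x^{(k)}) \right) \bigl(\phi(x^{(j)}) - \phi_\lambda(x^{(j)})\bigr) \left( \prod_{k>j} \phi(x^{(k)}) \right),
\end{align*}
each of the bracketed products of $\phi$'s and $\phi_\lambda$'s lies in $[0,1]$ pointwise, while the middle factor is bounded by $\lambda$ via Lemma \ref{lem:phi-lambda}, part (1). Summing the $\ell$ terms via the triangle inequality yields the desired $\ell \lambda$ bound.

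For part (2), I would compute an arbitrary third partial derivative
\begin{align*}
\frac{\partial^3}{\partial x^{(j_1)} \partial x^{(j_2)} \partial x^{(j_3)}} \chi_\lambda(\overline{x})
\end{align*}
directly from $\chi_\lambda = \prod_{j=1}^{\ell} \phi_\lambda(x^{(j)})$. Splitting into the three cases depending on how many of $j_1, j_2, j_3$ coincide, the derivative is a product in which each factor is either $\phi_\lambda$, $\phi_\lambda'$, $\phi_\lambda''$, or $\phi_\lambda'''$ evaluated at some coordinate, and the \emph{total} order of differentiation in these factors is exactly $3$. By Lemma \ref{lem:phi-lambda}, part (2), and the pointwise bound $|\phi_\lambda| \le 1$, each factor $\phi_\lambda^{(k)}$ contributes at most $B_k/\lambda^k$ (with $B_0 := 1$), so the product is uniformly bounded by $(\max_{k\le 3} B_k)^3 / \lambda^3$. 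Taking $B$ to be this universal constant gives the $(B/\lambda^3)$-smoothness.

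No step looks like a real obstacle: part (1) is a one-line telescoping, and part (2) is a bookkeeping exercise using the Leibniz rule on a product of single-variable functions, with the derivative bounds inherited directly from the already-proved Lemma \ref{lem:phi-lambda}. The only mild care needed is to verify $\phi_\lambda \in [0,1]$ and to track that the sum of orders of differentiation in each term of the Leibniz expansion is exactly $3$, so that a single $1/\lambda^3$ suffices.
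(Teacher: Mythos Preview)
Your proposal is correct and matches the paper's approach: the paper merely states that the corollary follows ``easily'' from Lemma~\ref{lem:phi-lambda}, and what you have written is exactly the standard telescoping-plus-Leibniz argument one would use to fill in those details. Both the $\ell\lambda$ bound via the product telescoping and the $B/\lambda^3$ smoothness via the three-case Leibniz computation (with the crucial observation that the remaining $\phi_\lambda$ factors are in $[0,1]$ so the constant is independent of $\ell$) are exactly right.
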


After developing the approximation we are ready to prove the invariance
principle for the function $\chi$:

\begin{theorem}
\label{thm:invariance-xi}
Let $(\overline{\underline{X}}, 
\overline{\underline{\mathcal{X}}},
\overline{\underline{\mathcal{G}}})$ be an ensemble collection
for a probability space 
$(\overline{\underline{\Omega}}, \underline{\mathcal{P}})$
with $\min_{x \in \Omega} \pi(x) \ge \alpha > 0$.

Let $\overline{P} = (P^{(1)}, \ldots, P^{(\ell)})$ be
such that $P^{(j)}$ is a multilinear polynomial compatible with 
the ensemble sequence $\underline{\mathcal{X}}^{(j)}$.

Let $d \in \mathbb{N}$ and $\tau \in [0, 1]$ and assume that
$\deg(P^{(j)}) \le d$ and $\Var[P^{(j)}] \le 1$ for each $j \in [\ell]$, 
and that $\sum_{j=1}^\ell \Inf_i(P^{(j)}) \le \tau$ for each $i \in [n]$.

There exists a universal constant $C \ge 0$ such that
\begin{align*}
  \left| \EE \left[ 
  \chi(\overline{P}(\overline{\underline{\mathcal{X}}}))
  -\chi(\overline{P}(\overline{\underline{\mathcal{G}}}))
  \right] \right|
  \le
  C \cdot \frac{\ell^{5/2} \tau^{1/8}}{\alpha^{4d}} \; .
\end{align*}
\end{theorem}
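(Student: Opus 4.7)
The plan is to reduce to Theorem \ref{thm:invariance-main} by replacing the non-smooth product $\chi$ with the $\mathcal{C}^\infty$ approximant $\chi_\lambda$ constructed in Corollary \ref{cor:xi}, then optimizing the mollification parameter $\lambda \in (0, 1/2)$. Concretely, for any such $\lambda$ I would use the triangle inequality to split the quantity of interest into three pieces:
\begin{align*}
\bigl|\EE[\chi(\overline{P}(\overline{\underline{\mathcal{X}}})) &- \chi(\overline{P}(\overline{\underline{\mathcal{G}}}))]\bigr|
\le \bigl|\EE[\chi(\overline{P}(\overline{\underline{\mathcal{X}}})) - \chi_\lambda(\overline{P}(\overline{\underline{\mathcal{X}}}))]\bigr| \\
&\quad + \bigl|\EE[\chi_\lambda(\overline{P}(\overline{\underline{\mathcal{X}}})) - \chi_\lambda(\overline{P}(\overline{\underline{\mathcal{G}}}))]\bigr|
+ \bigl|\EE[\chi_\lambda(\overline{P}(\overline{\underline{\mathcal{G}}})) - \chi(\overline{P}(\overline{\underline{\mathcal{G}}}))]\bigr|.
\end{align*}

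The first and third terms are bounded pointwise by $\ell\lambda$ via Corollary \ref{cor:xi}(1). For the middle term, Corollary \ref{cor:xi}(2) says $\chi_\lambda$ is $(B/\lambda^3)$-smooth for an absolute constant $B$, so Theorem \ref{thm:invariance-main} applied to $\Psi := \chi_\lambda$ (together with the hypotheses $\deg(P^{(j)}) \le d$, $\Var[P^{(j)}] \le 1$, $\sum_j \Inf_i(P^{(j)}) \le \tau$) yields a bound of the shape $\frac{\ell^{5/2} d B}{3\lambda^3}(8/\sqrt{\alpha})^d\sqrt{\tau}$. Combining,
\begin{align*}
\bigl|\EE[\chi(\overline{P}(\overline{\underline{\mathcal{X}}})) - \chi(\overline{P}(\overline{\underline{\mathcal{G}}}))]\bigr|
\;\le\; 2\ell\lambda + \frac{\ell^{5/2} d B \cdot 8^d}{3\lambda^3 \alpha^{d/2}} \sqrt{\tau}.
\end{align*}

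It remains to choose $\lambda$. The clean (not quite optimal) choice $\lambda := \tau^{1/8}$ makes both contributions of the form $\ell^{5/2}\tau^{1/8}\cdot(\text{factor in }d,\alpha)$: the smoothing term becomes $2\ell\tau^{1/8}$ and the invariance term becomes $\frac{\ell^{5/2} d B \cdot 8^d}{3\alpha^{d/2}}\tau^{1/8}$. Summing, the total is at most a constant times $\ell^{5/2}\tau^{1/8}\bigl(1 + d \cdot 8^d/\alpha^{d/2}\bigr)$. Because the theorem implicitly concerns nontrivial supports one has $\alpha \le 1/2$, and then the elementary arithmetic estimate $1 + d\cdot 8^d/\alpha^{d/2} \le C'/\alpha^{4d}$ (using that $d(8\alpha^{7/2})^d$ is uniformly bounded once $8\alpha^{7/2} < 1$, which holds for $\alpha \le 1/2$) absorbs the $d,\alpha$ factors into $1/\alpha^{4d}$, yielding the stated bound $C\cdot \ell^{5/2}\tau^{1/8}/\alpha^{4d}$. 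The only nontrivial step is the bound on $\chi_\lambda$ and its third derivatives already supplied by Corollary \ref{cor:xi}; everything else is triangle inequality and a tuning of $\lambda$, so I anticipate no genuine obstacle—only slightly careful bookkeeping of the $d$- and $\alpha$-dependent constants to land on the precise exponent $4d$ claimed.
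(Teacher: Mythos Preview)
Your proposal is correct and essentially identical to the paper's proof: the same triangle-inequality splitting, Corollary~\ref{cor:xi} for the two approximation terms, Theorem~\ref{thm:invariance-main} for the middle term, and the choice $\lambda \sim \tau^{1/8}$ followed by the same arithmetic absorbing $d\cdot 8^d/\alpha^{d/2}$ into $1/\alpha^{4d}$ via $\alpha \le 1/2$. The only cosmetic difference is that the paper takes $\lambda := \tau^{1/8}/3$ rather than $\tau^{1/8}$, so that $\lambda < 1/2$ is guaranteed for all $\tau \in [0,1]$ (as required by Corollary~\ref{cor:xi}); you should make the same adjustment, but this changes nothing substantive.
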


\begin{proof}
Let $\lambda := \tau^{1/8}/3$. By the triangle inequality we get
\begin{IEEEeqnarray*}{rCl}
\left| \EE \left[ 
	\chi(\overline{P}(\underline{\overline{\mathcal{X}}}))
	- \chi(\overline{P}(\underline{\overline{\mathcal{G}}}))
	\right] \right|
& \le & 
\left| \EE \left[ 
	\chi(\overline{P}(\underline{\overline{\mathcal{X}}}))
	- \chi_\lambda(\overline{P}(\underline{\overline{\mathcal{X}}}))
	\right] \right|
\\ & & +
\left| \EE \left[ 
	\chi_\lambda(\overline{P}(\underline{\overline{\mathcal{X}}}))
	- \chi_\lambda(\overline{P}(\underline{\overline{\mathcal{G}}}))
	\right] \right|
\\ & & +
\left| \EE \left[ 
	\chi_\lambda(\overline{P}(\underline{\overline{\mathcal{G}}}))
	- \chi(\overline{P}(\underline{\overline{\mathcal{G}}}))
	\right] \right| \; .
      \IEEEyesnumber \label{eq:66a}
\end{IEEEeqnarray*}
From Corollary \ref{cor:xi}.1 and the definition of $\lambda$ 
we get both
\begin{align}\label{eq:20a}
\left| \EE \left[ 
	\chi(\overline{P}(\underline{\overline{\mathcal{X}}}))
	- \chi_\lambda(\overline{P}(\underline{\overline{\mathcal{X}}}))
	\right] \right|
&\leq
\ell\lambda \leq O\left(\tfrac{\ell^{5/2}\tau^{1/8}}{\alpha^{4d}}\right)\\
\label{eq:21a}
\left| \EE \left[ 
	\chi_\lambda(\overline{P}(\underline{\overline{\mathcal{G}}}))
	- \chi(\overline{P}(\underline{\overline{\mathcal{G}}}))
	\right] \right|
&\leq
\ell\lambda \leq O\left(\tfrac{\ell^{5/2}\tau^{1/8}}{\alpha^{4d}}\right)\;.
\end{align}

By Theorem \ref{thm:invariance-main} and Corollary \ref{cor:xi}.2 we get
\begin{align}\label{eq:23a}
\left| \EE \left[ 
  \chi_\lambda(\overline{P}(\underline{\overline{\mathcal{X}}}))
  - \chi_\lambda(\overline{P}(\underline{\overline{\mathcal{G}}}))
	\right] \right|
 \le O\left( \frac{\ell^{5/2}d 8^d \tau^{1/2}}{\lambda^3 \alpha^{d/2}} \right) \; .
\end{align}

We can assume w.l.o.g.~that $\alpha \le 1/2$ (otherwise the theorem
is trivial).
Using the definition of $\lambda$, $d8^d \le 9^{d+1}$
and $9 \le \left( \frac{1}{\alpha} \right)^{3.5}$ we see that
\begin{align}\label{eq:22a}
\frac{\ell^{5/2} d 8^d \tau^{1/2}}{\lambda^3 \alpha^{d/2}} \le 
O \left(\frac{\ell^{5/2} d 8^d \tau^{1/8}}{\alpha^{d/2}} \right) \le 
O \left( \frac{\ell^{5/2} \tau^{1/8}}{\alpha^{4d}} \right) \; .
\end{align}
Inserting (\ref{eq:20a}), (\ref{eq:21a}), and the combination
of (\ref{eq:22a}) and~(\ref{eq:23a}) into (\ref{eq:66a}) gives the result.
\end{proof}

\subsubsection{Invariance principle for 
\texorpdfstring{$\gamma$}{gamma}-decaying polynomials}

Let $\overline{P} = (P^{(1)}, \ldots, P^{(\ell)})$ be a tuple of mutlilinear
polynomials and let $\overline{P}^{<d} := \Big(
\left( P^{(1)}  \right)^{<d}, \allowbreak \ldots, \allowbreak 
\left( P^{(\ell)} \right)^{<d}\Big)$.
We will deal with a $\gamma$-decaying $\overline{P}$
by estimating 
$| \EE[\chi(\overline{P}^{<d}(\overline{\underline{\mathcal{X}}})) 
-\chi(\overline{P}(\overline{\underline{\mathcal{X}}}))] |$ for appropriately
chosen $d$.

First, we need a bound on the change of $\chi$:

\begin{lemma}
\label{lem:xi-change}
For all $\overline{x} = (x^{(1)}, \ldots, x^{(\ell)}),
\overline{\epsilon} = (\epsilon^{(1)}, \ldots, \epsilon^{(\ell)})
\in \mathbb{R}^\ell$:
\begin{align*}
\left| \chi(x^{(1)}+\epsilon^{(1)}, \ldots, x^{(\ell)}+\epsilon^{(\ell)}) - 
\chi(x^{(1)}, \ldots, x^{(\ell)}) \right| \le 
\sum_{j=1}^\ell |\epsilon^{(j)}| \; .
\end{align*}
\end{lemma}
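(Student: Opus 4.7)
The plan is to prove Lemma \ref{lem:xi-change} by a standard hybrid (telescoping) argument, exploiting two elementary properties of the clipping function $\phi$: first, that $\phi$ takes values in $[0,1]$, and second, that $\phi$ is $1$-Lipschitz (this is immediate from the piecewise definition: $\phi$ has slope $0$ or $1$ everywhere).

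First I would define the hybrid points $\overline{y}_k := (x^{(1)}+\epsilon^{(1)}, \ldots, x^{(k)}+\epsilon^{(k)}, x^{(k+1)}, \ldots, x^{(\ell)})$ for $k = 0, 1, \ldots, \ell$, so that $\overline{y}_0 = \overline{x}$ and $\overline{y}_\ell = \overline{x}+\overline{\epsilon}$. By the triangle inequality,
\begin{align*}
\left|\chi(\overline{x}+\overline{\epsilon}) - \chi(\overline{x})\right|
\le \sum_{k=1}^{\ell} \left|\chi(\overline{y}_k) - \chi(\overline{y}_{k-1})\right| \; .
\end{align*}
Since $\overline{y}_k$ and $\overline{y}_{k-1}$ differ only in the $k$-th coordinate, the corresponding values of $\chi$ share all factors $\phi(\cdot)$ except the $k$-th, giving
\begin{align*}
\left|\chi(\overline{y}_k) - \chi(\overline{y}_{k-1})\right|
= \left|\phi(x^{(k)}+\epsilon^{(k)}) - \phi(x^{(k)})\right|
\cdot \prod_{j \ne k} \phi(y_{k,j}) \; ,
\end{align*}
where $y_{k,j}$ is the $j$-th coordinate of $\overline{y}_k$ (which equals either $x^{(j)}$ or $x^{(j)}+\epsilon^{(j)}$, but this does not matter).

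Now each of the $\ell - 1$ factors $\phi(y_{k,j})$ lies in $[0,1]$, so their product is at most $1$, and by the $1$-Lipschitz property of $\phi$ the remaining factor is at most $|\epsilon^{(k)}|$. Summing over $k$ yields the claimed bound $\sum_{j=1}^\ell |\epsilon^{(j)}|$. There is no real obstacle here; the entire content of the lemma is the combination of Lipschitz-ness of $\phi$ with the fact that the other factors of the product are bounded by $1$, organised by a one-line telescoping.
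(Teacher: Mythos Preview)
Your proof is correct and takes essentially the same telescoping/hybrid approach as the paper, simply with the hybrids indexed in the opposite direction. The paper's write-up is terser (it omits the explicit mention of the $1$-Lipschitz property of $\phi$ and the bound $\phi\in[0,1]$), but the underlying argument is identical.
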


\begin{proof}
Letting $\overline{y}_{(j)} := (x^{(1)}, \ldots, x^{(j)}, x^{(j+1)}+\epsilon^{(j+1)},
\ldots x^{(\ell)}+\epsilon^{(\ell)})$,
\begin{IEEEeqnarray*}{l}
  \left| \chi(x^{(1)}+\epsilon^{(1)}, \ldots, x^{(\ell)}+\epsilon^{(\ell)}) -
    \chi(x^{(1)}, \ldots, x^{(\ell)}) \right| \\
  \qquad \le \sum_{j=1}^\ell \left| \chi(\overline{y}_{(j-1)}) -
    \chi(\overline{y}_{(j)}) \right| \le \sum_{j=1}^\ell \left| \epsilon^{(j)}
  \right| \; .
\end{IEEEeqnarray*}
\end{proof}

\begin{proof}[Proof of Theorem \ref{thm:invariance-smoothed}]
Let $d := \lfloor \frac{\ln 1 / \tau}{64 \ln 1/\alpha} \rfloor$. 
By the triangle inequality,
\begin{IEEEeqnarray*}{rCl}
  \left| \EE \left[ \chi(\overline{P}(\underline{\overline{\mathcal{X}}})) -
      \chi(\overline{P}(\underline{\overline{\mathcal{G}}})) \right] \right|
  & \le & \left| \EE \left[
      \chi(\overline{P}(\underline{\overline{\mathcal{X}}})) -
      \chi(\overline{P}^{<d}(\underline{\overline{\mathcal{X}}})) \right] \right|
  \\ & & + \left| \EE \left[
      \chi(\overline{P}^{<d}(\underline{\overline{\mathcal{X}}})) -
      \chi(\overline{P}^{<d}(\underline{\overline{\mathcal{G}}})) \right] \right|
  \\ & & + 
  \left| \EE \left[ \chi(\overline{P}^{<d}(\underline{\overline{\mathcal{G}}})) -
      \chi(\overline{P}(\underline{\overline{\mathcal{G}}})) \right] \right|
    \; .  \IEEEyesnumber \label{eq:67a}
\end{IEEEeqnarray*}

We proceed to demonstrate that all three terms on the right hand side of
(\ref{eq:67a}) are
$O\left( \ell^4 \tau^{\Omega \left( \frac{\gamma}{\ln 1/\alpha} \right)} \right)$,
which will finish the proof.

\begin{lemma}
\begin{align}
\label{eq:24a}
\left|  \EE\left[
  \chi(\overline{P}(\underline{\overline{\mathcal{X}}})) -
  \chi(\overline{P}^{<d}(\underline{\overline{\mathcal{X}}}))
\right] \right|
\le \ell (1-\gamma)^{d/2}
\le O\left(\ell \tau^{\Omega \left( \frac{\gamma}{\ln 1/\alpha} \right)} \right)
\end{align}
and, similarly,
\begin{align}
\label{eq:25a}
\left|  \EE\left[
  \chi(\overline{P}(\underline{\overline{\mathcal{G}}})) -
  \chi(\overline{P}^{<d}(\underline{\overline{\mathcal{G}}}))
\right] \right|
\le \ell (1-\gamma)^{d/2}
\le O\left(\ell \tau^{\Omega \left( \frac{\gamma}{\ln 1/\alpha} \right)} \right)
\end{align}
\end{lemma}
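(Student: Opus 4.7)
The plan is to combine the coordinate-wise Lipschitz estimate in Lemma~\ref{lem:xi-change} with the $\gamma$-decay hypothesis and the $L^2$-Cauchy--Schwarz bound $\EE[|Z|] \le \sqrt{\EE[Z^2]}$. Applying Lemma~\ref{lem:xi-change} pointwise with $x^{(j)} = (P^{(j)})^{<d}(\underline{\overline{\mathcal{X}}})$ and $\epsilon^{(j)} = (P^{(j)})^{\ge d}(\underline{\overline{\mathcal{X}}})$ gives
\begin{align*}
\left| \chi(\overline{P}(\underline{\overline{\mathcal{X}}})) -
\chi(\overline{P}^{<d}(\underline{\overline{\mathcal{X}}})) \right|
\le \sum_{j=1}^{\ell} \left| (P^{(j)})^{\ge d}(\underline{\overline{\mathcal{X}}}) \right| \; ,
\end{align*}
so that taking expectations and applying Cauchy--Schwarz to each term yields
$\EE[|(P^{(j)})^{\ge d}(\underline{\overline{\mathcal{X}}})|] \le \sqrt{\EE[((P^{(j)})^{\ge d})^2]}$.

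By the $\gamma$-decay assumption (Definition~\ref{def:gamma-smooth}), each term $\EE[((P^{(j)})^{\ge d})^2]$ is at most $(1-\gamma)^d$, so summing over $j$ gives the first inequality of (\ref{eq:24a}). The crucial observation for the Gaussian bound (\ref{eq:25a}) is that the quantity $\EE[((P^{(j)})^{\ge d})^2]$ is a formal expression in the coefficients of $P^{(j)}$ (cf.~(\ref{eq:29a}) and Corollary~\ref{cor:orthogonal-esquared}) and therefore, by Lemma~\ref{lem:poly-expectation-equiv}, agrees with the $L^2$-norm of $(P^{(j)})^{\ge d}(\underline{\overline{\mathcal{G}}})$ as well; the same argument then goes through verbatim with $\underline{\overline{\mathcal{X}}}$ replaced by $\underline{\overline{\mathcal{G}}}$.

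Finally, to obtain the second inequality in (\ref{eq:24a}) and (\ref{eq:25a}), I would just invoke the choice $d = \lfloor \frac{\ln 1/\tau}{64 \ln 1/\alpha} \rfloor$ and the elementary estimate $(1-\gamma)^{d/2} \le \exp(-\gamma d/2) = \tau^{\gamma/(128 \ln 1/\alpha)}$, which is of the claimed form $\tau^{\Omega(\gamma / \ln(1/\alpha))}$. There is no real obstacle here: the whole step is a direct consequence of Lipschitzness of $\chi$, Cauchy--Schwarz, and the $\gamma$-decay definition, and the Gaussian/discrete symmetry comes for free from the fact that the decay hypothesis is a statement about coefficients, not about the particular ensemble used to evaluate the polynomial.
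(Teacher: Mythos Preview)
Your proposal is correct and follows essentially the same argument as the paper: apply Lemma~\ref{lem:xi-change} pointwise, then Cauchy--Schwarz, then the $\gamma$-decay bound, and finally substitute the chosen value of $d$. Your remark that the Gaussian case goes through because $\EE[((P^{(j)})^{\ge d})^2]$ is a coefficient-only quantity (via Lemma~\ref{lem:poly-expectation-equiv}) is exactly the reason the paper can say ``the argument for (\ref{eq:25a}) being the same''; the only nitpick is that $\exp(-\gamma d/2) = \tau^{\gamma/(128\ln 1/\alpha)}$ should be an inequality up to a constant factor because of the floor in the definition of $d$, but this is absorbed in the $O(\cdot)$.
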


\begin{proof}

We prove only (\ref{eq:24a}), the argument for (\ref{eq:25a}) being the same.
Using Lemma \ref{lem:xi-change}, Cauchy-Schwarz, the fact that $\overline{P}$
is $\gamma$-decaying
and the definition of $d$,
\begin{IEEEeqnarray*}{l}
\left|  \EE\left[
  \chi(\overline{P}(\underline{\overline{\mathcal{X}}})) -
  \chi(\overline{P}^{<d}(\underline{\overline{\mathcal{X}}}))
\right] \right|
\le
\sum_{j=1}^\ell \EE \left[ \left|
	\left(P^{(j)}\right)^{\ge d}(\underline{\mathcal{X}}^{(j)}) \right| \right]
    \\ \qquad \le \sum_{j=1}^\ell
    \sqrt{\EE \left[ \left(\left( P^{(j)}\right)^{\ge d} \right)^2 \right]}
\le \ell (1-\gamma)^{d/2} \le 2 \ell \tau^{\frac{\gamma}{128 \ln 1/\alpha}} \; .
\end{IEEEeqnarray*}
\end{proof}

\begin{lemma}
\begin{align*}
\left| \EE \left[
      \chi(\overline{P}^{<d}(\underline{\overline{\mathcal{X}}})) -
      \chi(\overline{P}^{<d}(\underline{\overline{\mathcal{G}}}))
\right] \right|
\le
O\left( \ell^{5/2} \tau^{\Omega \left( \frac{\gamma}{\ln 1/\alpha} \right)} \right) \; .
\end{align*}
\end{lemma}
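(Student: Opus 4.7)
The plan is to apply Theorem~\ref{thm:invariance-xi} directly to the truncated tuple $\overline{P}^{<d} = ((P^{(1)})^{<d}, \ldots, (P^{(\ell)})^{<d})$, with the choice $d := \lfloor \frac{\ln 1/\tau}{64 \ln 1/\alpha} \rfloor$ fixed at the start of the proof of Theorem~\ref{thm:invariance-smoothed}. The point of truncation is precisely to move from the (unbounded-degree) $\gamma$-decaying setting into the bounded-degree setting where Theorem~\ref{thm:invariance-xi} applies, paying a controlled cost in the exponent via the factor $\alpha^{-4d}$.

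First I would verify that $\overline{P}^{<d}$ satisfies the hypotheses of Theorem~\ref{thm:invariance-xi}. The degree bound is immediate from the definition. For the variance bound, Claim~\ref{cl:orthogonal-variance} gives $\Var[(P^{(j)})^{<d}] = \sum_{|S|<d} \Var[(P^{(j)})_S] \le \Var[P^{(j)}] \le \EE[(P^{(j)})^2] \le 1$, where the last inequality uses the $\gamma$-decaying assumption with $d=0$. For the influence bound, the formula (\ref{eq:31a}) makes it clear that truncating to low degree can only remove terms, so $\Inf_i((P^{(j)})^{<d}) \le \Inf_i(P^{(j)})$, and hence $\sum_{j=1}^\ell \Inf_i((P^{(j)})^{<d}) \le \tau$.

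Applying Theorem~\ref{thm:invariance-xi} with these polynomials then yields
\begin{align*}
\left| \EE\left[ \chi(\overline{P}^{<d}(\underline{\overline{\mathcal{X}}})) - \chi(\overline{P}^{<d}(\underline{\overline{\mathcal{G}}})) \right] \right| \le C \cdot \frac{\ell^{5/2} \tau^{1/8}}{\alpha^{4d}}.
\end{align*}
Plugging in the choice of $d$ gives $\alpha^{4d} \ge \alpha^{(\ln 1/\tau)/(16 \ln 1/\alpha)} = \tau^{1/16}$, so the right-hand side is at most $O(\ell^{5/2} \tau^{1/16})$. Since $\gamma \in [0,1]$ and (without loss of generality) $\alpha \le 1/2$, we have $\gamma/\ln(1/\alpha) = O(1)$, so $\tau^{1/16} \le \tau^{\Omega(\gamma/\ln 1/\alpha)}$ up to an absolute constant in the $\Omega$-notation, giving the claimed bound.

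There is no real obstacle here—the only subtlety is bookkeeping: making sure the exponent $\alpha^{-4d}$ produced by invariance is small enough compared to $\tau^{1/8}$ for the chosen $d$, and matching the final exponent form with the other two terms in the triangle-inequality decomposition (\ref{eq:67a}) so that all three contributions are dominated by $O(\ell^{5/2} \tau^{\Omega(\gamma/\ln 1/\alpha)})$. The balancing of $d$ is exactly what the definition $d = \lfloor(\ln 1/\tau)/(64\ln 1/\alpha)\rfloor$ is engineered to achieve.
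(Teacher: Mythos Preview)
Your proposal is correct and follows essentially the same approach as the paper: apply Theorem~\ref{thm:invariance-xi} to the truncated polynomials, then use the choice of $d$ to convert $\alpha^{-4d}\tau^{1/8}$ into $\tau^{1/16} \le \tau^{\Omega(\gamma/\ln 1/\alpha)}$. You are simply more explicit than the paper in verifying the degree, variance, and influence hypotheses of Theorem~\ref{thm:invariance-xi}, which is good practice but not a different argument.
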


\begin{proof}
From Theorem \ref{thm:invariance-xi},
\begin{align*}
\left| \EE \left[
      \chi(\overline{P}^{<d}(\underline{\overline{\mathcal{X}}})) -
      \chi(\overline{P}^{<d}(\underline{\overline{\mathcal{G}}}))
\right] \right|
\le
O\left(\frac{\ell^{5/2} \tau^{1/8}}{\alpha^{4d}} \right) \; .
\end{align*}
From the definition of $d$ (recall that $\alpha \le 1/2$),
\begin{align*}
  \frac{\ell^{5/2} \tau^{1/8}}{\alpha^{4d}} \le \ell^{5/2} \tau^{1/16} \le 
  \ell^{5/2} \tau^{\Omega\left( \frac{\gamma}{\ln 1/\alpha} \right)} \; ,
\end{align*}
as claimed.
\end{proof}
This finishes the proof of Theorem~\ref{thm:invariance-smoothed}.
\renewcommand{\qedsymbol}{}
\end{proof}

\subsection{Reduction to the 
\texorpdfstring{$\gamma$}{gamma}-decaying case}

To apply Theorem \ref{thm:invariance-smoothed} we
need to show that ``smoothing out''  of multilinear polynomials 
$P^{(1)}, \ldots, P^{(\ell)}$
does not change the expectation of their product too much.

Recall Definitions \ref{def:t-rho} and \ref{def:t-rho-function} 
for the operator $T_\rho$. Our goal in this section is to prove:
\begin{theorem}
\label{thm:smoothing}
Let $\overline{\underline{X}}$ be a random vector distributed according
to $(\overline{\underline{\Omega}}, \underline{\mathcal{P}})$
with $\rho(\overline{\Omega}, \mathcal{P}) \le \rho \le 1$.
Let $\underline{\mathcal{Z}}$ be an ensemble sequence constructed from
$\overline{\underline{X}}$ and 
$\underline{\mathcal{X}}^{(1)}, \ldots, \underline{\mathcal{X}}^{(\ell)}$
be ensemble sequences constructed from 
$\underline{X}^{(1)}, \ldots, \underline{X}^{(\ell)}$, respectively.

Let $\epsilon \in (0, 1/2]$ and 
$\gamma \in \left[0, \frac{(1-\rho)\epsilon}{\ell \ln \ell/\epsilon}\right]$.

Then, for all multilinear polynomials $P^{(1)}, \ldots, P^{(\ell)}$ such that
$P^{(j)}(\underline{\mathcal{X}}^{(j)}) \in [0, 1]$:
\begin{align*}
  \left| \EE \left[ \prod_{j=1}^\ell P^{(j)}(\underline{\mathcal{X}}^{(j)})
  - \prod_{j=1}^\ell T_{1-\gamma}P^{(j)}(\underline{\mathcal{X}}^{(j)})
  \right] \right| \le \epsilon \; .
\end{align*}
\end{theorem}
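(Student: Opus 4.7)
My strategy is to reduce to a single-step inequality via telescoping, then exploit the bounded correlation $\rho<1$ through a Fourier-theoretic argument showing that the conditional expectation from $\underline{X}^{\setminus j}$ to $\underline{X}^{(j)}$ has exponentially decaying high-degree part.

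\textbf{Step 1 (telescoping).} By Definition~\ref{def:t-rho-function}, $T_{1-\gamma}$ is an averaging operator, so $T_{1-\gamma}P^{(j)}(\underline{\mathcal{X}}^{(j)})\in[0,1]$ as well. Telescoping yields
\begin{align*}
\prod_{j=1}^\ell P^{(j)}-\prod_{j=1}^\ell T_{1-\gamma}P^{(j)} = \sum_{j=1}^\ell \Phi^{(j)}\bigl(P^{(j)}-T_{1-\gamma}P^{(j)}\bigr),
\end{align*}
where $\Phi^{(j)}:=\prod_{j'<j}T_{1-\gamma}P^{(j')}\cdot\prod_{j'>j}P^{(j')}$ takes values in $[0,1]$ and depends only on $\underline{X}^{\setminus j}$. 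So it suffices to bound $|\Delta_j|\le \epsilon/\ell$, where $\Delta_j:=\EE[\Phi^{(j)}(P^{(j)}-T_{1-\gamma}P^{(j)})]$. Expanding $P^{(j)}=\sum_\sigma \hat P^{(j)}(\sigma)\mathcal{X}_\sigma^{(j)}$ and applying Definition~\ref{def:t-rho},
\begin{align*}
\Delta_j=\sum_\sigma\bigl(1-(1-\gamma)^{|\sigma|}\bigr)\hat P^{(j)}(\sigma)\,E^{(j)}(\sigma),\qquad E^{(j)}(\sigma):=\EE[\mathcal{X}_\sigma^{(j)}\Phi^{(j)}],
\end{align*}
where $E^{(j)}(\sigma)$ is the $\sigma$-Fourier coefficient of $\tilde\Phi^{(j)}:=\EE[\Phi^{(j)}\mid\underline{X}^{(j)}]$ in the basis $\mathcal{X}^{(j)}$.

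\textbf{Step 2 (correlation-based Fourier decay and summation).} Fix an ensemble basis $\underline{\mathcal{Y}}^{\setminus j}$ for $\underline{X}^{\setminus j}$ and consider the single-coordinate cross-correlation matrix $C_{k_1,k_2}:=\EE[\mathcal{X}_{\star,k_1}^{(j)}\mathcal{Y}_{\star,k_2}^{\setminus j}]$. Since both bases begin with $1$, we have $C_{0,k}=C_{k,0}=\delta_{k,0}$, and the restriction of $C$ to indices $\ge 1$ has operator norm at most $\rho(\mathcal{P})\le\rho$ by Definition~\ref{def:correlation} (the operator norm equals the supremum of covariances of mean-zero, unit-variance pairs). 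Tensorizing over the $n$ independent coordinates, the conditional-expectation operator $T^*\colon L^2(\underline{X}^{\setminus j})\to L^2(\underline{X}^{(j)})$ preserves the support pattern of Fourier multi-indices (because the $0$-th row and column of $C$ are trivial), sends pure-degree-$d$ to pure-degree-$d$, and has operator norm at most $\rho^d$ on each such subspace. Consequently,
\begin{align*}
\sqrt{\sum_{|\sigma|=d} E^{(j)}(\sigma)^2}=\bigl\|(\tilde\Phi^{(j)})^{=d}\bigr\|_2\le\rho^d\,\bigl\|\Phi^{(j),=d}\bigr\|_2.
\end{align*}
Cauchy-Schwarz within each degree-$d$, followed by a second Cauchy-Schwarz over $d$ using $\sum_d\|P^{(j),=d}\|_2^2\le 1$ and $\sum_d\|\Phi^{(j),=d}\|_2^2\le 1$, gives
\begin{align*}
|\Delta_j|\le \max_{d\ge 1}\bigl(1-(1-\gamma)^d\bigr)\rho^d\le \max_{d\ge 1}\min(\gamma d,1)\rho^d=O\!\left(\frac{\gamma}{1-\rho}\right),
\end{align*}
where the maximum is attained near $d\approx 1/\ln(1/\rho)\approx 1/(1-\rho)$. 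Summing over $j$ yields $O(\ell\gamma/(1-\rho))$, which is at most $\epsilon$ under the hypothesis on $\gamma$.

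\textbf{Main obstacle.} The crux is the tensorized correlation decay used in Step~2: upgrading the single-coordinate bound $\|C\|_{op}\le\rho$ (essentially a restatement of Definition~\ref{def:correlation}) to $\|T^*\Phi^{=d}\|_2\le\rho^d\|\Phi^{=d}\|_2$ on each pure-degree-$d$ subspace. The key structural observation is that the trivial $0$-th row and column of $C$ force $T^*$ to preserve the support of each Fourier multi-index, so on each coordinate-set $S$ of size $d$ the restriction of $T^*$ acts as a tensor power of $C$ restricted to nonzero indices, whose operator norm is bounded by $\rho^d$ by submultiplicativity of operator norms under tensor products. Without this degree-wise decay, the best uniform bound gives only $|E^{(j)}(\sigma)|\le \|\Phi^{(j)}\|_2\le 1$, which is far too weak to beat the growth of $(1-(1-\gamma)^d)$ at high degrees.
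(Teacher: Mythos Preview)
Your proof is correct and follows essentially the same route as the paper. Both arguments telescope over $j$ to reduce to bounding $\EE\bigl[(\Id-T_{1-\gamma})P^{(j)}\cdot Q^{(j)}\bigr]$ with $Q^{(j)}=\Phi^{(j)}$, decompose by degree (the paper decomposes by support set $S$, which is equivalent), use the correlation bound to get a factor $\rho^{|S|}$ on each piece, and then invoke the elementary fact $\min\bigl(1-(1-\gamma)^d,\rho^d\bigr)\le\epsilon/\ell$ (the paper's Lemma~\ref{lem:k-epsilon-bound}; your product $(1-(1-\gamma)^d)\rho^d$ is bounded by this minimum). Your operator-norm formulation of the $\rho^{|S|}$ decay---that the conditional-expectation map $T^*\colon L^2(\underline{X}^{\setminus j})\to L^2(\underline{X}^{(j)})$ preserves support and has norm $\le\rho^{|S|}$ on each fixed-support block, by tensorization of the single-coordinate bound---is exactly the content of the paper's Lemma~\ref{lem:decomposition-rho-bound}, stated perhaps more transparently.
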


Let us start with an intuition:
Due to Lemma \ref{lem:orthogonal-decomposition},
it is enough to bound
\begin{align*}
\EE\left[ \prod_{j=1}^\ell P_S^{(j)} - \prod_{j=1}^\ell T_{1-\gamma}P^{(j)}_S \right]
\end{align*}
for every $S \subseteq [n]$. If $|S|$ is small, we use the fact that
$P^{(j)}_S-T_{1-\gamma}P_S^{(j)}$ shrinks by a factor of $1-(1-\gamma)^{|S|}$
for every $j$.
If $|S|$ is large, we exploit that both
\begin{align*}
\EE\left[ \prod_{j=1}^\ell P^{(j)}_S\right], 
\EE\left[\prod_{j=1}^\ell T_{1-\gamma}P_S^{(j)}\right]
\end{align*}
are small (roughly $\rho^{|S|}$ times smaller compared to their variances).

To give a formal argument, we use yet another ensemble sequence:
let $j \in [\ell]$. We define  $\underline{\mathcal{Y}}^{(j)}$ to be an
ensemble sequence constructed from $\underline{X}^{[\ell]\setminus\{j\}}$.
Furthermore, let
\begin{align*}
A^{(j)} := \prod_{j' < j} T_{1-\gamma}P(\underline{\mathcal{X}}^{(j')})
\prod_{j' > j} P(\underline{\mathcal{X}}^{(j')}) \; .
\end{align*}
Note that since $A^{(j)} \in L^2(\underline{X}^{[\ell]\setminus\{j\}})$,
there exists a multilinear polynomial $Q^{(j)}$ compatible with
$\underline{\mathcal{Y}}^{(j)}$ such that
\begin{align*}
  A^{(j)} = Q^{(j)}(\underline{\mathcal{Y}}^{(j)}) \; .
\end{align*}

\begin{lemma}
\label{lem:smooth-step-decomposition}
\begin{align*}
  \prod_{j=1}^\ell P^{(j)}(\underline{\mathcal{X}}^{(j)})
  - \prod_{j=1}^\ell T_{1-\gamma}P^{(j)}(\underline{\mathcal{X}}^{(j)})
  = \sum_{j=1}^{\ell}
  (\Id-T_{1-\gamma})P^{(j)}(\underline{\mathcal{X}}^{(j)}) \cdot
  Q^{(j)}(\underline{\mathcal{Y}}^{(j)})
  \; .
\end{align*}
\end{lemma}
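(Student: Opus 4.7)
My plan is to prove this lemma by recognizing it as a standard telescoping identity, with the ensemble-sequence framework doing the bookkeeping to express the ``mixed'' products $A^{(j)}$ as polynomials in an appropriate ensemble.

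The core algebraic fact I will use is the product telescoping identity: for any reals $a_1,\dots,a_\ell$ and $b_1,\dots,b_\ell$,
\[
\prod_{j=1}^\ell a_j - \prod_{j=1}^\ell b_j = \sum_{j=1}^\ell (a_j - b_j) \prod_{j'<j} b_{j'} \prod_{j'>j} a_{j'}.
\]
To see this, let $D_j := \prod_{j'\le j} b_{j'} \prod_{j'>j} a_{j'}$ for $j=0,\dots,\ell$, so $D_0 = \prod_j a_j$ and $D_\ell = \prod_j b_j$. Then $D_{j-1} - D_j = (a_j - b_j)\prod_{j'<j} b_{j'} \prod_{j'>j} a_{j'}$, and the sum over $j$ telescopes to $D_0 - D_\ell$.

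I then instantiate this with $a_j := P^{(j)}(\underline{\mathcal{X}}^{(j)})$ and $b_j := T_{1-\gamma} P^{(j)}(\underline{\mathcal{X}}^{(j)})$. The difference $a_j - b_j$ equals $(\Id - T_{1-\gamma}) P^{(j)}(\underline{\mathcal{X}}^{(j)})$ by linearity of $T_{1-\gamma}$ on polynomials, and the ``prefix/suffix'' factor $\prod_{j'<j} b_{j'} \prod_{j'>j} a_{j'}$ is exactly the random variable defined as $A^{(j)}$ just before the lemma. So the telescoping identity gives
\[
\prod_{j=1}^\ell P^{(j)}(\underline{\mathcal{X}}^{(j)}) - \prod_{j=1}^\ell T_{1-\gamma} P^{(j)}(\underline{\mathcal{X}}^{(j)}) = \sum_{j=1}^\ell (\Id - T_{1-\gamma}) P^{(j)}(\underline{\mathcal{X}}^{(j)}) \cdot A^{(j)}.
\]

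The only remaining point is to justify rewriting $A^{(j)}$ as $Q^{(j)}(\underline{\mathcal{Y}}^{(j)})$. Since each factor making up $A^{(j)}$ is a function of $\underline{X}^{(j')}$ for some $j' \ne j$, the random variable $A^{(j)}$ lies in $L^2(\underline{X}^{[\ell]\setminus\{j\}})$, and by Lemma~\ref{lem:product-base} applied to the ensemble sequence $\underline{\mathcal{Y}}^{(j)}$ constructed from $\underline{X}^{[\ell]\setminus\{j\}}$, the monomials in $\underline{\mathcal{Y}}^{(j)}$ span this space. Hence there exists a (unique) multilinear polynomial $Q^{(j)}$ compatible with $\underline{\mathcal{Y}}^{(j)}$ such that $A^{(j)} = Q^{(j)}(\underline{\mathcal{Y}}^{(j)})$, which is exactly the definition of $Q^{(j)}$ stated in the excerpt. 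There is no real obstacle here; the lemma is essentially a bookkeeping identity whose whole point is to set up the later hypercontractivity-based estimation of each summand.
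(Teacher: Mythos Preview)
Your proof is correct and is essentially the same as the paper's: the paper's proof is the single line ``By definition of $Q^{(j)}$,'' which relies on exactly the telescoping identity you spell out together with the definition $A^{(j)} = Q^{(j)}(\underline{\mathcal{Y}}^{(j)})$.
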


\begin{proof}
By definition of $Q^{(j)}$.
\end{proof}

\begin{lemma}
\label{lem:decomposition-rho-bound}
For every $j \in [\ell]$ and $S \subseteq [n]$, $S \ne \emptyset$:
\begin{align*}
\left|\EE\left[ P_S^{(j)}(\underline{\mathcal{X}}^{(j)}) \cdot
Q_S^{(j)}(\underline{\mathcal{Y}}^{(j)}) \right] \right| \le 
\rho^{|S|} \sqrt{\Var[P_S^{(j)}] \Var[Q_S^{(j)}]}  \; .
\end{align*}
\end{lemma}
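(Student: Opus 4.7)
The plan is to express $\EE[P_S^{(j)}(\underline{\mathcal{X}}^{(j)}) \cdot Q_S^{(j)}(\underline{\mathcal{Y}}^{(j)})]$ as a multilinear contraction indexed by the coordinates $i \in S$, and bound it using the per-coordinate correlation $\rho$. The tensor-product structure arises because the $n$ tuples $\overline{X}_i$ are independent, and the bound $\rho^{|S|}$ comes from multiplying the operator norms of one matrix per coordinate of $S$.

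Concretely, write $P_S^{(j)}(\underline{\mathcal{X}}^{(j)}) = \sum_{\sigma:\supp(\sigma)=S}\alpha(\sigma)\prod_{i\in S}\mathcal{X}^{(j)}_{i,\sigma_i}$ and $Q_S^{(j)}(\underline{\mathcal{Y}}^{(j)})=\sum_{\tau:\supp(\tau)=S}\beta(\tau)\prod_{i\in S}\mathcal{Y}^{(j)}_{i,\tau_i}$. Since the tuples $\overline{X}_i$ are i.i.d.\ across $i$, the ensembles $\mathcal{X}^{(j)}_i$ and $\mathcal{Y}^{(j)}_i$ are independent across $i$, so
\[
\EE\bigl[P_S^{(j)}\,Q_S^{(j)}\bigr]=\sum_{\sigma,\tau}\alpha(\sigma)\beta(\tau)\prod_{i\in S}M_i[\sigma_i,\tau_i],
\]
where $M_i[k,k']:=\EE[\mathcal{X}^{(j)}_{i,k}\mathcal{Y}^{(j)}_{i,k'}]$ for $k,k'>0$. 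This is a bilinear form $\alpha^{T}\bigl(\bigotimes_{i\in S}M_i\bigr)\beta$ on coefficient vectors $\alpha,\beta$ indexed by $\{(\sigma_i)_{i\in S}\}$.

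The key step is showing $\|M_i\|_{\mathrm{op}}\le\rho$. For unit vectors $u,v$ indexed by $k>0$, put $f:=\sum_k u_k\mathcal{X}^{(j)}_{i,k}$ and $g:=\sum_{k'}v_{k'}\mathcal{Y}^{(j)}_{i,k'}$. Because $\mathcal{X}^{(j)}_{i,\star}$ is constructed from $X_i^{(j)}$ with $\mathcal{X}^{(j)}_{i,0}\equiv 1$, the sum over $k>0$ is orthogonal to $1$ in $L^2(X_i^{(j)})$, hence $f$ is a mean-zero function of $X_i^{(j)}$; by the orthonormality of the ensemble, $\Var[f]=\sum_k u_k^2=1$. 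Symmetrically, $g$ is a mean-zero, unit-variance function of $\overline{X}_i^{\setminus j}$. Hence, by Definition~\ref{def:correlation}, $u^{T}M_i v=\EE[fg]=\Cov[f,g]\le\rho(\mathcal{P},\{j\},[\ell]\setminus\{j\})\le\rho$. Tensor-product operator norms multiply, so $\bigl\|\bigotimes_{i\in S}M_i\bigr\|_{\mathrm{op}}\le\rho^{|S|}$, and Cauchy--Schwarz in coefficient space yields
\[
\left|\EE\bigl[P_S^{(j)}\,Q_S^{(j)}\bigr]\right|\le\rho^{|S|}\,\|\alpha\|_2\,\|\beta\|_2.
\]
Finally, since $S\neq\emptyset$ we have $\EE[P_S^{(j)}]=\EE[Q_S^{(j)}]=0$ (no $\sigma=0^n$ term appears), so by Lemma~\ref{lem:poly-expectation-equiv} and~(\ref{eq:30a}), $\|\alpha\|_2^{2}=\Var[P_S^{(j)}]$ and $\|\beta\|_2^{2}=\Var[Q_S^{(j)}]$, giving the claimed bound.

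The main obstacle is justifying the per-coordinate operator-norm bound $\|M_i\|_{\mathrm{op}}\le\rho$ cleanly: one must be careful that restricting indices to $k>0$ corresponds exactly to the mean-zero subspace of $L^2(X_i^{(j)})$ (and symmetrically for $\mathcal{Y}^{(j)}_i$), and that the orthonormality supplies precisely the unit-variance normalization demanded by Definition~\ref{def:correlation}. Once that is in place, the rest is bookkeeping: writing the expectation as a tensor contraction via independence, applying the standard fact that operator norms are multiplicative over tensor products, and converting coefficient $\ell^2$-norms back into variances.
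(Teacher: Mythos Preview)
Your argument is correct. In fact, it is cleaner than the paper's own proof and avoids a slip there.

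The paper proceeds entrywise: it bounds each factor $|\EE[\mathcal{X}^{(j)}_{i,k}\mathcal{Y}^{(j)}_{i,k'}]|\le\rho$ separately, obtaining
\[
\left|\EE\bigl[P_S^{(j)}Q_S^{(j)}\bigr]\right|\le\rho^{|S|}\sum_{\sigma,\sigma'}|\alpha(\sigma)||\beta(\sigma')|=\rho^{|S|}\,\|\alpha\|_1\|\beta\|_1,
\]
and then claims this is at most $\rho^{|S|}\sqrt{\Var[P_S^{(j)}]\Var[Q_S^{(j)}]}=\rho^{|S|}\|\alpha\|_2\|\beta\|_2$ ``by Cauchy--Schwarz''. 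But $\|\cdot\|_1\le\|\cdot\|_2$ is the wrong direction, so that last step does not go through as written. Your approach sidesteps this: rather than bounding $M_i$ entrywise, you observe that Definition~\ref{def:correlation} directly bounds the \emph{operator norm} $\|M_i\|_{\mathrm{op}}\le\rho$, and then the multiplicativity of operator norms under tensor products delivers $|\alpha^{T}(\bigotimes_{i\in S}M_i)\beta|\le\rho^{|S|}\|\alpha\|_2\|\beta\|_2$ in one stroke. This is both the natural reading of the correlation definition and the correct way to land on the $\ell^2$-norms (i.e., the variances) that the downstream Lemma~\ref{lem:smooth-coordinate} needs.
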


\begin{proof}
For ease of notation let us write $P := P^{(j)}$,
$Q := Q^{(j)}$, $\underline{\mathcal{X}} := \underline{\mathcal{X}}^{(j)}$
and $\underline{\mathcal{Y}} := \underline{\mathcal{Y}}^{(j)}$.

Let $P(\underline{\mathcal{X}}) = 
\sum_{\sigma} \alpha(\sigma) \mathcal{X}_\sigma$
and $Q(\underline{\mathcal{Y}}) = 
\sum_{\sigma} \beta(\sigma) \mathcal{Y}_\sigma$.

We know that
$\mathcal{X}_{i,k} \in L^2(X^{(j)}_i)$
and $\mathcal{Y}_{i,k} \in L^2(X^{([\ell]\setminus\{j\})}_i)$
for every $i \in [n]$, $k, k' \ge 0$. Furthermore, if $k, k' > 0$,
then $\EE[\mathcal{X}_{i,k}] = \EE[\mathcal{Y}_{i,k'}] = 0$
and $\Var[\mathcal{X}_{i,k}] = \Var[\mathcal{Y}_{i,k'}] = 1$.
By definition of $\rho$, this implies
\begin{align}
\label{eq:52a}
\left| \EE \left[ \mathcal{X}_{i,k} \cdot \mathcal{Y}_{i,k'} 
\right] \right| = 
\left| \Cov \left[ \mathcal{X}_{i,k}, \mathcal{Y}_{i,k'} 
\right] \right| \le \rho.
\end{align}

Expanding the expectation and using (\ref{eq:52a}) and Cauchy-Schwarz,
\begin{IEEEeqnarray*}{rCl}
	\left| \EE \left[ P_S(\underline{\mathcal{X}}) 
            Q_S(\underline{\mathcal{Y}}) \right] \right|
	& = &
	\left| \EE \left[ \left( \sum_{\sigma: \supp(\sigma)=S}  
              \alpha(\sigma) \mathcal{X}_\sigma \right)
		\left( \sum_{\sigma': \supp(\sigma')=S} 
                  \beta(\sigma') \mathcal{Y}_{\sigma'} \right)
		\right] \right|
	\\ & \le &
	\sum_{\substack{\sigma, \sigma': \\ \supp(\sigma)=\supp(\sigma')=S}}
		\left| \alpha(\sigma) \beta(\sigma') \prod_{i \in S}
		\EE \left[ \mathcal{X}_{i, \sigma_i}
                  \mathcal{Y}_{i, \sigma'_i}  \right] \right|
	\\ & \le &
	\rho^{|S|} \sum_{\substack{\sigma, \sigma': 
            \\ \supp(\sigma)=\supp(\sigma')=S}}
	| \alpha(\sigma) \beta(\sigma') |
	\\ & \le &
	\rho^{|S|} \sqrt{\Var[P_S] \Var[Q_S]} \; ,
\end{IEEEeqnarray*}
\end{proof}

\begin{lemma}
\label{lem:k-epsilon-bound}
Let $k \in \mathbb{N}$. Then, $\min(1-(1-\gamma)^k, \rho^k) \le \epsilon/\ell$. 
\end{lemma}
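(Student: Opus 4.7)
The plan is to do a simple case split on whether $\rho^k$ is already small. If $\rho^k \le \epsilon/\ell$, then the minimum is bounded by $\rho^k$ and we are done. Otherwise $\rho^k > \epsilon/\ell$, and I would show the other term $1 - (1-\gamma)^k$ is small by arguing that $k$ cannot be too large in this regime.

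Concretely, first observe the $\rho = 1$ case is trivial: the constraint $\gamma \in [0, (1-\rho)\epsilon/(\ell \ln(\ell/\epsilon))]$ forces $\gamma = 0$, so $1-(1-\gamma)^k = 0 \le \epsilon/\ell$. For $\rho < 1$, assume $\rho^k > \epsilon/\ell$; taking logarithms,
\begin{align*}
k < \frac{\ln(\ell/\epsilon)}{\ln(1/\rho)} \le \frac{\ln(\ell/\epsilon)}{1-\rho},
\end{align*}
where the second inequality uses the elementary bound $\ln(1/\rho) \ge 1-\rho$ for $\rho \in (0,1)$. Then by Bernoulli's inequality $1-(1-\gamma)^k \le k\gamma$, and combining with the upper bound on $\gamma$ from the hypothesis,
\begin{align*}
1-(1-\gamma)^k \le k\gamma < \frac{\ln(\ell/\epsilon)}{1-\rho} \cdot \frac{(1-\rho)\epsilon}{\ell \ln(\ell/\epsilon)} = \frac{\epsilon}{\ell},
\end{align*}
which finishes the lemma.

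There is no real obstacle; the bookkeeping just needs the two standard inequalities $\ln(1/\rho) \ge 1-\rho$ and $(1-\gamma)^k \ge 1-k\gamma$, together with the explicit choice of the upper limit on $\gamma$ in the hypothesis, which was presumably chosen to make precisely this calculation work out to $\epsilon/\ell$.
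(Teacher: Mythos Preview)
Your proof is correct and is essentially the same argument as the paper's: a case split on whether $\rho^k \le \epsilon/\ell$, with Bernoulli's inequality and $\ln(1/\rho) \ge 1-\rho$ handling the other case. The only cosmetic difference is that the paper also disposes of $\rho = 0$ explicitly at the outset (your logarithm step would be awkward there if $k=0$, though the lemma is only ever applied with $k = |S| \ge 1$).
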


\begin{proof}
If $\rho \in \{0,1\}$ we are done, therefore assume that $\rho \in (0, 1)$.
If $k \ge \log_{\rho} \epsilon/\ell$, then $\rho^k \le \epsilon / \ell$.

If $0 \le k < \log_{\rho} \epsilon/\ell$, then by Bernoulli's inequality,
\begin{align*}
	1-(1-\gamma)^k \le \gamma k \le \frac{1-\rho}{\ln(1/\rho)} \cdot
  \frac{\epsilon}{\ell} \le \frac{\epsilon}{\ell} \; .
\end{align*}
\end{proof}

\begin{lemma}
\label{lem:smooth-subset}
For every $j \in [\ell]$ and $S \subseteq [n]$, $S \ne \emptyset$:
\begin{align*}
\left| \EE\left[
(\Id-T_{1-\gamma})P_S^{(j)}(\underline{\mathcal{X}}^{(j)}) \cdot
Q_S^{(j)}(\underline{\mathcal{Y}}^{(j)}) 
\right] \right|   \le \frac{\epsilon}{\ell} \cdot 
\sqrt{\Var[P_S^{(j)}] \Var[Q_S^{(j)}]} \; . 
\end{align*}
\end{lemma}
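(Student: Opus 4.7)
The plan is to exploit the fact that $P_S^{(j)}$ is homogeneous of degree exactly $|S|$ in the variables indexed by $S$, so the operator $T_{1-\gamma}$ acts on it as a scalar. Specifically, by Definition~\ref{def:t-rho} and the fact that every monomial of $P_S^{(j)}$ has degree exactly $|S|$, we get
\begin{align*}
(\Id - T_{1-\gamma}) P_S^{(j)} = \left(1 - (1-\gamma)^{|S|}\right) P_S^{(j)}\;,
\end{align*}
so the left-hand side of the lemma equals $\left(1-(1-\gamma)^{|S|}\right) \cdot |\EE[P_S^{(j)}(\underline{\mathcal{X}}^{(j)}) \cdot Q_S^{(j)}(\underline{\mathcal{Y}}^{(j)})]|$.

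Next I would bound the inner expectation in two different ways. For $S \ne \emptyset$, both $P_S^{(j)}$ and $Q_S^{(j)}$ have mean zero (every monomial contains at least one non-constant factor $\mathcal{X}_{i,k}$ with $k>0$, which has mean zero), so $\EE[(P_S^{(j)})^2] = \Var[P_S^{(j)}]$ and similarly for $Q_S^{(j)}$. Cauchy--Schwarz therefore gives the ``trivial'' bound $|\EE[P_S^{(j)} \cdot Q_S^{(j)}]| \le \sqrt{\Var[P_S^{(j)}]\Var[Q_S^{(j)}]}$, while Lemma~\ref{lem:decomposition-rho-bound} gives the ``correlation'' bound $|\EE[P_S^{(j)} \cdot Q_S^{(j)}]| \le \rho^{|S|}\sqrt{\Var[P_S^{(j)}]\Var[Q_S^{(j)}]}$.

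Combining, the left-hand side is at most
\begin{align*}
\min\!\left(\,1-(1-\gamma)^{|S|},\ \rho^{|S|}\,\right)\cdot \sqrt{\Var[P_S^{(j)}]\Var[Q_S^{(j)}]}\;,
\end{align*}
where I used that $1-(1-\gamma)^{|S|} \le 1$ in the correlation case. Applying Lemma~\ref{lem:k-epsilon-bound} with $k=|S|$ bounds the prefactor by $\epsilon/\ell$, finishing the proof.

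There is no real obstacle here; the whole argument is the observation that $T_{1-\gamma}$ is scalar on each homogeneous piece $P_S^{(j)}$, and then interpolating between a small-$|S|$ regime (where $\gamma$-smoothing barely changes $P_S^{(j)}$) and a large-$|S|$ regime (where the $\rho$-correlation bound kills the expectation). The only subtlety worth double-checking is that both $P_S^{(j)}$ and $Q_S^{(j)}$ are genuinely mean-zero so that the trivial Cauchy--Schwarz bound is in terms of variances rather than second moments, but this is immediate from $S \ne \emptyset$.
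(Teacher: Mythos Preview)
Your proof is correct and essentially identical to the paper's: you first use that $T_{1-\gamma}$ acts as the scalar $(1-\gamma)^{|S|}$ on $P_S^{(j)}$, then combine the trivial Cauchy--Schwarz bound with Lemma~\ref{lem:decomposition-rho-bound} to get the prefactor $\min\bigl(1-(1-\gamma)^{|S|},\rho^{|S|}\bigr)$, and finish with Lemma~\ref{lem:k-epsilon-bound}. The paper does the same, the only cosmetic difference being that it does not separately invoke Cauchy--Schwarz (since the product $(1-(1-\gamma)^{|S|})\rho^{|S|}$ of two factors in $[0,1]$ is already bounded by their minimum).
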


\begin{proof}
As in the proof of Lemma \ref{lem:decomposition-rho-bound},
we will write $P := P^{(j)}$,
$Q := Q^{(j)}$, $\underline{\mathcal{X}} := \underline{\mathcal{X}}^{(j)}$
and $\underline{\mathcal{Y}} := \underline{\mathcal{Y}}^{(j)}$.

By definition of $T_{1-\gamma}$,
\begin{align}
\label{eq:51a}
(\Id-T_{1-\gamma})P_S(\underline{\mathcal{X}}) =
(1-(1-\gamma)^{|S|})P_S(\underline{\mathcal{X}}) \; .
\end{align}

From (\ref{eq:51a}), Lemma \ref{lem:decomposition-rho-bound}
and Lemma \ref{lem:k-epsilon-bound},
\begin{IEEEeqnarray*}{rCl}
\left| \EE\left[
  (\Id-T_{1-\gamma})P_S(\underline{\mathcal{X}}) \cdot
  Q_S(\underline{\mathcal{Y}})
\right]\right| & \le & 
\min\left( 1-(1-\gamma)^{|S|}, \rho^{|S|} \right) 
\sqrt{\Var[P_S] \Var[Q_S]} \\
& \le & \frac{\epsilon}{\ell} \sqrt{\Var[P_S] \Var[Q_S]} \; .
\end{IEEEeqnarray*}
\end{proof}

\begin{lemma}
\label{lem:smooth-coordinate}
Fix $j \in [\ell]$. Then,
\begin{align*}
\left| \EE\left[
(\Id-T_{1-\gamma})P^{(j)}(\underline{\mathcal{X}}^{(j)}) \cdot
Q^{(j)}(\underline{\mathcal{Y}}^{(j)}) 
\right] \right|   \le \epsilon / \ell \; . 
\end{align*}
\end{lemma}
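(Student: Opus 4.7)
The plan is to prove the lemma by aggregating the per-subset bound from Lemma \ref{lem:smooth-subset} over all $S \subseteq [n]$, then controlling the resulting sum by Cauchy–Schwarz and the fact that the total variance of a $[0,1]$-valued quantity is bounded.

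First, I would decompose the expectation into a sum over subsets. The argument of Lemma \ref{lem:orthogonal-decomposition} applies here essentially verbatim: coordinates are independent across $i$, and for $S \ne T$ one picks $i^* \in S \triangle T$ and observes that in that coordinate one of the two factors contains a zero-mean ensemble variable while the other contains the constant $1$, so the $i^*$-th factor of the product expectation vanishes. Applying this to $(\Id - T_{1-\gamma})P^{(j)}$ and $Q^{(j)}$ and using that $(\Id - T_{1-\gamma})$ annihilates the degree-zero piece $P^{(j)}_\emptyset$, I obtain
\begin{align*}
\EE\left[(\Id - T_{1-\gamma})P^{(j)}(\underline{\mathcal{X}}^{(j)}) \cdot Q^{(j)}(\underline{\mathcal{Y}}^{(j)})\right]
= \sum_{\emptyset \ne S \subseteq [n]} \EE\left[(\Id - T_{1-\gamma})P_S^{(j)}(\underline{\mathcal{X}}^{(j)}) \cdot Q_S^{(j)}(\underline{\mathcal{Y}}^{(j)})\right].
\end{align*}

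Next, I would plug in Lemma \ref{lem:smooth-subset} term-by-term and apply Cauchy–Schwarz, followed by Claim \ref{cl:orthogonal-variance}:
\begin{align*}
\sum_{\emptyset \ne S} \frac{\epsilon}{\ell}\sqrt{\Var[P_S^{(j)}] \Var[Q_S^{(j)}]}
&\le \frac{\epsilon}{\ell} \sqrt{\Bigl(\sum_{S} \Var[P_S^{(j)}]\Bigr)\Bigl(\sum_{S} \Var[Q_S^{(j)}]\Bigr)} \\
&= \frac{\epsilon}{\ell} \sqrt{\Var[P^{(j)}(\underline{\mathcal{X}}^{(j)})] \cdot \Var[Q^{(j)}(\underline{\mathcal{Y}}^{(j)})]}.
\end{align*}

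It then remains to argue both variances are at most $1$. This is immediate for $P^{(j)}(\underline{\mathcal{X}}^{(j)}) \in [0,1]$. For $Q^{(j)}(\underline{\mathcal{Y}}^{(j)}) = A^{(j)}$, I would note that the operator $T_{1-\gamma}$ of Definition \ref{def:t-rho-function} is an averaging operator, so it preserves the range $[0,1]$; hence every factor $T_{1-\gamma}P^{(j')}(\underline{\mathcal{X}}^{(j')})$ and $P^{(j')}(\underline{\mathcal{X}}^{(j')})$ in $A^{(j)}$ lies in $[0,1]$, making $A^{(j)} \in [0,1]$ as well, so $\Var[Q^{(j)}] \le 1$. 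Combining everything gives the claimed bound $\epsilon/\ell$.

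There is no significant obstacle here — the lemma is essentially a bookkeeping step that upgrades the per-$S$ estimate of Lemma \ref{lem:smooth-subset} to a global one. The only small subtlety worth double-checking is that the orthogonal decomposition of Lemma \ref{lem:orthogonal-decomposition} genuinely applies across the two different ensemble sequences $\underline{\mathcal{X}}^{(j)}$ and $\underline{\mathcal{Y}}^{(j)}$; but since both are built coordinatewise from independent blocks and both contain the constant $1$ as the degree-zero component, the vanishing-of-cross-terms argument goes through unchanged.
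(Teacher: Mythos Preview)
Your proposal is correct and follows essentially the same route as the paper's proof: orthogonal decomposition into subsets, the per-$S$ bound of Lemma~\ref{lem:smooth-subset}, Cauchy--Schwarz, and the observation that both $P^{(j)}(\underline{\mathcal{X}}^{(j)})$ and $Q^{(j)}(\underline{\mathcal{Y}}^{(j)})$ are $[0,1]$-valued so have variance at most $1$. Your additional remarks---that $T_{1-\gamma}$ preserves the range $[0,1]$ and that the cross-term vanishing in Lemma~\ref{lem:orthogonal-decomposition} extends to the two different ensemble sequences $\underline{\mathcal{X}}^{(j)}$ and $\underline{\mathcal{Y}}^{(j)}$---are exactly the small subtleties the paper glosses over, and both are handled correctly.
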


\begin{proof}
For ease of notation write $P := P^{(j)}$,
$Q := Q^{(j)}$, $\underline{\mathcal{X}} := \underline{\mathcal{X}}^{(j)}$
and $\underline{\mathcal{Y}} := \underline{\mathcal{Y}}^{(j)}$.

Observe that since 
$P(\underline{\mathcal{X}}), Q(\underline{\mathcal{Y}}) \in [0, 1]$,
also $\Var[P], \Var[Q] \le 1$.

From Lemma \ref{lem:orthogonal-decomposition},
Lemma \ref{lem:smooth-subset} and Cauchy-Schwarz,
\begin{IEEEeqnarray*}{rCl}
\label{eq:50a}
\left| \EE \left[ (\Id-T_{1-\gamma})P(\underline{\mathcal{X}}) \cdot
Q(\underline{\mathcal{Y}}) \right| \right]
& \le & \sum_{S \subseteq [n]} \left| \EE \left[
(\Id-T_{1-\gamma})P_S(\underline{\mathcal{X}}) \cdot
Q_S(\underline{\mathcal{Y}})
\right] \right| \\
& \le &
\frac{\epsilon}{\ell} \sum_{S \ne \emptyset} \sqrt{\Var[P_S] \Var[Q_S]} \\
& \le &
\frac{\epsilon}{\ell} \sqrt{\Var[P] \Var[Q]} \le \epsilon/\ell \; . 
\end{IEEEeqnarray*}
\end{proof}

\begin{proof}[Proof of Theorem \ref{thm:smoothing}]
By Lemma \ref{lem:smooth-step-decomposition} and Lemma
\ref{lem:smooth-coordinate},
\begin{IEEEeqnarray*}{rCl}
\left| \EE \left[
\prod_{j=1}^\ell P^{(j)}(\underline{\mathcal{X}}^{(j)})
- \prod_{j=1}^\ell T_{1-\gamma}P^{(j)}(\underline{\mathcal{X}}^{(j)})
\right] \right|
& \le &
\sum_{j=1}^\ell \left| \EE \left[
(\Id-T_{1-\gamma})P^{(j)}(\underline{\mathcal{X}}^{(j)}) \cdot
Q^{(j)}(\underline{\mathcal{Y}}^{(j)})
\right] \right| \\
& \le & \epsilon \; .
\end{IEEEeqnarray*}
\end{proof}

\subsection{Gaussian reverse hypercontractivity}
\label{sec:gaussian-hyper}

\begin{definition}
Let $L^2(\mathbb{R}^n, \gamma^n)$ be the inner product space
of functions with standard $\mathcal{N}(0, 1)$ Gaussian measure.
\end{definition}

Our goal in this section is to prove the following bound:
\begin{theorem}
\label{thm:gaussian-hypercontractivity-main}
Let $(\overline{\underline{X}}, \underline{\overline{\mathcal{X}}},
\underline{\overline{\mathcal{G}}})$ 
be an ensemble collection
for a probability space 
$(\underline{\overline{\Omega}}, \underline{\mathcal{P}})$
with $\rho(\mathcal{P}) \le \rho < 1$
and such that each orthonormal ensemble in $\overline{\underline{\mathcal{G}}}$
has size $p$.

Then, for all $f^{(1)}, \ldots, f^{(\ell)} \in L^2(\mathbb{R}^{pn}, \gamma^{pn})$
such that $f^{(1)}, \ldots, f^{(\ell)}: \mathbb{R}^{pn} \to [0, 1]$ and
$\EE\left[ f^{(j)}(\underline{\mathcal{G}}^{(j)}) \right] = \mu^{(j)}$:
\begin{align*}
  \EE \left[ \prod_{j=1}^\ell f^{(j)}(\underline{\mathcal{G}}^{(j)}) \right]
  \ge \left( \prod_{j=1}^\ell \mu^{(j)}  \right)^{\ell/(1-\rho^2)} \; .
\end{align*}
\end{theorem}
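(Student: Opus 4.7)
The plan is to translate the target inequality into a purely Gaussian statement and then invoke the multi-function Gaussian reverse hypercontractive inequality of~\cite{CDP13} and~\cite{Led14}.

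First, I would record that, by Corollary~\ref{cor:cov-xg}, the Gaussian ensemble sequences $\underline{\mathcal{G}}^{(1)}, \ldots, \underline{\mathcal{G}}^{(\ell)}$ form a jointly Gaussian system whose cross-covariances coincide exactly with those of $\underline{\mathcal{X}}^{(1)}, \ldots, \underline{\mathcal{X}}^{(\ell)}$, and inherit the coordinate-wise product structure from $\mathcal{P}$. Since for jointly Gaussian vectors the maximal correlation is attained on linear functionals, and using the definition of $\rho(\mathcal{P}, \{j\}, [\ell]\setminus\{j\})$ together with tensorization across the independent coordinates $i \in [n]$, one obtains the bound
$$\rho\bigl(\underline{\mathcal{G}}^{(j)}, \, \underline{\mathcal{G}}^{(\setminus j)}\bigr) \le \rho(\mathcal{P}) \le \rho \qquad \text{for every } j \in [\ell].$$

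Next, I would apply the multi-function Gaussian reverse hypercontractive inequality of~\cite{CDP13}/\cite{Led14} to the nonnegative square-integrable functions $f^{(j)}$. Choosing the common exponent $q := (1-\rho^2)/\ell \in (0, 1)$, the inequality yields
$$\EE\left[\prod_{j=1}^{\ell} f^{(j)}(\underline{\mathcal{G}}^{(j)})\right] \;\ge\; \prod_{j=1}^{\ell} \bigl\|f^{(j)}(\underline{\mathcal{G}}^{(j)})\bigr\|_q.$$
The hypothesis to verify is a positive-semidefiniteness condition on an $\ell\times\ell$ matrix built from $q$ and the step-versus-rest correlations of $\overline{\underline{\mathcal{G}}}$; for the uniform choice $q = (1-\rho^2)/\ell$ this reduces to a standard eigenvalue bound for the all-ones matrix with off-diagonal entries bounded by $\rho$ plus a diagonal correction by $1-q$.

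Finally, the assumption $f^{(j)} \in [0, 1]$ together with $q \in (0, 1)$ forces $(f^{(j)})^{q} \ge f^{(j)}$ pointwise, so
$$\|f^{(j)}\|_q^{q} \;=\; \EE\bigl[(f^{(j)})^{q}(\underline{\mathcal{G}}^{(j)})\bigr] \;\ge\; \EE\bigl[f^{(j)}(\underline{\mathcal{G}}^{(j)})\bigr] \;=\; \mu^{(j)},$$
whence $\|f^{(j)}\|_q \ge (\mu^{(j)})^{1/q} = (\mu^{(j)})^{\ell/(1-\rho^2)}$. Multiplying these lower bounds across $j$ yields the target estimate
$$\EE\left[\prod_{j=1}^{\ell} f^{(j)}(\underline{\mathcal{G}}^{(j)})\right] \;\ge\; \left(\prod_{j=1}^{\ell} \mu^{(j)}\right)^{\ell/(1-\rho^2)}.$$

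The main obstacle is checking that the step-versus-rest correlation bound $\rho(\mathcal{P}) \le \rho$ translates cleanly into the precise PSD hypothesis required by the multi-function reverse hypercontractive inequality in~\cite{CDP13}/\cite{Led14} at the specific uniform exponent $q = (1-\rho^2)/\ell$. The two-function case ($\ell=2$) reduces to classical Borell-style reverse hypercontractivity with $p=q=1-\rho$ (where the only condition is $(1-p)(1-q) \ge \rho^2$), so it is unproblematic; for $\ell \ge 3$, however, one must carefully interact the covariance structure of the full joint Gaussian vector with the exponent $q$, which is precisely the place where the improvement over the bivariate-CDF Borell bound referenced in~\cite{Mos10} is obtained.
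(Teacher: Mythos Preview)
Your overall strategy matches the paper's exactly: invoke the Gaussian reverse hypercontractive inequality of~\cite{CDP13}/\cite{Led14} at the uniform exponent $p=(1-\rho^2)/\ell$ (your~$q$). The detour through $\|f^{(j)}\|_q$ and the pointwise bound $(f^{(j)})^q\ge f^{(j)}$ is fine but superfluous, since the version of the theorem stated in the paper (Theorem~\ref{thm:gaussian-cdp}) already delivers $(\prod_j\mu^{(j)})^{1/p}$ directly for $[0,1]$-valued functions.

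However, the step you flag as the ``main obstacle'' is in fact the entire content of the proof, and your description of it is not correct. The hypothesis of Theorem~\ref{thm:gaussian-cdp} is \emph{not} a PSD condition on an $\ell\times\ell$ matrix of scalar step-versus-rest correlations; it is the condition $T-p\,\mathrm{Id}\succeq 0$ on the full $(\ell p n)\times(\ell p n)$ covariance matrix~$T$ of the joint Gaussian system, i.e.,
\[
\Var\Bigl[\sum_{i,j,k}\alpha_{i,k}^{(j)}\,\mathcal{G}_{i,k}^{(j)}\Bigr]\;\ge\;\frac{1-\rho^2}{\ell}\sum_{i,j,k}\bigl(\alpha_{i,k}^{(j)}\bigr)^2
\quad\text{for all real }\{\alpha_{i,k}^{(j)}\}.
\]
The cross-covariance between the blocks $\underline{\mathcal{G}}^{(j_1)}$ and $\underline{\mathcal{G}}^{(j_2)}$ is a $(pn)\times(pn)$ matrix, not a scalar, so there is no ``all-ones matrix with off-diagonal entries bounded by~$\rho$'' to take an eigenvalue of.

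The paper fills this gap with a short direct computation (Lemmas~\ref{lem:gaussian-single-coordinate}--\ref{lem:gaussian-g}). By independence across~$i$ it suffices to treat a single coordinate; there, set $A_j:=\sum_{k>0}\alpha_k^{(j)}\mathcal{X}_k^{(j)}$ and $B_j:=\sum_{j'\ne j}A_{j'}$. The definition of $\rho(\mathcal{P})$ gives $|\Cov[A_j,B_j]|\le\rho\sqrt{\Var[A_j]\Var[B_j]}$, from which a two-line manipulation yields $\Var[A_j+B_j]\ge(1-\rho^2)\Var[A_j]$. Averaging over~$j$ produces
\[
\Var\Bigl[\sum_j A_j\Bigr]\;\ge\;\frac{1-\rho^2}{\ell}\sum_j\Var[A_j]\;=\;\frac{1-\rho^2}{\ell}\sum_{j,k>0}\bigl(\alpha_k^{(j)}\bigr)^2,
\]
and the matching covariances of Corollary~\ref{cor:cov-xg} transfer this to~$\underline{\overline{\mathcal{G}}}$. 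Once this is in hand, Theorem~\ref{thm:gaussian-cdp} applies immediately; you should supply this computation rather than leave it as an obstacle.
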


\begin{remark}
Since the random variables $\mathcal{G}^{(j)}_{i,0}$ are constant,
it suffices to consider consider $f^{(j)}$ as functions of $pn$ rather 
than $(p+1)n$ inputs. 
\end{remark}

In order to prove Theorem \ref{thm:gaussian-hypercontractivity-main},
we will use a multidimensional version of Gaussian reverse hypercontractivity
stated as Theorem 1 in \cite{CDP13} (cf.~also Corollary 4 in \cite{Led14}).

\begin{theorem}[\cite{CDP13}]\label{thm:gaussian-cdp} Let $p > 0$ and let
$\overline{\underline{G}} = (\underline{G}^{(1)}, \ldots, \underline{G}^{(\ell)})$
be a jointly Gaussian collection of $\ell$ random vectors such that:
\begin{itemize}
  \item
    For each $j \in [\ell]$, 
    $\underline{G}^{(j)} = (G_1^{(j)}, \ldots, G_n^{(j)})$
    is a random vector distributed as $n$ independent $\mathcal{N}(0, 1)$
    Gaussians.
  \item
    For every collection of real numbers $\{\alpha_{i}^{(j)}\} \in \mathbb{R}$:
    \begin{align}
     \label{eq:46a}
     \Var\left[ \sum_{i,j} \alpha_i^{(j)} \cdot G_i^{(j)} \right]
      \ge p \cdot \sum_{i,j} \left( \alpha_i^{(j)} \right)^2 \; .
    \end{align}
\end{itemize}
Then, for all functions $f^{(1)}, \ldots, f^{(\ell)} \in L^2(\mathbb{R}^n, \gamma^n)$
such that $f^{(1)}, \ldots, f^{(\ell)}: \mathbb{R}^n \to [0, 1]$
and $\EE\left[f^{(j)}(\underline{G}^{(j)})\right] = \mu^{(j)}$:
\begin{align*}
  \EE \left[ \prod_{j=1}^\ell f^{(j)}(\underline{G}^{(j)}) \right]
  \ge \left( \prod_{j=1}^\ell \mu^{(j)} \right)^{1/p} \; .
\end{align*}
\end{theorem}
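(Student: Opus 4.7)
The plan is to prove Theorem \ref{thm:gaussian-cdp} by the Ornstein--Uhlenbeck semigroup approach, isolating hypothesis (\ref{eq:46a}) as the covariance bound $\Sigma \succeq p \cdot I_{\ell n}$, where $\Sigma \in \mathbb{R}^{\ell n \times \ell n}$ is the joint covariance of the concatenated vector $(\underline{G}^{(j)})_{j \in [\ell]}$. Run $\ell$ independently-driven OU flows on the joint Gaussian, producing processes $\underline{G}^{(j)}_t$ each with stationary marginal $\mathcal{N}(0, I_n)$ but with off-diagonal covariance blocks contracting as $\Sigma_t^{(jk)} = e^{-2t}\,\Sigma^{(jk)}$ for $j \ne k$. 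Setting $\phi(t) := \EE[\prod_j f^{(j)}(\underline{G}^{(j)}_t)]$, the endpoints are $\phi(0) = \EE[\prod_j f^{(j)}(\underline{G}^{(j)})]$ (the quantity to be bounded below) and $\phi(\infty) = \prod_j \mu^{(j)}$ by asymptotic independence, so it suffices to establish $\phi(0)^p \ge \phi(\infty)$.

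The heart of the proof is a differential inequality along the flow which forces $\phi(t)^p\cdot\phi(\infty)^{-1}$ to be $\ge 1$ at $t=0$ and $=1$ at $t=\infty$. Gaussian integration by parts on each block gives
\begin{equation*}
\phi'(t) \;=\; \sum_{j=1}^\ell \EE\Bigl[(L^{(j)} f^{(j)})(\underline{G}^{(j)}_t)\prod_{k\ne j} f^{(k)}(\underline{G}^{(k)}_t)\Bigr],
\end{equation*}
where $L^{(j)} = \Delta - x\cdot\nabla$ is the OU generator on the $j$-th block; a second integration by parts expresses a suitable combination of $\phi'$ and $\phi''$ as a quadratic form in the log-gradients $\xi^{(j)} := \nabla\log f^{(j)}(\underline{G}^{(j)}_t)$ whose coefficient matrix is exactly $\Sigma_t$. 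Because the cone $\{M : M \succeq p\cdot I\}$ is preserved by the OU action on covariance matrices (the diagonal is fixed at $I$ while off-diagonals contract), $\Sigma_t \succeq p\cdot I_{\ell n}$ for all $t \ge 0$; substituting this spectral bound into the quadratic form replaces the multi-block form by $p$ times the sum of squared single-block gradients. A Cauchy--Schwarz then closes the resulting inequality into $p\cdot\phi(t)\phi''(t) \ge (\phi'(t))^2$ (``log-concavity of $\phi$ with modulus $1/p$''), which integrates over $t \in [0,\infty)$ to the desired $\phi(0)^p \ge \phi(\infty)$.

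The main obstacle I anticipate is pinning down the precise monotone functional. A naive attempt to apply integration by parts to $\phi(t)$ or $\log\phi(t)$ directly produces cross terms that do not line up with $\Sigma \succeq p I$, and one typically has to work with a weighted time-dependent quantity $F(t) := \EE[\prod_j f^{(j)}(\underline{G}^{(j)}_t)^{q_j(t)}]$ whose exponents $q_j(t)$ evolve to match the OU decay of correlations, chosen so that its derivative closes cleanly after a single integration by parts. Identifying the correct exponents is where the hypothesis $\Sigma \succeq pI$ is tested quantitatively --- any weaker condition (e.g.\ only a spectral bound on cross-covariance blocks) would produce a strictly worse exponent. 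An alternative, potentially cleaner route is to decompose $\Sigma = pI + R$ with $R \succeq 0$, write $(\underline{G}^{(j)}) = \sqrt{p}\,\underline{U}^{(j)} + \underline{V}^{(j)}$ with $\underline{U}^{(j)}$ block-i.i.d.\ standard Gaussian and $(\underline{V}^{(j)})_j \sim \mathcal{N}(0, R)$ independent of $\underline{U}$, condition on $\underline{V}$ to factor the product into the $\sqrt{p}$-scale Gaussian smoothings $h^{(j)}(v) := \EE_{\underline{U}}[f^{(j)}(\sqrt{p}\,\underline{U}^{(j)} + v)]$ of the $f^{(j)}$, and finish by iterated application of Borell's two-function Gaussian reverse hypercontractivity along an eigenbasis of $R$; the exponent $1/p$ then emerges because $\sqrt{p}$ is exactly the smoothing scale at which the remaining Borell exponents multiply to $1/p$.
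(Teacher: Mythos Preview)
The paper does not prove Theorem~\ref{thm:gaussian-cdp} at all: it is quoted verbatim from \cite{CDP13} (with a pointer to \cite{Led14}) and used as a black box to derive Theorem~\ref{thm:gaussian-hypercontractivity-main}. So there is no ``paper's own proof'' to compare against; your proposal is an attempt to supply what the paper simply imports.

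On the substance of your sketch: the semigroup strategy you describe is indeed the modern route to such reverse Brascamp--Lieb inequalities (this is essentially what \cite{Led14} does), and your observation that $\Sigma_t=e^{-2t}\Sigma+(1-e^{-2t})I_{\mathrm{block}}\succeq pI$ is correct for $p\le 1$. However, the displayed differential inequality $p\,\phi\phi''\ge(\phi')^2$ is neither derived nor sufficient. Even granting it, it only says that $(\log\phi)'=\phi'/\phi$ is nondecreasing; combined with $\phi'(\infty)=0$ this yields $\phi'\le 0$, hence $\phi(0)\ge\phi(\infty)$, but not the sharper $\phi(0)\ge\phi(\infty)^{1/p}$ that you need. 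You correctly flag this as the main obstacle: the functional that is actually monotone involves time-dependent exponents (a reverse H\"older product $\prod_j\|\,\cdot\,\|_{q_j(t)}$, with $q_j(t)<1$ tuned to the decaying correlations), and the computation that closes is a single integration by parts on \emph{that} object, not a second-order inequality for $\phi$ itself.

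Your alternative decomposition $\underline{G}=\sqrt{p}\,\underline{U}+\underline{V}$ with $\mathrm{Cov}(\underline{V})=\Sigma-pI\succeq 0$ is valid and, after conditioning on $\underline{V}$, the product does factor over $j$ because the $\underline{U}^{(j)}$ are independent. But the remaining step---lower bounding $\EE_{\underline{V}}\bigl[\prod_j h^{(j)}(\underline{V}^{(j)})\bigr]$ by $\prod_j\EE_{\underline{V}}[h^{(j)}(\underline{V}^{(j)})]^{1/p}$---is exactly another instance of the theorem with covariance $\Sigma-pI$, whose smallest eigenvalue may be $0$; ``iterated Borell along an eigenbasis of $R$'' does not obviously terminate or produce the exponent $1/p$. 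If you pursue this route you will need a genuinely different endgame (e.g., the Pr\'ekopa--Leindler/transport argument of \cite{CDP13}), not a recursion.
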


\begin{remark}
An equivalent formulation of the condition in (\ref{eq:46a}) 
is that the matrix $(T-p \Id)$ is positive semidefinite,
where $T$ is the covariance matrix of $\overline{\underline{G}}$.  
\end{remark}

To reduce Theorem \ref{thm:gaussian-hypercontractivity-main}
to Theorem \ref{thm:gaussian-cdp} we 
first look at a single-coordinate variance bound for 
ensembles from $\overline{\underline{\mathcal{X}}}$. 
Next, we will extend this bound to multiple coordinates
and ensembles from  $\overline{\underline{\mathcal{G}}}$.

\begin{lemma}
\label{lem:gaussian-single-coordinate}
Let $(\overline{\underline{X}}, \overline{\underline{\mathcal{X}}}, 
\overline{\underline{\mathcal{G}}})$
be an ensemble collection
for a probability space 
$(\overline{\underline{\Omega}}, \underline{\mathcal{P}})$
with $\rho(\mathcal{P}) \le \rho < 1$
and such that each orthonormal ensemble in $\overline{\underline{\mathcal{X}}}$
has size $p$.

Fix $i \in [n]$ and for ease of notation
let us write 
$\mathcal{X}^{(j)} = (\mathcal{X}^{(j)}_{0}, \ldots, \mathcal{X}^{(j)}_{p})$
for the random ensemble 
$\mathcal{X}_i^{(j)} = (\mathcal{X}^{(j)}_{i,0}, \ldots, \mathcal{X}^{(j)}_{i,p})$.

Then,
for every collection of real numbers $\{\alpha_{k}^{(j)}\} \in \mathbb{R}$:
\begin{align*}
 \Var\left[ \sum_{j\ge 1, k>0} \alpha_k^{(j)} \cdot \mathcal{X}_k^{(j)} \right]
      \ge \frac{1-\rho^2}{\ell} \cdot \sum_{j \ge 1,k > 0} 
      \left( \alpha_k^{(j)} \right)^2 \; .
\end{align*}
\end{lemma}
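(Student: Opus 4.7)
The plan is to recast the desired variance lower bound as a statement about the vectors $Y^{(j)} := \sum_{k > 0} \alpha_k^{(j)} \mathcal{X}_k^{(j)}$ in $L^2$. Since $\{\mathcal{X}_k^{(j)}\}_{k \ge 0}$ is orthonormal with $\mathcal{X}_0^{(j)} \equiv 1$, each $Y^{(j)}$ is a mean-zero element of $L^2(X_i^{(j)})$ with $\|Y^{(j)}\|^2 = \Var[Y^{(j)}] = v_j^2 := \sum_{k > 0} (\alpha_k^{(j)})^2$. Setting $V := \sum_{j=1}^\ell Y^{(j)}$ and $Z_j := V - Y^{(j)} = \sum_{j' \ne j} Y^{(j')}$, note that $Z_j$ is a mean-zero function of $X_i^{(\setminus j)}$. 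Writing $W := \|V\|^2 = \Var[V]$ and $S := \sum_j v_j^2$, the target inequality becomes simply $W \ge \tfrac{1-\rho^2}{\ell}\, S$.

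The only place the hypothesis $\rho(\mathcal{P}) \le \rho$ enters is through Definition \ref{def:correlation} applied to the pair $(\{j\}, [\ell] \setminus \{j\})$, which yields $|\langle Y^{(j)}, Z_j \rangle| \le \rho\, v_j\, \|Z_j\|$ for every $j$. Expanding
\[
W = \langle V, V \rangle = \sum_{j=1}^{\ell} \langle Y^{(j)} + Z_j, Y^{(j)} \rangle = S + \sum_{j=1}^{\ell} \langle Y^{(j)}, Z_j \rangle
\]
and applying the correlation bound gives $W \ge S - \rho \sum_j v_j \|Z_j\|$. Cauchy--Schwarz then bounds the last sum by $\sqrt{S \cdot \sum_j \|Z_j\|^2}$, and the key identity
\[
\sum_{j=1}^{\ell} \|Z_j\|^2 \;=\; \sum_{j=1}^{\ell} \bigl( W - 2 \langle V, Y^{(j)} \rangle + v_j^2 \bigr) \;=\; (\ell - 2)\, W + S
\]
(using $\sum_j \langle V, Y^{(j)}\rangle = \langle V, V \rangle = W$) produces the single-variable inequality $W \ge S - \rho \sqrt{S\bigl((\ell-2) W + S\bigr)}$.

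If $W \ge S$ the conclusion is immediate, so assume $W < S$. Squaring and dividing by $S^2$, the quantity $x := W/S \in [0,1)$ satisfies $x^2 - (2 + \rho^2(\ell-2))\,x + (1 - \rho^2) \le 0$. Writing the lower root via $\frac{A - \sqrt{A^2 - 4B}}{2} = \frac{2B}{A + \sqrt{A^2 - 4B}} \ge \frac{B}{A}$ with $A = 2 + \rho^2(\ell-2)$ and $B = 1 - \rho^2$ gives
\[
x \;\ge\; \frac{1-\rho^2}{2 + \rho^2(\ell-2)} \;\ge\; \frac{1-\rho^2}{\ell},
\]
where the final step uses $\rho^2 \le 1$ to bound $2 + \rho^2(\ell-2) \le \ell$. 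Multiplying through by $S$ finishes the proof.

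The main (and really only) nontrivial point is spotting the identity $\sum_j \|Z_j\|^2 = (\ell-2) W + S$, which is exactly what makes the Cauchy--Schwarz bound linear in $W$ and produces a solvable one-variable quadratic; without it one is tempted to bound off-diagonal covariances pairwise, but pairwise bounds degrade like $1 - (\ell-1)\rho$ and fail for large $\ell$. Using the full correlation definition (functions of $X_i^{(\setminus j)}$) in place of pairwise covariances is what ultimately yields a bound that survives all $\ell$.
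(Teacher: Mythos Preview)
Your proof is correct, but it is considerably more roundabout than the paper's. Both arguments use exactly the same ingredient --- the correlation bound $|\Cov[Y^{(j)},Z_j]|\le\rho\,\|Y^{(j)}\|\,\|Z_j\|$ with $Z_j$ treated as a function of $X_i^{(\setminus j)}$ --- but the paper exploits it pointwise in $j$ rather than globally. In the paper's notation $A_j=Y^{(j)}$, $B_j=Z_j$; a single completion of squares gives
\[
\Var[B_j]\,\Var[A_j+B_j]=\Var[A_j]\Var[B_j]+(\Var[B_j]+\Cov[A_j,B_j])^2-\Cov[A_j,B_j]^2\ge(1-\rho^2)\Var[A_j]\Var[B_j],
\]
hence $\Var[V]=\Var[A_j+B_j]\ge(1-\rho^2)\Var[A_j]$ for every $j$. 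Averaging over $j$ yields $W\ge\frac{1-\rho^2}{\ell}S$ immediately, with no Cauchy--Schwarz, no identity for $\sum_j\|Z_j\|^2$, and no quadratic to solve. Your route does buy a slightly sharper intermediate bound $x\ge(1-\rho^2)/(2+\rho^2(\ell-2))$ before you relax it to $(1-\rho^2)/\ell$, but for the lemma as stated the paper's two-line argument is preferable.
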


\begin{proof}
For any $j \in [\ell]$ we define
$A_j := \sum_{k > 0} \alpha_k^{(j)} \cdot \mathcal{X}_{k}^{(j)}$
and $B_j := \allowbreak \sum_{j'\in [\ell]\setminus\{j\}}\sum_{k > 0} 
\alpha_{k}^{(j')} \cdot \mathcal{X}_{k}^{(j')}$.

We compute
\begin{align*}
\Var[B_j]\cdot \Var[A_j+B_j] 
&=
\Var[A_j]\cdot\Var[B_j] + (\Var[B_j])^2 + 2\Var[B_j]\Cov[A_j,B_j]\\
&=
\Var[A_j]\cdot\Var[B_j] + (\Var[B_j]+\Cov[A_j,B_j])^2 - \Cov[A_j,B_j]^2\\
&\geq
\Var[A_j]\cdot\Var[B_j]  - \Cov[A_j,B_j]^2\\
&\geq \Var[A_j]\Var[B_j](1-\rho^2)\;,
\end{align*}
where in the last inequality we used that the definition of $\rho$ implies
\begin{align*}
\bigl|\Cov[A_j, B_j]\bigr| \le \rho \sqrt{\Var[A_j]\Var[B_j]}
\end{align*}
since $A_j \in L^2(X_i^{(j)})$ and 
$B_i \in L^2(X_i^{([\ell]\setminus \{j\})})$.

Therefore, 
\begin{align*}
 \Var\left[ \sum_{j\ge 1, k>0} \alpha_k^{(j)} \cdot \mathcal{X}_k^{(j)} \right]
&= \frac{1}{\ell} \sum_{j=1}^{\ell} \Var[A_j+B_j] 
\geq \frac{1-\rho^2}{\ell}\sum_{j=1}^\ell \Var[A_j] \\
&= 
\frac{1-\rho^2}{\ell}\sum_{j=1}^\ell \sum_{k > 0}
\bigl(\alpha_j^{(k)}\bigr)^2 \; .  \qedhere
\end{align*}
\end{proof}

\begin{lemma}
\label{lem:gaussian-x}
Let $(\overline{\underline{X}}, 
\overline{\underline{\mathcal{X}}},
\overline{\underline{\mathcal{G}}})$ be an ensemble collection
for a probability space 
$(\overline{\underline{\Omega}}, \underline{\mathcal{P}})$
with $\rho(\mathcal{P}) \le \rho < 1$.

Then,
for every collection of real numbers $\{\alpha_{i,k}^{(j)}\} \in \mathbb{R}$:
\begin{align*}
 \Var\left[ \sum_{i,j\ge1, k>0} \alpha_{i,k}^{(j)} \cdot \mathcal{X}_{i,k}^{(j)} 
  \right]
      \ge \frac{1-\rho^2}{\ell} \cdot \sum_{i,j \ge 1,k > 0} 
      \left( \alpha_{i,j}^{(k)} \right)^2 \; .
\end{align*}
\end{lemma}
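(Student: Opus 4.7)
The plan is to exploit independence across the coordinates $i \in [n]$ to immediately reduce the lemma to its single-coordinate analogue, which is Lemma \ref{lem:gaussian-single-coordinate}.

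First I would observe that by construction of the ensemble collection, the columns $\overline{X}_1, \ldots, \overline{X}_n$ are i.i.d.~draws from $\mathcal{P}$, and each ensemble $\mathcal{X}_i^{(j)}$ is a function of $X_i^{(j)}$ alone. Consequently, the random variables $\{\mathcal{X}_{i,k}^{(j)} : j \in [\ell], k \ge 0\}$ and $\{\mathcal{X}_{i',k}^{(j)} : j \in [\ell], k \ge 0\}$ are independent whenever $i \ne i'$. Using independence together with additivity of variance over independent summands, we may split
\begin{align*}
\Var\left[ \sum_{i, j, k>0} \alpha_{i,k}^{(j)} \mathcal{X}_{i,k}^{(j)} \right]
= \sum_{i=1}^n \Var\left[ \sum_{j, k>0} \alpha_{i,k}^{(j)} \mathcal{X}_{i,k}^{(j)} \right].
\end{align*}

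Next, for each fixed $i$, I would apply Lemma \ref{lem:gaussian-single-coordinate} with coefficients $\alpha_k^{(j)} := \alpha_{i,k}^{(j)}$ to obtain
\begin{align*}
\Var\left[ \sum_{j, k>0} \alpha_{i,k}^{(j)} \mathcal{X}_{i,k}^{(j)} \right]
\ge \frac{1-\rho^2}{\ell} \sum_{j, k>0} (\alpha_{i,k}^{(j)})^2.
\end{align*}
Summing this estimate over $i \in [n]$ and combining with the previous display yields the claim.

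There is essentially no obstacle: the hard work has already been done in Lemma \ref{lem:gaussian-single-coordinate}, where the bound $\rho(\mathcal{P}) \le \rho$ was used in conjunction with a Cauchy--Schwarz type manipulation to handle correlations within a single coordinate. The only thing one must be careful about is to invoke the correct tensorization fact — that variance adds across independent families — which is immediate once the independence of the $\overline{X}_i$ is noted.
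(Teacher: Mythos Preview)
Your proposal is correct and matches the paper's proof essentially line for line: the paper also splits the variance across independent coordinates $i$ and then applies Lemma~\ref{lem:gaussian-single-coordinate} to each summand.
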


\begin{proof}
Since ensembles $\overline{\mathcal{X}}_i$ are independent, by
Lemma \ref{lem:gaussian-single-coordinate},
\begin{IEEEeqnarray*}{rCl}
 \Var\left[ \sum_{i,j\ge1, k>0} \alpha_{i,k}^{(j)} \cdot \mathcal{X}_{i,k}^{(j)} 
  \right] & = & \sum_{i=1}^n \Var\left[
  \sum_{j\ge 1,k>0} \alpha_{i,k}^{(j)} \cdot \mathcal{X}_{i,k}^{(j)} \right]
  \\ & \ge & \frac{1-\rho^2}{\ell} \cdot \sum_{i,j\ge1,k>0} 
  \left(\alpha_{i,k}^{(j)}\right)^2 \; .
\end{IEEEeqnarray*}
\end{proof}

\begin{lemma}
\label{lem:gaussian-g}
Let $(\overline{\underline{X}}, 
\overline{\underline{\mathcal{X}}},
\overline{\underline{\mathcal{G}}})$ be an ensemble collection
for a probability space 
$(\overline{\underline{\Omega}}, \underline{\mathcal{P}})$
with $\rho(\mathcal{P}) \le \rho < 1$.

Then,
for every collection of real numbers $\{\alpha_{i,k}^{(j)}\} \in \mathbb{R}$:
\begin{align*}
 \Var\left[ \sum_{i,j\ge1, k>0} \alpha_{i,k}^{(j)} \cdot \mathcal{G}_{i,k}^{(j)} 
  \right]
      \ge \frac{1-\rho^2}{\ell} \cdot \sum_{i,j \ge 1,k > 0} 
      \left( \alpha_{i,j}^{(k)} \right)^2 \; .
\end{align*}
\end{lemma}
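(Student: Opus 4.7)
The plan is to reduce Lemma \ref{lem:gaussian-g} directly to Lemma \ref{lem:gaussian-x} by exploiting the matching covariance structure guaranteed by Corollary \ref{cor:cov-xg}. The key observation is that the variance of any linear combination $\sum_m a_m Y_m$ of random variables depends only on the covariance matrix of the $Y_m$'s, via the identity
\[
\Var\Bigl[\sum_m a_m Y_m\Bigr] = \sum_{m,m'} a_m a_{m'} \Cov[Y_m, Y_{m'}].
\]
By construction of an ensemble collection, Corollary \ref{cor:cov-xg} provides exactly the equality
\[
\Cov\left[\mathcal{X}_{i_1,k_1}^{(j_1)}, \mathcal{X}_{i_2,k_2}^{(j_2)}\right] = \Cov\left[\mathcal{G}_{i_1,k_1}^{(j_1)}, \mathcal{G}_{i_2,k_2}^{(j_2)}\right]
\]
for all relevant indices.

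Consequently, for any choice of real coefficients $\{\alpha_{i,k}^{(j)}\}$ we immediately get
\[
\Var\left[\sum_{i,j\ge1,\,k>0} \alpha_{i,k}^{(j)} \mathcal{G}_{i,k}^{(j)}\right] = \Var\left[\sum_{i,j\ge1,\,k>0} \alpha_{i,k}^{(j)} \mathcal{X}_{i,k}^{(j)}\right],
\]
and then Lemma \ref{lem:gaussian-x} applied to the right-hand side yields the desired lower bound $\tfrac{1-\rho^2}{\ell} \sum \left(\alpha_{i,k}^{(j)}\right)^2$.

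There is really no obstacle: all the substantive work was done in proving Lemma \ref{lem:gaussian-single-coordinate} (single coordinate, using the definition of $\rho(\mathcal{P})$ to bound the cross-covariance $|\Cov[A_j, B_j]|$) and extending it to Lemma \ref{lem:gaussian-x} via independence across coordinates. The present lemma is just a one-line transport of that statement from $\overline{\underline{\mathcal{X}}}$ to $\overline{\underline{\mathcal{G}}}$ along the covariance-preserving correspondence set up in Section~\ref{sec:app-pre-2}. The only thing worth noting is that this argument does \emph{not} require us to redo the $\rho$-estimate in the Gaussian world — indeed, we never directly need $\rho(\overline{\underline{\mathcal{G}}}) \le \rho$, only the matching of second moments.
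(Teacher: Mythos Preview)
Your proof is correct and is exactly the paper's approach: the paper's proof is the one-liner ``By Corollary \ref{cor:cov-xg} and Lemma \ref{lem:gaussian-x},'' which is precisely the covariance-matching transport you spell out. Your additional explanation via the bilinear expansion of variance is a faithful unpacking of that one line.
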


\begin{proof}
By Corollary \ref{lem:cov-xg}  and Lemma \ref{lem:gaussian-x}.
\end{proof}

\begin{proof}[Proof of Theorem~\ref{thm:gaussian-hypercontractivity-main}]
By application of Theorem \ref{thm:gaussian-cdp} to
$\overline{\underline{G}} = 
(\underline{G}^{(1)}, \ldots, \allowbreak \underline{G}^{(\ell)})$,
where $\underline{G}^{(j)} = 
(\mathcal{G}_{i,1}^{(j)}, \allowbreak \ldots, \allowbreak \mathcal{G}_{i,p}^{(j)},
\ldots, \mathcal{G}_{n,1}^{(j)}, \ldots, \mathcal{G}_{n,p}^{(j)})$.

Since $\underline{\mathcal{G}}^{(j)}$ is a Gaussian ensemble sequence,
$\underline{G}^{(j)}$ is distributed as $pn$ independent
$\mathcal{N}(0, 1)$ Gaussians. Condition (\ref{eq:46a}) 
for $p: = \frac{1-\rho^2}{\ell}$ is fulfilled
due to Lemma \ref{lem:gaussian-g}.
\end{proof}

\subsection{The main theorem}

We recall the low-influence theorem that we want to prove:
\lowinfluence*

We need to define some new objects in order to proceed with the proof.
Let $(\overline{\underline{X}}, 
\overline{\underline{\mathcal{X}}},
\overline{\underline{\mathcal{G}}})$ be an ensemble collection for
$(\overline{\underline{\Omega}}, \underline{\mathcal{P}})$.

For $j \in [\ell]$, let $P^{(j)}$ be a multilinear polynomial compatible
with $\underline{\mathcal{X}}^{(j)}$ and equivalent to 
$f^{(j)}(\underline{X}^{(j)})$. For
some small $\gamma > 0$ to be fixed later let 
$Q^{(j)} := T_{1-\gamma}P^{(j)}$. Finally, letting $p$ be the size
of each of the ensembles $\mathcal{X}_i^{(j)}$ and $\mathcal{G}_i^{(j)}$,
define a function $R^{(j)}: \mathbb{R}^{pn} \to \mathbb{R}$ as
\begin{align*}
R^{(j)}(\underline{x}) := \begin{cases}
  0 & \text{if $Q^{(j)}(\underline{x}) < 0$,}\\
  Q^{(j)}(\underline{x}) & \text{if $Q^{(j)}(\underline{x}) \in [0, 1]$,}\\
  1 & \text{if $Q^{(j)}(\underline{x}) > 1$.}
\end{cases}
\end{align*}
Note that it might be impossible to write $R^{(j)}$ as a multilinear 
polynomial, but it will not cause problems in the proof.
Finally, let 
$\mu'^{(j)} := \EE\left[R^{(j)}(\underline{\mathcal{G}}^{(j)})\right]$.

The proof proceeds by decomposing the expression we are bounding into
several parts:
\begin{IEEEeqnarray}{l}
  \EE \left[ \prod_{j=1}^\ell f^{(j)}(\underline{X}^{(j)}) \right]
  = \EE \left[ \prod_{j=1}^\ell P^{(j)}(\underline{\mathcal{X}}^{(j)}) \right] 
  = \nonumber \\
  \quad = \EE \left[ \prod_{j=1}^\ell P^{(j)}(\underline{\mathcal{X}}^{(j)})   
  - \prod_{j=1}^\ell Q^{(j)}(\underline{\mathcal{X}}^{(j)}) \right] +  
  \label{eq:53a} \\
  \qquad + \EE \left[ \prod_{j=1}^\ell Q^{(j)}(\underline{\mathcal{X}}^{(j)})
  - \prod_{j=1}^\ell R^{(j)}(\underline{\mathcal{G}}^{(j)}) \right] + 
  \label{eq:54a} \\
  \qquad + \EE \left[ \prod_{j=1}^\ell R^{(j)}(\mathcal{G}^{(j)}) \right] \; .
  \label{eq:55a}
\end{IEEEeqnarray}
We use the theorems proved so far to bound each of the terms 
(\ref{eq:53a}), (\ref{eq:54a}) and (\ref{eq:55a})
in turn. First, we apply Theorem \ref{thm:smoothing} to show that (\ref{eq:53a})
has small absolute value. Then, we use the invariance principle
(Theorem \ref{thm:invariance-smoothed}) to argue that (\ref{eq:54a}) has
small absolute value. Finally, using Gaussian reverse hypercontractivity
(Theorem \ref{thm:gaussian-hypercontractivity-main}) we show that
(\ref{eq:55a}) is bounded from below by (roughly) 
$\left(\prod_{j=1}^\ell \mu^{(j)}\right)^{\ell/(1-\rho^2)}$.

We proceed with a detailed argument in the following lemmas.
In the following assume w.l.o.g~that $\epsilon \le 1/2$
and $\alpha \le 1/2$.

\begin{lemma}
\label{lem:main-smoothing}
Set $\gamma := \frac{(1-\rho)\epsilon}{2\ell \ln 2\ell/\epsilon}$. Then,
\begin{align*}
  \left| \EE \left[ \prod_{j=1}^\ell P^{(j)}(\underline{\mathcal{X}}^{(j)}) 
  - \prod_{j=1}^{\ell} Q^{(j)}(\underline{\mathcal{X}}^{(j)})
  \right] \right| \le \epsilon / 2 \; .
\end{align*}
\end{lemma}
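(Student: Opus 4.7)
The plan is to apply Theorem \ref{thm:smoothing} directly, with the role of $\epsilon$ in that theorem played by $\epsilon/2$. Three things need to be checked: (i) the polynomials $P^{(j)}$ satisfy the range hypothesis of Theorem \ref{thm:smoothing}; (ii) the parameter $\epsilon/2$ lies in the allowed interval $(0, 1/2]$; (iii) our choice of $\gamma$ lies in the interval permitted by Theorem \ref{thm:smoothing} when invoked with $\epsilon/2$.

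For (i), recall that $P^{(j)}$ is defined to be equivalent to $f^{(j)}(\underline{X}^{(j)})$, and $f^{(j)} : \underline{\Omega} \to [0,1]$, so $P^{(j)}(\underline{\mathcal{X}}^{(j)}) \in [0,1]$ whenever $\underline{\mathcal{X}}^{(j)}$ is the ensemble sequence constructed from $\underline{X}^{(j)}$. For (ii), since we may assume $\epsilon \le 1/2$ (as noted just before the lemma statement), we have $\epsilon/2 \in (0, 1/4] \subseteq (0, 1/2]$. For (iii), Theorem \ref{thm:smoothing} applied with parameter $\epsilon/2$ permits $\gamma$ up to
\[
\frac{(1-\rho)(\epsilon/2)}{\ell \ln\bigl(\ell/(\epsilon/2)\bigr)} \;=\; \frac{(1-\rho)\epsilon}{2\ell \ln(2\ell/\epsilon)},
\]
which is exactly our chosen value of $\gamma$; so the hypothesis is met with equality. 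Note also that $\rho(\cP) \le \rho < 1$ by assumption of Theorem~\ref{thm:low-influence}, so Theorem \ref{thm:smoothing} applies.

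Invoking Theorem \ref{thm:smoothing} under these conditions, and recalling that $Q^{(j)} = T_{1-\gamma} P^{(j)}$ by definition, immediately yields
\[
\left| \EE\left[\prod_{j=1}^\ell P^{(j)}(\underline{\mathcal{X}}^{(j)}) - \prod_{j=1}^\ell Q^{(j)}(\underline{\mathcal{X}}^{(j)})\right]\right| \le \epsilon/2,
\]
as required. There is no real obstacle here; the lemma is merely a bookkeeping specialization of Theorem \ref{thm:smoothing}, tuned so that later (when this bound is combined with the invariance and Gaussian reverse hypercontractivity bounds for the other two terms in the decomposition \eqref{eq:53a}--\eqref{eq:55a}) the total error contribution from smoothing is $\epsilon/2$ rather than $\epsilon$, leaving the remaining $\epsilon/2$ budget for the invariance step.
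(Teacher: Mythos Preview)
Your proof is correct and follows exactly the same approach as the paper, which simply writes ``By Theorem~\ref{thm:smoothing}.'' You have merely spelled out the verification that the hypotheses of Theorem~\ref{thm:smoothing} (applied with $\epsilon/2$ in place of $\epsilon$) are met, which is exactly what is needed.
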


\begin{proof}
By Theorem \ref{thm:smoothing}.
\end{proof}

\begin{lemma}
\label{lem:main-invariance}
There exists an absolute constant $C > 0$ such that
\begin{align*}
\left| \EE\left[ \prod_{j=1}^\ell Q^{(j)}(\underline{\mathcal{X}}^{(j)})
- \prod_{j=1}^\ell R^{(j)}(\underline{\mathcal{G}}^{(j)}) \right] \right|
\le C \ell^{5/2} \cdot \tau^{\frac{\gamma}{C \ln 1/\alpha}} \; .
\end{align*}
\end{lemma}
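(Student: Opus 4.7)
The plan is to recognize that, after the noise operator $T_{1-\gamma}$ has been applied, the two products can both be written as $\chi$ evaluated on the common tuple $\overline{Q} = (Q^{(1)}, \ldots, Q^{(\ell)})$, so that the claim reduces to a direct invocation of the smoothed invariance principle (Theorem~\ref{thm:invariance-smoothed}). Concretely, since $f^{(j)}: \underline{\Omega} \to [0,1]$ and $T_{1-\gamma}$ is the averaging operator from Definition~\ref{def:t-rho-function}, we have $Q^{(j)}(\underline{\mathcal{X}}^{(j)}) \in [0,1]$ pointwise. Because $\phi$ is the identity on $[0,1]$ and $\chi$ is the coordinatewise product of $\phi$'s, this identifies the discrete side with
\[
\prod_{j=1}^\ell Q^{(j)}(\underline{\mathcal{X}}^{(j)})
= \chi\bigl(\overline{Q}(\overline{\underline{\mathcal{X}}})\bigr),
\]
while the definition of $R^{(j)}$ immediately gives $\prod_{j=1}^\ell R^{(j)}(\underline{\mathcal{G}}^{(j)}) = \chi\bigl(\overline{Q}(\overline{\underline{\mathcal{G}}})\bigr)$, with no assumption on the range of $Q^{(j)}(\underline{\mathcal{G}}^{(j)})$.

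Next I would verify the hypotheses of Theorem~\ref{thm:invariance-smoothed} for $\overline{Q}$. For $\gamma$-decay, since the Fourier coefficients of $Q^{(j)}$ are those of $P^{(j)}$ scaled by $(1-\gamma)^{|\sigma|}$, a direct computation yields
\[
\EE\bigl[((Q^{(j)})^{\ge d})^2\bigr]
\le (1-\gamma)^{2d}\, \EE\bigl[(P^{(j)})^2\bigr] \le (1-\gamma)^{d},
\]
so each $Q^{(j)}$ is $\gamma$-decaying. The same scaling gives $\Inf_i(Q^{(j)}) \le \Inf_i(P^{(j)})$, and by the polynomial--function equivalence~(\ref{eq:35a}) in Lemma~\ref{lem:poly-expectation-equiv}, $\Inf_i(P^{(j)}) = \Inf_i(f^{(j)}(\underline{X}^{(j)})) \le \tau$. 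Summing over $j$ yields the per-coordinate bound $\sum_{j=1}^\ell \Inf_i(Q^{(j)}) \le \ell\tau$.

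Applying Theorem~\ref{thm:invariance-smoothed} with the tuple $\overline{Q}$, decay parameter $\gamma$, and per-coordinate influence bound $\ell\tau$ then produces an estimate of the form $C_0\,\ell^{5/2}\,(\ell\tau)^{\gamma/(C_0\ln 1/\alpha)}$. Since the quantity on the left is always at most $1$, this bound is only informative when $\ell\tau$ is small; in that regime the extraneous factor $\ell^{\gamma/(C_0\ln 1/\alpha)}$ can be absorbed by enlarging the constant $C$ in the denominator of the exponent, producing the claimed form $C\,\ell^{5/2}\,\tau^{\gamma/(C\ln 1/\alpha)}$. I do not foresee any real obstacle here: the analytical content sits inside Theorem~\ref{thm:invariance-smoothed}, and this lemma is essentially a translation step that packages the $\chi$-reformulation together with the elementary Fourier-weight estimates on $T_{1-\gamma}P^{(j)}$.
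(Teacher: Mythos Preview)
Your proposal is correct and follows essentially the same route as the paper: rewrite both products as $\chi(\overline{Q}(\cdot))$, verify that each $Q^{(j)}$ is $\gamma$-decaying with small influences, and invoke Theorem~\ref{thm:invariance-smoothed}. You are in fact slightly more careful than the paper, which silently passes from $\Inf_i(Q^{(j)})\le\tau$ to the conclusion without mentioning that the hypothesis of Theorem~\ref{thm:invariance-smoothed} concerns $\sum_j \Inf_i(Q^{(j)})\le \ell\tau$; your observation that the resulting factor $\ell^{\gamma/(C_0\ln 1/\alpha)}$ can be absorbed (indeed it is $O(1)$ since $\gamma$ carries a $1/\ell$ in its definition, or alternatively via the trivial bound $\le 1$ on the left-hand side together with $\gamma\le 1$, $\alpha\le 1/2$) is exactly what is needed to close this small gap.
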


\begin{proof}
Note that for every $j \in [\ell]$ the polynomial $Q^{(j)}$ is 
$\gamma$-decaying and that it has bounded influence for every $i \in [n]$:
\begin{align*}
\Inf_i(Q^{(j)}) \le \Inf_i(P^{(j)}) = \Inf_i(f^{(j)}(\underline{X}^{(j)})
\le \tau \; .
\end{align*}

By definition of $\chi$ (Definition \ref{def:xi}) and Theorem
\ref{thm:invariance-smoothed},
\begin{IEEEeqnarray*}{rCl}
\left| \EE \left[ \prod_{j=1}^\ell Q^{(j)}(\underline{\mathcal{X}}^{(j)}) - 
\prod_{j=1}^\ell R^{(j)}(\underline{\mathcal{G}}^{(j)}) \right] \right|
& = & \left| \EE \left[ 
  \chi\left(\overline{Q}(\overline{\underline{\mathcal{X}}})\right) 
  - 
  \chi\left(\overline{Q}(\overline{\underline{\mathcal{G}}})\right)
  \right] \right|
  \\
  & \le & C \ell^{5/2} \cdot \tau^{\frac{\gamma}{C \ln 1/\alpha}} \; .
\end{IEEEeqnarray*}
\end{proof}

\begin{lemma}
\label{lem:main-gaussian}
\begin{align*}
\EE \left[ \prod_{j=1}^\ell R^{(j)}(\underline{\mathcal{G}}^{(j)}) \right]
  \ge \left( \prod_{j=1}^\ell \mu'^{(j)} \right)^{\ell / (1-\rho^2)} \; .
\end{align*}
\end{lemma}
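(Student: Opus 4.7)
The plan is to invoke Theorem \ref{thm:gaussian-hypercontractivity-main} directly on the functions $R^{(1)}, \ldots, R^{(\ell)}$, which were defined precisely so that such an invocation becomes legal. The hypotheses required by that theorem are already in place: by construction each $R^{(j)}$ is a pointwise truncation of the multilinear polynomial $Q^{(j)}(\underline{\mathcal{G}}^{(j)})$ to the interval $[0,1]$, so $R^{(j)} : \mathbb{R}^{pn} \to [0,1]$ is a measurable function, in particular a bounded (and hence square-integrable) element of $L^2(\mathbb{R}^{pn}, \gamma^{pn})$. By definition we have $\EE\bigl[R^{(j)}(\underline{\mathcal{G}}^{(j)})\bigr] = \mu'^{(j)}$. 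Finally, the ensemble collection $(\overline{\underline{X}},\overline{\underline{\mathcal{X}}},\overline{\underline{\mathcal{G}}})$ is an ensemble collection for a space with $\rho(\mathcal{P}) \le \rho < 1$, which is exactly the standing hypothesis of Theorem \ref{thm:gaussian-hypercontractivity-main}.

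Concretely, I would write: applying Theorem \ref{thm:gaussian-hypercontractivity-main} with $f^{(j)} := R^{(j)}$ and with means $\mu^{(j)} := \mu'^{(j)}$ yields
\begin{align*}
\EE\left[\prod_{j=1}^\ell R^{(j)}(\underline{\mathcal{G}}^{(j)})\right]
\ge \left(\prod_{j=1}^\ell \mu'^{(j)}\right)^{\ell/(1-\rho^2)},
\end{align*}
which is precisely the claimed inequality.

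There is essentially no obstacle here, since all of the substantive work (reverse hypercontractivity in the Gaussian setting, the matching of covariances between $\overline{\underline{\mathcal{X}}}$ and $\overline{\underline{\mathcal{G}}}$, and the single-coordinate variance lower bound used to verify the positive-semidefinite condition on the Gaussian covariance matrix) has already been absorbed into Theorem \ref{thm:gaussian-hypercontractivity-main}. The only minor point worth flagging is that although $Q^{(j)}(\underline{\mathcal{G}}^{(j)})$ itself is a (possibly unbounded) polynomial in Gaussians, the truncated function $R^{(j)}$ is bounded in $[0,1]$, which is what places it inside the class of test functions for which Theorem \ref{thm:gaussian-hypercontractivity-main} is stated; this is the reason the truncation was introduced in the decomposition (\ref{eq:53a})--(\ref{eq:55a}) in the first place, and it is what makes this step a one-line application rather than requiring any further estimates.
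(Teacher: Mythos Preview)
Your proof is correct and takes exactly the same approach as the paper: a direct application of Theorem~\ref{thm:gaussian-hypercontractivity-main} to the truncated functions $R^{(j)}$. The paper's proof is literally the one line ``By Theorem~\ref{thm:gaussian-hypercontractivity-main},'' and your additional remarks about why the hypotheses are satisfied are all accurate elaborations.
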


\begin{proof}
By Theorem \ref{thm:gaussian-hypercontractivity-main}.
\end{proof}

Lastly, we need to show that the difference between
$\prod_{j=1}^\ell \mu'^{(j)}$ and
$\prod_{j=1}^\ell \mu^{(j)}$ is small.

\begin{claim}
\label{cl:bound-mu}
Let $a \ge 0, \epsilon \ge 0, a+\epsilon \le 1, \beta \ge 1$.
Then, $(a+\epsilon)^\beta - a^\beta \le \beta\epsilon$.
\end{claim}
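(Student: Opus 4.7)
The plan is to prove this by a direct one-line calculus argument using the fundamental theorem of calculus. First I would write
\begin{align*}
(a+\epsilon)^\beta - a^\beta = \int_a^{a+\epsilon} \beta x^{\beta-1} \, \mathrm{d}x \; .
\end{align*}
The key observation is that for $x \in [a, a+\epsilon] \subseteq [0, 1]$ and $\beta \ge 1$, we have $\beta - 1 \ge 0$ and $x \le 1$, so $x^{\beta - 1} \le 1$. Plugging this bound into the integrand yields the integral is at most $\int_a^{a+\epsilon} \beta \, \mathrm{d}x = \beta \epsilon$, which gives the claim.

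There is essentially no obstacle here; the only mild care needed is to note that the bound $x^{\beta-1} \le 1$ uses both $x \le 1$ (guaranteed by the hypothesis $a + \epsilon \le 1$) and $\beta - 1 \ge 0$ (guaranteed by $\beta \ge 1$). The case $a = 0$ is handled uniformly by this argument, or alternatively can be verified separately since $(0+\epsilon)^\beta = \epsilon^\beta \le \epsilon \le \beta \epsilon$ when $\epsilon \le 1$ and $\beta \ge 1$. An equally short alternative is the mean value theorem applied to $f(x) = x^\beta$ on $[a, a+\epsilon]$, which yields $f(a+\epsilon) - f(a) = \beta \xi^{\beta-1} \epsilon$ for some $\xi \in [a, a+\epsilon]$, and then the same bound $\xi^{\beta-1} \le 1$ finishes the proof. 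I would pick whichever formulation the authors prefer stylistically.
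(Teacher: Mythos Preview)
Your proof is correct. It differs from the paper's argument: the paper shows that $h(a) := (a+\epsilon)^\beta - a^\beta$ is non-decreasing in $a$ (since $h'(a) = \beta((a+\epsilon)^{\beta-1} - a^{\beta-1}) \ge 0$), so the maximum over $a \in [0, 1-\epsilon]$ is attained at $a = 1-\epsilon$, giving $h(a) \le 1 - (1-\epsilon)^\beta$; Bernoulli's inequality then finishes. Your approach via the fundamental theorem of calculus (or the mean value theorem) is more direct: it bounds the integrand $\beta x^{\beta-1} \le \beta$ pointwise on $[0,1]$ and integrates, avoiding both the monotonicity step and the appeal to Bernoulli. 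Both arguments are one-liners; yours is slightly leaner, while the paper's intermediate bound $1 - (1-\epsilon)^\beta$ is occasionally useful in its own right.
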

\begin{proof}
The function $h_{\beta,\epsilon}(a) := (a+\epsilon)^\beta-a^{\beta}$
is non-decreasing (since 
$\frac{\mathrm{d}}{\mathrm da} h_{\beta,\epsilon} = 
\beta((a+\epsilon)^{\beta-1}-a^{\beta-1}) \geq 0$).
Hence, 
\begin{align*}
(a+\epsilon)^\beta-a^\beta \le 1-(1-\epsilon)^\beta \le \beta \epsilon \; ,
\end{align*}
where in the last step we applied Bernoulli's inequality.
\end{proof}

\begin{lemma}
\label{lem:main-mu}
There exists an absolute constant $C > 0$ such that
\begin{align*}
\left| \left( \prod_{j=1}^\ell \mu^{(j)} \right)^{\ell / (1-\rho^2)}  - 
\left( \prod_{j=1}^\ell \mu'^{(j)} \right)^{\ell / (1-\rho^2)}  \right|
\le \frac{C\ell^2}{1-\rho^2} \cdot \tau^{\frac{\gamma}{C \ln 1/\alpha}} \; .
\end{align*}
\end{lemma}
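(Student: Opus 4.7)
The strategy is first to control $|\mu^{(j)}-\mu'^{(j)}|$ for each $j$ individually by a one-polynomial version of the invariance principle, then telescope over $j$, and finally appeal to Claim~\ref{cl:bound-mu}.

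The key observation is that $\EE[R^{(j)}(\underline{\mathcal{X}}^{(j)})] = \mu^{(j)}$. Indeed, since $f^{(j)}(\underline{X}^{(j)})\in[0,1]$, the noise operator $T_{1-\gamma}$ gives $Q^{(j)}(\underline{\mathcal{X}}^{(j)}) = T_{1-\gamma}f^{(j)}(\underline{X}^{(j)})\in[0,1]$, so the truncation in the definition of $R^{(j)}$ has no effect on the discrete side, i.e.~$R^{(j)}(\underline{\mathcal{X}}^{(j)}) = Q^{(j)}(\underline{\mathcal{X}}^{(j)})$. Moreover, $T_{1-\gamma}$ preserves the constant coefficient of a multilinear polynomial, so $\EE[Q^{(j)}(\underline{\mathcal{X}}^{(j)})] = \EE[P^{(j)}(\underline{\mathcal{X}}^{(j)})] = \mu^{(j)}$.

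Next, I would apply Theorem~\ref{thm:invariance-smoothed} in the one-polynomial case (i.e.~with the product reduced to a single factor, so $\chi$ becomes $\phi$) to each $Q^{(j)}$ individually. As already verified in the proof of Lemma~\ref{lem:main-invariance}, $Q^{(j)}$ is $\gamma$-decaying (using $(1-\gamma)^{2d}\le(1-\gamma)^d$ together with $\EE[(P^{(j)})^2]\le 1$) and has $\Inf_i(Q^{(j)})\le\tau$ for every $i$. The theorem therefore yields
\begin{align*}
\bigl|\mu^{(j)} - \mu'^{(j)}\bigr| = \bigl|\EE[\phi(Q^{(j)}(\underline{\mathcal{X}}^{(j)}))] - \EE[\phi(Q^{(j)}(\underline{\mathcal{G}}^{(j)}))]\bigr| \le C\cdot \tau^{\frac{\gamma}{C\ln 1/\alpha}}
\end{align*}
for an absolute constant $C\ge 0$.

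Finally, telescoping the product $\prod_{j=1}^\ell \mu^{(j)}$ coordinate by coordinate and using that all $\mu^{(j)},\mu'^{(j)}\in[0,1]$ gives
\begin{align*}
\Bigl|\prod_{j=1}^\ell \mu^{(j)} - \prod_{j=1}^\ell \mu'^{(j)}\Bigr| \le \sum_{j=1}^\ell \bigl|\mu^{(j)}-\mu'^{(j)}\bigr| \le \ell C \cdot \tau^{\frac{\gamma}{C\ln 1/\alpha}}.
\end{align*}
Since $\beta := \ell/(1-\rho^2)\ge 1$, I then apply Claim~\ref{cl:bound-mu} with $a := \min(\prod\mu^{(j)},\prod\mu'^{(j)})$ and $\epsilon := |\prod\mu^{(j)}-\prod\mu'^{(j)}|$, obtaining
\begin{align*}
\Bigl|\bigl(\prod_j \mu^{(j)}\bigr)^\beta - \bigl(\prod_j \mu'^{(j)}\bigr)^\beta\Bigr| \le \beta\epsilon \le \frac{\ell^2 C}{1-\rho^2}\,\tau^{\frac{\gamma}{C\ln 1/\alpha}},
\end{align*}
which is the claimed bound (with possibly a larger absolute constant $C$). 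The only subtle point is the single-polynomial invariance step; however, one-polynomial invariance is exactly the $\ell=1$ case of Theorem~\ref{thm:invariance-smoothed}, so no new argument is needed.
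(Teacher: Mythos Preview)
Your proof is correct and follows essentially the same route as the paper: bound $|\mu^{(j)}-\mu'^{(j)}|$ via the $\ell=1$ case of Theorem~\ref{thm:invariance-smoothed}, telescope over $j$, and apply Claim~\ref{cl:bound-mu}. Your explicit justification that $\EE[\phi(Q^{(j)}(\underline{\mathcal{X}}^{(j)}))]=\mu^{(j)}$ (because $T_{1-\gamma}f^{(j)}\in[0,1]$ so the truncation is inactive on the discrete side) is exactly what the paper leaves implicit in the phrase ``from the definition of~$\chi$''.
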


\begin{proof}
By Claim \ref{cl:bound-mu},
\begin{align}
\label{eq:58a}
\left| \left( \prod_{j=1}^\ell \mu^{(j)} \right)^{\ell / (1-\rho^2)}  - 
\left( \prod_{j=1}^\ell \mu'^{(j)} \right)^{\ell / (1-\rho^2)}  \right|
\le \frac{\ell}{1-\rho^2} \cdot
\left| \prod_{j=1}^\ell \mu^{(j)}  - 
 \prod_{j=1}^\ell \mu'^{(j)}  \right| \; .
\end{align}

Since $\mu^{(j)}, \mu'^{(j)} \in [0, 1]$,
\begin{align}
\label{eq:56a}
\left| \prod_{j=1}^\ell \mu^{(j)} - \prod_{j=1}^\ell \mu'^{(j)} \right|
\le \sum_{j=1}^\ell \left| \mu^{(j)} - \mu'^{(j)} \right| \; .
\end{align}
For a fixed $j \in [\ell]$, from the definition of $\chi$ and Theorem
\ref{thm:invariance-smoothed} applied with $\ell=1$,
\begin{align}
\label{eq:57a}
  \left| \mu^{(j)} - \mu'^{(j)} \right|
  = \left| \EE \left[ \chi\left(Q^{(j)}(\underline{\mathcal{X}}^{(j)})\right)
  - \chi\left(Q^{(j)}(\underline{\mathcal{G}}^{(j)})\right) \right] \right|
  \le C \cdot \tau^{\frac{\gamma}{C \ln 1/\alpha}} \; .
\end{align}
Inequalities (\ref{eq:58a}), (\ref{eq:56a}) and (\ref{eq:57a}) together give the claim.
\end{proof}

\begin{proof}[Proof of Theorem \ref{thm:low-influence}]
Following the decomposition of $\prod_{j=1}^\ell f^{(j)}(\underline{X}^{(j)})$
into subexpressions (\ref{eq:53a}), (\ref{eq:54a}) and (\ref{eq:55a}), from
Lemma \ref{lem:main-smoothing}, Lemma \ref{lem:main-invariance},
Lemma \ref{lem:main-gaussian} and Lemma \ref{lem:main-mu},
\begin{IEEEeqnarray*}{rCl}
\EE \left[ \prod_{j=1}^\ell f^{(j)}(\underline{X}^{(j)}) \right]
& \ge & 
\left(\prod_{j=1}^\ell \mu^{(j)} \right)^{\ell / (1-\rho^2)}  
\!\!\!\!\!\!\! -\epsilon/2
- C\ell^{5/2} \cdot \tau^{\frac{\gamma}{C \ln 1/\alpha}}
- \frac{C\ell^2}{1-\rho^2} \cdot \tau^{\frac{\gamma}{C\ln 1/\alpha}} \\
& \ge & 
\left(\prod_{j=1}^\ell \mu^{(j)} \right)^{\ell / (1-\rho^2)} 
\!\!\!\!\!\!\! - \epsilon/2
- \frac{2C\ell^{5/2}}{1-\rho^2} \cdot \tau^{\frac{\gamma}{C\ln 1/\alpha}} \; .
\end{IEEEeqnarray*}

By choosing $\tau(\epsilon,\rho,\alpha,\ell,\gamma)$
small enough we get 
\begin{align}
\label{eq:59a}
\frac{2C\ell^{5/2}}{1-\rho^2} \cdot \tau^{\frac{\gamma}{C\ln 1/\alpha}} 
\le \epsilon / 2 \; ,
\end{align}
which is the main part of the theorem (recall that 
$\gamma = \frac{(1-\rho)\epsilon}{2\ell \ln(2\ell/\epsilon)}$).

To see that we can choose $\tau$ as in (\ref{eq:38a}), note that
for $D > 0$ big enough we have
\begin{IEEEeqnarray*}{rCl}
\tau &:=& \left( \frac{(1-\rho^2)\epsilon}{\ell^{5/2}}
  \right)^{D \frac{ \ell \ln(\ell/\epsilon)\ln(1/\alpha)}{(1-\rho)\epsilon}}
  \le
  \left( \frac{(1-\rho^2)\epsilon}{\ell^{5/2}}  
  \right)^{D' \frac{2 C \ell \ln(2\ell/\epsilon)\ln(1/\alpha)}
  {(1-\rho)\epsilon}}
\\
&=& 
\left( \frac{(1-\rho^2)\epsilon}{\ell^{5/2}} 
\right)^{D' \frac{C \ln(1/\alpha)}{\gamma}}
\end{IEEEeqnarray*}
for $D' > 0$ as needed.
Hence, we obtain
\begin{align*}
\frac{2C\ell^{5/2}}{1-\rho^2} \cdot \tau^{\frac{\gamma}{C\ln 1/\alpha}}
= 2C \cdot \frac{\ell^{5/2}}{1-\rho^2} \cdot \left( 
\frac{(1-\rho^2)\epsilon}{\ell^{5/2}} \right)^{D'}
\le 2C \epsilon^{D'} \le \epsilon / 2 \; ,
\end{align*}
which establishes (\ref{eq:59a}) for this choice of $\tau$.
\end{proof}


\bibliographystyle{alpha}
\newcommand{\etalchar}[1]{$^{#1}$}


\begin{dajauthors}
\begin{authorinfo}[jh]
  Jan Hązła\\
  Massachusetts Institute of Technology\\
  Cambridge, Massachusetts, USA\\
  jhazla\imageat{}mit\imagedot{}edu \\
  \url{https://idss.mit.edu/staff/jan-hazla/}
\end{authorinfo}
\begin{authorinfo}[th]
  Thomas Holenstein\\
  Google\\
  Zurich, Switzerland\\
  thomas\imagedot{}holenstein\imageat{}google\imagedot{}com \\
\end{authorinfo}
\begin{authorinfo}[em]
  Elchanan Mossel\\
  Massachusetts Institute of Technology\\
  Cambridge, Massachusetts, USA\\
  elmos\imageat{}mit\imagedot{}edu\\
  \url{https://math.mit.edu/~elmos/}
\end{authorinfo}
\end{dajauthors}

\end{document}